\newcommand{\mc}{\mathcal}
\newtheorem{lemma}{Lemma}[section]
\newtheorem{proposition}{Proposition}[section]
\newtheorem{theorem}{Theorem}[section]
\newtheorem{corollary}{Corollary}[section]
\theoremstyle{remark}
\newtheorem{remark}{Remark}[section]
\theoremstyle{definition}
\newtheorem{definition}{Definition}[section]
\title{A proof of Price's Law on
Schwarzschild black hole manifolds for all angular momenta}
\author{Roland Donninger}
\address{University of Chicago, Department of Mathematics,
5734 South University Avenue, Chicago, IL 60637, U.S.A.}
\email{donninger@uchicago.edu}
\thanks{The first author 
is an Erwin Schr\"odinger Fellow of the 
FWF (Austrian Science Fund) Project No. J2843 and 
he wants to thank Peter C. Aichelburg for his support. Furthermore, all 
three authors would like to thank Piotr Bizo\'n for a number of helpful
remarks on a first version of this paper.}
\author{Wilhelm Schlag}
\address{University of Chicago, Department of Mathematics,
5734 South University Avenue, Chicago, IL 60637, U.S.A.}
\email{schlag@math.uchicago.edu}
\thanks{The second author was partly supported by the National
Science Foundation DMS-0617854.}
\author{Avy Soffer}
\address{Rutgers University, Department of Mathematics, 110 Freylinghuysen Road, Piscataway, NJ 08854, U.S.A.}
\email{soffer@math.rutgers.edu}
\thanks{The third author wants to thank A. Ori and T. Damour for helpful 
discussions, the
IHES France for the invitation and the NSF DMS-0903651 for partial support.}
\begin{document}

\begin{abstract}
Price's Law states that linear perturbations of a Schwarzschild black hole
fall off as $t^{-2\ell-3}$ for $t \to \infty$ provided the initial data decay sufficiently fast at spatial infinity. Moreover, if the perturbations are initially static (i.e., their time derivative is zero), then the decay is predicted to be $t^{-2\ell-4}$. 
We give a proof of $t^{-2\ell-2}$ decay for general data in the form of weighted $L^1$ to
$L^\infty$ bounds for solutions of the Regge--Wheeler equation.
For initially static perturbations we obtain $t^{-2\ell-3}$.
The proof is based on an integral representation of the solution which follows
from self--adjoint spectral theory.
We apply two different perturbative arguments in order to construct the
corresponding spectral measure and 
the decay bounds are obtained by appropriate oscillatory integral 
estimates.
\end{abstract}

\maketitle
%\tableofcontents

\section{Introduction and main result}

In General Relativity, the dynamics of spacetime is governed by 
Einstein's equation
which, in the absence of matter, takes the form
$$
R_{\mu\nu}(g) = 0
$$
where $R_{\mu \nu}(g)$ is the Ricci tensor of the Lorentz metric
$g$.  Exact solutions (i.e., solutions which are known in closed form)
include the free flat Minkowski spacetime as well as the Schwarzschild
metric and, more generally, the Kerr solution. The
Schwarzschild solution is spherically symmetric and corresponds to
a nonrotating black hole whereas rotating black holes are described by the
axially symmetric Kerr spacetime. 
A fundamental mathematical
problem in General Relativity is the understanding of the stability
of these solutions. The stability of the flat Minkowski spacetime under small
perturbations was shown in the seminal work of Christodoulou and
Klainerman~\cite{K-C} in the late 1980's. A simpler proof was later
developed by Lindblad and Rodnianski~\cite{L-R}.
However, we are very far from understanding the dynamics near a
black hole. Yet, latest experimental setups are crucially dependent
on such an analysis, in order to observe gravitational waves (see
for example \cite{Dam-N1}, \cite{Dam-N2}, \cite{Dam-N3}, \cite{Dam-N4}
and cited ref.).
Most efforts are now
focused on understanding the linear dynamics and stability of such
solutions, see e.g.~\cite{Hod2}, \cite{Bla-Burko} and cited ref., as
well as~\cite{Poi}. 
The mathematical aspects of the problem will be discussed below in more detail.
We also refer the reader to the survey \cite{D-Rod2} which
gives an excellent overview of recent developments in the field from the 
mathematical perspective.

\subsection{Wave evolution on the Schwarzschild manifold}

As a first approximation to the linear stability problem of a nonrotating black
hole one may consider the wave equation on a fixed Schwarzschild background.
One is then typically interested in decay estimates for the evolution. To simplify
things even more, one restricts the analysis to the exterior region of the black
hole which, however, is physically reasonable: 
such a model describes a black hole subject to a small external perturbation 
by a scalar field --- a
situation which, with a more realistic matter model, is certainly relevant
in an astrophysical context.
In order to formulate the problem we choose coordinates such that the
exterior region of
the black hole can be written as $(t,r,(\theta,\phi)) \in \mathbb{R} \times
( 2M, \infty) \times S^2$ with the metric
$$
g = -F(r)dt^2 + F(r)^{-1} dr^2 + r^2(d\theta ^2 +  \sin^2\theta d\phi^2)
$$
where $ F(r) = 1 - \frac{2M}{r}$ and, as usual, $M>0$ denotes the mass. 
We now introduce the well--known \emph{Regge--Wheeler
tortoise coordinate} $r_*$ which (up to an additive constant) 
is defined by the relation
$$
F = \frac{dr}{dr_*}.
$$
In this new coordinate system, the outer region is described by
$(t,r_*,(\theta,\phi)) \in \mathbb{R} \times \mathbb{R} \times S^2$,
\begin{equation}
\label{eq_sstortoise}
g = -F(r)dt^2 +F(r)dr_*^2 +r^2(d\theta^2 + \sin^2 \theta d\phi ^2)
\end{equation}
with $F$ as above and $r$ is now interpreted as a function of $r_*$. 
Explicitly,  $r_*$ is computed as
\begin{equation*}
r_* = r + 2M\log \left(\frac{r}{2M}-1\right).
\end{equation*}
Generally, the Laplace--Beltrami operator on a manifold with metric $g$
is given by
$$ \Box_g=\frac{1}{\sqrt{|\det (g_{\mu \nu})|}}\partial_\mu 
\left (\sqrt{|\det (g_{\mu \nu})|}g^{\mu \nu}\partial_\nu \right ) $$
and thus, for the metric $g$ in (\ref{eq_sstortoise}), we obtain
$$ \Box_g=F^{-1}\left (-\partial_t^2+\frac{1}{r^2}\partial_{r_*} \left (r^2
\partial_{r_*} \right ) \right )+\frac{1}{r^2}\Delta_{S^2}. $$
By setting $\psi(t,r_*,\theta, \phi)=r(r_*)\tilde{\psi}(t,r_*,\theta,\phi)$ 
and writing $x=r_*$, 
the wave equation $\Box_g
\tilde{\psi}=0$ is equivalent to
\begin{equation}
\label{eq_wavess}
-\partial_t^2 \psi+\partial_x^2
\psi-\frac{F}{r}\frac{dF}{dr}\psi+\frac{F}{r^2}\Delta_{S^2}\psi=0. 
\end{equation}

The mathematically rigorous analysis of this equation has been initiated by
Wald \cite{Wald1}, however,
the first complete proof of uniform boundedness of solutions is due to
Kay and Wald \cite{Kay-Wald}.
Recently, Dafermos and Rodnianski have found a more robust method to prove
boundedness of solutions
based on vector
field multipliers that capture the so--called red--shift effect \cite{D-Rod3},
see also \cite{D-Rod2} for a survey and
generalizations of these results.
The goal of our present work is to prove $L^1$ to $L^\infty$ decay
estimates for Eq.~(\ref{eq_wavess}).  Different types of decay
estimates have been proved before. Local decay estimates, based on
multipliers generalizing the Morawetz estimates, were initiated in
\cite{B-Sof1}, \cite{B-Sof2} and \cite{B-Sof3}. Later, a similar approach 
was used in
\cite{D-Rod}, \cite{B-St}, \cite{D-Rod3}, \cite{D-Rod4} and 
\cite{Luk} to prove both local decay
estimates and pointwise decay in time based on conformal type
identities. In \cite{Met-Tat1}, \cite{Met-Tat2}, \cite{Tat-Toh} and 
\cite{MMTT} it is
shown how to apply such estimates to obtain Strichartz type decay
estimates.
We also mention the recent work \cite{Andersson}.
After submission of the present paper, Tataru announced a proof of the sharp pointwise
$t^{-3}$ decay for general data without symmetry assumptions, see \cite{tataru09}.
Moreover, his result also applies to the more complicated case of rotating Kerr black holes.
In fact, in the follow--up paper \cite{DSS2} we also obtain pointwise $t^{-3}$ decay on Schwarzschild for general data. We will discuss this below in more detail.
Our results differ from the above in certain respects: the methods
we use are based on constructing the Green's function and deriving
the needed estimates on it. Previous works in this direction include mainly the series of
papers \cite{FKSY1}, \cite{FKSY2}, \cite{FKSY3} where the first pointwise decay result
for Kerr black holes has been proved, see also \cite{Kron} and \cite{Kron2} for Schwarzschild. 
In our approach, we freeze the
angular momentum $\ell$ or, in other words, we project onto a
spherical harmonic.
More precisely, let $Y_{\ell,m}$ be a spherical harmonic (that is, an eigenfunction of
the Laplacian on $S^2$ with eigenvalue $-\ell(\ell+1)$) and insert the 
ansatz $\psi(t,x,\theta,\phi)
=\psi_{\ell,m}(t,x)Y_{\ell,m}(\theta,\phi)$ in Eq.~(\ref{eq_wavess}).
This yields the \emph{Regge--Wheeler equation}
$$
\partial_t^2 \psi_{\ell,m}-\partial_x^2
\psi_{\ell,m}+V_{\ell,\sigma}(x)\psi_{\ell,m}=0
$$
with $\sigma=1$ 
where
$$ V_{\ell,\sigma}(x)=\left (1-\frac{2M}{r(x)} \right )
\left (\frac{\ell(\ell+1)}{r^2(x)}+\frac{2M\sigma}{r^3(x)} \right ) $$
is known as the \emph{Regge--Wheeler potential}.
In the present work, we obtain decay estimates for solutions of this equation.
However, before we explain our results in more detail, we further motivate 
the study of the Regge--Wheeler equation by considering more general black hole
perturbations. 

\subsection{Other types of black hole perturbations}
The wave equation on the Schwarzschild manifold describes the time evolution of
linearized scalar field perturbations of a black hole.
Of course, not all physically relevant situations are covered by this
simple model since it ignores the underlying tensorial structure altogether.
Eventually, one is interested in perturbing fields of higher spin, 
in particular gravitational perturbations.
However, as a remarkable fact, the Regge--Wheeler equation is also relevant in
this context.
This follows from a reduction procedure that goes back to Regge and Wheeler
\cite{RW} as well as Zerilli \cite{Zerilli}, see also 
\cite{Vish} and \cite{Chandra}. We
will briefly sketch how this comes about.
In order to study gravitational perturbations, one considers a 
perturbed Schwarzschild metric $\tilde{g}$ of 
the form
$$ \tilde{g}=-e^{2(\nu+\delta \nu)}dt^2+e^{2(\psi+\delta \psi)}
(d\phi-\delta \omega dt-\delta q_2 dr - \delta q_3 d\theta)^2
+e^{2(\mu_2+\delta \mu_2)}dr^2+e^{2(\mu_3+\delta \mu_3)}d\theta^2 $$
where the various coefficients are allowed to depend on $t,r,\theta$ and 
$e^{2\nu}=e^{-2\mu_2}=1-\frac{2M}{r}$, $e^{\mu_3}=r$, $e^\psi=r \sin
\theta$ (we follow the notation of \cite{Chandra}).
It can be shown (see \cite{Chandra}) that this ansatz is sufficiently general.
One then requires the metric $\tilde{g}$ to satisfy the linearized 
Einstein vacuum
equations, i.e., one linearizes $R_{\mu \nu}(\tilde{g})=0$ 
with respect to the perturbations $\delta \nu$, $\delta \psi$, etc.
It turns out that one has to distinguish between so--called \emph{axial} 
($\delta \omega, \delta
q_2, \delta q_3$) and \emph{polar} ($\delta \nu, \delta \psi, \delta \mu_1, \delta
\mu_2$) perturbations, depending on the behavior of the metric under the
reflection $\phi \mapsto -\phi$.
After a lengthy calculation and separation of the $\theta$--dependence one
arrives at
$$ \partial_t^2 \psi_\ell-\partial_x^2
\psi_\ell+\left (1-\frac{2M}{r(x)} \right )
\left (\frac{\ell(\ell+1)}{r^2(x)}-\frac{6M}{r^3(x)} \right )
\psi_\ell=0 $$
where $\psi_\ell$ is an auxiliary function which completely determines
the axial perturbations, see
\cite{Chandra} for details.
Thus, $\psi_\ell$ satisfies the Regge--Wheeler equation with $\sigma=-3$.
In the case of polar perturbations, Zerilli \cite{Zerilli} has derived an analogous 
equation with a more
complicated effective potential. However, Chandrasekhar \cite{Chandr2} (see also
\cite{Chandra}) has found a transformation involving differential operations
that relates this equation to the one for axial perturbations.
As a consequence, the Regge--Wheeler equation provides a fairly complete
description of gravitational perturbations with a fixed angular momentum
parameter $\ell$.
Moreover, we mention the fact that the Regge--Wheeler equation with parameter
$\sigma=0$ appears in the study of electromagnetic perturbations of
Schwarzschild black holes, i.e., if one considers
the Einstein--Maxwell system and linearizes around
the Reissner--Nordstr\"om solution with zero charge.
We do not comment on this further but simply refer the reader to the literature,
see \cite{Chandra} and references therein.
As a consequence, the study of the Regge--Wheeler equation can provide valuable
information on the stability of Schwarzschild black holes under various types of
perturbations and it is truly remarkable that such a unified approach is
available.

\subsection{Decay estimates for the Regge--Wheeler equation}

The most salient feature of the Regge--Wheeler potential is
that it decays exponentially as $x \to -\infty$ which corresponds to
approaching the black hole, whereas for $x \to \infty$, it falls off as
$x^{-2}$. Strictly speaking, this is only true for $\ell>0$. The case $\ell=0$ 
is exceptional and we consider it separately in the companion paper
\cite{donninger} where we obtain the sharp $t^{-3}$ decay as predicted by Price's Law.
Consequently, in this paper, we focus on $\ell>0$ which, unless otherwise stated, will be assumed throughout.
For $\ell>0$ the potential has inverse square decay and
it is well-known~\cite{deift} that this fall--off behavior
is in some sense critical for the scattering theory.
In order to explain this we define the Schr\" odinger operator
$\mc{H}_{\ell,\sigma}$ by
$$ \mc{H}_{\ell,\sigma}f=-f''+V_{\ell,\sigma}f $$
and recall that the Jost solutions $f_\pm(x,\lambda)$ are defined by
$\mc{H}_{\ell,\sigma} f_\pm(\cdot,\lambda)=\lambda^2 f_\pm(\cdot,\lambda)$ and $f_\pm(x,\lambda) \sim e^{\pm i\lambda x}$ as
$x \to \pm \infty$.
The property $V_{\ell,\sigma} \in L^1(\mathbb{R})$
is sufficient to guarantee the existence of these solutions, see \cite{deift},
but the inverse square decay of $V_{\ell,\sigma}$ is critical in the sense
that at this power the Jost solutions typically are no longer
continuous as $\lambda \to 0$.
Nevertheless, following \cite{schlag2}, it is possible to perform a
detailed spectral and scattering analysis of the Schr\"odinger
operator $\mc{H}_{\ell,\sigma}$. 
However, we emphasize that the present work differs considerably from
\cite{schlag2} due to the asymmetric decay properties of
the potential $V_{\ell,\sigma}$.
Of
particular importance is the asymptotic behavior of the 
resolvent $((\lambda+i0)-\mc{H}_{\ell,\sigma})^{-1}$ (and thus, of the
Jost solutions and their Wronskian) as $\lambda \to 0$.
This is
a common feature in dispersive estimates, see~\cite{schlag3}. 
In particular, we are faced with the possibility of a {\em zero
energy resonance}. However, it was already observed earlier \cite{RW}, 
\cite{Wald2},
\cite{Price1}, \cite{Vish}
that
in the physically relevant cases such a zero energy resonance does
not occur (see also Lemma \ref{lem_nonresonant} below).
Our approach is detailed enough to show rigorously, for the first
time, the decay estimates depending on the angular momentum of the
initial data. In his seminal work \cite{Price1}, \cite{Price2}, 
see also \cite{GPP}, \cite{Price-Bur}, 
Price heuristically derived
the decay rate in time at a fixed point in space, and concluded that, depending
on initial conditions, 
the decay rate is either $t^{-2\ell-3}$ or $t^{-2\ell-2}$ where $\ell$
is the angular momentum. 
This result is now referred to as \emph{Price's Law}.
There has been some confusion in the literature concerning the precise prediction of Price's Law. 
This has been clarified in Price and Burko \cite{Price-Bur}.
If the initial data decay sufficiently fast at spatial infinity then the pointwise decay in time is predicted to be $t^{-2\ell-3}$.
In the present paper we give the first proof of an $\ell$--dependent decay rate.
More precisely, we obtain a $t^{-2\ell-2}$ estimate which
is one power off the sharp version of Price's Law. 
However, we emphasize that our method yields estimates in terms of the initial data and not just a pointwise decay law as is common in the physics literature.
To be more precise, we show that
\begin{align}
\label{eq_crux1} 
\|w_\alpha \cos(t\sqrt{\mc{H}_{\ell,\sigma}})f\|_{L^\infty(\mathbb{R})}
& \leq C_{\ell,\alpha} \langle t \rangle^{-\alpha}\left (\left \|\frac{f'}{w_\alpha}\right
\|_{L^1(\mathbb{R})}+ \left \|\frac{f}{w_\alpha}\right
\|_{L^1(\mathbb{R})} \right ) \\
\label{eq_crux2}
\left \|w_\alpha
\frac{\sin(t\sqrt{\mc{H}_{\ell,\sigma}})}{\sqrt{\mc{H}_{\ell,\sigma}}}f \right 
\|_{L^\infty(\mathbb{R})}
& \leq C_{\ell,\alpha} \langle t \rangle^{-\alpha+1}\left \|\frac{f}{w_\alpha}\right
\|_{L^1(\mathbb{R})} 
\end{align}
for all $t \geq 0$ where $w_\alpha(x):=\langle x \rangle^{-\alpha}$ is a
polynomial weight and, as usual, $\langle x \rangle:=(1+|x|^2)^{1/2}$.
Here one has to require $1 \leq \alpha \leq 2\ell+3$ and one needs to exclude
\footnote{In fact, the case $(\ell,\sigma)=(0,0)$ is trivial since for these
parameter values the Regge--Wheeler equation reduces to 
the free wave equation
on the line.}
$(\ell,\sigma)\in\{(0,0), (0,-3), (1,-3)\}$ which are exactly those cases
where zero energy resonances {\em do} occur --- however, they are
physically irrelevant due to a gauge freedom, cf.~\cite{Chandra}. 
Observe that for $\alpha=2\ell+3$ we obtain precisely the aforementioned bound.
It is also obvious from our approach that the decay of initially static perturbations is better by one power of $t$ as is reflected by our cosine estimate.
This is a general effect which is also present in Price's prediction, cf.~\cite{Price-Bur}.

The proof of (\ref{eq_crux1}) and (\ref{eq_crux2}) is based on 
representing the solution as an oscillatory
integral in the energy variable $\lambda$, schematically one may write
$$
\psi(t,x) = \int U(t,\lambda) \mathrm{Im} \left [
G_{\ell,\sigma}(x,x',\lambda) \right ] f(x')\,
dx'd\lambda
$$
where  $U(t,\lambda)$ is a combination of $\cos (t\lambda) $ and
$\sin (t\lambda)$ terms and $G_{\ell,\sigma}(x,x',\lambda)$  
is the kernel (Green's function) of the
resolvent of the  operator $\mc{H}_{\ell,\sigma}$.  
$G_{\ell,\sigma}(x,x',\lambda)$ is constructed in terms of the
Jost solutions and we obtain these functions in various domains of
the $(x,\lambda) $ plane by perturbative arguments: for $|x \lambda|$ small
we perturb in $\lambda$ around $\lambda=0$, whereas for $|x \lambda|$ large we
perturb off of Hankel functions.
This is done in such a way that there remains a small window where the two
different
perturbative solutions can be glued together.
One of the main technical difficulties of the proof lies with the fact that we
need good estimates for arbitrary derivatives of the perturbative solutions.
This is necessary in order to control the oscillatory integrals.
The most important contributions come from $\lambda \sim 0$ and we therefore
need to derive the
exact asymptotics of the Green's function and its derivatives in the limit
$\lambda \to 0$. 
For instance, we prove that
$$ \mathrm{Im}\left [G_{\ell,\sigma}(0,0,\lambda) \right
]=\lambda P_\ell(\lambda^2)+O(\lambda^{2\ell+1}) $$
as $\lambda \to 0+$ where $P_\ell$ is a polynomial of degree $\ell-1$ (we set
$P_0 \equiv 0$) and the
$O$--term satisfies $O^{(k)}(\lambda^{2\ell+1})=O(\lambda^{2\ell+1-k})$ for all
$k \in \mathbb{N}_0$. 
Our approach therefore yields further information on the Green's
function and the fundamental solution of the wave equation on
the Schwarzschild manifold.

\subsection{Interpretation of the result and further comments}

For the relevant parameter values, i.e., $(\ell,\sigma)\not\in\{(0,0),(0,-3),(1,-3)\}$, the Regge-Wheeler potential
$V_{\ell,\sigma}$ is positive, decays as described above, and has a unique 
nondegenerate maximum at $r=r_0$ which
is known as the \emph{photon sphere}. As a helpful analogy, consider a Newtonian particle 
placed at $r=r_0$ with vanishing kinetic energy
but potential energy $V_{\ell,\sigma}|_{r=r_0}$. It will remain at rest, 
but every slight perturbation will make it lose
potential energy and gain kinetic energy; the larger $\ell$ is, 
the faster this will occur which reflects itself in the $\ell$--dependent
decay rates. In the context of the Schwarzschild geometry, 
the flow of null geodesics near
$r=r_0$ is unstable and
the dispersion provides a 
mechanism that spreads out the wave away from the photon sphere.
Moreover, the higher the angular momentum $\ell$, the faster the 
geodesics will pull away leading to the accelerated decay
provided by Price's Law. 
Note carefully, however, that this is 
counteracted by what can be viewed as a stabilizing effect of large~$\ell$.
Technically speaking, this reflects itself in the constant $C_{\ell,\alpha}$: 
the larger this constant is, the longer one has to wait
before the decay estimates become effective. 
It is important to note that our approach is essentially blind to the local
geometry, that is, the fine structure of the potential is irrelevant --- only
positivity, the decay properties and the nonexistence of a zero energy
resonance are used.
This is in contrast to the methods based on Morawetz type estimates.
In particular, the phenomenon of trapping does not play a role at this level ---
it simply enlarges the constants.
However, eventually one is interested in the overall decay which is obtained by
summing the individual contributions over all $\ell$ and
at this stage, of course, trapping becomes relevant since the 
$\ell$--dependence 
of the constants is crucial for the summation.
As a matter of fact, our proof produces a constant which grows super--exponentially 
in~$\ell$.
Consequently, in order to be able to sum the estimates, a different approach is necessary for large $\ell$.
This issue is addressed in our paper \cite{DSS2} where a detailed semiclassical asymptotic analysis is performed. 
The role of the semiclassical parameter $\hbar$ is played by $\ell^{-1}$ 
(simply divide ${\mathcal H}_{\ell,\sigma}$ by $\ell^2$). 
In particular, such an analysis requires a careful study of the
spectral measure near the maximum of $V_{\ell,\sigma}$ and 
it is exactly at this point where the instability of null geodesics at 
the photon sphere becomes
crucial.
As a consequence, in \cite{DSS2}, we show that the estimates for individual $\ell$'s can indeed be summed and thereby, we obtain the sharp $t^{-3}$ decay bound for
general data
with a loss of a finite number of angular derivatives.
We also remark that there are various formal approaches in the physics 
literature to
find the $\ell$--dependence of the constants, see in particular \cite{Bar-Ori}. 

Let us finally mention that decay estimates like (\ref{eq_crux1}) and (\ref{eq_crux2}) 
play an important role in the current
theoretical and numerical analysis of black holes. For instance, 
they serve as a
way to verify various numerical schemes for solving the Einstein
equations in the presence of black holes, see e.g.
\cite{aich1}, \cite{aich2}, \cite{aich3}, 
\cite{GPP}, \cite{Bizon}, \cite{Bizon2}, \cite{Bla-Burko}, \cite{Bur-Khanna}, \cite{Bar-Ori}, 
\cite{Dam-N1}, \cite{Dam-N2}, \cite{Dam-N3},
\cite{Dam-N4} and cited
ref. 
For other recent theoretical implications of the angular
behavior see for example \cite{Bar-Ori}, \cite{Bizon} and \cite{Matsas}. 

\subsection{Notations and conventions}
For a given smooth function $f$ we denote by $O(f(x))$ a generic 
\emph{real--valued} function that satisfies
$|O(f(x))|\leq |f(x)|$ in a specified range of $x$ which follows from
the context. We write $O_\mathbb{C}(f(x))$ if the function attains complex 
values.
The symbol $f(x) \sim g(x)$ for $x \to a$, where $g$ is smooth, 
means $\lim_{x \to a}\frac{f(x)}{g(x)}=1$.
Furthermore, the letter $C$ (possibly with indices) denotes a generic
positive constant. 
We say that $O(x^\gamma)$, $\gamma \in \mathbb{R}$, 
\emph{behaves like a symbol}, \emph{is of symbol
type}, or \emph{has symbol character}, if the $k$--th derivative satisfies $O^{(k)}(x^\gamma)=O(x^{\gamma-k})$.
As usual, we use the abbreviation $\langle x \rangle:=\sqrt{1+|x|^2}$ and
the symbol $A \lesssim B$ means that there exists a $C>0$ such that $A
\leq CB$.
We also note that all of the functions we are going to construct will depend on
the parameters $\ell$ and $\sigma$, 
however, in order to increase readability of the equations, we will omit this
dependence in the notation most of the time. 
The same comment applies to all implicit and explicit constants.
Finally, as already mentioned, we assume $\ell>0$ unless otherwise stated.

\section{Solutions of the Regge--Wheeler equation}

\subsection{Asymptotics of the potential}

As explained in the introduction, linear perturbations of the 
Schwarzschild spacetime are described by the Regge--Wheeler equation
$$ \psi_{tt}-\psi_{xx}+V_{\ell,\sigma}(x)\psi=0 $$
where $V_{\ell,\sigma}$ is the Regge--Wheeler potential and 
$x$ is the tortoise coordinate which is related to the standard 
$r$--coordinate by 
\begin{equation}
\label{eq_tortoise}
x=r+2M \log \left (\frac{r}{2M}-1 \right ).
\end{equation}
$V_{\ell,\sigma}$ is given by 
$$ V_{\ell,\sigma}(x)=\left (1-\frac{2M}{r(x)} \right )\left (\frac{\ell(\ell+1)}{r^2(x)}+\frac{2M\sigma}{r^3(x)} \right ) $$
where $r(x)$ is implicitly defined by Eq.~(\ref{eq_tortoise}).
The valid range of the parameters is $\ell \in \mathbb{N}_0$ and
$\sigma=-3,0,1$. 
We start by obtaining the asymptotics of the potential $V_{\ell,\sigma}$.

\begin{lemma}
\label{lem_r}
The function $x \mapsto r(x)$ has the asymptotic behavior 
$r(x)=x(1+O(x^{-1+\varepsilon}))$ for $x \to
\infty$ and $r(x)=2M+O(e^{x/(2M)})$ for $x \to -\infty$ where $\varepsilon \in (0,1)$ is arbitrary
and the $O$--term in the
expression for $x \to \infty$ behaves
like a symbol.
\end{lemma}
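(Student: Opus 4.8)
The plan is to exploit the implicit relation \eqref{eq_tortoise}, $x=r+2M\log(r/(2M)-1)$, which for $r>2M$ defines $r$ as a smooth, strictly increasing function of $x$ on all of $\mathbb{R}$: differentiating gives $\frac{dx}{dr}=\frac{r}{r-2M}>0$, equivalently $\frac{dr}{dx}=F(r)=1-\frac{2M}{r}\in(0,1)$, and $r(x)\to\infty$ as $x\to\infty$ while $r(x)\to 2M+$ as $x\to-\infty$. Thus the two regimes $x\to\pm\infty$ correspond exactly to the endpoints $r\to\infty$ and $r\to 2M+$, and the monotonicity lets one freely bootstrap crude size comparisons between $x$ and $r$.

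For $x\to\infty$ the plan is to rewrite \eqref{eq_tortoise} as $r-x=-2M\log(r/(2M)-1)=-2M\log r+2M\log(2M)-2M\log(1-2M/r)$. Since $r\to\infty$ together with $x$, this yields $r(x)=x-2M\log x+O(1)$, and in particular $r(x)=x\bigl(1+O(x^{-1}\log x)\bigr)=x\bigl(1+O(x^{-1+\varepsilon})\bigr)$ for any $\varepsilon\in(0,1)$. To upgrade this to the symbol statement, set $g(x):=r(x)-x$, so that $g'(x)=F(r(x))-1=-2M/r(x)$. The key point is that $1/r(x)=O(x^{-1})$ behaves like a symbol, i.e. $\frac{d^k}{dx^k}(1/r)=O(x^{-1-k})$ for all $k$: indeed, by induction $\frac{d^k}{dx^k}(1/r)$ is a finite linear combination, with coefficients depending only on $M$, of the functions $r^{-1-k-j}$ with $j\ge 0$, because $\frac{d}{dx}(r^{-n})=-nF(r)r^{-n-1}=-n\bigl(r^{-n-1}-2Mr^{-n-2}\bigr)$, and $r\sim x$. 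Consequently $g^{(k)}(x)=-2M\frac{d^{k-1}}{dx^{k-1}}(1/r)=O(x^{-k})$ for all $k\ge 1$, while $g(x)=O(\log x)=O(x^{\varepsilon})$; writing $r(x)=x\bigl(1+h(x)\bigr)$ with $h(x)=g(x)/x$, the Leibniz rule then gives $h^{(k)}(x)=O(x^{-1+\varepsilon-k})$, so the $O(x^{-1+\varepsilon})$ term behaves like a symbol.

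For $x\to-\infty$ I would substitute $r=2M(1+u)$, so that $r/(2M)-1=u\to 0+$ and \eqref{eq_tortoise} becomes $x=2M+2Mu+2M\log u$, i.e. $\log u=\frac{x}{2M}-1-u$. Exponentiating gives $u=e^{x/(2M)-1-u}=e^{x/(2M)}\bigl(e^{-1}+O(u)\bigr)$; since $u\to 0$ as $x\to-\infty$ this already forces $u=O(e^{x/(2M)})$, and feeding this back in yields $u=e^{-1}e^{x/(2M)}\bigl(1+O(e^{x/(2M)})\bigr)$. Hence $r(x)-2M=2Mu=O(e^{x/(2M)})$, as claimed. (If derivative bounds are also wanted here, one differentiates $\frac{d}{dx}(r-2M)=\frac{r-2M}{r}$ near $r=2M$, which shows that $r-2M$ and all of its $x$-derivatives decay like $e^{x/(2M)}$.)

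I expect no serious obstacle: the lemma is elementary, and the only step needing a little care is the symbol-type claim for the $x\to\infty$ tail, which reduces entirely to the autonomous ODE $\frac{dr}{dx}=F(r)$ together with the crude asymptotics $r\sim x$. The one point worth stating cleanly is the passage from the implicit relation $x=x(r)$ to the two-sided comparison $r\sim x$, which is immediate from the monotonicity of $r(x)$ and the identification of the two endpoints.
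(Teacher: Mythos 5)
Your proposal is correct and follows essentially the same route as the paper: the crude comparison $r\sim x$ from the implicit relation, the logarithmic estimate for $r-x$ giving the $O(x^{-1+\varepsilon})$ bound, the ODE $r'(x)=1-\frac{2M}{r(x)}$ together with an induction (which you carry out explicitly, whereas the paper leaves it to the reader) for the symbol behavior, and exponentiation of the defining relation near the horizon for the $x\to-\infty$ asymptotics.
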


\begin{proof}
The function $r(x)$ is implicitly defined by $x=r(x)+2M \log \left (\frac{r(x)}{2M}-1 \right )$ and 
thus, we have $r(x) \to \infty$ as $x \to \infty$.
This implies $x \sim r(x)$ and hence, $r(x) \sim x$ as $x \to \infty$.
We infer that $x-r(x) = 2M \log \left( \frac{r(x)}{2M}-1 \right ) \sim 
2M \log \frac{r(x)}{2M}  \sim 2M
\log \frac{x}{2M}$ and this shows $x-r(x)=O(x^\varepsilon)$.
For the symbol behavior note that $\frac{dx}{dr}(r)=\left (1-\frac{2M}{r} \right)^{-1}$ which implies that
$r'(x)=1-\frac{2M}{r(x)}$. The claim now follows by induction.

For the case $x \to -\infty$ we have $e^{x/(2M)}=e^{r(x)/(2M)}(\frac{r(x)}{2M}-1) \sim e
(\frac{r(x)}{2M}-1)$ which shows $r(x)-2M \sim 2Me^{x/(2M)-1}$ and this implies the claim.
\end{proof}

\begin{corollary}
\label{cor_RWasymptotics}
The Regge--Wheeler potential $V_{\ell,\sigma}$ has the asymptotic behavior
$$V_{\ell,\sigma}(x)=\frac{\ell(\ell+1)}{x^2}(1+O(x^{-1+\varepsilon}))$$
for $x \to \infty$ and
$V_{\ell,\sigma}(x)=O(e^{x/(2M)})$ for $x \to -\infty$ where $\varepsilon \in (0,1)$ is arbitrary and
the $O$--term in the expression for $x \to \infty$ behaves like a symbol.
\end{corollary}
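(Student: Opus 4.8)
The plan is to substitute the asymptotics of $r(x)$ from Lemma \ref{lem_r} into the explicit formula for $V_{\ell,\sigma}$ and to propagate the remainders through the elementary operations involved. The only point requiring care is the symbol character of the error terms, so first I would isolate the relevant closure properties of symbol-type remainders: if $h_1=O(x^{\gamma_1})$ and $h_2=O(x^{\gamma_2})$ both behave like symbols as $x\to\infty$, then $h_1h_2=O(x^{\gamma_1+\gamma_2})$ does too (Leibniz rule); and if $u=O(x^{-\delta})$ with $\delta>0$ behaves like a symbol, then $(1+u)^{-1}=1+O(x^{-\delta})$, and more generally $(1+u)^n=1+O(x^{-\delta})$ for $n\in\mathbb Z$, again behave like symbols. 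The last fact follows by induction on the order of the derivative, differentiating $(1+u)^n$ and using that $1+u$ is bounded and bounded away from $0$ for $x$ large.

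For $x\to\infty$ I would write, using Lemma \ref{lem_r}, $r(x)=x(1+u(x))$ with $u(x)=O(x^{-1+\varepsilon})$ of symbol type. The closure properties then give $r(x)^{-k}=x^{-k}(1+O(x^{-1+\varepsilon}))$ of symbol type for $k=1,2,3$, and hence $1-\frac{2M}{r(x)}=1-\frac{2M}{x}(1+O(x^{-1+\varepsilon}))=1+O(x^{-1})$ of symbol type (note $O(x^{-1})\subset O(x^{-1+\varepsilon})$). Therefore
$$\frac{\ell(\ell+1)}{r^2(x)}+\frac{2M\sigma}{r^3(x)}=\frac{\ell(\ell+1)}{x^2}\bigl(1+O(x^{-1+\varepsilon})\bigr)+O(x^{-3}).$$
Here I would use the standing assumption $\ell>0$: since then $\ell(\ell+1)\geq 2$, one may absorb $O(x^{-3})=\frac{\ell(\ell+1)}{x^2}O(x^{-1})$, so the left-hand side equals $\frac{\ell(\ell+1)}{x^2}(1+O(x^{-1+\varepsilon}))$ with symbol-type remainder. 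Multiplying by $1-\frac{2M}{r(x)}=1+O(x^{-1+\varepsilon})$ and using that the product of two factors of the form $1+O(x^{-1+\varepsilon})$ is again of that form (the cross term $O(x^{-2+2\varepsilon})$ is absorbed since $\varepsilon<1$) yields $V_{\ell,\sigma}(x)=\frac{\ell(\ell+1)}{x^2}(1+O(x^{-1+\varepsilon}))$ with symbol-type remainder, which is the first claim.

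For $x\to-\infty$ no derivative bookkeeping is needed. By Lemma \ref{lem_r}, $r(x)=2M+O(e^{x/(2M)})$, so in particular $r(x)$ is bounded and bounded away from $0$; hence $\frac{\ell(\ell+1)}{r^2(x)}+\frac{2M\sigma}{r^3(x)}=O(1)$. On the other hand $1-\frac{2M}{r(x)}=\frac{r(x)-2M}{r(x)}=O(e^{x/(2M)})$, since $r(x)-2M=O(e^{x/(2M)})$ and $\frac1{r(x)}=O(1)$. Multiplying the two factors gives $V_{\ell,\sigma}(x)=O(e^{x/(2M)})$ for $x\to-\infty$.

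The only real obstacle is the preliminary step of checking that symbol character is preserved under reciprocals, integer powers and products; once this elementary symbol calculus is established the corollary is an immediate substitution. Since this type of bookkeeping recurs throughout, it is worth recording it carefully at this stage.
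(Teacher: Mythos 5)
Your proposal is correct and follows essentially the same route as the paper: substitute the asymptotics of $r(x)$ from Lemma \ref{lem_r} into the formula for $V_{\ell,\sigma}$ and obtain the symbol character of the remainders via the Leibniz rule and Lemma \ref{lem_symbolinverse}. The extra bookkeeping you spell out (closure of symbol-type $O$--terms under products, reciprocals and powers, absorbing $O(x^{-3})$ using $\ell\geq 1$) is exactly what the paper's one-line proof implicitly relies on.
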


\begin{proof}
Just insert the asymptotic expansions from Lemma \ref{lem_r} 
in the expression for $V_{\ell,\sigma}$.
For the symbol behavior apply the Leibniz rule and Lemma \ref{lem_symbolinverse}.
\end{proof} 

\subsection{Hilbert space formulation}
We define the Schr\"odinger operator $\mc{H}_{\ell,\sigma}$ on $L^2(\mathbb{R})$
with domain $\mc{D}(\mc{H}_{\ell,\sigma}):=H^2(\mathbb{R})$ by
$$\mc{H}_{\ell,\sigma}f:=-f''+V_{\ell,\sigma}f. $$
From the decay properties of $V_{\ell,\sigma}$ it follows that
$\mc{H}_{\ell,\sigma}$ is self--adjoint (see e.g. \cite{teschl}).
Furthermore, integration by parts shows
$$ (\mc{H}_{\ell,\sigma} f|f)_{L^2(\mathbb{R})} \geq (V_{\ell,\sigma}
f|f)_{L^2(\mathbb{R})} \geq 0 $$
since $V_{\ell,\sigma} \geq 0$ for all $\ell \in \mathbb{N}_0$ if $\sigma=0,1$.
For gravitational perturbations ($\sigma=-3$) we have to assume $\ell \geq 2$ to
obtain $V_{\ell,\sigma}\geq 0$ which we shall do from now on.
We conclude that the spectrum of $\mc{H}_{\ell,\sigma}$ 
is purely absolutely continuous and we have
$\sigma(\mc{H}_{\ell,\sigma})=\sigma_{ac}(\mc{H}_{\ell,\sigma})=[0,\infty)$ (see
\cite{teschl}) provided that $(\ell,\sigma) \notin \{(0,-3),(1,-3)\}$. 
An operator formulation
of the Regge--Wheeler equation is given by
$$ \frac{d^2}{dt^2}\Psi(t)+\mc{H}_{\ell,\sigma} \Psi(t)=0 $$
where $\Psi: \mathbb{R} \to L^2(\mathbb{R})$.
Applying the functional calculus for self--adjoint operators, the solution
$\Psi$ with initial data $\Psi(0)=f$ and $\frac{d\Psi}{dt}(0)=g$ is given by 
$$
\Psi(t)=\cos(t\sqrt{\mc{H}_{\ell,\sigma}})f+\frac{\sin(t\sqrt{\mc{H}_{\ell,\sigma}})}
{\sqrt{\mc{H}_{\ell,\sigma}}}g. $$
Thus, in order to obtain decay estimates for the solution, we have to understand the
operators $\cos(t\sqrt{\mc{H}_{\ell,\sigma}})$ and $\frac{\sin(t\sqrt{\mc{H}_{\ell,\sigma}})}
{\sqrt{\mc{H}_{\ell,\sigma}}}$.

\subsection{The spectral measure}
\label{sec_specmeas}
Recall that the spectral theorem for self--adjoint operators asserts the
existence of finite complex--valued Borel measures 
$\mu_{u,v}$ such that, for $u,v \in \mc{D}(\mc{H}_{\ell,\sigma})$, we have
$$ (\mc{H}_{\ell,\sigma} u | v)_{L^2(\mathbb{R})}=\int_0^\infty \lambda
d\mu_{u,v}(\lambda). $$
The solution operator $\cos(t \sqrt{\mc{H}_{\ell,\sigma}})$ is then given by
$$ \left ( \left . \cos(t\sqrt{\mc{H}_{\ell,\sigma}}  ) u \right | v \right )_{L^2(\mathbb{R})}=\int_0^\infty
\cos(t\sqrt{\lambda})d\mu_{u,v}(\lambda) $$
for $u,v \in L^2(\mathbb{R})$ and analogous for the sine evolution.
The point is that the spectral measure can be calculated in terms of the
resolvent $R_{\mc{H}_{\ell,\sigma}}(z)=(z-\mc{H}_{\ell,\sigma})^{-1}$ 
of $\mc{H}_{\ell,\sigma}$.
Indeed, for $u \in L^2(\mathbb{R})$ set
$$ F_u(z):=-(R_{\mc{H}_{\ell,\sigma}}(z)u|u)_{L^2(\mathbb{R})}=\int_0^\infty
\frac{1}{\lambda-z}d\mu_u(\lambda) $$
where $\mu_u:=\mu_{u,u}$ and $\mathrm{Im}z>0$.
$F_u$ is the Borel transform of the measure $\mu_u$ and,
since the measure $\mu_u$ is purely absolutely continuous, we
have $$ d\mu_u(\lambda)=\frac{1}{\pi}\lim_{\varepsilon \to
0+}\mathrm{Im}(F_u(\lambda+i\varepsilon))d\lambda, $$
see \cite{teschl} for the underlying theory of this.
The measure $\mu_{u,v}$ can be reconstructed from $\mu_u$ by the polarization
identity, i.e.,  $\mu_{u,v}=\frac{1}{4}(\mu_{u+v}-\mu_{u-v}+i\mu_{u-iv}-i\mu_{u+iv})$.
Furthermore, the resolvent is given by
$$ R_{\mc{H}_{\ell,\sigma}}(z)u(x)=\int_\mathbb{R}G_{\ell,\sigma}(x,x',\sqrt{z})u(x')dx' $$ where
$G_{\ell,\sigma}$ is the Green's function (we always choose the branch of the
square root with $\mathrm{Im}\sqrt{z}>0$ if $\mathrm{Im}z>0$) and thus, we have
$$ d\mu_u(\lambda)=-\frac{1}{\pi}\lim_{\varepsilon \to 0+}\int_\mathbb{R} \int_\mathbb{R}
\mathrm{Im} \left [G_{\ell,\sigma}(x,x',\sqrt{\lambda+i\varepsilon})u(x')\overline{u(x)}\right ]
dx'dxd\lambda . $$
It is known (and, for the convenience of the reader, will be shown below) that 
the limit
$$G_{\ell,\sigma}(x,x',\sqrt{\lambda}):=
\lim_{\varepsilon \to 0+}G_{\ell,\sigma}(x,x',\sqrt{
\lambda+i\varepsilon})$$ exists and satisfies $\sup_{x,x' \in
\mathbb{R}}|G_{\ell,\sigma}(x,x',\lambda)|\lesssim 1$ for all 
$\lambda \geq \lambda_0$ where
$\lambda_0>0$ is arbitrary.
Thus, if $u \in L^1(\mathbb{R}) \cap L^2(\mathbb{R})$, we have
$$ d\mu_u(\lambda)=-\frac{1}{\pi}\int_\mathbb{R} \int_\mathbb{R}
\mathrm{Im} \left [G_{\ell,\sigma}(x,x',\sqrt{\lambda})u(x')\overline{u(x)}\right ]
dx'dxd\lambda $$
by Lebesgue's theorem on dominated convergence and polarization
yields
$$ d\mu_{u,v}(\lambda)=-\frac{1}{\pi}\int_\mathbb{R} \int_\mathbb{R}
\mathrm{Im} \left [ G_{\ell,\sigma}(x,x',\sqrt{\lambda}) \right ] u(x')\overline{v(x)}
dx'dxd\lambda $$
for all $u,v \in L^1(\mathbb{R}) \cap L^2(\mathbb{R})$
since 
$G_{\ell,\sigma}(x,x',\sqrt{\lambda})$ is symmetric in $x$ and $x'$ for $\lambda >0$ as follows
from the explicit form (see below). 

\subsection{Pointwise decay estimates}   
As follows from the discussion above, the functional calculus for 
self--adjoint operators yields
the representation
$$ \left ( \left. \cos(t\sqrt{\mc{H}_{\ell,\sigma}})f \right | v \right
)_{L^2(\mathbb{R})}=-\frac{2}{\pi}\int_0^\infty  \int_\mathbb{R} \int_\mathbb{R} 
\lambda \cos(t\lambda)\mathrm{Im}\left [G_{\ell,\sigma}
(x,x',\lambda) \right ]f(x')dx'
\overline{v(x)}dx d\lambda $$
for $f,v \in \mc{S}(\mathbb{R})$ (the Schwartz space), where we have changed
variables in the integration with respect to $\lambda$.
Our intention is to obtain an expression for 
$\left [ \cos(t\sqrt{\mc{H}_{\ell,\sigma}})f \right ](x)$ and thus, we have to change
the order of integration. 
However, note carefully that a simple argument based on Fubini's theorem does
not apply here since the integrals cannot be expected
to converge absolutely.
In order to circumvent this difficulty, first observe that, for any $N \in
\mathbb{N}$, we have
$$ \int_{1/N}^N \int_\mathbb{R} \int_\mathbb{R} \left |
\lambda \cos(t\lambda)\mathrm{Im}\left [G_{\ell,\sigma}
(x,x',\lambda) \right ]f(x')\overline{v(x)} \right |dx' dx d\lambda \leq C_N $$
which follows immediately from 
$\sup_{x,x' \in
\mathbb{R}}|G_{\ell,\sigma}(x,x',\lambda)|\leq C_N$ for all 
$\lambda \geq \frac{1}{N}$, see Corollary \ref{cor_Green} below.
Thus, Fubini's theorem yields at least
$$ \left ( \left. \cos(t\sqrt{\mc{H}_{\ell,\sigma}})f \right | v \right
)_{L^2(\mathbb{R})}=-\frac{2}{\pi}\lim_{N \to \infty}
\int_\mathbb{R} \int_\mathbb{R} \int_{1/N}^N
\lambda \cos(t\lambda)\mathrm{Im}\left [G_{\ell,\sigma}
(x,x',\lambda) \right ]f(x')d\lambda dx'
\overline{v(x)}dx. $$
Next, we distinguish between high and low
energies by introducing a smooth cut--off $\chi_\delta$ satisfying
$\chi_\delta(\lambda)=1$ for $\lambda \in [0, \frac{\delta}{2}]$ and
$\chi(\lambda)=0$ for $\lambda \geq \delta$ where $\delta>0$ is sufficiently
small.
In Sec. \ref{sec_osc} below we prove the estimate
\begin{equation}
\label{eq_estsmall}
\sup_{x,x' \in \mathbb{R}}\left |\int_0^\infty 
\lambda \cos(t\lambda)\mathrm{Im}\left [G_{\ell,\sigma}(x,x',\lambda) \right ] \langle x
\rangle^{-\alpha}\langle x' \rangle^{-\alpha}\chi_\delta(\lambda)
d\lambda \right |\lesssim \langle t
\rangle^{-\alpha} 
\end{equation}
where $\alpha \in \mathbb{N}$ and $\alpha \leq 2\ell+3$.
This bound is sufficient to conclude
\begin{align*} \lim_{N \to \infty}
\int_\mathbb{R} \int_\mathbb{R} \int_{1/N}^N
\lambda \cos(t\lambda)\mathrm{Im}\left [G_{\ell,\sigma}
(x,x',\lambda) \right ]\chi_\delta(\lambda)f(x')d\lambda dx'
\overline{v(x)}dx \\
=\int_\mathbb{R} \lim_{N \to \infty} \int_\mathbb{R} \int_{1/N}^N
\lambda \cos(t\lambda)\mathrm{Im}\left [G_{\ell,\sigma}
(x,x',\lambda) \right ]\chi_\delta(\lambda)f(x')d\lambda dx'
\overline{v(x)}dx 
\end{align*}
by dominated convergence since $f,v \in \mc{S}(\mathbb{R})$.
For the large energy part we show in Sec. \ref{sec_osclarge} that, for any
$\alpha \in \mathbb{N}_0$,
\begin{align}
\label{eq_estlarge}
\sup_{x \in \mathbb{R}}\left |\lim_{N \to \infty} \int_\mathbb{R} 
\int_{1/N}^N
\lambda \cos(t\lambda)\mathrm{Im}\left [G_{\ell,\sigma}(x,x',\lambda) \right ] 
\langle x \rangle^{-\alpha} \langle x' \rangle ^{-\alpha} \phi(x')
[1-\chi_\delta(\lambda)]
d\lambda dx' \right | \\
\lesssim \langle t
\rangle^{-\alpha} \int_\mathbb{R}\left (|\phi'(x')|+|\phi(x')|\right )dx' 
\nonumber
\end{align}
which, by dominated convergence, implies
\begin{align*} \lim_{N \to \infty}
\int_\mathbb{R} \int_\mathbb{R} \int_{1/N}^N
\lambda \cos(t\lambda)\mathrm{Im}\left [G_{\ell,\sigma}
(x,x',\lambda) \right ][1-\chi_\delta(\lambda)]f(x')d\lambda dx'
\overline{v(x)}dx \\
=\int_\mathbb{R} \lim_{N \to \infty}\int_\mathbb{R} \int_{1/N}^N
\lambda \cos(t\lambda)\mathrm{Im}\left [G_{\ell,\sigma}
(x,x',\lambda) \right ][1-\chi_\delta(\lambda)]f(x')d\lambda dx'
\overline{v(x)}dx.
\end{align*}
By adding up the two contributions and using the density of
$\mc{S}(\mathbb{R})$ in $L^2(\mathbb{R})$, we arrive at the representation
$$ \left [\cos(t\sqrt{\mc{H}_{\ell,\sigma}})f \right ](x)=
-\frac{2}{\pi}\lim_{N \to \infty}\int_\mathbb{R} \int_{1/N}^N
\lambda \cos(t\lambda)\mathrm{Im}\left [G_{\ell,\sigma}
(x,x',\lambda) \right ]d\lambda f(x')dx' $$
for $f \in \mc{S}(\mathbb{R})$
and the estimates (\ref{eq_estsmall}), (\ref{eq_estlarge}) imply the bound
$$ \|w_\alpha \cos(t\sqrt{\mc{H}_{\ell,\sigma}})f\|_{L^\infty(\mathbb{R})}
\lesssim \langle t \rangle^{-\alpha}\left (\left \|\frac{f'}{w_\alpha}\right
\|_{L^1(\mathbb{R})}+
\left \|\frac{f}{w_\alpha}\right
\|_{L^1(\mathbb{R})} \right ) $$
for $1 \leq \alpha \leq 2\ell+3$ where
$w_\alpha(x):=\langle x \rangle^{-\alpha}$.
An analogous derivation applies to the sine evolution and
therefore, the proof of our result reduces to oscillatory estimates of the type
(\ref{eq_estsmall}) and (\ref{eq_estlarge}).

\subsection{The main theorem}
The main result proved in this work is the following.

\begin{theorem}
\label{thm_main}
Let $(\ell,\sigma) \notin \{(0,0), (0,-3), (1,-3)\}$, $\alpha \in \mathbb{N}$,
$1 \leq \alpha \leq 2\ell+3$ and set $w_\alpha(x):=\langle x \rangle^{-\alpha}$.
Then the solution operators for the Regge--Wheeler equation satisfy the 
estimates
$$ \|w_\alpha \cos(t\sqrt{\mc{H}_{\ell,\sigma}})f\|_{L^\infty(\mathbb{R})}
\leq C_{\ell,\alpha} \langle t \rangle^{-\alpha}\left (\left \|\frac{f'}{w_\alpha}\right
\|_{L^1(\mathbb{R})}+ \left \|\frac{f}{w_\alpha}\right
\|_{L^1(\mathbb{R})} \right ) $$
and
$$ \left \|w_\alpha
\frac{\sin(t\sqrt{\mc{H}_{\ell,\sigma}})}{\sqrt{\mc{H}_{\ell,\sigma}}}g \right 
\|_{L^\infty(\mathbb{R})}
\leq C_{\ell,\alpha} \langle t \rangle^{-\alpha+1}\left \|\frac{g}{w_\alpha}\right
\|_{L^1(\mathbb{R})} $$
for all $t \geq 0$ and initial data $f,g$ such that the right--hand sides are finite.
\end{theorem}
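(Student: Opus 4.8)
The preceding subsection has already reduced the two bounds in Theorem \ref{thm_main} to a pair of oscillatory integral estimates: the low-energy bound \eqref{eq_estsmall} and the high-energy bound \eqref{eq_estlarge}. For the sine propagator one runs the identical argument with $d\lambda$ in place of the weight $\lambda\,d\lambda$ in the spectral representation; this removes one power of $\lambda$ from every integrand, which is precisely why one gains a full power of $t$ and why the right-hand side involves only $g$ and not $g'$. So the plan is to prove \eqref{eq_estsmall} and \eqref{eq_estlarge}, and everything hinges on a sufficiently precise description of the Green's function $G_{\ell,\sigma}(x,x',\lambda)$, in particular of $\mathrm{Im}\,G_{\ell,\sigma}$ as $\lambda\to 0+$.

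First I would construct the Jost solutions and write $G_{\ell,\sigma}(x,x',\lambda)=W(\lambda)^{-1}f_+(x_>,\lambda)f_-(x_<,\lambda)$. The solution $f_-$, normalised by $f_-(x,\lambda)\sim e^{-i\lambda x}$ at $-\infty$, is produced by a routine Volterra iteration: since $V_{\ell,\sigma}$ decays exponentially at $-\infty$ (Corollary \ref{cor_RWasymptotics}), $f_-$ is jointly smooth in $(x,\lambda)$ down to $\lambda=0$, with symbol-type control in $\lambda$. The solution $f_+$, normalised at $+\infty$, is the delicate one, and this is where I expect the main work to lie: because $V_{\ell,\sigma}(x)=\ell(\ell+1)x^{-2}(1+O(x^{-1+\varepsilon}))$, the relevant comparison equation near $\lambda=0$ is $-u''+\ell(\ell+1)x^{-2}u=\lambda^2 u$, whose solutions are $\sqrt{\lambda x}\,H^{(1,2)}_{\ell+1/2}(\lambda x)$, and $\lambda=0$ is exactly the point where the inverse-square (scattering-critical) tail destroys smoothness of the Jost solution. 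I would therefore split the $(x,\lambda)$ half-plane: for $|x\lambda|\gtrsim 1$ build $f_+$ by iteration off the Hankel functions, using their asymptotics to get bounds uniform in $|x\lambda|$ and of symbol type in $\lambda$; for $|x\lambda|\lesssim 1$ build a basis of solutions by perturbing off the Frobenius solutions $x^{\ell+1}$ and $x^{-\ell}$ of the $\lambda=0$ equation, again with full control of all $\lambda$-derivatives. In the overlap $|x\lambda|\sim 1$ the two descriptions match, which expresses $f_+$ on all of $\mathbb{R}$ in terms of explicit connection coefficients that are smooth in $\lambda^2$ up to an error of the critical order $\lambda^{2\ell+1}$ with symbol character.

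Next I would compute $\mathrm{Im}\,G_{\ell,\sigma}$ near $\lambda=0$ from the small-$\lambda$ asymptotics of the Wronskian $W(\lambda)=W\bigl(f_+(\cdot,\lambda),f_-(\cdot,\lambda)\bigr)$. This is where the hypothesis $(\ell,\sigma)\notin\{(0,0),(0,-3),(1,-3)\}$ enters: by Lemma \ref{lem_nonresonant} there is no zero energy resonance, so at $\lambda=0$ the solution $f_-$ is not proportional to the decaying Frobenius solution $x^{-\ell}$ at $+\infty$, whence $W(\lambda)=c\lambda^{-\ell}(1+\dots)$ with $c\neq 0$ rather than a higher-order vanishing. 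Combined with the connection coefficients from the previous step this yields, uniformly in $x,x'$ and with polynomially controlled dependence on $x,x'$, an expansion $\mathrm{Im}\bigl[G_{\ell,\sigma}(x,x',\lambda)\bigr]=\lambda\,Q_\ell(x,x',\lambda^2)+O(\lambda^{2\ell+1})$, where $Q_\ell$ is a polynomial of degree $\ell-1$ in $\lambda^2$ (so $Q_0\equiv 0$, reflecting that the companion paper \cite{donninger} treats $\ell=0$ separately) and the remainder is odd in $\lambda$ and of symbol type, $O^{(k)}(\lambda^{2\ell+1})=O(\lambda^{2\ell+1-k})$; the weights $w_\alpha$ in the statement are exactly what absorbs the growth of $Q_\ell$ and of the remainder in $x,x'$.

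Finally I would feed this into the oscillatory integrals. For \eqref{eq_estsmall}, the smooth part $\lambda\cos(t\lambda)\cdot\lambda Q_\ell(x,x',\lambda^2)\chi_\delta(\lambda)$ is, after even reflection in $\lambda$, a Schwartz-class function, so repeated integration by parts gives decay faster than any power of $t$; the only boundary contributions come from derivatives of $\chi_\delta$, supported near $\lambda\sim\delta$, where $G_{\ell,\sigma}$ and all its $\lambda$-derivatives are bounded (Corollary \ref{cor_RWasymptotics} and the high-energy construction). The remainder $\lambda\cos(t\lambda)\cdot O(\lambda^{2\ell+1})\chi_\delta(\lambda)$, even-reflected, is only $C^{2\ell+1}$ at the origin with a bounded $(2\ell+2)$-nd derivative, so $\alpha$ integrations by parts for $1\le\alpha\le 2\ell+3$ -- using the symbol bounds at each step and the fact that the integrand vanishes to order $2\ell+2$ at $\lambda=0$ so all boundary terms there vanish -- produce exactly the gain $\langle t\rangle^{-\alpha}$. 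The high-energy bound \eqref{eq_estlarge} is comparatively soft: for $\lambda\gtrsim\delta$ the Hankel construction gives $\sup_{x,x'}|G_{\ell,\sigma}(x,x',\lambda)|\lesssim 1$ with symbol bounds in $\lambda$, and non-stationary phase in $\lambda$ for the factor $\cos(t\lambda)$ against phases of the form $e^{i\lambda|x-x'|}$ yields arbitrary polynomial decay, the single derivative on $\phi$ being the usual price for trading one power of $\lambda$ at high energy. Adding the two contributions gives the cosine estimate, and rerunning the whole argument with $d\lambda$ in place of $\lambda\,d\lambda$ gives the sine estimate with the extra power of $t$. The principal obstacle is the $|x\lambda|\lesssim 1$, $\lambda\to 0$ construction of $f_+$: one must separate the genuinely smooth part from a remainder of the exact order $\lambda^{2\ell+1}$ and prove it has symbol character, all while $V_{\ell,\sigma}$ is only an $x^{-3+\varepsilon}$ perturbation of the solvable model $\ell(\ell+1)x^{-2}$ and the matching to the exponentially decaying end at $-\infty$ must be carried through the Wronskian.
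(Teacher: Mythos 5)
Your overall architecture is the paper's own: the spectral-theoretic reduction to \eqref{eq_estsmall} and \eqref{eq_estlarge}, the two perturbative constructions of the Green's function (around $\lambda=0$ for $|x\lambda|$ small, off Hankel functions for $|x\lambda|$ large, matched in an overlap window), the use of nonresonance (Lemma \ref{lem_nonresonant}) to get $W(f_-,f_+)(\lambda)\sim c\lambda^{-\ell}$, and the extra power of $\lambda$ distinguishing cosine from sine. The gap is in how you feed this into the low-energy estimate: the expansion $\mathrm{Im}\bigl[G_{\ell,\sigma}(x,x',\lambda)\bigr]=\lambda Q_\ell(x,x',\lambda^2)+O(\lambda^{2\ell+1})$ with symbol-type remainder does \emph{not} hold uniformly in $x,x'$. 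It is valid only where both $|x\lambda|$ and $|x'\lambda|$ are small, because only there can $f_\pm$ be expanded in the real-valued basis $u_0(\cdot,\lambda),u_1(\cdot,\lambda)$ so that the $\lambda$-structure is carried entirely by the coefficients $A_{jk}$ (Lemma \ref{lem_Ajk}). For $|x\lambda|\gtrsim 1$ with $\lambda$ small one has $f_+(x,\lambda)\approx e^{i\lambda x}(1+b_\ell(\lambda x,\lambda))$: the imaginary part oscillates in $\lambda x$, is of size comparable to $\lambda^{\ell}$ rather than $\lambda^{2\ell+1}$ modulo an odd polynomial, and each $\lambda$-derivative costs a factor $\langle x\rangle\gtrsim\lambda^{-1}$, so there is no symbol remainder of order $\lambda^{2\ell+1}$ to integrate by parts against, and the weights $w_\alpha$ with $\alpha\le 2\ell+3$ cannot absorb this. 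This is why the paper inserts the cutoffs $\chi_\delta(x\lambda)$, $\chi_\delta(x'\lambda)$ and treats four low-energy regimes separately: only the fully matched regime produces the $O_{2\ell+1}(\lambda)$ structure and the rates $\langle t\rangle^{-(2\ell+3)}$, $\langle t\rangle^{-(2\ell+2)}$ (Lemma \ref{lem_osc1}); in the other three regimes one keeps the explicit phases $e^{i\lambda(\pm t+x-x')}$ (using reflection/transmission coefficients where needed) and splits into $|\pm t+x-x'|\ge \tfrac12 t$, where one integrates by parts $\alpha$ times, versus $|\pm t+x-x'|\le \tfrac12 t$, where $\langle x\rangle^{-\alpha}\langle x'\rangle^{-\alpha}\lesssim\langle t\rangle^{-\alpha}$. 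That light-cone dichotomy is the missing idea, and it is also what limits those regimes (and hence the theorem) to $\langle t\rangle^{-\alpha}$, $\alpha\le 2\ell+3$.

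The same omission makes your high-energy argument fail as stated: ``non-stationary phase in $\lambda$ yields arbitrary polynomial decay'' is false near the light cone, since the phase derivative $\pm t+x-x'$ vanishes there; the decay again comes from the dichotomy above together with the weights, not from unrestricted oscillation. Moreover, for $\lambda\gtrsim\delta$ the integrand of the cosine estimate is only $O(1)$ in $\lambda$ (and $O(\lambda^{-1})$ for the sine), so the $\lambda$-integral is not absolutely convergent and the $N\to\infty$ limit must be justified: the paper first integrates by parts in $x'$ (this is precisely where $\|\phi'\|_{L^1}$ enters, which you correctly anticipated) and then exploits the cancellation coming from the fact that the leading term $\tfrac{1}{2i\lambda}\bigl(1+O_\mathbb{C}(\lambda^{-1})\bigr)$ of the diagonal amplitude is odd in $\lambda$ (Corollary \ref{cor_Winf}, Proposition \ref{prop_cos}). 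Without the $x'$-integration by parts, the oddness cancellation, and the light-cone case distinction, \eqref{eq_estlarge} does not follow from the Hankel/Jost bounds alone.
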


\begin{remark}
 As usual, we prove Theorem \ref{thm_main} for Schwartz functions $f, g \in \mc{S}(\mathbb{R})$. The general case is then obtained by a standard approximation argument.
\end{remark}

\begin{remark}
 For the convenience of the reader we make the behavior of the initial data near the horizon more explicit by transforming back to the Schwarzschild $r$--coordinate.
 Recall that $x=r+2M \log(\frac{r}{2M}-1)$ which implies 
 $\langle x \rangle^{\alpha} \simeq \left |\log(\frac{r}{2M}-1) \right |^\alpha$ as $r \to 2M+$.
 Thus, the polynomial weights in $x$ translate into logarithmic weights in $r$.
 Moreover, we have $dx=(1-\frac{2M}{r})^{-1}dr$ and this shows that the integrability condition near the horizon for initial data $g(x)=\tilde{g}(r(x))$ transforms as
 $$ \int_{-\infty}^0 |g(x)|\langle x \rangle^\alpha dx \simeq \int_{2M}^{r_0} |\tilde{g}(r)|\left | \log \left (\tfrac{r}{2M}-1 \right )\right |^\alpha \frac{dr}{1-\tfrac{2M}{r}} $$
 where $r=r_0$ corresponds to $x=0$.
\end{remark}

We remark that our proof actually applies to more general situations like
the analogous problem in Ho\v{r}ava--Lifshitz gravity, cf. \cite{Hor}.
The only requirements on the potential are the asymptotics of
Corollary \ref{cor_RWasymptotics}, the nonexistence of 
bound states and the nonexistence of a zero energy 
resonance (see Definition \ref{def_resonance} below).

\section{Basic properties of the Green's function}
For the convenience of the reader 
we discuss some well--known properties of the Green's function (cf. \cite{deift}, 
\cite{teschl}).

\subsection{The Jost solutions}
Recall that the Green's function is constructed with the help of the Jost solutions $f_\pm(\cdot,z)$ 
which are defined by $\mc{H}_{\ell,\sigma}f_\pm(\cdot,z)=z^2 f_\pm(\cdot,z)$
and the asymptotic behavior $f_\pm(x,z) \sim e^{\pm izx}$ as $x \to \pm \infty$.
First we prove that the Jost solutions exist and that they 
are continuous with respect to $z$ in 
$\overline{\mathbb{C}_+} \backslash \{0\}$
where $\mathbb{C}_+:=\{z \in \mathbb{C}: \mathrm{Im}z >0\}$.

\begin{lemma}
\label{lem_Jost}
For every $z \in \overline{\mathbb{C}_+} \backslash \{0\}$ there exist smooth functions 
$f_\pm(\cdot,z)$ satisfying 
$$ \mc{H}_{\ell,\sigma}f_\pm(\cdot,z)=z^2 f_\pm(\cdot,z) $$
and $f_\pm(x,z) \sim e^{\pm izx}$ for $x \to \pm \infty$.
Furthermore, for every $x \in \mathbb{R}$, the functions $f_\pm(x, \cdot)$ 
and $f_\pm'(x,\cdot)$ are continuous
in $\overline{\mathbb{C}_+} \backslash \{0\}$.
\end{lemma}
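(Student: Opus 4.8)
The plan is to build the Jost solution $f_+$ from a Volterra integral equation together with a convergent Neumann series, then recover smoothness in $x$ from the ODE and continuity in $z$ from uniform convergence of the series; the construction of $f_-$ is entirely symmetric. This is the classical Jost‑function machinery (cf.\ \cite{deift}, \cite{teschl}), the only input specific to our situation being that $V_{\ell,\sigma}\in L^1(\mathbb{R})$ by Corollary \ref{cor_RWasymptotics}.

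First I would substitute $f_+(x,z)=e^{izx}m_+(x,z)$ and check that, at least formally, the eigenvalue equation $\mc{H}_{\ell,\sigma}f_+=z^2f_+$ together with the normalization $m_+(\cdot,z)\to1$ at $+\infty$ is equivalent to
\[ m_+(x,z)=1+\int_x^\infty \frac{e^{2iz(t-x)}-1}{2iz}\,V_{\ell,\sigma}(t)\,m_+(t,z)\,dt. \]
For $t\ge x$ and $z\in\overline{\mathbb{C}_+}$ the kernel $K_z(x,t):=\tfrac{1}{2iz}\big(e^{2iz(t-x)}-1\big)$ obeys $|K_z(x,t)|\le\min\{\,t-x,\ |z|^{-1}\,\}$, using $|e^{2iz(t-x)}|\le1$ and $|e^w-1|\le|w|$ for $\mathrm{Re}\,w\le0$. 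Since $V_{\ell,\sigma}\in L^1(\mathbb{R})$, I would solve the equation by iteration: $m_0\equiv1$, $m_{n+1}(x,z):=\int_x^\infty K_z(x,t)V_{\ell,\sigma}(t)m_n(t,z)\,dt$, and prove by induction, using $|K_z|\le|z|^{-1}$, that $|m_n(x,z)|\le\frac{1}{n!}\big(|z|^{-1}\|V_{\ell,\sigma}\|_{L^1}\big)^n$. Hence $m_+:=\sum_{n\ge0}m_n$ converges absolutely, uniformly in $x\in\mathbb{R}$ and locally uniformly in $z\in\overline{\mathbb{C}_+}\setminus\{0\}$, solves the integral equation, and satisfies $|m_+(x,z)-1|\lesssim|z|^{-1}\int_x^\infty|V_{\ell,\sigma}|\to0$ as $x\to+\infty$, which is precisely $f_+(x,z)\sim e^{izx}$.

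For the regularity claims I would use that $V_{\ell,\sigma}$ is smooth (Lemma \ref{lem_r}, Corollary \ref{cor_RWasymptotics}): the integral equation gives $f_+(\cdot,z)\in C^1(\mathbb{R})$ at once, whence $f_+''=(V_{\ell,\sigma}-z^2)f_+\in C^0$, and bootstrapping yields $f_+(\cdot,z)\in C^\infty(\mathbb{R})$ solving the ODE classically. Continuity in $z$ on $\overline{\mathbb{C}_+}\setminus\{0\}$ is inherited from the series: each $m_n(x,\cdot)$ is continuous there (induction plus dominated convergence, with the kernel bound supplying the dominating function) and the convergence is locally uniform in $z$, so $f_+(x,\cdot)=e^{izx}m_+(x,\cdot)$ is continuous; for $f_+'$ one differentiates the integral equation once (the boundary term drops because $\sin0=0$) to get $f_+'(x,z)=ize^{izx}-\int_x^\infty\cos(z(t-x))V_{\ell,\sigma}(t)f_+(t,z)\,dt$, and the bounds already established show this depends continuously on $z$.

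Finally, $f_-$ is produced by the mirror‑image equation $f_-(x,z)=e^{-izx}+\int_{-\infty}^x\frac{\sin(z(x-t))}{z}V_{\ell,\sigma}(t)f_-(t,z)\,dt$, using integrability of $V_{\ell,\sigma}$ near $-\infty$ (here even its exponential decay). I do not expect a genuine obstacle — the whole argument is a routine fixed‑point/uniform‑convergence computation — the one point to keep in mind is that every estimate carries a factor $|z|^{-1}$ and hence degenerates as $z\to0$, which is exactly why the lemma is restricted to $\overline{\mathbb{C}_+}\setminus\{0\}$ and why the delicate behaviour at the origin, forced by the inverse‑square tail of $V_{\ell,\sigma}$, must be analyzed separately later.
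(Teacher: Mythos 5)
Your proposal is correct and follows essentially the same route as the paper: the same substitution $m_\pm(x,z)=e^{\mp izx}f_\pm(x,z)$, the same Volterra equation with kernel $\frac{1}{2iz}\bigl(e^{2iz(y-x)}-1\bigr)V_{\ell,\sigma}(y)$ and the bound $|K|\le |z|^{-1}$, the paper merely outsourcing the iteration, the $x$-smoothness and the $z$-continuity to its appendix Volterra lemmas (Lemmas \ref{lem_volterra}, \ref{lem_volterradiffx}) instead of writing out the Neumann series, ODE bootstrap and uniform-convergence argument as you do. One small point: to obtain the factor $1/n!$ in your induction you must carry the nested quantity $\int_x^\infty|V_{\ell,\sigma}|$ (not the constant $\|V_{\ell,\sigma}\|_{L^1}$) through the induction step and only then bound it by $\|V_{\ell,\sigma}\|_{L^1}$; this is exactly the estimate encoded in the paper's Lemma \ref{lem_volterra}.
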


\begin{proof}
We only prove the assertion for $f_+$ since the proof for $f_-$ is completely analogous.
The variation of constants formula shows that $m_+(x,z):=e^{-izx}f_+(x,z)$, 
if it exists, satisfies the integral equation
\begin{equation}
\label{eq_proofJost}
m_+(x,z)=1+\int_x^\infty K(x,y,z)m_+(y,z)dy
\end{equation}
where $K(x,y,z)=\frac{1}{2iz}\left (e^{2iz(y-x)}-1 \right ) V_{\ell,\sigma}(y)$.
Conversely, if we can show that Eq.~(\ref{eq_proofJost}) has a smooth solution, 
we obtain existence
of the Jost solution.
However, Eq.~(\ref{eq_proofJost}) is a Volterra integral equation with a kernel
satisfying 
$$ \int_a^\infty \sup_{x \in (a,y)}|K(x,y,z)|dy \leq \frac{C}{|z|} \:,\:\:
\int_a^\infty \sup_{x \in (a,y)}|\partial_x K(x,y,z)|dy \lesssim 1 $$
for all $z \in \overline{\mathbb{C}_+} \backslash \{0\}$ and any fixed 
$a \in \mathbb{R}$
(see Corollary \ref{cor_RWasymptotics})
and thus, Lemma \ref{lem_volterra} implies the existence of a unique solution 
$m_+(\cdot,z)$ satisfying $\|m_+(\cdot,z)\|_{L^\infty(a,\infty)} \leq e^{C/|z|}$.
Furthermore, for fixed $z \in \overline{\mathbb{C}_+} \backslash \{0\}$, we have
$$ \int_a^\infty \sup_{x \in (a,y)}|\partial_x^k K(x,y,z)|dy \leq C_k $$ for all $k
\in \mathbb{N}_0$ and thus, Lemma \ref{lem_volterradiffx} shows that
$m_+(\cdot,z)$ is smooth.
For the continuity of $m_+(x,\cdot)$ fix $x \in (a,\infty)$, 
$z \in \overline{\mathbb{C}_+}\backslash \{0\}$ and note that
$$
m_+(x,z+h)-m_+(x,z)=g_h(x,z)+\int_x^\infty K(x,y,z+h)[m_+(y,z+h)-m_+(y,z)]dy 
$$
where
$$ g_h(x,z):=\int_x^\infty [K(x,y,z+h)-K(x,y,z)]m_+(y,z)dy. $$
Now observe that $\|g_h(\cdot,z)\|_{L^\infty(a,\infty)} \to 0$ as $h \to 0$
since $\|m_+(\cdot,z)\|_{L^\infty(a,\infty)} \leq e^{C/|z|}$ and
hence, Lemma \ref{lem_volterra} implies
$$ |m_+(x,z+h)-m_+(x,z)|\leq \|g_h(\cdot,z)\|_{L^\infty(a,\infty)}e^{C/|z|} \to
0 \mbox{ for } h \to 0 $$ 
which shows continuity of $m_+(x,\cdot)$ in 
$\overline{\mathbb{C}_+}\backslash\{0\}$ as claimed.
For the continuity of $m_+'(x,\cdot)$ simply observe that
$$ m_+'(x,z)=\int_x^\infty \partial_x K(x,y,z)m_+(y,z)dy $$
and the right--hand side of this equation is obviously continuous in $z$. 
\end{proof} 

\subsection{The Wronskian $W(f_-(\cdot,\sqrt{z}),f_+(\cdot,\sqrt{z}))$}

Having established existence of the Jost solutions we can now construct the 
Green's function and the standard procedure yields
$$ G_{\ell,\sigma}(x,x',\sqrt{z})=\frac{f_-(x',\sqrt{z})f_+(x,\sqrt{z})\Theta(x-x')
+f_-(x,\sqrt{z})f_+(x',\sqrt{z})\Theta(x'-x)}{W(f_-(\cdot,\sqrt{z}),f_+(\cdot,\sqrt{z}))} 
$$
for $\mathrm{Im}z>0$ where $\Theta$ denotes the Heaviside function.
Clearly, $W(f_-(\cdot,\sqrt{z}),f_+(\cdot,\sqrt{z})) \not= 0$ if $\mathrm{Im}z>0$ since otherwise
$f_-(\cdot,\sqrt{z})$ would be an eigenfunction with eigenvalue $z$ contradicting
the self--adjointness of $\mc{H}_{\ell,\sigma}$.
However, it is not a priori clear whether the limit 
$G_{\ell,\sigma}(x,x',\sqrt{z})$ for $\mathrm{Im}\sqrt{z} \to 0+$ exists.
The following observation shows that problems can only occur at $z=0$. 

\begin{lemma}
\label{lem_Wronskian}
Let $\lambda>0$. Then the limit 
$$ W(\sqrt{\lambda}):=\lim_{\varepsilon \to 0+}W(f_-(\cdot,\sqrt{\lambda+i\varepsilon}),
f_+(\cdot,\sqrt{\lambda+i\varepsilon})) $$ exists and is nonzero.
\end{lemma}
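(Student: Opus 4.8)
The plan is to dispose of the existence of the limit by continuity of the Jost solutions in the spectral parameter, and then to prove that the limiting Wronskian is non-zero by means of the classical Wronskian identity relating transmission and reflection coefficients at a positive energy.

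First, since $f_-(\cdot,z)$ and $f_+(\cdot,z)$ both solve $\mc{H}_{\ell,\sigma}u=z^2u$, the Wronskian $W(f_-(\cdot,z),f_+(\cdot,z))=f_-(x,z)f_+'(x,z)-f_-'(x,z)f_+(x,z)$ is independent of $x$; evaluating it at $x=0$ and using the continuity of $z\mapsto f_\pm(0,z)$ and $z\mapsto f_\pm'(0,z)$ on $\overline{\mathbb{C}_+}\setminus\{0\}$ established in Lemma \ref{lem_Jost}, together with $\sqrt{\lambda+i\varepsilon}\to\sqrt{\lambda}\in(0,\infty)\subset\overline{\mathbb{C}_+}\setminus\{0\}$ as $\varepsilon\to0+$, shows that the limit exists and equals $W(\sqrt{\lambda})=W(f_-(\cdot,\sqrt{\lambda}),f_+(\cdot,\sqrt{\lambda}))$, the Wronskian of the two Jost solutions at the positive real energy $\sqrt{\lambda}$.

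For the non-vanishing I would exploit conjugation symmetry. Since $V_{\ell,\sigma}$ is real-valued and $\lambda>0$, the function $\overline{f_+(\cdot,\sqrt{\lambda})}$ again solves $\mc{H}_{\ell,\sigma}u=\lambda u$ and satisfies $\overline{f_+(x,\sqrt{\lambda})}\sim e^{-i\sqrt{\lambda}x}$ as $x\to+\infty$; by the uniqueness built into the Volterra equation (\ref{eq_proofJost}) this forces $\overline{f_+(\cdot,\sqrt{\lambda})}=f_+(\cdot,-\sqrt{\lambda})$, and similarly $\overline{f_-(\cdot,\sqrt{\lambda})}=f_-(\cdot,-\sqrt{\lambda})$. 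The proof of Lemma \ref{lem_Jost} also yields the first-order asymptotics $f_+'(x,z)\sim iz\,e^{izx}$ as $x\to+\infty$ and $f_-'(x,z)\sim-iz\,e^{-izx}$ as $x\to-\infty$ (the correction terms have vanishing first derivative in these limits, as the kernel bounds there show, using Corollary \ref{cor_RWasymptotics}). Hence a direct evaluation at $x\to+\infty$ gives $W(f_+(\cdot,\sqrt{\lambda}),f_+(\cdot,-\sqrt{\lambda}))=-2i\sqrt{\lambda}$ and one at $x\to-\infty$ gives $W(f_-(\cdot,\sqrt{\lambda}),f_-(\cdot,-\sqrt{\lambda}))=2i\sqrt{\lambda}$. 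In particular $\{f_+(\cdot,\sqrt{\lambda}),f_+(\cdot,-\sqrt{\lambda})\}$ is a fundamental system of the equation at energy $\sqrt{\lambda}$, so I may write $f_-(\cdot,\sqrt{\lambda})=Af_+(\cdot,\sqrt{\lambda})+Bf_+(\cdot,-\sqrt{\lambda})$ for suitable $\lambda$-dependent constants $A,B$.

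Conjugating the last expansion yields $f_-(\cdot,-\sqrt{\lambda})=\overline{B}\,f_+(\cdot,\sqrt{\lambda})+\overline{A}\,f_+(\cdot,-\sqrt{\lambda})$, and computing $W(f_-(\cdot,\sqrt{\lambda}),f_-(\cdot,-\sqrt{\lambda}))$ from these two expansions (using bilinearity and $W(g,g)=0$) gives $(|A|^2-|B|^2)W(f_+(\cdot,\sqrt{\lambda}),f_+(\cdot,-\sqrt{\lambda}))=-2i\sqrt{\lambda}\,(|A|^2-|B|^2)$. Comparing this with the value $2i\sqrt{\lambda}$ obtained directly forces $|B|^2=1+|A|^2\ge1$, so $B\ne0$; therefore $W(\sqrt{\lambda})=W(f_-(\cdot,\sqrt{\lambda}),f_+(\cdot,\sqrt{\lambda}))=B\,W(f_+(\cdot,-\sqrt{\lambda}),f_+(\cdot,\sqrt{\lambda}))=2i\sqrt{\lambda}\,B\ne0$, as claimed. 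The one substantive ingredient is the identity $|B|^2=1+|A|^2$ --- the scattering relation $|1/T(\lambda)|^2=1+|R(\lambda)/T(\lambda)|^2$ --- which is exactly what excludes an ``embedded resonance'' on the positive real axis; all remaining steps are routine Wronskian bookkeeping and the leading asymptotics of $f_\pm,f_\pm'$, which are immediate from Lemma \ref{lem_Jost} and Corollary \ref{cor_RWasymptotics}. The main thing to be careful about is simply keeping the signs in the various Wronskian computations straight.
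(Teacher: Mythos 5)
Your proposal is correct and follows essentially the same route as the paper: existence of the limit via the continuity statement of Lemma \ref{lem_Jost}, then writing $f_-(\cdot,\sqrt{\lambda})=Af_+(\cdot,\sqrt{\lambda})+B\overline{f_+(\cdot,\sqrt{\lambda})}$ and using the Wronskian identities $W(\overline{f_+},f_+)=2i\sqrt{\lambda}$ and $W(f_-,\overline{f_-})=2i\sqrt{\lambda}$ to conclude $|B|^2=1+|A|^2\geq 1$ and hence $W(\sqrt{\lambda})=2i\sqrt{\lambda}\,B\neq 0$. The only cosmetic difference is that you phrase $\overline{f_\pm(\cdot,\sqrt{\lambda})}$ as $f_\pm(\cdot,-\sqrt{\lambda})$, which the paper does not need to do explicitly; the computations and conclusions coincide.
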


\begin{proof}
For brevity we write $W(f_-,f_+)(z)$ instead of $W(f_-(\cdot,\sqrt{z}), f_+(\cdot,\sqrt{z}))$ and 
likewise for other Wronskians.
By Lemma \ref{lem_Jost} we know that 
$W(f_-, f_+)$ is continuous in 
$\overline{\mathbb{C}_+} \backslash \{0\}$ and hence, $W(\sqrt{\lambda})$ exists
for any $\lambda>0$.
Observe that $f_\pm'(x,\sqrt{\lambda}) \sim \pm i\sqrt{\lambda} 
e^{\pm i\sqrt{\lambda}x}$ 
for $x \to \pm \infty$ which follows 
immediately from
the integral representation in the proof of Lemma \ref{lem_Jost}.
Thus, $W(\overline{f_+}, f_+)(\lambda)=
2i\sqrt{\lambda}$ which shows that 
$f_+(\cdot,\sqrt{\lambda})$ and $\overline{f_+(\cdot,\sqrt{\lambda})}$ 
are linearly independent for $\lambda > 0$.
Hence, there exist $A(\lambda)$ and $B(\lambda)$ such that
$ f_-(x,\sqrt{\lambda})=A(\lambda)f_+(x,\sqrt{\lambda})+
B(\lambda)\overline{f_+(x,\sqrt{\lambda})}$.
We conclude
\begin{equation}
 \label{eq:AB}
2i\sqrt{\lambda}=W(f_-, \overline{f_-})(\lambda)=
W(Af_++B\overline{f_+}, \overline{Af_+}+\overline{B}f_+)(\lambda)
= -2i\sqrt{\lambda}|A(\lambda)|^2+2i \sqrt{\lambda} |B(\lambda)|^2 
\end{equation}
which implies $|B(\lambda)|^2 \geq 1$.
However, we have 
\begin{equation}
 \label{eq:B}
W(\sqrt{\lambda})=W(f_-,f_+)(\lambda)=W(Af_++B\overline{f_+},f_+)(\lambda)=
2i\sqrt{\lambda}B(\lambda) 
\end{equation}
and thus, $|W(\sqrt{\lambda})| \geq 2 \sqrt{\lambda}$ which finishes the proof.
\end{proof}

\begin{corollary}
\label{cor_Green}
The limit 
$$ G_{\ell,\sigma}(x,x',\sqrt{\lambda})=\lim_{\varepsilon \to 0+}
G_{\ell,\sigma}(x,x',\sqrt{
\lambda+i\varepsilon}) $$
exists and satisfies 
$$ \sup_{x,x' \in \mathbb{R}}|G_{\ell,\sigma}(x,x',\sqrt{\lambda})| \leq C $$
for all $\lambda \geq \lambda_0$ where $\lambda_0 > 0$ is arbitrary.
\end{corollary}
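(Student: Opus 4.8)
The plan is to read the statement directly off the explicit formula for the Green's function together with Lemmas~\ref{lem_Jost} and~\ref{lem_Wronskian}. Recall that for $\mathrm{Im}\,z>0$ one has
$$ G_{\ell,\sigma}(x,x',\sqrt{z})=\frac{f_-(x',\sqrt{z})f_+(x,\sqrt{z})\Theta(x-x')+f_-(x,\sqrt{z})f_+(x',\sqrt{z})\Theta(x'-x)}{W(f_-(\cdot,\sqrt{z}),f_+(\cdot,\sqrt{z}))}. $$
For the existence of the limit, fix $\lambda>0$ and $x,x'\in\mathbb{R}$ and let $\varepsilon\to 0+$. Since $\sqrt{\lambda+i\varepsilon}\to\sqrt{\lambda}\in\overline{\mathbb{C}_+}\backslash\{0\}$, Lemma~\ref{lem_Jost} shows that the numerator converges (for fixed $x,x'$) to the corresponding expression with $z=\lambda$, while Lemma~\ref{lem_Wronskian} shows that the denominator converges to $W(\sqrt{\lambda})\neq 0$. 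Hence the quotient converges, and $G_{\ell,\sigma}(x,x',\sqrt{\lambda})$ is given by the same formula evaluated at $z=\lambda$.

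For the uniform bound I first claim that $|f_\pm(x,\sqrt{\lambda})|\leq C(\lambda_0)$ for all $x\in\mathbb{R}$ and all $\lambda\geq\lambda_0$. This is obtained by revisiting the Volterra equation~(\ref{eq_proofJost}) for $m_+(x,z)=e^{-izx}f_+(x,z)$: since $|e^{2iz(y-x)}-1|\leq 2$ whenever $\mathrm{Im}\,z\geq 0$ and $y\geq x$, the kernel satisfies the $x$--independent bound $|K(x,y,z)|\leq |z|^{-1}|V_{\ell,\sigma}(y)|$, and $V_{\ell,\sigma}\in L^1(\mathbb{R})$ by Corollary~\ref{cor_RWasymptotics}. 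Consequently $\int_{-\infty}^{\infty}\sup_{x<y}|K(x,y,z)|\,dy\leq |z|^{-1}\|V_{\ell,\sigma}\|_{L^1(\mathbb{R})}$ uniformly in the lower endpoint, so Lemma~\ref{lem_volterra} yields $\|m_+(\cdot,z)\|_{L^\infty(\mathbb{R})}\leq \exp\big(|z|^{-1}\|V_{\ell,\sigma}\|_{L^1(\mathbb{R})}\big)$. Evaluating at the real point $z=\sqrt{\lambda}$, where $|e^{i\sqrt{\lambda}x}|=1$, gives $|f_+(x,\sqrt{\lambda})|=|m_+(x,\sqrt{\lambda})|\leq \exp\big(\|V_{\ell,\sigma}\|_{L^1(\mathbb{R})}/\sqrt{\lambda_0}\big)$ for all $x\in\mathbb{R}$ and $\lambda\geq\lambda_0$; the estimate for $f_-$ is identical. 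For the denominator, Lemma~\ref{lem_Wronskian} gives $|W(\sqrt{\lambda})|\geq 2\sqrt{\lambda}\geq 2\sqrt{\lambda_0}$. Combining these bounds with the explicit formula above yields $\sup_{x,x'\in\mathbb{R}}|G_{\ell,\sigma}(x,x',\sqrt{\lambda})|\leq \exp\big(2\|V_{\ell,\sigma}\|_{L^1(\mathbb{R})}/\sqrt{\lambda_0}\big)/(2\sqrt{\lambda_0})=:C$, as claimed.

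There is no genuine obstacle here; the corollary is a routine consequence of the two preceding lemmas. The only point that needs a little care is upgrading the $L^\infty(a,\infty)$ bound on $m_+$ used in the proof of Lemma~\ref{lem_Jost} to a bound on all of $\mathbb{R}$ that is uniform for $\lambda\geq\lambda_0$. This is precisely what the $x$--independent kernel estimate $|K(x,y,z)|\leq|z|^{-1}|V_{\ell,\sigma}(y)|$ together with the global integrability of $V_{\ell,\sigma}$ provides, and on the real $\lambda$--axis the exponential prefactor $e^{i\sqrt{\lambda}x}$ is unimodular so that no growth occurs as $x\to-\infty$.
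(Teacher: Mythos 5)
Your proposal is correct, and the existence part (continuity of the Jost solutions and of the Wronskian from Lemmas \ref{lem_Jost} and \ref{lem_Wronskian}, plus $W(\sqrt{\lambda})\neq 0$) is exactly the paper's argument. For the uniform bound, however, you take a genuinely different route. The paper only bounds each Jost solution on its ``native'' half-line ($f_+$ for $x>0$, $f_-$ for $x'<0$), which settles the region $x'<0<x$ directly, and then handles the remaining configurations by writing $f_\pm$ in terms of $f_\mp$, $\overline{f_\mp}$ via the reflection and transmission coefficients $A(\lambda)$, $B(\lambda)$, whose size is controlled by the relations $W(\sqrt{\lambda})=2i\sqrt{\lambda}B(\lambda)$ and $|B|^2-|A|^2=1$ from the proof of Lemma \ref{lem_Wronskian}. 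You instead observe that for real spectral parameter the phases $e^{\pm i\sqrt{\lambda}x}$ are unimodular and that the Volterra kernel admits the $x$--independent majorant $|z|^{-1}|V_{\ell,\sigma}(y)|$ with $V_{\ell,\sigma}\in L^1(\mathbb{R})$ (inverse-square decay at $+\infty$, exponential at $-\infty$), so that $\|m_\pm(\cdot,\sqrt{\lambda})\|_{L^\infty(\mathbb{R})}\leq e^{\|V_{\ell,\sigma}\|_{L^1}/\sqrt{\lambda_0}}$ uniformly in $\lambda\geq\lambda_0$; combined with $|W(\sqrt{\lambda})|\geq 2\sqrt{\lambda_0}$ this bounds $G_{\ell,\sigma}$ on all of $\mathbb{R}^2$ at once, with no case distinction. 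Both arguments are valid; yours is more elementary and self-contained (at the small cost of re-running the Volterra estimate globally rather than on $(a,\infty)$, which you correctly flag and justify), while the paper's reuses the scattering coefficients it has already introduced and which reappear later in the high- and low-energy oscillatory estimates.
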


\begin{proof}
From Lemma \ref{lem_Wronskian} and the asymptotic 
behavior of the
Jost solutions we immediately conclude
$$ \sup_{x' < 0, x > 0}\left | 
\frac{f_-(x',\sqrt{\lambda})f_+(x,\sqrt{\lambda})}
{W(f_-(\cdot,\sqrt{\lambda}),f_+(\cdot,\sqrt{\lambda}))} \right | \lesssim 1 $$
for all $\lambda \geq \lambda_0>0$.
For the remaining cases use reflection and transmission coefficients
$A(\lambda)$, $B(\lambda)$ (see the proof of Lemma \ref{lem_Wronskian}) to
express $f_\pm$ in terms of $f_\mp$ and $\overline{f_\mp}$.
The asymptotic behavior of $A(\lambda)$ and $B(\lambda)$ for $\lambda \to
\infty$ is given by Eqs.~\eqref{eq:B} and \eqref{eq:AB} and the
claim follows.
\end{proof}

\section{Perturbative solutions for $|x\lambda|$ small}

In this section we obtain approximations to solutions of
$\mc{H}_{\ell,\sigma}f=\lambda^2 f$ for $|x\lambda|$ small.
The solutions are constructed by perturbation in $\lambda$ around $\lambda=0$.
We closely follow \cite{schlag2}. 

\subsection{Zero energy solutions}

We first consider zero energy solutions, i.e.,  solutions of $\mc{H}_{\ell,\sigma} f=0$.
By setting $v(r):=f(x(r))$, the eigenvalue problem $\mc{H}_{\ell,\sigma} f=\lambda^2 f$ is equivalent to
\begin{equation}
\label{eq_RWzero}
-\left ( 1-\frac{2M}{r} \right )v''-\frac{2M}{r^2}v'+\left 
( \frac{\ell(\ell+1)}{r^2}+\frac{2M\sigma}{r^3} \right )v=\lambda^2 
\left(1-\frac{2M}{r}\right )^{-1}v 
\end{equation}
and it turns out that for $\lambda=0$ this equation can be solved by special functions which will be
useful later on.
However, the following result describes a fundamental system for $\mc{H}_{\ell,\sigma} f=0$ without
making use of explicit solutions.

\begin{lemma}
\label{lem_zeroenergy}
There exist smooth functions $u_j$ satisfying $\mc{H}_{\ell,\sigma} u_j=0$
for $j=0,1$ 
with the bounds $u_0(x)=(2\ell+1)^{-1}x^{\ell+1}(1
+O(x^{-1+\varepsilon}))$ and
$u_1(x)=x^{-\ell}(1+O(x^{-1+\varepsilon}))$ for all $x \geq 1$ 
where $\varepsilon \in (0,1)$ 
is arbitrary
and the $O$--terms are of symbol type.
The Wronskian is $W(u_0,u_1)=-1$.
\end{lemma}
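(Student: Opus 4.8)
The plan is to construct the fundamental system $\{u_0, u_1\}$ for $\mc{H}_{\ell,\sigma}f=0$ directly, working in the $r$--coordinate via Eq.~\eqref{eq_RWzero} with $\lambda=0$. Near $r=\infty$ the equation is a regular perturbation of the Euler equation $v''-\frac{\ell(\ell+1)}{r^2}v=0$, whose solutions are $r^{\ell+1}$ and $r^{-\ell}$, so I expect solutions with these leading behaviors. Concretely, I would first divide \eqref{eq_RWzero} by $-(1-\frac{2M}{r})$ to put it in the form $v'' + p(r)v' + q(r)v=0$ where $p(r)=\frac{2M/r^2}{1-2M/r}=O(r^{-2})$ and $q(r)=-\frac{1}{1-2M/r}\big(\frac{\ell(\ell+1)}{r^2}+\frac{2M\sigma}{r^3}\big)=-\frac{\ell(\ell+1)}{r^2}+O(r^{-3})$, both of symbol type by Corollary~\ref{cor_RWasymptotics} and Lemma~\ref{lem_symbolinverse}. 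Transferring back to $x$ via $r(x)=x(1+O(x^{-1+\varepsilon}))$ (Lemma~\ref{lem_r}) with $r'(x)=1-\frac{2M}{r(x)}$ introduces only symbol-type perturbations, so it suffices to solve $f''=\big(\frac{\ell(\ell+1)}{x^2}+O(x^{-3+\varepsilon})\big)f$ for $x\geq 1$.

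For the recessive solution $u_1\sim x^{-\ell}$, I would set $u_1(x)=x^{-\ell}(1+h(x))$ and derive a Volterra integral equation for $h$ on $[1,\infty)$ by treating the error terms as a perturbation and integrating against the Green's function built from the two Euler solutions, arranging the integration limits ($\int_x^\infty$) so that $h(x)\to 0$ as $x\to\infty$; the relevant kernel decays like $x^{-1+\varepsilon}$ after the $x^{-\ell}/x^{\ell+1}$ Wronskian weighting cancels, which is integrable, and Lemma~\ref{lem_volterra} gives a unique solution with $h(x)=O(x^{-1+\varepsilon})$. For the dominant solution $u_0\sim (2\ell+1)^{-1}x^{\ell+1}$, rather than a second Volterra argument I would use reduction of order: once $u_1$ is in hand, set $u_0(x)=u_1(x)\int_1^x \frac{ds}{u_1(s)^2}$ (or $\int_\infty^x$ with an additive multiple of $u_1$), which automatically solves the ODE. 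Since $u_1(s)^{-2}=s^{2\ell}(1+O(s^{-1+\varepsilon}))$, the integral is $\frac{s^{2\ell+1}}{2\ell+1}(1+O(s^{-1+\varepsilon}))$, and multiplying by $u_1(x)=x^{-\ell}(1+O(x^{-1+\varepsilon}))$ yields $u_0(x)=\frac{x^{\ell+1}}{2\ell+1}(1+O(x^{-1+\varepsilon}))$. Smoothness of both $u_j$ on all of $\mathbb{R}$ follows by extending the solutions via the ODE (the coefficients are smooth since $V_{\ell,\sigma}$ is), using Lemma~\ref{lem_volterradiffx} to get symbol-type control of the derivatives of $h$, and then propagating.

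For the Wronskian, $W(u_0,u_1)$ is constant in $x$ because the ODE $f''=V_{\ell,\sigma}f$ has no first-order term, so I may evaluate it in the limit $x\to\infty$: using $u_0\sim\frac{x^{\ell+1}}{2\ell+1}$, $u_0'\sim\frac{(\ell+1)x^\ell}{2\ell+1}$, $u_1\sim x^{-\ell}$, $u_1'\sim -\ell x^{-\ell-1}$ (the derivative asymptotics being legitimate since the $O$--terms are of symbol type), one computes $W(u_0,u_1)=u_0 u_1' - u_0' u_1 \to \frac{-\ell-(\ell+1)}{2\ell+1}=-1$. This pins down the normalization constant $(2\ell+1)^{-1}$ in $u_0$, and closes the proof.

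\textbf{Main obstacle.} The genuinely delicate point is obtaining the \emph{symbol-type} control on the $O(x^{-1+\varepsilon})$ error terms — i.e. that each derivative gains a power of $x^{-1}$ — uniformly for the dominant solution $u_0$, since differentiating $\int_1^x u_1(s)^{-2}\,ds$ and controlling the interplay between the slowly-converging $\log$-type corrections coming from $r_*$ and the polynomial growth requires the Volterra machinery for derivatives (Lemma~\ref{lem_volterradiffx}) to be applied carefully to $h$ first, before the reduction-of-order step; a naive term-by-term differentiation would lose powers. Everything else is a routine perturbation of the Euler equation.
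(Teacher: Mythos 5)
Your proposal follows essentially the same route as the paper: a Volterra integral equation for the recessive solution $u_1$ obtained by perturbing off the Euler solutions $x^{\ell+1}$, $x^{-\ell}$ (the paper's kernel $(y^{2\ell+1}-x^{2\ell+1})y^{-2\ell}$ is exactly this), reduction of order via the Wronskian identity for $u_0$, and symbol-type derivative control through the Volterra lemmas and Lemma~\ref{lem_symbolinverse}; the detour through the $r$--coordinate is unnecessary, since Corollary~\ref{cor_RWasymptotics} already gives $V_{\ell,\sigma}(x)-\ell(\ell+1)x^{-2}=O(x^{-3+\varepsilon})$ directly in $x$. The one point to fix is the base point in $u_0(x)=u_1(x)\int_{x_1}^x u_1^{-2}(y)\,dy$: you must take $x_1$ large enough that $u_1>0$ on $[x_1,\infty)$ (as the paper does), since the bound $u_1(x)=x^{-\ell}(1+O(x^{-1+\varepsilon}))$ alone does not rule out a zero of $u_1$ near $x=1$, where the integral would diverge.
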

 
\begin{proof}
Suppose for the moment that the solution $u_1$ exists and define the function $a$ by
$u_1(x)=x^{-\ell}(1+a(x))$. Then $\mc{H}_{\ell,\sigma}u_1=0$ is equivalent to
\begin{equation}
\label{eq_prooflemma1a}
 a''(x)-\frac{2\ell}{x}a'(x)=\left [ V_{\ell,\sigma}(x)-\frac{\ell(\ell+1)}{x^2} \right ](1+a(x)). 
 \end{equation}
Viewing this equation as an inhomogeneous equation for $a'$, applying the variation of constants
formula and integrating by parts, we obtain the integral equation
\begin{equation}
\label{eq_prooflemma1} a(x)=\frac{1}{2\ell+1}\int_x^\infty (y^{2\ell+1}-x^{2\ell+1})y^{-2\ell}
\left [V_{\ell,\sigma}(y)-\frac{\ell
(\ell+1)}{y^2} \right ](1+a(y))dy. 
\end{equation}
Therefore, if we can show that Eq.~(\ref{eq_prooflemma1}) has a smooth solution, we obtain existence of
$u_1$.
However, Eq.~(\ref{eq_prooflemma1}) is a Volterra integral equation of the form 
$$ a(x)=\int_x^\infty K(x,y)dy + \int_x^\infty K(x,y)a(y)dy $$
with a kernel $K$ satisfying $|\partial_x^k \partial_y^l K(x,y)| \leq C_{k,l}
y ^{-2+\varepsilon-k-l}$ for $1 \leq x \leq y$ and an arbitrary 
$\varepsilon \in (0,1)$ (cf. Corollary \ref{cor_RWasymptotics}).
Therefore, Lemma \ref{lem_volterradiffx} implies the
existence of a unique smooth solution $a \in L^\infty(1,\infty)$ 
and Eq.~(\ref{eq_prooflemma1}) shows that in fact $|a(x)|\lesssim x
^{-1+\varepsilon}$ for $x \geq 1$.
Furthermore, the first derivative of $a$ is given by
$$ a'(x)= \int_x^\infty \partial_x K(x,y)(1+a(y))dy $$
and this implies $|a'(x)|\lesssim x^{-2+\varepsilon}$ for $x \geq 1$.
The estimates for the higher derivatives follow from Eq.~(\ref{eq_prooflemma1a}),
the Leibniz rule and a simple induction.

For the second solution $u_0$ we use the Wronskian condition
$$ 1=W(u_1,u_0)=\left ( \frac{u_0}{u_1} \right )'
u_1^2 $$ which yields
$$ u_0(x)=u_1(x)\int_{x_1}^x u_1^{-2}(y)dy $$
where $x_1>0$ is chosen so large that $u_1(x)>0$ for all $x \geq x_1$.
Note that $u_1^{-2}(x)=x^{2\ell}(1+b(x))$ for a suitable $b$ satisfying 
$|b^{(k)}(x)|\leq C_k \langle x \rangle^{-1+\varepsilon-k}$  
(apply the Leibniz rule and Lemma \ref{lem_symbolinverse}).
Inserting the asymptotic expansion for $u_1$ yields
$u_0(x)=(2\ell+1)^{-1} x^{\ell+1}(1+O(x^{-1+\varepsilon}))$ where the 
$O$--term behaves like a symbol.
\end{proof}

We construct another pair $v_0$, $v_1$ of zero energy solutions 
with specific asymptotic
behavior as $x \to -\infty$. This is considerably easier than the above
construction for the solutions $u_j$ due to the exponential decay of the
Regge--Wheeler potential $V_{\ell,\sigma}(x)$ as $x \to -\infty$.

\begin{lemma}
\label{lem_zeroenergy-}
There exist smooth functions $v_j$ for $j=0,1$ satisfying 
$\mc{H}_{\ell,\sigma}v_j=0$ and 
$v_0(x)=x(1+O(x^{-1}))$ as well as $v_1(x)=1+O(x^{-1})$ for
all $x \leq -1$
where the $O$--terms behave like symbols under differentiation.
\end{lemma}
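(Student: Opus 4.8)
The plan is to mirror the construction of the solutions $u_j$ in Lemma \ref{lem_zeroenergy}, but everything becomes much easier here because, by Corollary \ref{cor_RWasymptotics}, the potential $V_{\ell,\sigma}$ decays \emph{exponentially} as $x\to-\infty$. Thus the equation $\mc{H}_{\ell,\sigma}v=0$ is a mild perturbation of the free equation $-v''=0$, whose fundamental system is $\{1,x\}$, and the perturbation series will converge against the exponential weight. I would first build the bounded solution $v_1$ directly via a Volterra integral equation anchored at $-\infty$, and then obtain the linearly growing solution $v_0$ by reduction of order, exactly as $u_0$ is produced from $u_1$.

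For $v_1$, I would look for the solution of
$$ v_1(x)=1+\int_{-\infty}^x (x-y)V_{\ell,\sigma}(y)v_1(y)\,dy, $$
since any bounded solution of this equation automatically satisfies $v_1''=V_{\ell,\sigma}v_1$, i.e.\ $\mc{H}_{\ell,\sigma}v_1=0$, and tends to $1$ as $x\to-\infty$. Because $V_{\ell,\sigma}(y)=O(e^{y/(2M)})$, the kernel $K(x,y)=(x-y)V_{\ell,\sigma}(y)$ satisfies $\int_{-\infty}^a(a-y)|V_{\ell,\sigma}(y)|\,dy<\infty$ and $\int_{-\infty}^a\sup_{x}|\partial_x^k K(x,y)|\,dy<\infty$ for every $k$ and every fixed $a$, so Lemma \ref{lem_volterradiffx} yields a unique smooth bounded solution on $(-\infty,a)$, which we extend to all of $\mathbb{R}$ by the ODE. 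Inserting the bound $\|v_1\|_{L^\infty(-\infty,a)}\lesssim 1$ back into the integral equation gives $v_1(x)-1=O(e^{x/(2M)})$, and differentiating the integral equation (using the equation itself for the second and higher derivatives, together with the Leibniz rule and the exponential/symbol character of $V_{\ell,\sigma}$) yields $(v_1-1)^{(k)}(x)=O(e^{x/(2M)})$ for all $k$. Since $e^{x/(2M)}\le C_k\langle x\rangle^{-1-k}$ for $x\le -1$, the error term is in particular $O(x^{-1})$ of symbol type.

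For $v_0$, choose $x_1$ with $v_1>0$ on $(-\infty,x_1]$ (possible since $v_1\to 1$) and set
$$ v_0(x):=v_1(x)\int_{x_1}^x v_1(y)^{-2}\,dy, $$
which solves $\mc{H}_{\ell,\sigma}v_0=0$ by the standard reduction-of-order computation and has $W(v_0,v_1)=-1$; extend $v_0$ smoothly to $\mathbb{R}$ via the ODE. By the Leibniz rule and Lemma \ref{lem_symbolinverse} one has $v_1(y)^{-2}=1+O(e^{y/(2M)})$ with the $O$-term of symbol type, whence $\int_{x_1}^x v_1(y)^{-2}\,dy=x+C_0+O(e^{x/(2M)})$ for a constant $C_0$. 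Multiplying by $v_1(x)=1+O(e^{x/(2M)})$ gives $v_0(x)=x+C_0+\tilde r(x)$ with $\tilde r$ and all its derivatives exponentially small, hence $v_0(x)=x\bigl(1+C_0 x^{-1}+x^{-1}\tilde r(x)\bigr)=x(1+O(x^{-1}))$ with a symbol-type $O$-term for $x\le -1$, as claimed.

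The argument is essentially routine; the only points requiring attention are (i) anchoring the Volterra equation at $-\infty$ and verifying that the kernel $(x-y)V_{\ell,\sigma}(y)$ is integrable against the exponential weight — the growing factor $(x-y)$ is harmless since it is dominated by $e^{y/(2M)}$ — and (ii) the bookkeeping of the symbol estimates, which is trivialized by the observation that every error term is in fact exponentially decaying and therefore dominates any negative power of $\langle x\rangle$ together with all of its derivatives. No genuine obstacle arises, in sharp contrast to the $u_j$ construction where the slow inverse-square decay of the potential forced the more delicate integral equation \eqref{eq_prooflemma1}.
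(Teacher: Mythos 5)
Your construction of $v_1$ is exactly the paper's: the same Volterra equation $v_1(x)=1+\int_{-\infty}^x(x-y)V_{\ell,\sigma}(y)v_1(y)\,dy$ anchored at $-\infty$, solved via Lemma \ref{lem_volterradiffx} using the exponential decay of $V_{\ell,\sigma}$ from Corollary \ref{cor_RWasymptotics}, with the symbol bounds coming for free because every error is exponentially small. For $v_0$ you deviate: the paper sets up a second Volterra equation directly for $v_0(x)/x$, namely $\frac{v_0(x)}{x}=1-\int_{-\infty}^x\bigl(\frac{y^2}{x}-y\bigr)V_{\ell,\sigma}(y)\frac{v_0(y)}{y}\,dy$, whereas you obtain $v_0$ from $v_1$ by reduction of order, $v_0=v_1\int_{x_1}^x v_1^{-2}$, in the same spirit as the passage from $u_1$ to $u_0$ in Lemma \ref{lem_zeroenergy}. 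Both routes are sound; your version additionally hands you $W(v_0,v_1)=-1$ without a separate computation, at the cost of having to invoke Lemma \ref{lem_symbolinverse} for $v_1^{-2}$ (which requires taking $x_1$ negative enough that $|v_1-1|<1$, not merely $v_1>0$ — a point worth stating) and of producing a $v_0$ that differs from the paper's by the harmless constant shift $C_0$, which is absorbed into the $O(x^{-1})$ term exactly as you argue. The asymptotics and symbol character you derive are correct in both parts, so the proof is complete and essentially equivalent in difficulty to the paper's.
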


\begin{proof}
For $x \leq -1$ consider the Volterra equations
$$ \frac{v_0(x)}{x}=1-\int_{-\infty}^x
\left (\frac{y^2}{x}-y \right )V_{\ell,\sigma}(y)\frac{v_0(y)}{y}dy $$
and 
$$ v_1(x)=1-\int_{-\infty}^x (y-x)V_{\ell,\sigma}(y)v_1(y)dy $$
which have smooth solutions according to Lemma \ref{lem_volterradiffx}
since the potential $V_{\ell,\sigma}(y)$ decays exponentially as
$y \to -\infty$ (see Corollary \ref{cor_RWasymptotics}). Obviously, we have
the asymptotic behavior
$v_0(x) \sim x$, $v_1(x) \sim 1$ as $x \to -\infty$,
$v_j$ behave like
symbols under differentiation and satisfy $\mc{H}_{\ell,\sigma}v_j=0$, as
a straightforward calculation shows.
\end{proof}

\begin{corollary}
\label{cor_zeroenergy}
The solutions $u_j$ for $j=0,1$ can be uniquely extended to all of 
$\mathbb{R}$ and we have $u_j(x)=O(x)$ as $x \to -\infty$ 
where the $O$--term is of symbol type.
\end{corollary}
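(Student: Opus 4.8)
The plan is to reduce the statement to two standard facts about the linear equation $\mc{H}_{\ell,\sigma} u = 0$, i.e. $u'' = V_{\ell,\sigma} u$: global solvability of the initial value problem, and the fact that the pair $v_0, v_1$ constructed in Lemma \ref{lem_zeroenergy-} is a fundamental system near $-\infty$.

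First I would record that $V_{\ell,\sigma}$ is smooth, in fact bounded, on all of $\mathbb{R}$: by \eqref{eq_tortoise} the function $r(x)$ is smooth with $r(x) > 2M$ for every $x$, so $V_{\ell,\sigma} = (1 - 2M/r)(\ell(\ell+1)/r^2 + 2M\sigma/r^3)$ is a smooth bounded function of $x$. Hence any solution of $u'' = V_{\ell,\sigma} u$ defined on a subinterval of $\mathbb{R}$ extends uniquely to a smooth solution on all of $\mathbb{R}$, by the usual global existence and uniqueness theorem for linear ODEs with continuous coefficients. Applying this to the solutions $u_0, u_1$ of Lemma \ref{lem_zeroenergy}, which are a priori only defined on a half-line $[x_1,\infty)$, yields the asserted unique extension to $\mathbb{R}$.

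Next, to obtain the behavior as $x \to -\infty$, I would first check that $\{v_0, v_1\}$ from Lemma \ref{lem_zeroenergy-} is linearly independent. Since the equation has no first-order term, the Wronskian $W(v_0, v_1)$ is constant in $x$ (Abel's identity), and evaluating the limit $x \to -\infty$ from the asymptotics $v_0(x) = x(1+O(x^{-1}))$, $v_1(x) = 1 + O(x^{-1})$ (so $v_0'(x) = 1 + O(x^{-1})$ and $v_1'(x) = O(x^{-2})$, whence $v_0 v_1' = O(x^{-1}) \to 0$) gives $W(v_0, v_1) = -1 \neq 0$. Therefore there are constants $a_j, b_j$ with $u_j = a_j v_0 + b_j v_1$ on all of $\mathbb{R}$. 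Since Lemma \ref{lem_zeroenergy-} gives that $v_0$ and $v_1$ are of symbol type with $v_0(x) = O(x)$ and $v_1(x) = O(1)$ as $x \to -\infty$, and since $|x|^{-k} \leq |x|^{1-k}$ for $|x| \geq 1$, the $k$-th derivative of $u_j$ satisfies $u_j^{(k)}(x) = a_j v_0^{(k)}(x) + b_j v_1^{(k)}(x) = O(x^{1-k})$ for every $k \in \mathbb{N}_0$; that is, $u_j(x) = O(x)$ with symbol character as $x \to -\infty$.

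I do not expect a genuine obstacle here: the analytic substance of the section lies in Lemmas \ref{lem_zeroenergy} and \ref{lem_zeroenergy-}, and the only point meriting a brief verification is the linear independence of $v_0$ and $v_1$, which is handled by the constancy of their Wronskian together with the value read off from the $x \to -\infty$ asymptotics.
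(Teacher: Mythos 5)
Your proposal is correct and follows essentially the same route as the paper: extend $u_j$ globally by the standard theory of linear ODEs with smooth coefficients, then express $u_j$ as a combination of $v_0,v_1$ on a left half-line and read off the $O(x)$ bound with symbol character from Lemma \ref{lem_zeroenergy-}. The only addition is your explicit Wronskian computation confirming that $v_0,v_1$ form a fundamental system, a point the paper simply asserts.
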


\begin{proof}
Since the potential $V_{\ell,\sigma}$ is smooth on $\mathbb{R}$, the solutions
$u_j$, originally defined on $[1,\infty)$ only, can be uniquely extended to
all of $\mathbb{R}$ by solving appropriate initial value problems.
Since the solution pair $v_0,v_1$ forms a fundamental system for the equation
$\mc{H}_{\ell,\sigma}f=0$, $u_j$ can
be written as a linear combination of $v_0$, $v_1$ on $(-\infty,-1]$
and everything follows
from Lemma \ref{lem_zeroenergy-}. 
\end{proof}

\subsection{Construction of the perturbative solutions}

Next, by perturbing in $\lambda$ around $\lambda=0$, we obtain useful approximations to
solutions of $\mc{H}_{\ell,\sigma} f=\lambda^2 f$ for $|x \lambda|$ small.

\begin{lemma}
\label{lem_pertenergy}
There exist constants $x_0,\lambda_0,\delta>0$ and smooth functions 
$u_j(\cdot,\lambda)$ satisfying
$$\mc{H}_{\ell,\sigma} u_j(\cdot,\lambda)=\lambda^2 u_j(\cdot,\lambda)$$ 
for $j=0,1$ and $W(u_0(\cdot,\lambda),u_1(\cdot,\lambda))=-1$ such that
$u_j(x,\lambda)=u_j(x)(1+a_j(x,\lambda))$ where
$$ |a_j(x,\lambda)|\lesssim
x^2 \lambda^2 $$
for all $\lambda \in (0, \lambda_0)$ and $x \in [x_0,\delta \lambda^{-1}]$
provided that $\ell \geq 1$.
In the case $\ell=0$ we have the weaker bounds
$$ |a_0(x,\lambda)|\lesssim
x^2 \lambda^2 \mbox{ and }
|a_1(x,\lambda)| \lesssim x\lambda $$
in the above ranges of $x$ and $\lambda$.
\end{lemma}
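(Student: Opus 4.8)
The plan is to treat the two solutions asymmetrically: first to construct $u_0(\cdot,\lambda)$ by a genuine perturbation argument off the zero energy solution $u_0$ of Lemma~\ref{lem_zeroenergy}, and then to \emph{define} $u_1(\cdot,\lambda)$ out of $u_0(\cdot,\lambda)$ by reduction of order, which builds in the Wronskian normalization automatically.  Throughout set $X:=\delta\lambda^{-1}$, fix $x_0$ so large that the asymptotics of Lemma~\ref{lem_zeroenergy} are in force on $[x_0,\infty)$ and $u_0>0$ there, and choose $\lambda_0,\delta$ small at the end, arranging $\delta>x_0\lambda_0$ so that $x_0\le X$.  Writing $u_0(x,\lambda)=u_0(x)(1+a_0(x,\lambda))$ and using $\mc{H}_{\ell,\sigma}u_0=0$, the equation $\mc{H}_{\ell,\sigma}u_0(\cdot,\lambda)=\lambda^2u_0(\cdot,\lambda)$ is equivalent to $(u_0^2a_0')'=-\lambda^2u_0^2(1+a_0)$, and imposing vanishing Cauchy data at $x_0$ turns this into the Volterra equation
\[
a_0(x,\lambda)=-\lambda^2\int_{x_0}^x u_0(s)^{-2}\int_{x_0}^s u_0(y)^2\bigl(1+a_0(y,\lambda)\bigr)\,dy\,ds .
\]
Here I would use Lemma~\ref{lem_zeroenergy} (and Lemma~\ref{lem_symbolinverse} for the reciprocal) to the effect that $u_0(x)$ and $u_0(x)^{-1}$ are comparable, up to $\ell$--dependent constants, to $x^{\ell+1}$ and $x^{-\ell-1}$ on $[x_0,\infty)$.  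Then the inhomogeneous term of the equation ($a_0\equiv0$) is $\lesssim\lambda^2\int_{x_0}^x s\,ds\lesssim\lambda^2x^2$ on $[x_0,X]$, and in the weighted norm $\|a_0\|_\ast:=\sup_{x_0\le x\le X}\lambda^{-2}x^{-2}|a_0(x,\lambda)|$ the map on the right--hand side has, with respect to $\|\cdot\|_\ast$, Lipschitz constant $\lesssim\lambda^2X^2=\delta^2$.  For $\delta$ small the contraction mapping principle therefore yields a unique solution with $\|a_0\|_\ast\lesssim1$, i.e.\ $|a_0(x,\lambda)|\lesssim\lambda^2x^2$ for every $\ell\ge0$, and smoothness together with the $x$--derivative bounds needed later follows from Lemma~\ref{lem_volterradiffx}.

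Since $|a_0|\le\tfrac12$ on $[x_0,X]$, the function $u_0(\cdot,\lambda)$ is positive there, and I would set
\[
u_1(x,\lambda):=u_0(x,\lambda)\Bigl(\int_x^{X}u_0(s,\lambda)^{-2}\,ds+c(\lambda)\Bigr),\qquad c(\lambda):=\int_{X}^{\infty}u_0(s)^{-2}\,ds ,
\]
extended to all of $\mb{R}$ by solving the associated initial value problem.  A reduction of order computation shows that this solves $\mc{H}_{\ell,\sigma}u_1(\cdot,\lambda)=\lambda^2u_1(\cdot,\lambda)$, and since the bracket has derivative $-u_0(x,\lambda)^{-2}$ one gets $W(u_0(\cdot,\lambda),u_1(\cdot,\lambda))=-1$ identically; neither fact depends on the additive constant $c(\lambda)$, which is present only to pin down the asymptotics of $u_1(\cdot,\lambda)$.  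The zero energy identity $W(u_0,u_1)=-1$ forces $u_1(x)=u_0(x)\int_x^\infty u_0(s)^{-2}\,ds$, so with $a_1(x,\lambda):=u_1(x,\lambda)/u_1(x)-1$ one finds, on $[x_0,X]$,
\[
1+a_1(x,\lambda)=\bigl(1+a_0(x,\lambda)\bigr)\left(1+\frac{\displaystyle\int_x^{X}u_0(s)^{-2}\bigl[(1+a_0(s,\lambda))^{-2}-1\bigr]\,ds}{\displaystyle\int_x^{\infty}u_0(s)^{-2}\,ds}\right).
\]
Bounding $|(1+a_0)^{-2}-1|\lesssim|a_0|\lesssim\lambda^2s^2$ and using the comparability of $u_0$, the numerator of the fraction is $\lesssim\lambda^2\int_x^{X}s^{-2\ell}\,ds$ while the denominator is $\gtrsim x^{-2\ell-1}$.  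For $\ell\ge1$ one has $\int_x^{X}s^{-2\ell}\,ds\lesssim x^{-2\ell+1}$, so the fraction is $\lesssim\lambda^2x^2$, hence $|a_1(x,\lambda)|\lesssim\lambda^2x^2$; for $\ell=0$ only $\int_x^{X}s^{-2\ell}\,ds=X-x\lesssim X=\delta\lambda^{-1}$ is available, so the fraction is merely $\lesssim\lambda^2xX\lesssim\lambda x$, giving the weaker bound $|a_1(x,\lambda)|\lesssim\lambda x$, while $|a_0|\lesssim\lambda^2x^2$ survives.  Smoothness of $u_1(\cdot,\lambda)$ is immediate from the defining formula and ODE theory.

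The one genuinely delicate point is the contraction estimate in the first step: the weighted space must be arranged so that both the inhomogeneity and the Lipschitz constant of the Volterra map are dominated by a fixed power of $\delta$ \emph{uniformly} in $\lambda\in(0,\lambda_0)$, while at the same time $\delta$ must be large enough relative to $x_0\lambda_0$ for the cut--off $X=\delta\lambda^{-1}$ to lie in the region governed by Lemma~\ref{lem_zeroenergy}; once these choices are coordinated, the rest is bookkeeping with those asymptotics.  The loss of a power of $\lambda x$ when $\ell=0$ is structural: it is exactly the non--integrability of $s^{-2\ell}$ at $\ell=0$ that forces the large factor $X=\delta\lambda^{-1}$ into the estimate, which is why that case is stated separately.
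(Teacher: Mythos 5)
Your proposal is correct and takes essentially the same route as the paper: $u_0(\cdot,\lambda)$ is obtained from a Volterra equation which is exactly the paper's one (your double integral is the paper's kernel after Fubini, since $u_0(y)u_1(y)-u_0^2(y)u_1(x)/u_0(x)=u_0^2(y)\int_y^x u_0^{-2}(s)\,ds$), and $u_1(\cdot,\lambda)$ is then produced by reduction of order off $u_0(\cdot,\lambda)$, just as in the paper. The only deviations are harmless for this lemma: you solve by contraction in a weighted norm (which is why you need $\delta$ small already at this step, whereas the paper's Volterra iteration gives the solution on all of $[x_0,\lambda^{-1}]$ without smallness, a range used in later lemmas), your additive constant $c(\lambda)$ merely shifts $u_1(\cdot,\lambda)$ by a multiple of $u_0(\cdot,\lambda)$ and cancels the paper's $O((x\lambda)^{2\ell+1})$ boundary term (the $\ell=0$ loss survives regardless, as you note), and the identification $u_1(x)=u_0(x)\int_x^\infty u_0^{-2}(s)\,ds$ should be justified by the normalization $u_1(x)\sim x^{-\ell}$ from Lemma \ref{lem_zeroenergy} rather than by the Wronskian identity alone.
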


\begin{proof}
Let $x_0>0$ be so large that 
$u_0(x)>0$
for all $x\geq x_0$. 
A straightforward calculation shows that, if the function $h$ solves the
integral equation
\begin{equation}
\label{eq_proofperten}
h(x,\lambda)=1-\lambda^2 \int_{x_0}^x \left [
u_0(y)
u_1(y)-
u_0^2(y)\frac{u_1(x)}{u_0(x)} \right ]h(y,\lambda)dy, 
\end{equation}
then $u_0(x,\lambda):=u_0(x)h(x,\lambda)$ satisfies 
$\mc{H}_{\ell,\sigma}
u_0(\cdot,\lambda)=\lambda^2 u_0(\cdot,\lambda)$.
Eq.~(\ref{eq_proofperten}) is a Volterra integral equation
$$ h(x,\lambda)=1+\int_{x_0}^x K(x,y,\lambda)h(y,\lambda)dy $$
where the kernel is of the form 
$$ K(x,y,\lambda)=\lambda^{2}\left [y(1+O(y^{-1+\varepsilon}))
-x^{-2\ell-1}y^{2\ell+2}(1+O(x^{-1+\varepsilon})+O(y^{-1+\varepsilon}))\right ] $$
for $x,y \geq x_0$ and the $O$--terms are of symbol type (see Lemma
\ref{lem_zeroenergy}). 
This shows $|\partial_x^k \partial_y^l K(x,y,\lambda)|
\leq C_{k,l}y^{1-k-l} \lambda^2$ for $x_0 \leq y \leq x$ and
hence,
$$ \int_{x_0}^{\lambda^{-1}} 
\sup_{\{x: x_0 < y < x\}}|\partial_x^k K(x,y,\lambda)|dy \lesssim 1 $$ 
for $ \lambda \in (0,\lambda_0)$ and all $k \in \mathbb{N}_0$ where
$\lambda_0>0$ is arbitrary. Thus, Lemma 
\ref{lem_volterradiffx} shows that
Eq.~(\ref{eq_proofperten}) has a unique smooth solution 
$h(\cdot,\lambda)$ satisfying 
$\|h(\cdot,\lambda)\|_{L^\infty(x_0,\lambda^{-1})}
\leq C$ for all $\lambda \in (0,\lambda_0)$. 
Therefore, Eq.~(\ref{eq_proofperten})
implies $|h(x,\lambda)-1|\lesssim x^2 \lambda^2$ for 
$\lambda \in (0,\lambda_0)$ and $x_0 \leq
x \leq \lambda^{-1}$.

Now choose $\delta>0$ and $\lambda_0>0$ so small that 
$|h(x,\lambda)-1| \leq \frac{1}{2}$ for 
all $\lambda \in (0,\lambda_0)$ and $x \in [x_0,\delta \lambda^{-1}]$. 
We use the Wronskian condition
$$
-1=W(u_0(\cdot,\lambda),u_1(\cdot,\lambda))
=\left (\frac{u_1(\cdot,\lambda)}{u_0(\cdot,\lambda)} \right
)'u_0^2(\cdot,\lambda) $$
to construct the second solution $u_1(x,\lambda)$, i.e., 
$$ u_1(x,\lambda)=u_0(x,\lambda)
\int_x^{\delta \lambda^{-1}}
u_0^{-2}(y,\lambda)dy $$
which implies
\begin{equation}
\label{eq_proofpertenergya1}
a_1(x,\lambda)=\frac{u_0(x)}{u_1(x)}(1+a_0(x,\lambda))\int_x^{\delta
\lambda^{-1}} u_0^{-2}(y)(1+\tilde{a}_0(y,\lambda))dy-1 
\end{equation}
where $\tilde{a}_0(y,\lambda):=(1+a_0(y,\lambda))^{-2}-1$ 
inherits the bound of $a_0$.
Now, by inserting the asymptotics of $u_0$ from Lemma \ref{lem_zeroenergy}, we
obtain 
$$ u_0(x)\int_x^\infty u_0^{-2}(y)dy \sim x^{-\ell} $$
for $x \to \infty$ and this shows 
$$ u_1(x)=u_0(x)\int_x^\infty u_0^{-2}(y)dy $$
since $u_1$ is \emph{uniquely} determined by the asymptotic behavior $u_1(x) \sim
x^{-\ell}$ for $x \to \infty$ (cf. Lemma \ref{lem_zeroenergy}).
Using this, Eq.~(\ref{eq_proofpertenergya1}) 
and the asymptotics of $a_0$, $\tilde{a}_0$, we obtain
\begin{align*} 
a_1(x,\lambda)&=\frac{u_0(x)}{u_1(x)}\left (
-\int_{\delta \lambda^{-1}}^\infty
u_0^{-2}(y)dy+a_0(x,\lambda)\int_x^{\delta \lambda^{-1}}u_0^{-2}(y)dy \right )\\
&+\frac{u_0(x)}{u_1(x)}
(1+a_0(x,\lambda))\int_x^{\delta \lambda^{-1}}u_0^{-2}(y)
\tilde{a}_0(y,\lambda)dy \\
&=O(x^{2\ell+1})\left
[O(\lambda^{2\ell+1})+O(x^2\lambda^2)(O(\lambda^{2\ell+1})+O(x^{-2\ell-1}))
+O(x^{-2\ell-1})O(x^2\lambda^2) \right ]\\
&=O(x^2\lambda^2)+O((x\lambda)^{2\ell+1})
\end{align*}
which implies the claim.
\end{proof}

\subsection{Estimates on the derivatives}

Next, we study derivatives of the above constructed solutions.

\begin{proposition}
\label{prop_pertenergy}
The functions $a_j$ for $j=0,1$ from Lemma \ref{lem_pertenergy} are of symbol
type, i.e., 
$$ |\partial_x^k \partial_\lambda^m a_j(x,\lambda)|\leq C_{k,m}
x^{2-k}\lambda^{2-m} $$
for $k,m \in \mathbb{N}_0$ and $\lambda \in (0,\lambda_0)$, 
$x \in [x_0,\delta \lambda^{-1}]$ provided that $\ell \geq 1$ 
where $x_0,\lambda_0,\delta>0$ are constants.
In the case $\ell=0$ we have the weaker bounds
$$ |\partial_x^k \partial_\lambda^m a_0(x,\lambda)|\leq C_{k,m}
x^{2-k}\lambda^{2-m} 
\mbox{ and }
|\partial_x^k \partial_\lambda^m a_1(x,\lambda)|\leq C_{k,m}
x^{1-k}\lambda^{1-m} $$
in the above ranges of $x$ and $\lambda$.
\end{proposition}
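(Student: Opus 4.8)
The plan is to establish the symbol bounds for $a_j$ by differentiating the defining integral equations and applying the Volterra estimates uniformly. First I would treat $a_0$. Recall that $u_0(x,\lambda)=u_0(x)h(x,\lambda)$ where $h$ solves the Volterra equation \eqref{eq_proofperten} with kernel $K(x,y,\lambda)$ satisfying $|\partial_x^k\partial_y^l K(x,y,\lambda)|\le C_{k,l}\,y^{1-k-l}\lambda^2$ for $x_0\le y\le x$, and where $a_0(x,\lambda)=h(x,\lambda)-1$. The key observation is that $\partial_\lambda$ applied to the kernel behaves well: since $K$ is $\lambda^2$ times a $\lambda$--independent expression, we get $\partial_\lambda^m K(x,y,\lambda)=O(y^{1-k-l}\lambda^{2-m})$ in the relevant range, so mixed derivatives $\partial_x^k\partial_\lambda^m K$ still obey the integrability bound $\int_{x_0}^{\lambda^{-1}}\sup_{x_0<y<x}|\partial_x^k\partial_\lambda^m K(x,y,\lambda)|\,dy\lesssim \lambda^{-m}\cdot(\text{something bounded})$. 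Differentiating \eqref{eq_proofperten} in $\lambda$ produces new Volterra equations for $\partial_\lambda^m h$ with the same kernel $K$ plus forcing terms built from lower-order $\partial_\lambda$--derivatives of $h$ and of $K$; an induction on $m$ using Lemma \ref{lem_volterradiffx} gives $\|\partial_\lambda^m h(\cdot,\lambda)\|_{L^\infty(x_0,\delta\lambda^{-1})}\lesssim \lambda^{-m}$, and then feeding this back into the differentiated integral equation yields the sharper pointwise bound $|\partial_x^k\partial_\lambda^m a_0(x,\lambda)|\lesssim x^{2-k}\lambda^{2-m}$. The $x$--derivatives are handled simultaneously because differentiating \eqref{eq_proofperten} in $x$ hits only the explicit $x$--dependence of the kernel (the $u_1(x)/u_0(x)$ factor) and the endpoint, both of which are controlled by Lemma \ref{lem_zeroenergy}; alternatively one uses the ODE $\mc{H}_{\ell,\sigma}u_0(\cdot,\lambda)=\lambda^2 u_0(\cdot,\lambda)$ to trade two $x$--derivatives for lower ones, exactly as in the proof of Lemma \ref{lem_pertenergy}.

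Next I would turn to $a_1$, which is the more delicate part. Here one cannot differentiate an autonomous Volterra equation; instead $u_1(\cdot,\lambda)$ is defined through the reduction-of-order formula $u_1(x,\lambda)=u_0(x,\lambda)\int_x^{\delta\lambda^{-1}}u_0^{-2}(y,\lambda)\,dy$, equivalently by \eqref{eq_proofpertenergya1}. The plan is to differentiate this expression directly. Each $\partial_\lambda$ either hits $u_0(\cdot,\lambda)$ or $u_0^{-2}(\cdot,\lambda)$ in the integrand — producing factors of $\partial_\lambda^m a_0$ which are already controlled with $x^{2-k}\lambda^{2-m}$ symbol bounds from the first part, together with the explicit zero-energy asymptotics $u_0(x)\sim(2\ell+1)^{-1}x^{\ell+1}$ and $u_0^{-2}(x)\sim(2\ell+1)^2 x^{-2\ell-2}$ of Lemma \ref{lem_zeroenergy} — or it hits the upper endpoint $\delta\lambda^{-1}$, contributing a boundary term $\frac{d}{d\lambda}(\delta\lambda^{-1})\cdot u_0^{-2}(\delta\lambda^{-1},\lambda)=O(\lambda^{-2})\cdot O((\delta\lambda^{-1})^{-2\ell-2})=O(\lambda^{2\ell})$, which after multiplication by the $O(x^{2\ell+1})$ prefactor $u_0(x)/u_1(x)$ is $O(x^{2\ell+1}\lambda^{2\ell})$, consistent with (indeed better than, for $\ell\ge1$ in the relevant window) the claimed $x^{2-k}\lambda^{2-m}$. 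The bookkeeping here mirrors the computation already carried out at the end of the proof of Lemma \ref{lem_pertenergy}; one simply tracks one additional parameter $m$ through the same $O$--term arithmetic, repeatedly using that the integral $\int_x^{\delta\lambda^{-1}}u_0^{-2}(y)\,dy$ and its $x$--derivatives are of symbol type in $x$ (this is where $u_1(x)=u_0(x)\int_x^\infty u_0^{-2}$, established in Lemma \ref{lem_pertenergy}, and Lemma \ref{lem_symbolinverse} are invoked). The weaker $\ell=0$ bounds for $a_1$ arise, as before, because the leading boundary/tail term is only $O(x\lambda)$ rather than $O((x\lambda)^{2\ell+1})$ when $\ell=0$, and this loss propagates linearly under each differentiation.

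For the $x$--derivatives of $a_1$ the cleanest route is again the ODE: since $\mc{H}_{\ell,\sigma}u_1(\cdot,\lambda)=\lambda^2 u_1(\cdot,\lambda)$, i.e. $u_1''=(V_{\ell,\sigma}-\lambda^2)u_1$, one expresses $\partial_x^k u_1(\cdot,\lambda)$ in terms of $u_1$, $u_1'$, and $x$--derivatives of the coefficient $V_{\ell,\sigma}-\lambda^2$; combined with Corollary \ref{cor_RWasymptotics} (symbol bounds on $V_{\ell,\sigma}$) and the already-established bounds on $u_1(x,\lambda)$ and $u_1'(x,\lambda)$ — the latter obtained from the $m$-only analysis plus one explicit $x$--differentiation of the reduction formula, which yields $u_1'(x,\lambda)=u_0'(x,\lambda)\int_x^{\delta\lambda^{-1}}u_0^{-2}-u_0(x,\lambda)^{-1}$ — an induction on $k$ closes the estimate $|\partial_x^k\partial_\lambda^m a_1|\lesssim x^{2-k}\lambda^{2-m}$. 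The main obstacle I anticipate is not any single estimate but the uniformity: one must verify that every constant $C_{k,m}$ is genuinely independent of $x$ and $\lambda$ throughout the full window $x\in[x_0,\delta\lambda^{-1}]$, $\lambda\in(0,\lambda_0)$, and in particular that the boundary terms generated by differentiating the cutoff endpoint $\delta\lambda^{-1}$ never dominate. This forces one to keep the two scales $x$ and $\lambda^{-1}$ coupled only through the product $x\lambda\le\delta$, exactly as in the statement, and to resist the temptation to estimate $x$ and $\lambda$ separately. Once that discipline is maintained, the induction is routine.
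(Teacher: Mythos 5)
Your proposal is correct and, for the bulk of the statement, follows the paper's own route: for $a_0$ you run the same induction on mixed $(k,m)$--derivatives of the Volterra solution $h$ of \eqref{eq_proofperten}, exploiting that the kernel is $\lambda^2$ times a $\lambda$--independent symbol (so only $\partial_\lambda^j K$ with $j\le 2$ ever appears) and feeding the resulting bounds back through Lemma \ref{lem_volterra}; and for the $\lambda$--derivatives of $a_1$ you differentiate the reduction--of--order representation, tracking the endpoint term at $\delta\lambda^{-1}$, exactly as the paper does. Where you genuinely deviate is in the $x$--derivatives of $a_1$: the paper differentiates the \emph{split} representation \eqref{eq_proofproppertenergya1}, in which every term already carries a smallness factor ($a_0$, $\tilde{a}_0$, or the tail $\int_{\delta\lambda^{-1}}^\infty u_0^{-2}$), so Leibniz and Lemma \ref{lem_symbolinverse} apply term by term with no further cancellation needed; you instead reduce $k\ge 2$ to $k\le 1$ via the ODE, which does work (one finds, e.g., $\partial_x^2 a_1=-\lambda^2(1+a_1)-2\tfrac{u_1'}{u_1}\partial_x a_1$), but it concentrates all the difficulty in the base case $k=1$, and there your phrase ``the already--established bounds on $u_1(x,\lambda)$ and $u_1'(x,\lambda)$'' is not sufficient as stated: mere size bounds on $u_1'(x,\lambda)$ give only $|\partial_x a_1|\lesssim x^{-1}$, whereas the claimed bound $x\lambda^{2}$ is far smaller (for fixed $x$ it vanishes as $\lambda\to 0$ while $x^{-1}$ does not). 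To obtain it one must exhibit the cancellation between the two individually large terms produced by one $x$--derivative of \eqref{eq_proofpertenergya1}, namely $(u_0/u_1)'(1+a_0)\int_x^{\delta\lambda^{-1}}u_0^{-2}(1+\tilde{a}_0)$ and the lower--endpoint term $-\tfrac{u_0}{u_1}(1+a_0)u_0^{-2}(x)(1+\tilde{a}_0(x,\lambda))$, using $(u_0/u_1)'=u_1^{-2}$ (Wronskian $-1$) and $\int_x^\infty u_0^{-2}=u_1/u_0$ --- which is precisely the regrouping into \eqref{eq_proofproppertenergya1} that you cite from the end of the proof of Lemma \ref{lem_pertenergy}. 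Since you do invoke that identity and that bookkeeping, your plan closes, but the $k=1$ step should be written with the split form (or the equivalent regrouping) rather than with bounds on $u_1'(x,\lambda)$ alone; the trade--off is that the paper's split handles all $x$-- and $\lambda$--derivatives uniformly in one stroke, while your ODE reduction is slightly more economical for high $k$ once that single cancellation has been secured.
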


\begin{proof}
We use the notations from the proof of Lemma \ref{lem_pertenergy} and proceed by
induction. 
We have to consider mixed derivatives and
therefore, we 
order the set
$\mathbb{N}_0 \times \mathbb{N}_0$ according to 
$$(0,0), (1,0), (0,1), (2,0),
(1,1), (0,2), \dots$$
which defines a bijection $n: \mathbb{N}_0 \times
\mathbb{N}_0 \to \mathbb{N}_0$ by $n(0,0)=0, n(1,0)=1, n(0,1)=2$, etc.
Now fix $(k,m) \in \mathbb{N}_0 \times \mathbb{N}_0$ and assume that 
$|\partial_x^l \partial_\lambda^j (h(x,\lambda)-1)|\leq
C_{l,j}x^{2-l}\lambda^{2-j}$ holds for all $(l,j)$ with $n(l,j)\leq n(k,m)$ and
$\lambda \in (0,\lambda_0)$, $x \in [x_0, \lambda^{-1}]$.
We have to show that this implies 
$|\partial_x^{k'} \partial_\lambda^{m'} h(x,\lambda)|\leq
C_{k',m'}x^{2-k'}\lambda^{2-m'}$, where
$n(k',m')=n(k,m)+1$. 
There are two possibilities: Either $(k',m')=(m+1,0)$ (if $k=0$) or
$(k',m')=(k-1,m+1)$.
In the former case we have with 
$\kappa_l(x,\lambda):=\partial_x^l K(x,y,\lambda)|_{y=x}
=O(x^{1-l}\lambda^2)$
(cf. Lemma \ref{lem_volterradiffx}),
$$ \partial_x^{m+1} h(x,\lambda)=\sum_{l=0}^m
\partial_x^{m-l}\left [\kappa_l(x,\lambda)h(x,\lambda) \right ]+\int_{x_0}^x
\partial_x^{m+1} K(x,y,\lambda)h(y,\lambda)dy. $$
Now observe that by assumption 
$|\partial_x^{m-l} \left [\kappa_l(x,\lambda)h(x,\lambda)\right ]| 
\lesssim x^{2-(m+1)}\lambda^2$ 
for $\lambda \in (0,\lambda_0)$ and $x \in [x_0, \lambda^{-1}]$ and hence,
$|\partial_x^{m+1}h(x,\lambda)| \lesssim x^{2-(m+1)}\lambda^2$.
In the latter case we have, provided $k \geq 2$,
\begin{align*} 
|\partial_\lambda^{m+1}\partial_x^{k-1}h(x,\lambda)| & =
\left |\sum_{l=0}^{k-2}
\partial_\lambda^{m+1} 
\partial_x^{k-2-l}\left [\kappa_l(x,\lambda)h(x,\lambda) \right ] 
+\int_{x_0}^x
\partial_\lambda^{m+1}\left [\partial_x^{k-1} K(x,y,\lambda)
h(y,\lambda) \right ]dy \right | \\
& \lesssim x^{2-(k-1)}\lambda^{2-(m+1)} 
\end{align*}
by assumption. If $k=1$ we have
\begin{align*}
\partial_\lambda^{m+1} h(x,\lambda)&=\sum_{j=1}^{\min\{m+1,2\}} 
\left ( \begin{array}{c}m+1 \\
j \end{array} \right )\int_{x_0}^x
\partial_\lambda^j K(x,y,\lambda)\partial_\lambda^{m+1-j} h(y,\lambda)dy \\
&+\int_{x_0}^x K(x,y,\lambda)\partial_\lambda^{m+1} h(y,\lambda)dy
\end{align*}
and
thus, by assumption, the derivative $\partial_\lambda^{m+1}h(x,\lambda)$ 
satisfies a Volterra
equation of the form
$$ \partial_\lambda^{m+1}h(x,\lambda)=O(x^2 \lambda^{2-(m+1)})+
\int_{x_0}^x K(x,y,\lambda)\partial_\lambda^{m+1} h(y,\lambda)dy $$
and the basic estimate from Lemma \ref{lem_volterra} yields
$|\partial_\lambda^{m+1}h(x,\lambda)|\lesssim x^2\lambda^{2-(m+1)}$
for all $\lambda \in (0,\lambda_0)$ and $x \in [x_0,\lambda^{-1}]$.

For the second solution we use the representation
\begin{align}
\label{eq_proofproppertenergya1}
a_1(x,\lambda)&=\frac{u_0(x)}{u_1(x)}\left (
-\int_{\delta \lambda^{-1}}^\infty
u_0^{-2}(y)dy+a_0(x,\lambda)\int_x^{\delta \lambda^{-1}}u_0^{-2}(y)dy \right )
\\
&+\frac{u_0(x)}{u_1(x)}
(1+a_0(x,\lambda))\int_x^{\delta \lambda^{-1}}u_0^{-2}(y)
\tilde{a}_0(y,\lambda)dy \nonumber
\end{align}
from the proof of Lemma \ref{lem_pertenergy}
where, as before, 
$\tilde{a}_0(x,\lambda):=(1+a_0(x,\lambda))^{-2}-1$.
Lemma \ref{lem_symbolinverse} and the Leibniz rule show that $\tilde{a}_0$
inherits the bounds of $a_0$, i.e.,  $|\partial_x^k \partial_\lambda^m
\tilde{a}_0(x,\lambda)|\leq C_{k,m}x^{2-k}\lambda^{2-m}$ 
for all $k,m \in \mathbb{N}_0$ and $\lambda
\in (0,\lambda_0)$, $x \in [x_0,\lambda^{-1}]$. 
Thus, all functions on the right--hand side of Eq.
(\ref{eq_proofproppertenergya1}) behave like symbols under
differentiation with respect to $x$ and $\lambda$.
Therefore, as in the proof of Lemma \ref{lem_pertenergy}, we have
$$ a_1(x,\lambda)=O(x^2\lambda^2)+O((x\lambda)^{2\ell+1}) $$
where the $O$--terms are of symbol type which finishes the proof.
\end{proof}

\subsection{Refined bounds for $\lambda$--derivatives}

As a next step we prove a refinement of the estimates for the solution $u_0$
which shows that we can effectively trade $\lambda^{-1}$ for $x$ in the
bounds for the $\lambda$--derivatives of $a_0$.

\begin{lemma}
\label{lem_pertenergyest2}
The function $a_0(\cdot,\lambda)$, defined by 
$u_0(x,\lambda)=u_0(x)(1+a_0(x,\lambda))$, 
satisfies the
estimates
$$ |\partial_\lambda^{2m} a_0(x,\lambda)|\leq C_m x^{2m} 
\mbox{ and } |\partial_\lambda^{2m+1} a_0(x,\lambda)| \leq C_m x^{2m+2}\lambda $$
for all $\lambda \in [0,\lambda_0]$,
$x \in [x_0,\lambda^{-1}]$ and $m \in \mathbb{N}_0$
where $\lambda_0>0$ is a sufficiently small
constant.
\end{lemma}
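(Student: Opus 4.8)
The plan is to induct on the order $m$ of the $\lambda$--derivative, using the very Volterra equation that underlies Lemma~\ref{lem_pertenergy}. Write $a_0(x,\lambda)=h(x,\lambda)-1$, where $h(\cdot,\lambda)$ solves
$$ h(x,\lambda)=1+\int_{x_0}^x K(x,y,\lambda)\,h(y,\lambda)\,dy,\qquad K(x,y,\lambda)=-\lambda^2\Bigl[u_0(y)u_1(y)-u_0^2(y)\tfrac{u_1(x)}{u_0(x)}\Bigr]. $$
The decisive observation I would exploit is that the \emph{entire} $\lambda$--dependence of the kernel sits in the prefactor $\lambda^2$, so that $\partial_\lambda^j K\equiv 0$ for $j\ge 3$, while inserting the asymptotics of $u_0,u_1$ from Lemma~\ref{lem_zeroenergy} and using $y^{2\ell+2}x^{-2\ell-1}\le y$ for $x_0\le y\le x$ gives $|\partial_\lambda K(x,y,\lambda)|\lesssim \lambda y$ and $|\partial_\lambda^2 K(x,y,\lambda)|\lesssim y$ on that range. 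The base case $m=0$ is then just the uniform bound $\|h(\cdot,\lambda)\|_{L^\infty(x_0,\lambda^{-1})}\le C$ already proved in Lemma~\ref{lem_pertenergy}.

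For the inductive step I would differentiate the Volterra equation $m$ times in $\lambda$; since the limits of integration are $\lambda$--independent this is legitimate, and because $\partial_\lambda^j K\equiv 0$ for $j\ge3$ it produces
$$ \partial_\lambda^m h(x,\lambda)=F_m(x,\lambda)+\int_{x_0}^x K(x,y,\lambda)\,\partial_\lambda^m h(y,\lambda)\,dy, $$
where $F_m(x,\lambda):=\sum_{j=1}^{\min(m,2)}\binom{m}{j}\int_{x_0}^x \partial_\lambda^j K(x,y,\lambda)\,\partial_\lambda^{m-j}h(y,\lambda)\,dy$ involves only $\partial_\lambda^{m-1}h$ and $\partial_\lambda^{m-2}h$. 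Thus $\partial_\lambda^m h$ again solves a Volterra equation with the same kernel $K$, and since $\int_{x_0}^{\lambda^{-1}}\sup|K|\,dy\lesssim 1$ (proof of Lemma~\ref{lem_pertenergy}), the basic Volterra estimate (Lemma~\ref{lem_volterra}) reduces everything to showing $|F_m(x,\lambda)|\lesssim x^{m}$ for $m$ even and $|F_m(x,\lambda)|\lesssim \lambda x^{m+1}$ for $m$ odd.

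Those bounds on $F_m$ I would get by plugging the inductive hypothesis into its definition. For $m=2j_0$ even, the $\partial_\lambda K$--term contributes at most $\int_{x_0}^x (\lambda y)(y^{2j_0}\lambda)\,dy\lesssim \lambda^2 x^{2j_0+2}=(x\lambda)^2 x^m\lesssim x^m$ (here $x\lambda\le\delta$), and the $\partial_\lambda^2 K$--term at most $\int_{x_0}^x y\cdot y^{2j_0-2}\,dy\lesssim x^{2j_0}=x^m$; for $m=2j_0+1$ odd each term present contributes at most $\int_{x_0}^x \lambda\, y^{2j_0+1}\,dy\lesssim \lambda x^{2j_0+2}=\lambda x^{m+1}$. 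Since these bounds are nondecreasing in $x$, Lemma~\ref{lem_volterra} yields $|\partial_\lambda^m a_0(x,\lambda)|=|\partial_\lambda^m h(x,\lambda)|\lesssim x^m$ (resp. $\lesssim \lambda x^{m+1}$) for $\lambda\in(0,\lambda_0]$, $x\in[x_0,\lambda^{-1}]$, and the estimates pass to $\lambda=0$ by continuity, completing the induction.

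The hard part is not analytic but combinatorial: keeping track, inside $F_m$, of which of the two surviving kernel derivatives multiplies which lower $\lambda$--derivative of $h$, and checking that in every term the generically dangerous factor $\lambda^{-1}$ is indeed traded for a power of $x$ — as the statement advertises — with the surplus powers of $\lambda$ always reassembling into an innocuous $(x\lambda)^2\le\delta^2$. Beyond this, the argument should need no ingredient not already present in Lemma~\ref{lem_pertenergy} and Proposition~\ref{prop_pertenergy}.
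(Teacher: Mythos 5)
Your proposal is correct and follows essentially the same route as the paper: induct on the order of the $\lambda$--derivative of $h=1+a_0$, differentiate the Volterra equation (whose kernel is $\lambda^2$ times a $\lambda$--independent factor, so only $\partial_\lambda K=O(y\lambda)$ and $\partial_\lambda^2K=O(y)$ survive), bound the inhomogeneous term by the inductive hypothesis trading $(x\lambda)^2\le 1$ for powers of $x$, and close with the basic Volterra estimate of Lemma \ref{lem_volterra}. This matches the paper's proof of Lemma \ref{lem_pertenergyest2} step for step.
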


\begin{proof}
We use the notations from the proof of Proposition \ref{prop_pertenergy}
and proceed by induction. 
The case $m=0$ has already been proved in Proposition \ref{prop_pertenergy}.
Now fix $m\in \mathbb{N}$ and
assume that $|\partial_\lambda^{2j} (h(x,\lambda)-1)|\leq C_j x^{2j}$ and
$|\partial_\lambda^{2j+1} (h(x,\lambda)-1)|\leq C_j x^{2j+2}\lambda$
for $j < m$ in the
above range of $\lambda$ and $x$.
According to the proof of Proposition \ref{prop_pertenergy}, we have
\begin{align*} 
\partial_\lambda^{2m} h(x,\lambda)&= 
\left ( \begin{array}{c}2m \\
2 \end{array} \right )\int_{x_0}^x
\partial_\lambda^2 K(x,y,\lambda)\partial_\lambda^{2m-2} h(y,\lambda)dy 
+2m \int_{x_0}^x
\partial_\lambda K(x,y,\lambda)\partial_\lambda^{2m-1} h(y,\lambda)dy \\
& +\int_{x_0}^x K(x,y,\lambda)\partial_\lambda^{2m} h(y,\lambda)dy \\
&=\int_{x_0}^x O(y)O(y^{2m-2})dy+\int_{x_0}^x O(y\lambda)O(y^{2m}\lambda)dy+
\int_{x_0}^x K(x,y,\lambda)\partial_\lambda^{2m} h(y,\lambda)dy \\
&=O(x^{2m})+\int_{x_0}^x K(x,y,\lambda)\partial_\lambda^{2m} h(y,\lambda)dy 
\end{align*}
by assumption and the estimate in Lemma \ref{lem_volterra} implies 
$|\partial_\lambda^{2m}h(x,\lambda)| \lesssim x^{2m}$ for 
$\lambda \in [0,\lambda_0]$ and $x \in [x_0,\lambda^{-1}]$. 
For the odd derivatives we proceed analogously and obtain
\begin{align*} 
\partial_\lambda^{2m+1} h(x,\lambda)&= 
\left ( \begin{array}{c}2m+1 \\
2 \end{array} \right )\int_{x_0}^x
\partial_\lambda^2 K(x,y,\lambda)\partial_\lambda^{2m-1} h(y,\lambda)dy \\
& +(2m+1) \int_{x_0}^x
\partial_\lambda K(x,y,\lambda)\partial_\lambda^{2m} h(y,\lambda)dy 
+\int_{x_0}^x K(x,y,\lambda)\partial_\lambda^{2m+1} h(y,\lambda)dy \\
&=\int_{x_0}^x O(y)O(y^{2m}\lambda)dy+\int_{x_0}^x O(y\lambda)O(y^{2m})dy+
\int_{x_0}^x K(x,y,\lambda)\partial_\lambda^{2m+1} h(y,\lambda)dy \\
&=O(x^{2m+2}\lambda)+\int_{x_0}^x K(x,y,\lambda)\partial_\lambda^{2m+1} h(y,\lambda)dy 
\end{align*}
by assumption and again, Lemma \ref{lem_volterra} yields the claim.
\end{proof}

Note that, by construction, we have $u_0(x_0,\lambda)=u_0(x_0)$ and
$u_0'(x_0,\lambda)=u_0'(x_0)$ (see the proof of Lemma \ref{lem_pertenergy}) 
which shows
in particular that
$u_0(x_0,\lambda)$ and $u_0'(x_0,\lambda)$ are smooth functions of $\lambda$.
Next, we prove similar bounds for the function $a_1$ but
unfortunately, the situation here is a bit more complicated.

\begin{lemma}
\label{lem_u1}
Let $u_1(x,\lambda)=u_1(x)(1+a_1(x,\lambda))$ and $x_0,\lambda_0,\delta>0$ 
be as in Lemma \ref{lem_pertenergy}.
Then, for all $\lambda \in (0,\lambda_0)$ and all 
$x \in [x_0, \delta \lambda^{-1}]$, we have the estimates
$$ |\partial_\lambda^{2m} a_1(x,\lambda)| \leq C_m x^{2m} 
\mbox{ and } 
|\partial_\lambda^{2m+1}a_1(x,\lambda)|\leq C_m x^{2m+2}\lambda $$
provided that $m \leq \ell-1$.
Furthermore, for higher derivatives we have the bounds
$$ |\partial_\lambda^{2\ell+m} a_1(x,\lambda)| \leq C_m x^{2\ell} 
\lambda^{-m} $$
for $m \in \mathbb{N}_0$ in the above ranges of $x$ and $\lambda$.
\end{lemma}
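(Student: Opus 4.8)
The plan is to reduce everything to the explicit representation of $a_1$ in terms of $a_0$ derived in the proof of Lemma~\ref{lem_pertenergy}. Setting $L:=\delta\lambda^{-1}$ and using the identity $\int_x^\infty u_0^{-2}(y)\,dy=u_1(x)/u_0(x)$ (established there), that representation can be rearranged into
$$ a_1(x,\lambda)=a_0(x,\lambda)+\bigl(1+a_0(x,\lambda)\bigr)\bigl(Q(x,\lambda)-P(x,\lambda)\bigr), $$
where
$$ P(x,\lambda):=\frac{u_0(x)}{u_1(x)}\int_L^\infty u_0^{-2}(y)\,dy,\qquad Q(x,\lambda):=\frac{u_0(x)}{u_1(x)}\int_x^L u_0^{-2}(y)\,\tilde a_0(y,\lambda)\,dy, $$
and $\tilde a_0(y,\lambda)=(1+a_0(y,\lambda))^{-2}-1$. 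Here $u_0/u_1$ and $u_0^{-2}$ are $\lambda$-independent functions of symbol type with $u_0(x)/u_1(x)=O(x^{2\ell+1})$ and $u_0^{-2}(y)=O(y^{-2\ell-2})$ for $x,y\ge x_0$ (Lemma~\ref{lem_zeroenergy} and Lemma~\ref{lem_symbolinverse}), while $\tilde a_0$ inherits all the bounds on $a_0$ from Proposition~\ref{prop_pertenergy} and Lemma~\ref{lem_pertenergyest2} by Lemma~\ref{lem_symbolinverse} and the Leibniz rule. Finally, in the ranges $\lambda\in(0,\lambda_0)$, $x\in[x_0,L]$ one always has $x\lambda\le\delta\lesssim1$, which will be used repeatedly to absorb excess powers of $x$ into powers of $\lambda^{-1}$.

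The core step is to establish, for $N(x,\lambda):=\int_x^L u_0^{-2}(y)\,\tilde a_0(y,\lambda)\,dy$, the estimates
$$ |\partial_\lambda^{2j}N|\le C_j\,x^{2j-2\ell-1}\ \ (0\le j\le\ell),\qquad |\partial_\lambda^{2j+1}N|\le C_j\,x^{2j-2\ell+1}\lambda\ \ (0\le j\le\ell-1),\qquad |\partial_\lambda^{k}N|\le C_k\,\lambda^{2\ell+1-k}\ \ (k>2\ell). $$
Differentiating $N$ produces a bulk term $\int_x^L u_0^{-2}(y)\,\partial_\lambda^k\tilde a_0(y,\lambda)\,dy$ plus boundary terms arising from the variable endpoint $L=\delta\lambda^{-1}$. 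For the bulk term one inserts the refined bounds $|\partial_\lambda^{2j}\tilde a_0(y,\lambda)|\le C_j y^{2j}$, $|\partial_\lambda^{2j+1}\tilde a_0(y,\lambda)|\le C_j y^{2j+2}\lambda$ (the analogue of Lemma~\ref{lem_pertenergyest2} for $\tilde a_0$) and integrates against $u_0^{-2}(y)\sim c\,y^{-2\ell-2}$; the resulting integral $\int_x^L y^{\gamma}\,dy$ is controlled by its \emph{lower} endpoint --- giving a power of $x$ --- while $\gamma<-1$, which happens for derivative orders up to $2\ell$, and by its \emph{upper} endpoint $L=\delta\lambda^{-1}$ --- giving a negative power of $\lambda$ --- for larger orders. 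This switch of dominant endpoint is precisely the mechanism producing the two regimes in the statement. Each boundary term is of the form $\partial_\lambda^p\bigl[L'(\lambda)\,u_0^{-2}(L)\,(\partial_\lambda^q\tilde a_0)(L,\lambda)\bigr]$ with $p+q\le k-1$; using $u_0^{-2}(\delta\lambda^{-1})=O(\lambda^{2\ell+2})$, the symbol bound $|\partial_x^a\partial_\lambda^b\tilde a_0(y,\lambda)|\le C\,y^{2-a}\lambda^{2-b}$ from Proposition~\ref{prop_pertenergy}, and the elementary fact that composing a symbol of order $y^{-\gamma}$ with $y=\delta\lambda^{-1}$ yields a symbol of order $\lambda^{\gamma}$, an induction shows that each such term is $O(\lambda^{2\ell-p-q})=O(\lambda^{2\ell-k+1})$, which is always dominated by the bulk contribution. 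The careful bookkeeping of these boundary terms and of the mixed $x,\lambda$ derivatives is the main technical obstacle of the proof.

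It remains to propagate these $N$-bounds. Since $Q=(u_0/u_1)N=O(x^{2\ell+1})N$ with a $\lambda$-independent prefactor, the above gives $|\partial_\lambda^{2m}Q|\le C_m x^{2m}$ for $2m\le 2\ell$, $|\partial_\lambda^{2m+1}Q|\le C_m x^{2m+2}\lambda$ for $2m+1\le 2\ell-1$, and $|\partial_\lambda^{2\ell+n}Q|\le C_n x^{2\ell}\lambda^{-n}$ for $n\ge0$ (absorbing $x$'s via $x\lambda\lesssim1$). The term $P$ satisfies $|\partial_\lambda^kP|\le C_k\,x^{2\ell+1}\lambda^{2\ell+1-k}$ by the same ``symbol in $L$'' principle applied to $L\mapsto\int_L^\infty u_0^{-2}$, hence obeys the same bounds as $Q$, so $Q-P$ satisfies exactly the estimates claimed for $a_1$. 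Finally the representation $a_1=a_0+(1+a_0)(Q-P)$ closes the argument via the Leibniz rule: the $a_0$ contribution is handled directly by Lemma~\ref{lem_pertenergyest2} (noting $\partial_\lambda^{2\ell+n}a_0=O(x^{2\ell+n})=O(x^{2\ell}\lambda^{-n})$, again by $x\lambda\lesssim1$), and in the product $(1+a_0)(Q-P)$ every Leibniz term $\partial_\lambda^j(1+a_0)\,\partial_\lambda^{k-j}(Q-P)$ is bounded by combining the refined derivative bounds on $1+a_0$ with those on $Q-P$, splitting according to whether $k-j\le2\ell-1$ or $k-j\ge2\ell$; in each case the excess powers of $x$ are absorbed using $x\lambda\le\delta$, which yields the asserted estimates.
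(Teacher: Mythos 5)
Your argument is correct and is essentially the paper's proof: the quantity $Q-P$ you isolate is exactly the function $h$ the paper reduces to (the paper keeps $1+\tilde a_0$ together and recovers your $P$-contribution through the boundary terms at $y=\delta\lambda^{-1}$), and the key mechanisms — boundary terms of size $O(\lambda^{2\ell+1-k})$, the bulk integral's dominant endpoint switching from $x$ to $\delta\lambda^{-1}$ at derivative order $2\ell$, the refined $\tilde a_0$ bounds inherited from Lemma \ref{lem_pertenergyest2}, and absorption of excess powers via $x\lambda\le\delta$ — coincide with the paper's. The only differences are presentational (explicitly invoking $\int_x^\infty u_0^{-2}=u_1/u_0$ and writing out the final Leibniz assembly, which the paper dispatches with one sentence).
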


\begin{proof}
The function 
$a_1(x,\lambda)$ is given by
$$ 
a_1(x,\lambda)=\frac{u_0(x)}{u_1(x)}(1+a_0(x,\lambda))\int_x^{\delta
\lambda^{-1}} u_0^{-2}(y)(1+\tilde{a}_0(y,\lambda))dy-1 
$$
with $\tilde{a}_0(x,\lambda)=(1+a_0(x,\lambda))^{-2}-1$,
see Eq.~(\ref{eq_proofpertenergya1}).
Thus, in view of Lemma \ref{lem_pertenergyest2} it suffices to prove the claimed
bounds for
$$ h(x,\lambda):=\frac{u_0(x)}{u_1(x)}\int_x^{\delta
\lambda^{-1}} u_0^{-2}(y)(1+\tilde{a}_0(y,\lambda))dy-1. $$
For $k \in \mathbb{N}$ we have
\begin{align*}
\partial_\lambda^k
h(x,\lambda)&= 
-\frac{u_0(x)}{u_1(x)} \sum_{j=0}^{k-1}\frac{d^j}{d\lambda^j}\left
(u_0^{-2}(\delta
\lambda^{-1})\partial_\lambda^{k-1-j}(1+\tilde{a}_0(y,\lambda))|_{y=\delta
\lambda^{-1}}\delta \lambda^{-2} \right ) \\
&+ \frac{u_0(x)}{u_1(x)}\int_{x}^{\delta \lambda^{-1}}
u_0^{-2}(y)\partial_\lambda^k (1+\tilde{a}_0(y,\lambda))dy 
\end{align*}
and, by using the symbol behavior of $\tilde{a}_0$, we infer
$$ \frac{u_0(x)}{u_1(x)}\sum_{j=0}^{k-1}\frac{d^j}{d\lambda^j}\left
(u_0^{-2}(\delta
\lambda^{-1})\partial_\lambda^{k-1-j}(1+\tilde{a}_0(y,\lambda))|_{y=\delta
\lambda^{-1}}\delta \lambda^{-2} \right )
=O(x^{2\ell+1})O(\lambda^{2\ell+1-k}). $$
If $k=2m$ we have 
$ O(x^{2\ell+1})O(\lambda^{2\ell+1-k})=O(x^{2m}) $
provided that $m \leq \ell$.
If $k=2m+1$ we obtain
$$ O(x^{2\ell+1})O(\lambda^{2\ell+1-k})=O(x^{2\ell+1-2\ell+2m+1}\lambda)
=O(x^{2m+2}\lambda) $$
provided that $m \leq \ell-1$.
In order to estimate the integral term note that
$\tilde{a}_0$ inherits the bounds of $a_0$ from Lemma \ref{lem_pertenergyest2}
as a consequence of the Leibniz rule and Lemma \ref{lem_symbolinverse}.
For even derivatives we therefore have
\begin{align*} 
\frac{u_0(x)}{u_1(x)}\int_{x}^{\delta \lambda^{-1}}
u_0^{-2}(y)\partial_\lambda^{2m}
(1+\tilde{a}_0(y,\lambda))dy&=O(x^{2\ell+1})\int_{x}^{\delta \lambda^{-1}}
O(y^{-2\ell-2+2m})dy \\
&=O(x^{2\ell+1})(O(\lambda^{2\ell+1-2m})+O(x^{-2\ell-1+2m})) 
\end{align*}
and, provided that $m \leq \ell$, we obtain
$O(x^{2\ell+1})O(\lambda^{2\ell+1-2m})=O(x^{2m})$ 
as before.
For odd derivatives we use
$\partial_\lambda^{2m+1}\tilde{a}_0(y,\lambda)=O(y^{2m+2}\lambda)$ and infer
\begin{align*} 
\frac{u_0(x)}{u_1(x)}\int_{x}^{\delta \lambda^{-1}}
u_0^{-2}(y)\partial_\lambda^{2m+1}
(1+\tilde{a}_0(y,\lambda))dy&=O(x^{2\ell+1})\int_{x}^{\delta \lambda^{-1}}
O(y^{-2\ell-2+2m+2}\lambda)dy\\
&=O(x^{2\ell+1})(O(\lambda^{2\ell-2m})+O(x^{-2\ell-1+2m+2}\lambda)) 
\end{align*}
and $O(x^{2\ell+1})O(\lambda^{2\ell-2m})=O(x^{2m+2}\lambda)$ provided that $m
\leq \ell-1$.
The claim for the higher derivatives follows directly from the symbol behavior
of the above $O$--terms.
\end{proof}

At this point it is convenient to introduce a new notation.
\begin{definition}
\label{def_O}
For $N \in \mathbb{N}_0$ we write $f(x)=O_{2N}(1)$ if, for a constant $a>0$,
\begin{enumerate}
\item $f: (0,a) \to \mathbb{R}$ is smooth,
\item $|f^{(k)}(x)|\leq C_k$ for $k \leq 2N$ and all $x \in (0,a)$,
\item $|f^{(2N+k)}(x)|\leq C_k x^{-k}$ for all $k \in \mathbb{N}_0$ and $x \in
(0,a)$,
\item $\lim_{x \to 0+}f^{(2k-1)}(x)=0$ for all $1 \leq k \leq N$.
\end{enumerate}
Similarly, we write $f(x)=O_{2N+1}(x)$ if, for a constant $a>0$,
\begin{enumerate}
\item $f: (0,a) \to \mathbb{R}$ is smooth,
\item $|f^{(k)}(x)|\leq C_k$ for $k \leq 2N+1$ and all $x \in (0,a)$,
\item $|f^{(2N+1+k)}(x)|\leq C_k x^{-k}$ for all $k \in \mathbb{N}_0$ and $x \in
(0,a)$,
\item $\lim_{x \to 0+}f^{(2k)}(x)=0$ for all $0 \leq k \leq N$.
\end{enumerate}
Note carefully that the crucial difference between $O_{2N}$ and $O_{2N+1}$ is in condition $(4)$.
We also use the symbols $O_{2N+1}$ and $O_{2N}$ to denote \emph{generic
real--valued}
functions with the respective properties.
\end{definition}

\begin{corollary}
\label{cor_u1}
Let $u_1(x,\lambda)$ and $x_0, \lambda_0>0$ be as in Lemma \ref{lem_pertenergy}.
Then we have $u_1(x_0,\lambda)=O_{2\ell}(1)$ and
$u_1'(x_0,\lambda)=O_{2\ell}(1)$ for all $\lambda \in (0,\lambda_0)$.
\end{corollary}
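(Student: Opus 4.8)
The plan is to read off both statements directly from the refined $\lambda$--derivative bounds for $a_1$ obtained in Lemma \ref{lem_u1}, specialized to the \emph{fixed} point $x=x_0$; since $x_0$ is a constant, all powers of $x_0$ occurring there are absorbed into the implicit constants. Writing $u_1(x_0,\lambda)=u_1(x_0)\bigl(1+a_1(x_0,\lambda)\bigr)$ with the fixed number $u_1(x_0)$, I would verify the four conditions of Definition \ref{def_O} defining $O_{2\ell}(1)$ (with $N=\ell$ and the constant $a$ there taken to be $\lambda_0$) one by one. Smoothness in $\lambda$ on $(0,\lambda_0)$ follows from Lemma \ref{lem_pertenergy} together with the $\lambda$--derivative bounds of Proposition \ref{prop_pertenergy} and Lemma \ref{lem_u1}. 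The bounds for derivatives of order $\le 2\ell$ come from $|\partial_\lambda^{2m}a_1(x_0,\lambda)|\le C_m x_0^{2m}$ and $|\partial_\lambda^{2m+1}a_1(x_0,\lambda)|\le C_m x_0^{2m+2}\lambda$ for $m\le\ell-1$, which cover all orders up to $2\ell-1$, while the order exactly $2\ell$ is the $m=0$ case of $|\partial_\lambda^{2\ell+m}a_1(x_0,\lambda)|\le C_m x_0^{2\ell}\lambda^{-m}$; the same estimate with $m=k\ge1$ gives condition $(3)$. Finally, condition $(4)$ --- that $\partial_\lambda^{2k-1}u_1(x_0,\lambda)\to0$ as $\lambda\to0+$ for $1\le k\le\ell$ --- is exactly the clause $|\partial_\lambda^{2(k-1)+1}a_1(x_0,\lambda)|\le C_{k-1}x_0^{2k}\lambda$, which is available since $k-1\le\ell-1$. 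This establishes $u_1(x_0,\lambda)=O_{2\ell}(1)$.

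For $u_1'(x_0,\lambda)$ I would not differentiate $a_1$ in $x$, but instead use the representation $u_1(x,\lambda)=u_0(x,\lambda)\int_x^{\delta\lambda^{-1}}u_0^{-2}(y,\lambda)\,dy$ from the proof of Lemma \ref{lem_pertenergy}. Differentiating in $x$ gives
\[
u_1'(x,\lambda)=u_0'(x,\lambda)\int_x^{\delta\lambda^{-1}}u_0^{-2}(y,\lambda)\,dy-u_0^{-1}(x,\lambda) ,
\]
and at $x=x_0$, where by construction $u_0(x_0,\lambda)=u_0(x_0)\neq 0$ and $u_0'(x_0,\lambda)=u_0'(x_0)$ are independent of $\lambda$, the integral equals $u_1(x_0,\lambda)/u_0(x_0)$, so that
\[
u_1'(x_0,\lambda)=\frac{u_0'(x_0)}{u_0(x_0)}\,u_1(x_0,\lambda)-\frac{1}{u_0(x_0)} .
\]
Since the class $O_{2\ell}(1)$ is manifestly closed under multiplication by a constant and under adding a constant, and $u_1(x_0,\lambda)=O_{2\ell}(1)$ by the first part, we conclude $u_1'(x_0,\lambda)=O_{2\ell}(1)$.

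The proof is short; the only slightly delicate point is the bookkeeping in the first paragraph, where one must notice that the three families of bounds in Lemma \ref{lem_u1} line up precisely with Definition \ref{def_O}: the ``$O(\lambda)$ at odd orders below $2\ell$'' is what produces condition $(4)$, and the derivative of order $2\ell$ must be read off from the second (higher--order) family rather than the first. No new estimate on $a_1$ is required.
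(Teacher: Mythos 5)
Your proposal is correct and follows essentially the same route as the paper: the first claim is read off from Lemma \ref{lem_u1} (your bookkeeping against Definition \ref{def_O} is exactly what the paper leaves implicit), and your identity $u_1'(x_0,\lambda)=\bigl(u_0'(x_0)u_1(x_0,\lambda)-1\bigr)/u_0(x_0)$, obtained by differentiating the integral representation, is precisely the Wronskian relation $W(u_0(\cdot,\lambda),u_1(\cdot,\lambda))=-1$ evaluated at $x_0$ that the paper uses. No gaps.
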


\begin{proof}
The first assertion $u_1(x_0,\lambda)=O_{2\ell}(1)$ follows immediately from
Lemma \ref{lem_u1}.
For the second one note that $u_0(x_0)\not=0$ and by construction we have
$-1=u_0(x_0)u_1'(x_0,\lambda)-u_0'(x_0)u(x_0,\lambda)$ which implies
$$ u_1'(x_0,\lambda)=\frac{u_0'(x_0)u_1(x_0,\lambda)-1}{u_0(x_0)}=O_{2\ell}(1) $$ 
by the first part.
\end{proof}

\subsection{Extension of $u_j(x,\lambda)$ to negative values of $x$}

Finally, we extend the solutions $u_j(x,\lambda)$ to negative values of $x$ and
prove appropriate estimates. 

\begin{lemma}
\label{lem_pertenergyest2-}
The functions $u_j(x,\lambda)$,
$j=0,1$, from Lemma \ref{lem_pertenergy} can be
smoothly extended to $x \in [-\lambda^{-1},x_0]$ for $\lambda
\in (0,\lambda_0)$ where $\lambda_0>0$ is a constant.
Furthermore, in the above ranges of $x$ and $\lambda$, the function $u_0$
satisfies the bounds
$$ |\partial_\lambda^m u_0(x,\lambda)|\leq C_m \langle x \rangle^{m+1}, $$
$m \in \mathbb{N}_0$, whereas for $u_1$ we have the estimates
$$ |\partial_\lambda^m u_1(x,\lambda)|\leq C_m \langle x \rangle^{m+1} $$
if $m \leq 2\ell$ and
$$ |\partial_\lambda^{2\ell+m} u_1(x,\lambda)|\leq C_m 
\langle x \rangle^{2\ell+1}\lambda^{-m} $$
for $m \in \mathbb{N}_0$.
\end{lemma}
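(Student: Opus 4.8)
The idea is to extend $u_j(\cdot,\lambda)$ from $[x_0,\delta\lambda^{-1}]$ to the left by solving the ODE $\mc{H}_{\ell,\sigma}u_j(\cdot,\lambda)=\lambda^2 u_j(\cdot,\lambda)$ as an initial value problem at $x=x_0$ with data $u_j(x_0,\lambda),u_j'(x_0,\lambda)$, and to view the $\lambda^2$--term as a perturbation of the \emph{zero energy} solutions $u_0,u_1$ (extended to all of $\mathbb{R}$ via Corollary \ref{cor_zeroenergy}), which grow at most linearly as $x\to-\infty$. Since $V_{\ell,\sigma}$ decays exponentially as $x\to-\infty$ (Corollary \ref{cor_RWasymptotics}), the perturbative machinery of Lemma \ref{lem_pertenergy} carries over, but now the relevant ``error'' region is $x\in[-\lambda^{-1},x_0]$ where $|x\lambda|\lesssim 1$ again, so $\lambda^2 x^2\lesssim 1$ and the Volterra iteration converges uniformly.

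\textbf{Step 1: Set up the Volterra equation for the extension.} Using the zero--energy fundamental system $u_0,u_1$ with $W(u_0,u_1)=-1$ (Lemma \ref{lem_zeroenergy}, Corollary \ref{cor_zeroenergy}), write the extended solution as a variation--of--constants formula based at $x_0$. Concretely, for a solution $u(\cdot,\lambda)$ of $\mc{H}_{\ell,\sigma}u=\lambda^2 u$ with prescribed Cauchy data at $x_0$, one has an integral equation of the form
$$
u(x,\lambda)=u_{\mathrm{hom}}(x,\lambda)+\lambda^2\int_x^{x_0}\bigl[u_0(x)u_1(y)-u_1(x)u_0(y)\bigr]u(y,\lambda)\,dy
$$
for $x\le x_0$, where $u_{\mathrm{hom}}(\cdot,\lambda)$ is the zero--energy solution matching the Cauchy data at $x_0$. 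For $u_0(x,\lambda)$ the data are $u_0(x_0),u_0'(x_0)$, which are \emph{independent of $\lambda$} (see the remark after Lemma \ref{lem_pertenergyest2}), so $u_{\mathrm{hom}}=u_0$ and the equation is a clean Volterra equation perturbing $u_0$. For $u_1(x,\lambda)$ the data are $u_1(x_0,\lambda),u_1'(x_0,\lambda)=O_{2\ell}(1)$ by Corollary \ref{cor_u1}, so $u_{\mathrm{hom}}$ is an $O_{2\ell}(1)$--combination of $u_0$ and $u_1$, and one carries this $\lambda$--dependence through.

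\textbf{Step 2: Solve and estimate via Lemma \ref{lem_volterradiffx}.} On $x\in[-\lambda^{-1},x_0]$ the kernel $K(x,y,\lambda)=\lambda^2[u_0(x)u_1(y)-u_1(x)u_0(y)]$ satisfies, using the linear growth $u_j(x)=O(x)$ as $x\to-\infty$ and boundedness near $x_0$, a bound $|K(x,y,\lambda)|\lesssim \lambda^2\langle x\rangle\langle y\rangle\lesssim 1$ (since $|x\lambda|,|y\lambda|\lesssim 1$), with $\int_{-\lambda^{-1}}^{x_0}\sup_x|\partial_x^k K(x,y,\lambda)|\,dy\lesssim 1$ uniformly in $\lambda\in(0,\lambda_0)$ and $k\in\mathbb{N}_0$. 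Lemma \ref{lem_volterradiffx} then yields a unique smooth solution with $\|u_j(\cdot,\lambda)\|_{L^\infty}\lesssim\langle x\rangle^{?}$; reinserting into the integral equation upgrades the pointwise bound to $|u_0(x,\lambda)|\lesssim\langle x\rangle$ and $|u_1(x,\lambda)|\lesssim\langle x\rangle$ on this range. For the $x$--derivatives one differentiates the integral equation in $x$; the homogeneous part contributes $O(1)$ (from $u_j'=O(1)$) and the integral part remains controlled, giving the claimed $\langle x\rangle^{m+1}$ with $m$ replaced by the order of the $\lambda$--derivative rather than the $x$--derivative — so one must be careful that the statement is about $\partial_\lambda$, not $\partial_x$.

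\textbf{Step 3: The $\lambda$--derivatives — the main obstacle.} Differentiating the Volterra equation in $\lambda$ produces (i) the explicit factor $\lambda^2$ differentiated, (ii) derivatives hitting $u(y,\lambda)$ under the integral, and, crucially for $u_1$, (iii) derivatives hitting the $\lambda$--dependent Cauchy data $u_1(x_0,\lambda)=O_{2\ell}(1)$, $u_1'(x_0,\lambda)=O_{2\ell}(1)$. By Definition \ref{def_O}, the first $2\ell$ derivatives of these data are bounded, while the $(2\ell+m)$--th is $O(\lambda^{-m})$; this is exactly the source of the dichotomy in the statement ($\langle x\rangle^{m+1}$ for $m\le 2\ell$ versus $\langle x\rangle^{2\ell+1}\lambda^{-m}$ beyond). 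The plan is an induction on the order $m$ of $\partial_\lambda$, entirely parallel to the proof of Proposition \ref{prop_pertenergy}: assuming the bound for all lower orders, one writes $\partial_\lambda^m u_j(x,\lambda)=(\text{homogeneous term})+(\text{forcing from lower orders})+\lambda^2\int_x^{x_0}[\cdots]\partial_\lambda^m u_j(y,\lambda)\,dy$, notes the forcing is $O(\langle x\rangle^{m+1})$ (resp.\ $O(\langle x\rangle^{2\ell+1}\lambda^{-(m-2\ell)})$), and closes the estimate with the basic Volterra bound of Lemma \ref{lem_volterra}. The bookkeeping of which powers of $\langle x\rangle$ and $\lambda$ appear — reconciling the $x^2\lambda^2$--type gains inside the integral with the $\lambda^{-m}$--losses from the boundary data and from Lemma \ref{lem_u1}'s bounds on $a_0,\tilde a_0$ — is the only delicate point; no new analytic input is needed beyond what is already established for the region $[x_0,\delta\lambda^{-1}]$.
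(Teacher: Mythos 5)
Your overall strategy is exactly the paper's: extend $u_j(\cdot,\lambda)$ by solving the initial value problem at $x=x_0$, rewrite it as a Volterra equation over $[-\lambda^{-1},x_0]$ with a zero--energy kernel, exploit that the Cauchy data for $u_0$ are $\lambda$--independent while those for $u_1$ are $O_{2\ell}(1)$ by Corollary \ref{cor_u1}, and then induct on the order of $\partial_\lambda$ using Lemma \ref{lem_volterra}; the dichotomy $m\leq 2\ell$ versus $m>2\ell$ enters through the boundary data precisely as you say. Step 3 is only a sketch, but its structure coincides with the paper's induction, so I will not press on the bookkeeping.

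There is, however, one concrete gap in Step 2: the kernel bound you state does not justify the uniform Volterra norm you then assert. With your choice of fundamental system (the extended $u_0,u_1$, both of which grow linearly as $x\to-\infty$ by Corollary \ref{cor_zeroenergy}), the bound $|K(x,y,\lambda)|\lesssim\lambda^2\langle x\rangle\langle y\rangle$ gives $\sup_{x}|K(x,y,\lambda)|\lesssim\lambda\langle y\rangle$ on the interval $[-\lambda^{-1},x_0]$, and integrating in $y$ over an interval of length $\sim\lambda^{-1}$ yields $\mu\lesssim\lambda^{-1}$, not $\mu\lesssim 1$; Lemma \ref{lem_volterra} would then produce a factor $e^{C/\lambda}$ which destroys every subsequent estimate, and the pointwise observation $\lambda^2\langle x\rangle\langle y\rangle\lesssim 1$ does not repair this. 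What is needed is the refined bound $|K(x,y,\lambda)|\lesssim\lambda^2(\langle x\rangle+\langle y\rangle)$, for which one factor in the kernel must be \emph{bounded}: this is why the paper builds the kernel from the pair $v_0,v_1$ of Lemma \ref{lem_zeroenergy-}, with $v_0(x)=O(\langle x\rangle)$ but $v_1(x)=O(1)$. Your version is salvageable with one extra remark: since $\{u_0,u_1\}$ and $\{v_0,v_1\}$ span the same solution space, the antisymmetric combination $u_0(x)u_1(y)-u_1(x)u_0(y)$ equals a nonzero constant (the ratio of Wronskians) times $v_0(x)v_1(y)-v_0(y)v_1(x)$, so it inherits the bound $\langle x\rangle+\langle y\rangle$ despite both $u_j$ growing linearly. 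Either insert that observation or switch the kernel to $v_0,v_1$ as in the paper; without it, the base case and every induction step fail quantitatively.
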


\begin{proof}
The coefficients of the equation $\mc{H}_{\sigma,\ell}f=\lambda^2 f$ are smooth
on $\mathbb{R}$ and thus, any solution of $\mc{H}_{\sigma,\ell}f=\lambda^2 f$
can be smoothly extended to all
of $\mathbb{R}$ by solving an appropriate initial value problem.
Applying the variation of constants formula and noting that
$u_0(x_0,\lambda)=u_0(x_0)$, $u_0'(x_0,\lambda)=u_0'(x_0)$ 
shows that the solution
$u_0(\cdot,\lambda)$
satisfies the integral equation
$$ u_0(x,\lambda)=u_0(x)+\lambda^2 \int_x^{x_0}\left
[v_0(x)v_1(y)-v_0(y)v_1(x) \right ]u_0(y,\lambda)dy, $$
see also Corollary \ref{cor_zeroenergy},
where $v_0$, $v_1$ are the smooth extensions to $(-\infty,x_0]$ of the 
functions constructed in Lemma \ref{lem_zeroenergy-}.
This is a Volterra equation with a kernel $K(x,y,\lambda):=\lambda^2
(v_0(x)v_1(y)-v_0(y)v_1(x))$.
According to Lemma \ref{lem_zeroenergy-} we have the bound
$|K(x,y,\lambda)|\lesssim \lambda^2(\langle x \rangle+\langle y \rangle)$
which implies
$$ \int_{-\lambda^{-1}}^{x_0} \sup_{x \in (-\lambda^{-1},x_0)} 
|K(x,y,\lambda)|dy \lesssim 1 $$
and hence, Lemma \ref{lem_volterra} and Corollary \ref{cor_zeroenergy} show that
$|u_0(x,\lambda)|\lesssim \langle x \rangle$ for all 
$\lambda \in (0,\lambda_0)$ and $x \in [-\lambda^{-1},x_0]$ where
$\lambda_0>0$ is the constant from Lemma \ref{lem_pertenergy}.
We proceed by induction. Fix $m \in \mathbb{N}_0$ and assume that we have 
$|\partial_\lambda^l u_0(x,\lambda)|\leq C_l \langle x \rangle^{l+1}$ 
for all $l \leq m$ in the above range of $\lambda$ and $x$.
This implies
\begin{align*}
\partial_\lambda^{m+1} u_0(x,\lambda)&=\sum_{l=1}^{\min\{m+1,2\}} 
\left ( \begin{array}{c}m+1 \\
l \end{array} \right )\int_x^{x_0}
\partial_\lambda^l K(x,y,\lambda)\partial_\lambda^{m+1-l} 
u_0(y,\lambda)dy \\ 
&+\int_x^{x_0} K(x,y,\lambda)\partial_\lambda^{m+1} u_0(y,\lambda)dy \\
&=\sum_{l=1}^{\min\{m+1,2\}} \underbrace{O(\lambda^{2-l} \langle x
\rangle^2)}_{O(\langle x \rangle^l)}
O(\langle x \rangle^{1+m+1-l})+
\int_x^{x_0} K(x,y,\lambda)\partial_\lambda^{m+1} u_0(y,\lambda)dy
\end{align*}
and the estimate from Lemma \ref{lem_volterra} yields
$|\partial_\lambda^{m+1}u_0(x,\lambda)|\lesssim \langle x \rangle^{m+2}$
as claimed.
Note carefully that the index $l$ in the last sum is at most equal to $2$ and therefore, we only estimate nonnegative powers of $\lambda$ here.

For the second assertion we proceed similarly and note that a 
straightforward calculation as well as the variation of constants formula
show that $u_1(\cdot,\lambda)$ satisfies the Volterra equation
\begin{align*}
u_1(x,\lambda)&=\frac{u_1(x_0,\lambda)v_1'(x_0)-u_1'(x_0,\lambda)v_1(x_0)}
{W(v_0,v_1)}v_0(x)
-\frac{u_1(x_0,\lambda)v_0'(x_0)-u_1'(x_0,\lambda)v_0(x_0)}
{W(v_0,v_1)}v_1(x) \\
&+\int_x^{x_0}K(x,y,\lambda)u_1(y,\lambda)dy. 
\end{align*}
According to Corollary \ref{cor_u1} we have
$$
u_1(x,\lambda)=O_{2\ell}(1)v_0(x)+O_{2\ell}(1)v_1(x)+\int_x^{x_0}K(x,y,\lambda)
u_1(y,\lambda)dy $$
and, since $v_j(x)=O(\langle x \rangle)$, we obtain $|u_1(x,\lambda)|\lesssim
\langle x \rangle$
for all $x \in [-\lambda^{-1},x_0]$ and $\lambda \in (0,\lambda_0)$ by Lemma
\ref{lem_volterra}.
Again, we proceed by induction and first we consider the case $m \leq 2\ell$.
If $\ell=0$ there is nothing left to prove, so we restrict ourselves to 
$\ell \geq 1$. 
Assuming that $|\partial_\lambda^l
u_1(x,\lambda)|\lesssim \langle x \rangle^{l+1}$ holds for all 
$l \leq m$
and a fixed $m \leq 2\ell-1$, we infer
\begin{align*}
\partial_\lambda^{m+1} u_1(x,\lambda)&=O(\langle x \rangle)
+\sum_{l=1}^{\min\{m+1,2\}} 
\left ( \begin{array}{c}m+1 \\
l \end{array} \right )\int_x^{x_0}
\underbrace{\partial_\lambda^l K(x,y,\lambda)\partial_\lambda^{m+1-l} 
u_1(y,\lambda)}_{O(\langle x \rangle)
O(\langle y \rangle^m)}dy 
\\
&+\int_x^{x_0} K(x,y,\lambda)\partial_\lambda^{m+1} u_1(y,\lambda)dy \\
&=O(\langle x \rangle^{m+2})+
\int_x^{x_0} K(x,y,\lambda)\partial_\lambda^{m+1} u_1(y,\lambda)dy
\end{align*}
and Lemma \ref{lem_volterra} yields 
$|\partial_\lambda^{m+1} u_1(x,\lambda)| \lesssim
\langle x \rangle^{m+2}$ for all $x \in [\lambda^{-1},x_0]$ and
$\lambda \in (0,\lambda_0)$.
The claim for the higher derivatives follows by a similar induction.
\end{proof}

%%%%%%%%%%%%%%%%%%%%
\section{Perturbative solutions for $|x\lambda|$ large}

The solutions $u_j(\cdot,\lambda)$ obtained by perturbing in energy cannot
directly be matched with the Jost solution $f_+(\cdot,\lambda)$ since the
approximations in Lemma \ref{lem_pertenergy} are valid for $|x\lambda|$ 
small whereas the behavior
of $f_+(x,\lambda)$ is known only for fixed $\lambda$ and $x \to \infty$.
Thus, we construct another set of solutions to 
$\mc{H}_{\ell,\sigma} f=\lambda^2 f$ by
perturbing the potential.

\subsection{Construction of the perturbative solutions}

To do so, we first rescale the equation $\mc{H}_{\ell,\sigma} f=\lambda^2 f$ by
introducing a new independent variable $z:=\lambda x$.
Setting $\tilde{f}(z):=f(\lambda^{-1} z)$, the equation $\mc{H}_{\ell,\sigma} f=\lambda^2
f$ is equivalent to
\begin{equation}
\label{eq_rescaled}
\tilde{f}''+\left (1-\frac{\ell(\ell+1)}{z^2}\right
)\tilde{f}=\lambda^{-2}U_{\ell,\sigma}(\lambda^{-1}z)\tilde{f} 
\end{equation}
where $U_{\ell,\sigma}(x):=V_{\ell,\sigma}(x)-\frac{\ell(\ell+1)}{x^2}$.
Now observe that the equation
$$ \tilde{f}''+\left (1-\frac{\ell(\ell+1)}{z^2}\right
)\tilde{f}=0 $$
has the fundamental system $\{\sqrt{z}J_{\ell+1/2}(z),
\sqrt{z}Y_{\ell+1/2}(z)\}$ where $J_{\ell+1/2}$ and $Y_{\ell+1/2}$ are the
Bessel functions \cite{abramowitz}.
Thus, for a small right--hand side of Eq.~(\ref{eq_rescaled}) we expect to
obtain solutions of Eq.~(\ref{eq_rescaled}) 
by perturbing the Bessel functions.
According to Corollary \ref{cor_RWasymptotics}, the right--hand side satisfies
the estimate $\lambda^{-2}U_{\ell,\sigma}(\lambda^{-1}z) \lesssim
\lambda^{-2}(\lambda^{-1}z)^{-3+\varepsilon}=\lambda^{-2}x^{-3+\varepsilon}$ for
$x \to \infty$. 
Thus, our approximation is expected to be good if
$\lambda^{-2}x^{-3+\varepsilon}$ is small.
Smallness can be achieved by fixing $\lambda>0$ and letting $x \to \infty$ which
is required for the matching with the Jost solution $f_+$.
However, we can also enforce smallness by setting 
$x=\lambda^{-1+\varepsilon}$ (for a small $\varepsilon>0$) 
and letting $\lambda \to 0$. For $\lambda>0$ sufficiently
small, we have $|x \lambda|<\delta$ and the
matching with the
solutions $u_j(\cdot,\lambda)$ can be done as well. 
As a result, we obtain a good approximation to the Jost solution 
$f_+(x,\lambda)$ 
at a finite $x$ for $\lambda \to 0$.

\begin{lemma}
\label{lem_pertpot}
There exists a smooth solution $\phi_\ell(\cdot,\lambda)$ of 
Eq.~(\ref{eq_rescaled})
such that 
$$ \phi_\ell(z,\lambda)=\beta_\ell \sqrt{z}H_{\ell+1/2}^+(z)
(1+b_\ell(z,\lambda)) $$ 
where $H_{\ell+1/2}^+:=J_{\ell+1/2}+iY_{\ell+1/2}$ is the Hankel function and
$\beta_\ell:=i\sqrt{\frac{\pi}{2}}e^{i \ell \frac{\pi}{2}}$.
For all $\lambda \in (0,1)$, the function $b_\ell$ satisfies the bounds
$$ |b_\ell(z,\lambda)|\leq C \lambda^{1-\varepsilon(2\ell+3)}
$$
for all $z \in [\lambda^\varepsilon, 1]$ and
$$ |b_\ell(z,\lambda)|\leq C z^{-2+\varepsilon}\lambda^{1-\varepsilon}
$$
for all $z \in [1,\infty)$ where
$\varepsilon \in \left (0,\frac{1}{2\ell+3} \right )$ is arbitrary.
\end{lemma}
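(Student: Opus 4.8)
The plan is to construct $\phi_\ell(\cdot,\lambda)$ as the solution of a Volterra integral equation obtained by perturbing off the Hankel solution. Write $\Phi_\ell(z):=\beta_\ell\sqrt{z}\,H_{\ell+1/2}^+(z)$, the distinguished solution of the unperturbed equation $\tilde f''+(1-\ell(\ell+1)/z^2)\tilde f=0$; it never vanishes on $(0,\infty)$ (since $J_{\ell+1/2}$ and $Y_{\ell+1/2}$ have no common zeros), and the standard Bessel asymptotics \cite{abramowitz} give $|\Phi_\ell(z)|\simeq z^{-\ell}$ and $|\Phi_\ell(z)^{-2}|\simeq z^{2\ell}$ as $z\to 0+$ (so $\Phi_\ell^{-2}$ is integrable at the origin), and $\Phi_\ell(z)=e^{iz}(1+O(z^{-1}))$, $\Phi_\ell(z)^{-2}=c\,e^{-2iz}(1+O(z^{-1}))$ as $z\to\infty$, with the $O$--terms of symbol type. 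Setting $W(z):=\lambda^{-2}U_{\ell,\sigma}(\lambda^{-1}z)$ and inserting the ansatz $\phi_\ell=\Phi_\ell(1+b_\ell)$ into (\ref{eq_rescaled}), the identity $(\Phi_\ell^2 b_\ell')'=W\Phi_\ell^2(1+b_\ell)$ together with the normalization $b_\ell(z,\lambda)\to 0$ and $\Phi_\ell(z)^2 b_\ell'(z,\lambda)\to 0$ as $z\to\infty$ (which forces $\phi_\ell\sim\Phi_\ell\sim e^{i\lambda x}$, matching the Jost solution $f_+$) yields, after interchanging the order of integration, the Volterra equation
\begin{equation*}
b_\ell(z,\lambda)=\int_z^\infty\Big(\int_z^t\Phi_\ell(s)^{-2}\,ds\Big)\,W(t)\,\Phi_\ell(t)^2\,\big(1+b_\ell(t,\lambda)\big)\,dt .
\end{equation*}
Conversely, any bounded solution of this equation yields, upon differentiation, a genuine solution $\phi_\ell$ of (\ref{eq_rescaled}), and $\phi_\ell$ is then automatically smooth since it solves a linear second order ODE with smooth coefficients.

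The heart of the argument is to bound the kernel $K(z,t,\lambda):=\big(\int_z^t\Phi_\ell(s)^{-2}\,ds\big)\,W(t)\,\Phi_\ell(t)^2$ on the range $\lambda^\varepsilon\le z\le t<\infty$ for small $\lambda$. Three ingredients enter. First, Corollary \ref{cor_RWasymptotics} gives $|W(t)|\lesssim\lambda^{-2}(\lambda^{-1}t)^{-3+\varepsilon}=\lambda^{1-\varepsilon}t^{-3+\varepsilon}$, which is legitimate precisely because $t\ge\lambda^\varepsilon$ forces the Schwarzschild variable $\lambda^{-1}t\ge\lambda^{-1+\varepsilon}\to\infty$. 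Second, the small--argument asymptotics give $|\Phi_\ell(t)^2|\lesssim t^{-2\ell}$ for $t\le 1$ and $|\Phi_\ell(t)^2|\lesssim 1$ for $t\ge 1$. Third, for the inner integral one uses $\int_z^t|\Phi_\ell(s)^{-2}|\,ds\lesssim t^{2\ell+1}$ when $t\le 1$, whereas for $t\ge 1$ one splits at $s=1$: the part over $(z,1)$ is $O(1)$ by integrability of $\Phi_\ell^{-2}$ at the origin, and the part over $(1,t)$ is $O(1)$ uniformly in $t$ by non--stationary phase (peel off the leading constant in $\Phi_\ell^{-2}(s)\sim c\,e^{-2is}$ and integrate by parts). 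Combining, one gets $\int_z^\infty\sup_{z\le s\le t}|K(s,t,\lambda)|\,dt\lesssim\lambda^{1-\varepsilon}z^{-2+\varepsilon}$ for $z\ge 1$, and $\lesssim\lambda^{1-\varepsilon}z^{-1+\varepsilon}\le\lambda^{1-\varepsilon(2\ell+3)}$ for $\lambda^\varepsilon\le z\le 1$, where the last inequality uses $z^{-1+\varepsilon}\le\lambda^{-\varepsilon(1-\varepsilon)}$ and $\varepsilon<1/(2\ell+3)$ (which keeps the exponent $1-\varepsilon(2\ell+3)$ positive, hence makes this quantity $\to 0$ as $\lambda\to0$).

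With the total kernel mass thus made small --- in particular $<\tfrac12$ for $\lambda$ sufficiently small --- I would invoke the basic Volterra estimate of Lemma \ref{lem_volterra} to obtain a unique solution $b_\ell(\cdot,\lambda)\in L^\infty([\lambda^\varepsilon,\infty))$ with $\|b_\ell(\cdot,\lambda)\|_\infty\le\tfrac12$; substituting this back into the integral equation upgrades the bound to $|b_\ell(z,\lambda)|\lesssim\int_z^\infty\sup_{z\le s\le t}|K(s,t,\lambda)|\,dt$, which is exactly the pair of bounds in the statement. Smoothness of $b_\ell$ follows as above (or directly from Lemma \ref{lem_volterradiffx}). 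I expect the main obstacle to be the kernel bookkeeping of the second step: one genuinely needs the oscillatory cancellation in $\int_z^t\Phi_\ell(s)^{-2}\,ds$ at large argument --- the crude bound $|\int_z^t\Phi_\ell(s)^{-2}ds|\lesssim t^{2\ell+1}$ is far too lossy as $t\to\infty$ and would destroy convergence of the $t$--integral when $\ell\ge 1$ --- and one must verify that $z=\lambda^\varepsilon$ is precisely the threshold below which the potential estimate of Corollary \ref{cor_RWasymptotics} degrades, so that the two regimes $[\lambda^\varepsilon,1]$ and $[1,\infty)$ glue together without loss.
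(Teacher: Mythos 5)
Your construction is correct and follows the same overall strategy as the paper (perturb off the Hankel solution on $[\lambda^\varepsilon,\infty)$ via a Volterra equation, using the bound $|\lambda^{-2}U_{\ell,\sigma}(\lambda^{-1}t)|\lesssim\lambda^{1-\varepsilon}t^{-3+\varepsilon}$ from Corollary \ref{cor_RWasymptotics}), but with a genuinely different kernel. The paper writes the variation-of-constants kernel with respect to the fundamental system $\{\varphi_\ell,\overline{\varphi_\ell}\}$ (Wronskian $-2i$), namely $-\frac{1}{2i}\bigl[\varphi_\ell\overline{\varphi_\ell}(y)-\varphi_\ell^2(y)\overline{\varphi_\ell(z)}/\varphi_\ell(z)\bigr]\lambda^{-2}U_{\ell,\sigma}(\lambda^{-1}y)$; since $\overline{\varphi_\ell}/\varphi_\ell$ is bounded (with all derivatives), the kernel is pointwise $\lesssim\lambda^{1-\varepsilon}y^{-3+\varepsilon}(1+y^{-2\ell})$ and no oscillatory cancellation is ever needed — this choice also pays off in Lemma \ref{lem_pertpotsymbol}, where the same structure gives the symbol-type derivative bounds almost for free. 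Your reduction-of-order kernel $\bigl(\int_z^t\Phi_\ell^{-2}\bigr)W\Phi_\ell^2$ instead carries the growing factor $\int_z^t\Phi_\ell^{-2}$, so you genuinely need the integration-by-parts bound $\bigl|\int_1^t\Phi_\ell^{-2}(s)\,ds\bigr|\lesssim1$ uniformly in $t$ (your identification of this as the crux is right: the crude bound $t^{2\ell+1}$ kills convergence for $\ell\ge1$); with that step your exponent bookkeeping checks out, and on $[\lambda^\varepsilon,1]$ you in fact get the slightly stronger bound $\lambda^{1-\varepsilon}z^{-1+\varepsilon}\le\lambda^{1-\varepsilon(2\ell+3)}$. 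One small correction: you do not need (and should not invoke) smallness of the kernel mass "for $\lambda$ sufficiently small" — the lemma claims the bounds for all $\lambda\in(0,1)$, and since your kernel mass $\mu(\lambda)\lesssim\lambda^{1-2\varepsilon+\varepsilon^2}$ is uniformly bounded on $(0,1)$, the estimate $\|h(\cdot,\lambda)\|_{L^\infty}\le e^{\mu}$ of Lemma \ref{lem_volterra} (valid with no smallness assumption) already lets you plug back into the integral equation and obtain the stated bounds on the whole range, exactly as the paper does.
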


\begin{proof}
Set $\varphi_\ell(z):=\beta_\ell \sqrt{z}H_{\ell+1/2}^+(z)$ and observe that
$|\varphi_\ell(z)| > 0$ for all $z >0$.
Furthermore, we have $W(\varphi_\ell,\overline{\varphi_\ell})=-2i$ which 
follows by noting
that $\varphi_\ell(z) \sim e^{iz}$ and $\varphi_\ell'(z) \sim ie^{iz}$ for $z \to \infty$.
A straightforward calculation 
shows that, if $h$ satisfies the 
integral equation
\begin{equation}
\label{eq_proofpertpot}
h(z,\lambda)=1-\frac{1}{2i}\int_z^\infty \left [\varphi_\ell(y)\overline{\varphi_\ell(y)}
-\varphi_\ell^2(y)\frac{\overline{\varphi_\ell(z)}}{\varphi_\ell(z)} \right ] 
\lambda^{-2}U_{\ell,\sigma}(\lambda^{-1}y)h(y,\lambda)dy, 
\end{equation}
then $\phi_\ell(z,\lambda):=\varphi_\ell(z)h(z,\lambda)$ is a solution to Eq.
(\ref{eq_rescaled}).
Eq.~(\ref{eq_proofpertpot}) is of the form 
$$ h(z,\lambda)=1+\int_z^\infty K(z,y,\lambda)h(y,\lambda)dy. $$
Recall the asymptotic behavior
$\varphi_\ell(z)=c_1 \beta_\ell z^{\ell+1}(1+O(z))+i c_2 \beta_\ell 
z^{-\ell}(1+O(z))$ for $z \to 0$ 
where  
$c_1,c_2$ are nonzero real constants and the $O$--terms are smooth
(cf. \cite{abramowitz}).
Furthermore, we have $\varphi_\ell(z) \sim e^{iz}$ as $z \to \infty$ and
$|\lambda^{-2}U_{\ell,\sigma}(\lambda^{-1}z)| \lesssim
\lambda^{1-\varepsilon}z^{-3+\varepsilon}$ for all $\lambda,z>0$ with, say,
$\lambda^{-1}z \geq 1$ (see Corollary \ref{cor_RWasymptotics}).
This shows that, for $k \in \mathbb{N}_0$, 
$$ \left |\left (\frac{\overline{\varphi_\ell}}{\varphi_\ell} 
\right )^{(k)}(z) \right |\leq C_k $$ 
for all $z \geq 0$ and hence,
$|\partial_z^k K(z,y,\lambda)| \lesssim
\lambda^{1-\varepsilon}y^{-3+\varepsilon}(1+y^{-2\ell})$ for all $\lambda \in
(0,1)$ and all $y,z$ with $\lambda^\varepsilon \leq z \leq y$.
Thus, we have
$$ \mu(\lambda):=\int_{\lambda^\varepsilon}^\infty \sup_{z \in 
(\lambda^\varepsilon, y)}|K(z,y,\lambda)|dy \lesssim \lambda^{1-\varepsilon(2\ell+3)} $$
for all $\lambda \in (0,1)$ and hence, $\mu:=\sup_{\lambda \in
(0,1)}\mu(\lambda) < \infty$ provided that $\varepsilon \leq \frac{1}{2\ell+3}$.
Applying Lemma \ref{lem_volterradiffx} we conclude that Eq.~(\ref{eq_proofpertpot})
has a unique smooth solution $h(\cdot,\lambda)$ satisfying
$\|h(\cdot,\lambda)\|_{L^\infty(\lambda^\varepsilon,\infty)}\leq
e^\mu$ for all $\lambda \in (0,1)$.
Thus, Eq.~(\ref{eq_proofpertpot}) implies
$$ |h(z,\lambda)-1|\lesssim \int_z^\infty |K(z,y,\lambda)|dy \lesssim \int_z^\infty \lambda^{1-\varepsilon}y^{-3+\varepsilon}dy \lesssim \lambda^{1-\varepsilon}z^{-2+\varepsilon} $$
for all $\lambda \in (0,1)$ and all $z \in
[1,\infty)$.
Similarly, for $z \in [\lambda^\varepsilon,1]$, we have
\begin{align*} |h(z,\lambda)-1|&\lesssim \int_z^\infty |K(z,y,\lambda)|dy \lesssim \int_z^\infty \lambda^{1-\varepsilon}y^{-3+\varepsilon}(1+y^{-2\ell})dy \\
&\lesssim \lambda^{1-\varepsilon}\left (z^{-2+\varepsilon}+z^{-2-2\ell+\varepsilon} \right )
\lesssim \lambda^{1-\varepsilon(2\ell+3)}
\end{align*}
as claimed.
\end{proof}

\subsection{Estimates for the derivatives}

\begin{lemma}
\label{lem_pertpotsymbol}
For all $\lambda \in (0,1)$ and $m,k \in \mathbb{N}_0$, the function 
$b_\ell$ from Lemma \ref{lem_pertpot} satisfies the estimates
$$|\partial_z^k \partial_\lambda ^m b_\ell(z,\lambda)| \leq
C_{k,m}z^{-k}\lambda^{1-\varepsilon(2\ell+3)-m} $$
for all $z \in [\lambda^\varepsilon,1]$ and 
$$|\partial_z^k \partial_\lambda ^m b_\ell(z,\lambda)| \leq
C_{k,m}z^{-2+\varepsilon-k}\lambda^{1-\varepsilon-m} $$
for all $z \in [1,\infty)$ where
$\varepsilon \in \left 
(0,\frac{1}{2\ell+3} \right )$ is arbitrary.
\end{lemma}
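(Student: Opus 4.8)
The plan is to differentiate the Volterra equation for $h$ from the proof of Lemma \ref{lem_pertpot} and to run an induction on the order of differentiation, in close analogy with the proofs of Lemma \ref{lem_pertenergy} and Proposition \ref{prop_pertenergy}. Recall that $b_\ell(z,\lambda)=h(z,\lambda)-1$, where
$$ h(z,\lambda)=1+\int_z^\infty K(z,y,\lambda)h(y,\lambda)\,dy,\quad K(z,y,\lambda)=-\frac{1}{2i}\left[|\varphi_\ell(y)|^2-\varphi_\ell^2(y)\frac{\overline{\varphi_\ell(z)}}{\varphi_\ell(z)}\right]\lambda^{-2}U_{\ell,\sigma}(\lambda^{-1}y) $$
and $\varphi_\ell(z)=\beta_\ell\sqrt{z}H_{\ell+1/2}^+(z)$.

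First I would record symbol-type bounds for the kernel. Since $U_{\ell,\sigma}$ is of symbol type with $U_{\ell,\sigma}(x)=O(x^{-3+\varepsilon})$ by Corollary \ref{cor_RWasymptotics}, the chain rule together with $\partial_\lambda(\lambda^{-1}y)=-\lambda^{-2}y$ shows that each $\lambda$-derivative costs one power of $\lambda^{-1}$, i.e. $|\partial_\lambda^m[\lambda^{-2}U_{\ell,\sigma}(\lambda^{-1}y)]|\lesssim \lambda^{1-\varepsilon-m}y^{-3+\varepsilon}$. Combined with $|(\overline{\varphi_\ell}/\varphi_\ell)^{(k)}(z)|\leq C_k$ (already established in Lemma \ref{lem_pertpot}) and $|\partial_y^l\varphi_\ell^2(y)|\lesssim 1+y^{-2\ell-l}$ — both following from the behavior of $\varphi_\ell$ near $0$ and near $\infty$ recorded there — this yields
$$ |\partial_z^k\partial_y^l\partial_\lambda^m K(z,y,\lambda)|\leq C_{k,l,m}(1+y^{-2\ell-l})\,y^{-3+\varepsilon}\,\lambda^{1-\varepsilon-m} $$
uniformly for $\lambda^\varepsilon\leq z\leq y$, $\lambda\in(0,1)$. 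Two structural facts will be used throughout: $K(z,z,\lambda)=0$, because the bracket vanishes on the diagonal, so that differentiating $\int_z^\infty Kh\,dy$ in $z$ produces no boundary term at first order and only diagonal terms $\partial_z^jK(z,y,\lambda)|_{y=z}$, $j\geq1$, at higher orders; and the fact that for the half-integer order $\ell+\tfrac12$ one has $\sqrt{z}H_{\ell+1/2}^+(z)=e^{iz}p_\ell(z^{-1})$ with $p_\ell$ a polynomial of degree $\ell$, $p_\ell(0)\neq0$, so that $\varphi_\ell^2(y)=e^{2iy}q_\ell(y^{-1})$ with $q_\ell$ a symbol of order $0$ on $[1,\infty)$, and $K$ decomposes into the non-oscillatory term $-\tfrac{1}{2i}|p_\ell(y^{-1})|^2\lambda^{-2}U_{\ell,\sigma}(\lambda^{-1}y)$ plus $e^{2i(y-z)}$ times symbols.

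Next I would carry out the induction. Ordering $\mathbb{N}_0\times\mathbb{N}_0$ as in the proof of Proposition \ref{prop_pertenergy} and assuming the claimed bounds for all lower pairs, one differentiates the Volterra equation $k$ times in $z$ and $m$ times in $\lambda$, collects the (vanishing or already controlled) diagonal terms, and arrives at a Volterra equation
$$ \partial_z^k\partial_\lambda^m b_\ell(z,\lambda)=G_{k,m}(z,\lambda)+\int_z^\infty K(z,y,\lambda)\,\partial_z^k\partial_\lambda^m h(y,\lambda)\,dy, $$
whose forcing term $G_{k,m}$ is bounded by the inductive hypothesis and the kernel estimate. Since $\sup_{\lambda\in(0,1)}\int_{\lambda^\varepsilon}^\infty\sup_{\lambda^\varepsilon<z<y}|K(z,y,\lambda)|\,dy<\infty$, the basic Volterra estimate of Lemma \ref{lem_volterra} (with Lemma \ref{lem_volterradiffx} for smoothness) promotes the bound on $G_{k,m}$ to the same bound on $\partial_z^k\partial_\lambda^m b_\ell$. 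On the window $z\in[\lambda^\varepsilon,1]$ the factor $1+y^{-2\ell-l}$ in the kernel, integrated down to $z\geq\lambda^\varepsilon$, produces a loss $\lambda^{-\varepsilon(2\ell+2)}$ which, together with the prefactor $\lambda^{1-\varepsilon-m}$, yields exactly the exponent $1-\varepsilon(2\ell+3)-m$; note that there $z^{-k}\geq1$, so the $z^{-k}$ weight in the claim is slack and only the $\lambda$-power needs genuine control. On $z\in[1,\infty)$ the $y^{-2\ell-l}$ term is harmless and a crude estimate already gives the $\lambda$-power; to obtain the honest spatial decay $z^{-2+\varepsilon-k}$ (rather than the $z^{-2+\varepsilon}$ a crude estimate gives for every $k$) I would exploit the oscillation via the reduced first-order equation $(\varphi_\ell^2 b_\ell')'=\varphi_\ell^2\lambda^{-2}U_{\ell,\sigma}(\lambda^{-1}z)(1+b_\ell)$, i.e.
$$ b_\ell'(z,\lambda)=-\varphi_\ell^{-2}(z)\int_z^\infty\varphi_\ell^2(y)\,\lambda^{-2}U_{\ell,\sigma}(\lambda^{-1}y)\,(1+b_\ell(y,\lambda))\,dy; $$
using $\varphi_\ell^2(y)=e^{2iy}q_\ell(y^{-1})$, one integration by parts identifies $b_\ell'(z,\lambda)$, up to a remainder decaying one power of $z$ faster, with $\tfrac{1}{2i}\lambda^{-2}U_{\ell,\sigma}(\lambda^{-1}z)(1+b_\ell(z,\lambda))$, which is of symbol type of order $-3+\varepsilon$ in $z$ and $1-\varepsilon$ in $\lambda$; differentiating this representation in $z$ and $\lambda$ and iterating produces all the bounds on $[1,\infty)$.

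The main obstacle is the two-scale bookkeeping, and specifically the verification that, after integrating the kernel down to the cutoff $z=\lambda^\varepsilon$ in the small-$z$ window, the accumulated negative powers of $\lambda$ (from both the $y$-integration of $1+y^{-2\ell-l}$ and the $\lambda$-derivatives of $U_{\ell,\sigma}$) never exceed $1-\varepsilon(2\ell+3)-m$; here one uses crucially $\varepsilon<\tfrac{1}{2\ell+3}$. The only step that genuinely departs from the Volterra-plus-induction scheme used elsewhere in the paper is the integration-by-parts argument exploiting the oscillation of $\varphi_\ell$, needed for the honest $z$-decay on $[1,\infty)$; matching the two regimes across $z=1$ is then routine.
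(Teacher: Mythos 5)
Your proposal follows the paper's global scheme --- the Volterra equation for $h=1+b_\ell$, symbol--type kernel bounds, and an induction over mixed derivatives ordered as in Proposition \ref{prop_pertenergy}, with the same two--regime bookkeeping (your accounting of the loss $\lambda^{-\varepsilon(2\ell+2)}$ from integrating $y^{-2\ell-3+\varepsilon}$ down to $z\geq\lambda^\varepsilon$ reproduces exactly the exponent $1-\varepsilon(2\ell+3)$) --- but it departs from the paper at precisely the point you flag, the spatial decay $z^{-2+\varepsilon-k}$ for $z\geq 1$. The paper does not integrate by parts against the oscillation at all: it rewrites $\int_z^\infty K(z,y,\lambda)h(y,\lambda)\,dy=\int_0^\infty(1-e^{2i\eta})\,O_\mathbb{C}\bigl((\eta+z)^{-3+\varepsilon}\lambda^{1-\varepsilon}\bigr)h(\eta+z,\lambda)\,d\eta$, so the oscillatory factor is $z$--independent and every $z$--derivative either gains $(\eta+z)^{-1}$ from the symbol factor or falls on $h$, where it is absorbed by the induction and the Volterra structure; integrating in $\eta$ then yields the forcing term $O_\mathbb{C}(z^{-2+\varepsilon-(m+1)}\lambda^{1-\varepsilon})$ directly. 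Your route --- the reduced equation $(\varphi_\ell^2 b_\ell')'=\varphi_\ell^2\lambda^{-2}U_{\ell,\sigma}(\lambda^{-1}z)(1+b_\ell)$ together with nonstationary--phase integration by parts using the elementary form $\sqrt{z}H^+_{\ell+1/2}(z)=e^{iz}p_\ell(1/z)$ --- is viable and gives a more explicit leading term for $b_\ell'$, but the iteration to all orders is not automatic: differentiating your remainder $\varphi_\ell^{-2}(z)\int_z^\infty e^{2iy}(\cdots)\,dy$ hits $\varphi_\ell^{-2}(z)$ with no gain in $z$, so a fresh integration by parts is needed at each order to re--exhibit the cancellation (for instance, for $I(z)=\int_z^\infty e^{2i(y-z)}G(y)\,dy$ the naive relation $I''=-G'+2iG-4I$ appears to lose a power of $z$ until one expands $I$ once more); once you parametrize by $\eta=y-z$ you have effectively rediscovered the paper's device, which trivializes this bookkeeping. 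One small caution on the window $[\lambda^\varepsilon,1]$: the weight $z^{-k}$ is not merely slack --- the diagonal terms $\partial_z^lK(z,y,\lambda)|_{y=z}$ are of size $z^{-2\ell-3+\varepsilon}\lambda^{1-\varepsilon}$, and near $z\sim\lambda^\varepsilon$ they force precisely the loss $z^{-k}$ allowed by the statement, so the $z$--weights must be carried through the induction rather than discarded.
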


\begin{proof}
The function $h:=1+b_\ell$ satisfies the equation
$$ h(z,\lambda)=1+\int_z^\infty K(z,y,\lambda)h(y,\lambda)dy $$
where 
$$ K(z,y,\lambda):=-\frac{1}{2i}\left [\varphi_\ell(y)\overline{\varphi_\ell(y)}
-\varphi_\ell^2(y)\frac{\overline{\varphi_\ell(z)}}{\varphi_\ell(z)} \right ] 
\lambda^{-2}U_{\ell,\sigma}(\lambda^{-1}y) $$
and $\varphi_\ell(z):=\beta_\ell \sqrt{z}H_{\ell+1/2}^+(z)$, see the proof of
Lemma \ref{lem_pertpot}.
Recall the asymptotic behavior 
$\varphi_\ell(z)=e^{iz}(1+O_\mathbb{C}(z^{-1}))$ for $z \to \infty$ 
where the $O_\mathbb{C}$--term is of symbol type.
On the other hand, we have 
$\varphi_\ell(z)=c_1 \beta_\ell z^{\ell+1}(1+O(z))+ic_2
\beta_\ell z^{-\ell}(1+O(z))$ for $z \to 0$ where $c_1,c_2$ are nonzero real
constants and the $O$--terms are smooth (see
\cite{abramowitz}). 
Thus, by the Leibniz rule and Lemma \ref{lem_symbolinverse}, 
we have  
$K(z,y,\lambda)=(1-e^{2i(y-z)})O_\mathbb{C}(y^{-3+\varepsilon}\lambda^{1-\varepsilon})$ 
for all $1 \leq z \leq y$ and
$K(z,y,\lambda)=O_\mathbb{C}(y^{-3-2\ell+\varepsilon}\lambda^{1-\varepsilon})
=O_\mathbb{C}(y^{-1}\lambda^{1-\varepsilon(2\ell+3)})$ for
$\lambda^\varepsilon \leq z
\leq y \leq 1$ where all 
$O$--terms are of symbol type.
Let $n: \mathbb{N}_0 \times \mathbb{N}_0 \to \mathbb{N}_0$ denote the bijection
from the proof of Proposition \ref{prop_pertenergy}. As before, we proceed by
induction.
Fix $(k,m) \in \mathbb{N}_0 \times \mathbb{N}_0$ and suppose we have 
$|\partial_z^l \partial_\lambda^j (h(z,\lambda)-1)|\leq
C_{l,j}z^{-l}\lambda^{1-\varepsilon(2\ell+3)-j}$ for all $(l,j)$ with $n(l,j)
\leq n(k,m)$ and $\lambda \in (0,1)$, $z \in [\lambda^\varepsilon,1]$.
We have to show that this implies 
$|\partial_z^{k'} \partial_\lambda^{m'}(h(z,\lambda)-1)|\leq
C_{k',m'}z^{-k'}\lambda^{1-\varepsilon(2\ell+3)-m'}$ where $n(k',m')=n(k,m)+1$.
If $k=0$ we have $(k',m')=(m+1,0)$ and, with $\kappa_l(z,\lambda):=\partial_z^l
K(z,y,\lambda)|_{y=z}$,
$$ \partial_z^{m+1} h(z,\lambda)=-\sum_{l=0}^m \partial_z^{m-l} \left
[\kappa_l(z,\lambda)h(z,\lambda) \right ]+\int_z^\infty
\partial_z^{m+1}K(z,y,\lambda)h(y,\lambda)dy, $$
see Lemma \ref{lem_volterradiffx}.
By assumption we have 
$\left | \partial_z^{m-l} \left
[\kappa_l(z,\lambda)h(z,\lambda) \right ] \right |\lesssim
z^{-(m+1)}\lambda^{1-\varepsilon(2\ell+3)}$ for $\lambda \in (0,1)$, $z \in
[\lambda^\varepsilon, 1]$ and this implies $|\partial_z^{m+1}
h(z,\lambda)|\lesssim z^{-(m+1)}\lambda^{1-\varepsilon(2\ell+3)}$.
If $k \geq 1$ we have $(k',m')=(k-1,m+1)$ and
\begin{equation}
\label{eq_proofpertpotsymbol}
\partial_\lambda^{m+1} \partial_z^{k-1} h(x,\lambda)=-\sum_{l=0}^{k-2}
\partial_\lambda^{m+1} \partial_z^{k-2-l} \left [\kappa_l(z,\lambda)h(z,\lambda)
\right ]+\int_z^\infty \partial_\lambda^{m+1} \left [
\partial_z^{k-1}K(z,y,\lambda)h(y,\lambda) \right ] dy.
\end{equation}
If $k \geq 2$, Eq.~(\ref{eq_proofpertpotsymbol}) and the assumption shows that
$|\partial_\lambda^{m+1} \partial_z^{k-1}h(z,\lambda)|\lesssim
z^{-(k-1)}\lambda^{1-\varepsilon(2\ell+3)-(m+1)}$ for $\lambda \in (0,1)$, $z
\in [\lambda^\varepsilon,1]$.
If $k=1$, Eq.~(\ref{eq_proofpertpotsymbol}) is of the form
$$
\partial_\lambda^{m+1}h(z,\lambda)=O_\mathbb{C}(\lambda^{1-\varepsilon(2\ell+3)-(m+1)})
+\int_z^\infty K(z,y,\lambda)\partial_\lambda^{m+1}h(y,\lambda)dy $$
by assumption and therefore, Lemma \ref{lem_volterra} yields 
$|\partial_\lambda^{m+1}h(z,\lambda)|\lesssim
\lambda^{1-\varepsilon(2\ell+3)-(m+1)}$ for $\lambda \in (0,1)$ and $z \in
[\lambda^\varepsilon,1]$.
This proves the first estimate.

For the second estimate we proceed by an analogous induction and write
\begin{align*} 
\int_z^\infty K(z,y,\lambda)h(y,\lambda)dy&=\int_0^\infty
K(z,\eta+z,\lambda)h(\eta+z,\lambda)d\eta \\
&=\int_0^\infty
(1-e^{2i\eta})O_\mathbb{C}((\eta+z)^{-3+\varepsilon}
\lambda^{1-\varepsilon})h(\eta+z,\lambda)d\eta
\end{align*}
for $z \geq 1$ where the $O_\mathbb{C}$--term is of symbol type.
Thus, the assumption yields
$$ \partial_z^{m+1} h(z,\lambda)=
O_\mathbb{C}(z^{-2+\varepsilon-(m+1)}\lambda^{1-\varepsilon})+\int_z^\infty
K(z,y,\lambda)\partial_y^{m+1}h(y,\lambda)dy $$
and Lemma \ref{lem_volterra} implies $|\partial_z^{m+1} h(z,\lambda)|\lesssim
z^{-2+\varepsilon-(m+1)}\lambda^{1-\varepsilon}$ for $\lambda \in (0,1)$ and $z
\geq 1$.
Analogously, we obtain
$$ \partial_\lambda^{m+1}
\partial_z^{k-1}h(z,\lambda)=O_\mathbb{C}(z^{-2+\varepsilon-(k-1)}
\lambda^{1-\varepsilon-(m+1)})+\int_z^\infty
K(z,y,\lambda)\partial_\lambda^{m+1}\partial_y^{k-1}h(y,\lambda)dy $$
and again, Lemma \ref{lem_volterra} yields
$|\partial_z^{k-1}\partial_\lambda^{m+1}h(z,\lambda)|\lesssim
z^{-2+\varepsilon-(k-1)}\lambda^{1-\varepsilon-(m+1)}$ for $\lambda \in (0,1)$
and $z \geq 1$ which finishes the proof.
\end{proof}

\section{Matching with the Jost solutions}

In this section we match the Jost functions $f_\pm(\cdot,\lambda)$ with the
solutions $u_j(\cdot,\lambda)$ which allows us to calculate the asymptotic behavior
of $f_\pm(\cdot,\lambda)$ and $W(f_-(\cdot,\lambda), f_+(\cdot,\lambda))$ 
in the limit $\lambda \to 0+$.

\subsection{Matching with $f_+(\cdot,\lambda)$}

Note that the solution $\phi_\ell$ constructed in Lemma \ref{lem_pertpot} 
has the asymptotic behavior
$\phi_\ell(z,\lambda) \sim e^{iz}$ for $z \to \infty$ which shows that 
$f_+(x,\lambda)=\phi_\ell(\lambda x,\lambda)$. 
Thus, we have found a representation of the Jost solution $f_+(x,\lambda)$ 
which is valid for
all $\lambda \in (0,1)$ and all $x$ with $\lambda^{-1+\varepsilon} \leq x <
\infty$ with a sufficiently small $\varepsilon>0$.
For given $\varepsilon, \delta>0$ 
we can always accomplish
$\lambda^{-1+\varepsilon} \leq \delta \lambda^{-1}$ 
for all $\lambda \in (0,\lambda_0)$ provided $\lambda_0>0$ is chosen small enough.
Thus, at $x=\lambda^{-1+\varepsilon}$ for $\lambda$ sufficiently close to
$0$, we can match the solutions
$f_+(\cdot,\lambda)$ and $u_j(\cdot,\lambda)$ (see Lemma
\ref{lem_pertenergy}). 

\begin{lemma}
\label{lem_c+}
The Wronskians
$c_j^+(\lambda):=W(f_+(\cdot,\lambda),u_j(\cdot,\lambda))$ for $j=0,1$ 
have the
asymptotic behavior
$$ c_0^+(\lambda)=i\alpha_0\beta_\ell \lambda^{-\ell}
(1+O(\lambda^\varepsilon)+iO(\lambda^{\varepsilon(2\ell+2)}))$$
and
$$ c_1^+(\lambda)=\alpha_1\beta_\ell \lambda^{\ell+1}(1+
O(\lambda^\varepsilon)+iO(\lambda^{-2\ell \varepsilon})) $$
as $\lambda \to 0+$ 
for a sufficiently small $\varepsilon>0$
where $\alpha_j$ are real nonzero constants and 
all $O$--terms are of symbol type.
\end{lemma}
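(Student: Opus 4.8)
Both $f_+(\cdot,\lambda)$ and $u_j(\cdot,\lambda)$ solve $\mc{H}_{\ell,\sigma}g=\lambda^2 g$, so the Wronskian $c_j^+(\lambda)=W(f_+(\cdot,\lambda),u_j(\cdot,\lambda))$ does not depend on $x$, and the plan is to compute it by evaluating at the matching point $x_\lambda:=\lambda^{-1+\varepsilon}$. As noted just above, for $\varepsilon>0$ small and then $\lambda_0>0$ small enough the point $x_\lambda$ lies in the common domain $[x_0,\delta\lambda^{-1}]$ of Lemma \ref{lem_pertenergy}, while $\lambda x_\lambda=\lambda^\varepsilon\in(0,1)$, so that the identity $f_+(x,\lambda)=\phi_\ell(\lambda x,\lambda)$ established above, together with Lemma \ref{lem_pertpot}, gives a usable formula for $f_+$ near $x_\lambda$.

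First I would assemble the four quantities $f_+(x_\lambda,\lambda)$, $f_+'(x_\lambda,\lambda)$, $u_j(x_\lambda,\lambda)$, $u_j'(x_\lambda,\lambda)$ from the pieces already at hand. For $f_+$ one uses $\varphi_\ell(z)=\beta_\ell\sqrt z\,H_{\ell+1/2}^+(z)$, the small--$z$ expansion $\varphi_\ell(z)=c_1\beta_\ell z^{\ell+1}(1+O(z))+ic_2\beta_\ell z^{-\ell}(1+O(z))$ with $c_1,c_2$ nonzero real (from the Bessel series) together with its termwise $z$--derivative, the chain rule (which produces one power of $\lambda$ in $f_+'$), and the bounds $|b_\ell(\lambda^\varepsilon,\lambda)|\lesssim\lambda^{1-\varepsilon(2\ell+3)}$ and $|\partial_z b_\ell(\lambda^\varepsilon,\lambda)|\lesssim\lambda^{1-\varepsilon(2\ell+4)}$ from Lemmas \ref{lem_pertpot} and \ref{lem_pertpotsymbol}. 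This writes $f_+$ as a ``$z^{\ell+1}$--part'' $f_+^{(1)}$ with real prefactor $c_1\beta_\ell$ plus a ``$z^{-\ell}$--part'' $f_+^{(2)}$ with imaginary prefactor $ic_2\beta_\ell$, each being a pure power of $x$ times a factor equal to $1$ up to errors that are positive powers of $\lambda$ (those coming from $b_\ell$ being complex--valued). For $u_j$ one uses $u_0(x)=(2\ell+1)^{-1}x^{\ell+1}(1+O(x^{-1+\varepsilon}))$ and $u_1(x)=x^{-\ell}(1+O(x^{-1+\varepsilon}))$ of symbol type (Lemma \ref{lem_zeroenergy}), together with $|a_j(x_\lambda,\lambda)|\lesssim x_\lambda^2\lambda^2=\lambda^{2\varepsilon}$ and $|\partial_x a_j(x_\lambda,\lambda)|\lesssim x_\lambda\lambda^2=\lambda^{1+\varepsilon}$ (Lemma \ref{lem_pertenergy}, Proposition \ref{prop_pertenergy}); since $x_\lambda=\lambda^{-1+\varepsilon}$, every power of $x_\lambda$ turns into a power of $\lambda$.

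Next I would form $c_j^+(\lambda)=W(f_+^{(1)},u_j)+W(f_+^{(2)},u_j)$ at $x_\lambda$ and use the elementary identities $W(x^{-\ell},x^{\ell+1})=2\ell+1$ and $W(x^\nu p,x^\nu q)=x^{2\nu}W(p,q)$. The ``crossed'' pairings, $f_+^{(2)}$ against $u_0$ and $f_+^{(1)}$ against $u_1$, have nonvanishing leading Wronskian and supply the main terms $c_0^+(\lambda)=ic_2\beta_\ell\lambda^{-\ell}(1+\cdots)$ and $c_1^+(\lambda)=-(2\ell+1)c_1\beta_\ell\lambda^{\ell+1}(1+\cdots)$, so that $\alpha_0=c_2$ and $\alpha_1=-(2\ell+1)c_1$ are real and nonzero. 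The ``aligned'' pairings, $f_+^{(1)}$ against $u_0$ and $f_+^{(2)}$ against $u_1$, pair two solutions with the same leading power, so their Wronskian vanishes at leading order and what survives is $x_\lambda^{2\ell+2}$ (resp. $x_\lambda^{-2\ell}$) times the single power of $\lambda$ produced when differentiating a nearly constant correction factor; after dividing by the main term and noting that the aligned contribution carries a factor $i$ relative to (resp. $1/i=-i$ relative to) the dominant one, this is precisely what produces the imaginary corrections $iO(\lambda^{\varepsilon(2\ell+2)})$ in $c_0^+$ and $iO(\lambda^{-2\ell\varepsilon})$ in $c_1^+$, while the remaining real $O(\lambda^\varepsilon)$ error collects the $O(z)$ Bessel corrections, the $O(x^{-1+\varepsilon})$ corrections of $u_j$, and the $a_j$--terms. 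Throughout, $\varepsilon$ is chosen small enough that every exponent appearing is positive and the $b_\ell$--contributions are of lower order, and then $\lambda_0$ small enough that $x_\lambda\le\delta\lambda^{-1}$. Finally, to see that all the $O$--terms behave like symbols in $\lambda$, one differentiates the whole construction in $\lambda$: the map $\lambda\mapsto x_\lambda=\lambda^{-1+\varepsilon}$ is itself of symbol type, and every ingredient (the small--$z$ expansion of $\varphi_\ell$, $b_\ell$ via Lemma \ref{lem_pertpotsymbol}, $u_j$ via Lemma \ref{lem_zeroenergy}, $a_j$ via Proposition \ref{prop_pertenergy}) obeys symbol bounds, so the chain rule, the Leibniz rule and Lemma \ref{lem_symbolinverse} (for the reciprocals $u_1^{-1}$, $(1+a_0)^{-2}$, $\overline{\varphi_\ell}/\varphi_\ell$) propagate the symbol property to $c_j^+$.

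I expect the main obstacle to be the analysis of the aligned pairings: extracting the genuine next--order term out of a near cancellation of two solutions with the same leading power is what pins down the precise imaginary error exponents --- and in the case of $c_1^+$ this exponent is even a \emph{negative} power of $\lambda$, so one has to check that the resulting contribution still tends to $0$, which forces the joint smallness of $\varepsilon$ and $\lambda_0$. Keeping track of these smallness conditions, and of which perturbative error dominates at $x_\lambda$, is the main bookkeeping burden.
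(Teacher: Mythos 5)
Your proposal follows essentially the same route as the paper: evaluate the $x$--independent Wronskians at the matching point $x=\lambda^{-1+\varepsilon}$, using the representation $f_+(x,\lambda)=\phi_\ell(\lambda x,\lambda)$ with the small--argument Hankel expansion and the symbol bounds on $b_\ell$, together with $u_j(x,\lambda)=u_j(x)(1+a_j(x,\lambda))$ and the asymptotics of $u_j$, and then read off the leading (crossed) terms and the error exponents --- your crossed/aligned bookkeeping, with $\alpha_0=c_2$ and $\alpha_1=-(2\ell+1)c_1$, is just an explicit version of the paper's ``straightforward computation.'' One cosmetic remark: the relative imaginary correction $iO(\lambda^{-2\ell\varepsilon})$ in $c_1^+$ is permitted by the statement to diverge as $\lambda\to 0+$ (only the absolute contribution $\lambda^{\ell+1}O(\lambda^{-2\ell\varepsilon})$ is small), so no check that it tends to zero is needed, only that the exponent matches.
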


\begin{proof}
We have $f_+(x,\lambda)=\beta_\ell \sqrt{\lambda x}H_{\ell+1/2}^+(\lambda
x)(1+b_\ell(\lambda x,\lambda))$ by Lemma \ref{lem_pertpot}.
Note that 
$\partial_x b_\ell(\lambda x,\lambda)=\lambda \partial_z
b_\ell(\lambda x,\lambda)$
as well as 
$\partial_\lambda b_\ell(\lambda x,\lambda)=
x \partial_z b_\ell(\lambda
x,\lambda)+\partial_\lambda b_\ell(\lambda x,\lambda)$ 
and 
hence, by Lemma \ref{lem_pertpotsymbol} and the chain rule, we infer
$$ |\partial_x^k \partial_\lambda^m b_\ell(\lambda x,\lambda)|\leq
C_{k,m}x^{-k}\lambda^{1-\varepsilon(2\ell+3)-m} $$
for all $k,m \in \mathbb{N}_0$.
By the same reasoning we obtain the symbol character (with respect to
differentiation in $x$ and $\lambda$) of the $O$--terms in 
$$ \beta_\ell \sqrt{\lambda x} H_{\ell+1/2}^+(\lambda x)=\alpha_1 \beta_\ell (\lambda x)^{\ell+1}(1+O(\lambda
x))+i\alpha_0 \beta_\ell (\lambda x)^{-\ell}(1+O(\lambda x)) $$
where $\alpha_0, \alpha_1$ are nonzero real constants.
This shows that 
$$ f_+(x,\lambda)=\left (\alpha_1 \beta_\ell (\lambda x)^{\ell+1}
+i\alpha_0
\beta_\ell (\lambda x)^{-\ell} \right
)(1+O_\mathbb{C}(\lambda^{1-\varepsilon(2\ell+3)}))(1+O(\lambda x))$$
where the $O$--terms are of symbol type and the representation is valid for all 
$\lambda \in (0,1)$ and all $x \in
[\lambda^{-1+\varepsilon},\lambda^{-1}]$.
Differentiating this expression with respect to $x$ and using the symbol
character of the $O$--terms we obtain
$$ f_+'(x,\lambda)=\left ((\ell+1) \alpha_1 \beta_\ell \lambda (\lambda x)^{\ell}
-i\ell \alpha_0
\beta_\ell \lambda (\lambda x)^{-\ell-1} \right )
(1+O_\mathbb{C}(\lambda^{1-\varepsilon(2\ell+3)}))(1+O(\lambda x)).$$ 
Evaluation at $x=\lambda^{-1+\varepsilon}$ yields
$$ f_+(\lambda^{-1+\varepsilon},\lambda)=
\left (\alpha_1 \beta_\ell \lambda^{\varepsilon(\ell+1)}
+i\alpha_0
\beta_\ell \lambda^{-\varepsilon \ell} \right
)\left (1+O(\lambda^{\varepsilon})+iO(\lambda^{1-\varepsilon(2\ell+3)})\right ) $$
and 
$$ f_+'(\lambda^{-1+\varepsilon},\lambda)=
\left ((\ell+1)\alpha_1 \beta_\ell \lambda^{1+\varepsilon \ell}
-i\ell \alpha_0
\beta_\ell \lambda^{1-\varepsilon (\ell+1)} \right
)\left (1+O(\lambda^\varepsilon)+iO(\lambda^{1-\varepsilon(2\ell+3)}) \right ) $$
for $\varepsilon>0$ sufficiently small.
Furthermore, by Lemma \ref{lem_pertenergy}, we have
$$ u_0(x,\lambda)=u_0(x)(1+O(x^2 \lambda^2)) 
\mbox{ , } 
u_0'(x,\lambda)=u_0'(x)(1+O(x^2 \lambda^2)) $$
and
$$ u_1(x,\lambda)=u_1(x)(1+O(x\lambda)) \mbox{ , }
u_1'(x,\lambda)=u_1'(x)(1+O(x \lambda)) $$
for all $\lambda \in (0,\lambda_0)$ and all 
$x \in [x_0,\delta \lambda^{-1}]$ where $\lambda_0, \delta>0$ are sufficiently
small and $x_0>0$ is sufficiently large.
Thus, by choosing $\lambda$ sufficiently close to $0$, we
obtain $\lambda^{-1+\varepsilon} \in [x_0,\delta \lambda^{-1}]$ and we can
evaluate the above expressions at $x=\lambda^{-1+\varepsilon}$ which yields
\begin{align*}
u_0(\lambda^{-1+\varepsilon},\lambda)&=(2\ell+1)^{-1}
\lambda^{-\ell-1+\varepsilon(\ell+1)}(1+O(\lambda^{2\varepsilon})), \\
u_0'(\lambda^{-1+\varepsilon},\lambda)&=(\ell+1)(2\ell+1)^{-1}
\lambda^{-\ell+\varepsilon\ell}(1+O(\lambda^{2\varepsilon})), \\
u_1(\lambda^{-1+\varepsilon},\lambda)&=
\lambda^{\ell-\varepsilon \ell}(1+O(\lambda^\varepsilon)), \\
u_1'(\lambda^{-1+\varepsilon},\lambda)&=
-\ell \lambda^{\ell+1-\varepsilon (\ell+1)}(1+O(\lambda^\varepsilon)) 
\end{align*}
by Lemma \ref{lem_zeroenergy} and all $O$--terms are of symbol type.
The claim now follows from a straightforward computation.
\end{proof}

\subsection{The Jost solution $f_-(\cdot,\lambda)$ in the limit $\lambda \to 0+$}
The Jost solution $f_-(\cdot,\lambda)$ satisfies the Volterra 
integral equation
$$ f_-(x,\lambda)=e^{-i \lambda x}-\int_{-\infty}^x
\frac{\sin(\lambda(y-x))}{\lambda}V_{\ell,\sigma}(y)f_-(y,\lambda)dy $$
as can be seen from the definition and the variation of constants formula.
The decay properties of the potential $V_{\ell,\sigma}$ are crucial for the behavior of
$f_\pm$. Since $V_{\ell,\sigma}$ decays exponentially as $x \to -\infty$ (Corollary
\ref{cor_RWasymptotics}), the situation for $f_-$ is much simpler.
In fact, $f_-$ behaves essentially as in the free case $V_{\ell,\sigma}=0$.  

\begin{lemma}
\label{lem_Jost-}
Let $a \in \mathbb{R}$ and $\lambda_0>0$. 
Then the Jost solution $f_-(x,\lambda)=e^{-i\lambda x}m_-(x,\lambda)$ 
exists for all 
$\lambda \in [-\lambda_0, \lambda_0]$ and, for $m \in \mathbb{N}_0$, we have the bounds
$|\partial_\lambda^m m_-(x,\lambda)|\leq C_m$ as well as
$|\partial_\lambda^m m_-'(x,\lambda)|\leq C_m$ for all 
$x \in (-\infty,a]$
and all $\lambda \in [-\lambda_0,\lambda_0]$.
\end{lemma}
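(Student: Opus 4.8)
\emph{Plan.} As the statement suggests, the key point is that for $x\to-\infty$ the potential decays \emph{exponentially}, so in contrast to $f_+$ (where the inverse–square tail is critical at $\lambda=0$) there is no loss whatsoever as $\lambda\to0$. First I would substitute $f_-(x,\lambda)=e^{-i\lambda x}m_-(x,\lambda)$ into the stated integral equation, which, using the elementary identity $\frac{\sin(\lambda(y-x))}{\lambda}e^{i\lambda(x-y)}=\frac{1-e^{2i\lambda(x-y)}}{2i\lambda}$, turns into the Volterra equation
\[
 m_-(x,\lambda)=1+\int_{-\infty}^x K(x,y,\lambda)m_-(y,\lambda)\,dy,\qquad
 K(x,y,\lambda):=-\frac{1-e^{2i\lambda(x-y)}}{2i\lambda}\,V_{\ell,\sigma}(y).
\]
The virtue of this form of the kernel is that $\lambda\mapsto\frac{1-e^{2i\lambda(x-y)}}{2i\lambda}=-(x-y)\int_0^1 e^{2i\lambda(x-y)t}\,dt$ extends to an entire function for every fixed $x,y$, so $K$ is jointly smooth and \emph{in particular has no singularity at $\lambda=0$}. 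From the integral representation one reads off, uniformly for $\lambda\in\mathbb{R}$ and $y\le x$,
\[
 \Bigl|\partial_\lambda^{\,j}\tfrac{1-e^{2i\lambda(x-y)}}{2i\lambda}\Bigr|\le C_j\langle x-y\rangle^{\,j+1},
 \qquad
 \Bigl|\partial_\lambda^{\,j}\partial_x\tfrac{1-e^{2i\lambda(x-y)}}{2i\lambda}\Bigr|
 =\bigl|(2i(x-y))^j e^{2i\lambda(x-y)}\bigr|\le C_j\langle x-y\rangle^{\,j},
\]
so that $|\partial_\lambda^{\,j}K(x,y,\lambda)|\le C_j\langle x-y\rangle^{\,j+1}|V_{\ell,\sigma}(y)|$ and $|\partial_\lambda^{\,j}\partial_x K(x,y,\lambda)|\le C_j\langle x-y\rangle^{\,j}|V_{\ell,\sigma}(y)|$. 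Also $K(x,x,\lambda)=0$.

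Next I would record the consequence of the exponential decay. By Corollary~\ref{cor_RWasymptotics} we have $|V_{\ell,\sigma}(y)|\lesssim e^{y/(2M)}$ as $y\to-\infty$, and since $V_{\ell,\sigma}$ is continuous it is bounded on $[y_1,a]$; hence $|V_{\ell,\sigma}(y)|\le C_a e^{y/(2M)}$ for all $y\le a$, and therefore $\int_{-\infty}^a\langle a-y\rangle^{N}|V_{\ell,\sigma}(y)|\,dy<\infty$ for every $N$. Combined with the kernel bounds above, this yields for every $j\in\mathbb{N}_0$
\[
 \int_{-\infty}^a\sup_{x\in(y,a)}|\partial_\lambda^{\,j}K(x,y,\lambda)|\,dy\le C_{j,a},
 \qquad
 \int_{-\infty}^a\sup_{x\in(y,a)}|\partial_\lambda^{\,j}\partial_x K(x,y,\lambda)|\,dy\le C_{j,a},
\]
uniformly in $\lambda\in\mathbb{R}$; this is the place where the exponential fall--off is decisive, as it absorbs the polynomial growth in $\langle x-y\rangle$ produced by $\lambda$--differentiation.

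With these estimates in hand the rest is the standard Volterra machinery. Lemma~\ref{lem_volterra} (together with the smoothness argument used in the proof of Lemma~\ref{lem_Jost}, cf.\ Lemma~\ref{lem_volterradiffx}) produces a unique smooth solution $m_-(\cdot,\lambda)$ with $\|m_-(\cdot,\lambda)\|_{L^\infty(-\infty,a)}\le e^{C_a}$ for all $\lambda\in[-\lambda_0,\lambda_0]$, including $\lambda=0$; undoing the substitution gives $f_-$, and $|m_-(x,\lambda)-1|\lesssim e^{x/(2M)}\to0$ as $x\to-\infty$ confirms the normalization $f_-(x,\lambda)\sim e^{-i\lambda x}$. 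The bound $|\partial_\lambda^{\,m}m_-|\le C_m$ I would then prove by induction on $m$: applying $\partial_\lambda^{\,m}$ to the Volterra equation and Leibniz's rule, the $j=0$ term is $\int_{-\infty}^x K(x,y,\lambda)\,\partial_\lambda^{\,m}m_-(y,\lambda)\,dy$, so $\partial_\lambda^{\,m}m_-$ itself solves a Volterra equation whose inhomogeneity $\sum_{j\ge1}\binom{m}{j}\int_{-\infty}^x\partial_\lambda^{\,j}K(x,y,\lambda)\,\partial_\lambda^{\,m-j}m_-(y,\lambda)\,dy$ is bounded uniformly in $x,\lambda$ by the estimates above and the inductive bounds on $\partial_\lambda^{\,m-j}m_-$ for $1\le j\le m$; Lemma~\ref{lem_volterra} closes the step. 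Finally, since $K(x,x,\lambda)=0$, differentiating the Volterra equation once in $x$ gives $m_-'(x,\lambda)=\int_{-\infty}^x\partial_x K(x,y,\lambda)m_-(y,\lambda)\,dy$, and applying $\partial_\lambda^{\,m}$, Leibniz, the already established bounds on $\partial_\lambda^{\,k}m_-$, and $\int_{-\infty}^a\sup_x|\partial_\lambda^{\,j}\partial_x K|\,dy<\infty$ yields $|\partial_\lambda^{\,m}m_-'(x,\lambda)|\le C_m$ \emph{directly}, with no further fixed--point argument.

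I do not expect a genuine obstacle here: the only real work is bookkeeping the Leibniz expansions and verifying that every constant is uniform in $\lambda$ across $\lambda=0$. The single structural input that makes this effortless — and that separates $f_-$ sharply from $f_+$ — is the exponential decay of $V_{\ell,\sigma}$ at $-\infty$, which renders all the kernel integrals absolutely convergent with $\lambda$--independent bounds.
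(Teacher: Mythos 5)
Your proposal is correct and follows essentially the same route as the paper: the identical Volterra equation for $m_-$, the kernel bounds $|\partial_\lambda^m K(x,y,\lambda)|\lesssim \langle x-y\rangle^{m+1}|V_{\ell,\sigma}(y)|$ made integrable by the exponential decay of $V_{\ell,\sigma}$ at $-\infty$, and the bound on $\partial_\lambda^m m_-'$ obtained by differentiating the equation in $x$. The only cosmetic difference is that your induction on $m$ reproves the content of Lemma \ref{lem_volterradifflambda}, which the paper simply cites.
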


\begin{proof}
According to Lemma \ref{lem_Jost}, the function $m_-(x,\lambda)=e^{i\lambda
x}f_-(x,\lambda)$ satisfies the Volterra equation
\begin{equation}
\label{eq_proofJost-}
m_-(x,\lambda)=1+\int_{-\infty}^x K(x,y,\lambda)m_-(y,\lambda)dy
\end{equation}
where
$K(x,y,\lambda):=
\frac{1}{2i\lambda}(e^{2i\lambda (y-x)}-1)V_{\ell,\sigma}(y)$.            
Note the bound 
$$ |\partial_\lambda^m K(x,y,\lambda)|\leq C_m |y-x|^{m+1}|V_{\ell,\sigma}(y)|
$$
and thus, Lemma \ref{lem_volterradifflambda} shows
that the solution of Eq.~(\ref{eq_proofJost-}) 
satisfies $\|\partial_\lambda^m m_-(\cdot,\lambda)\|_{L^\infty(-\infty,a)} \leq C_m$
for all $\lambda \in [-\lambda_0,\lambda_0]$ 
since $V_{\ell,\sigma}(y)$ decays exponentially as $y \to -\infty$
(Corollary \ref{cor_RWasymptotics}).
The estimate for the derivative $m_-'(x,\lambda)$ follows by 
differentiating Eq.
(\ref{eq_proofJost-}) and using the bounds for $\partial_\lambda^m
m_-(x,\lambda)$.
\end{proof}

\subsection{Zero energy resonances}
Next we discuss the occurrence of resonances.
First we give the precise definition of a zero energy resonance.

\begin{definition}
\label{def_resonance}
We say that the operator $\mc{H}_{\ell,\sigma}$ has a \emph{zero energy resonance} if
there exists a function $f \in L^\infty(\mathbb{R})$ such that 
$\mc{H}_{\ell,\sigma} f=0$.
\end{definition}
Recall that the equation $\mc{H}_{\ell,\sigma} f=0$ is equivalent to
$$ -\left ( 1-\frac{2M}{r} \right )v''-\frac{2M}{r^2}v'+\left 
( \frac{\ell(\ell+1)}{r^2}+\frac{2M\sigma}{r^3} \right )v=0 $$
where $v(r)=f(x(r))$ (see Eq.~(\ref{eq_RWzero})).
As already mentioned, solutions of this equation can be given in terms of 
special functions and
therefore, we even know the behavior of $f_-(x,0)$ for $x \to \infty$. 
This is
crucial to see whether the operator $\mc{H}_{\ell,\sigma}$ has a zero energy
resonance or not.
As the following lemma shows, no resonances occur for scalar perturbations.
However, in the case of electromagnetic or gravitational perturbations one 
has to require $\ell
\geq 1$ (which we do anyway) or $\ell \geq 2$, respectively, in order to avoid resonances.

\begin{lemma}
\label{lem_nonresonant}
Let $(\ell, \sigma) \in \mathbb{N}_0 \times \{-3,0,1\} 
\backslash \{(0,0),(0,-3),(1,-3)\}$. 
Then the zero energy Jost solution
has the asymptotic behavior $f_-(x,0) \sim cx^{\ell+1}$ for $x \to \infty$
where $c$ is a nonzero constant. 
In particular, there does not exist a function $f \in L^\infty(\mathbb{R})$ that
satisfies $\mc{H}_{\ell,\sigma} f=0$.
\end{lemma}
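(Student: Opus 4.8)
The plan is to solve the zero energy equation $\mathcal{H}_{\ell,\sigma}f=0$ in closed form, recognize $f_-(\cdot,0)$ as the unique (up to a scalar) solution bounded at the horizon, and then read off its growth at spatial infinity from the classical connection formulas for the hypergeometric function.

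First I would pass to $v(r)=f(x(r))$, so that $\mathcal{H}_{\ell,\sigma}f=0$ becomes Eq.~\eqref{eq_RWzero} with $\lambda=0$. Substituting $t:=2M/r\in(0,1)$ and normalizing the behavior at $r=\infty$ by $v=t^{\ell}w$, a direct computation turns the equation into the standard hypergeometric equation
$$ t(1-t)w''+\big[(2\ell+2)-(2\ell+3)t\big]w'-\big[(\ell+1)^2-(1-\sigma)\big]w=0 $$
with parameters $c=2\ell+2$, $a=\ell+1+\sqrt{1-\sigma}$, $b=\ell+1-\sqrt{1-\sigma}$, where $\sqrt{1-\sigma}\in\{0,1,2\}$ for $\sigma\in\{1,0,-3\}$ and $a+b=c$. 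At $t=1$ ($r=2M$) the indicial exponents coincide, so one solution of the original equation is analytic there and the other carries a $\log(1-t)=\log(1-2M/r)$ and hence grows linearly in $x$ as $x\to-\infty$; at $t=0$ ($r=\infty$) the two solutions behave as $v\sim r^{-\ell}$ and $v\sim r^{\ell+1}$. Now $f_-(\cdot,0)$ exists by Lemma~\ref{lem_Jost-}, and its limiting Volterra equation together with the exponential decay of $V_{\ell,\sigma}$ at $-\infty$ (Corollary~\ref{cor_RWasymptotics}) gives $f_-(x,0)\to1$ and $f_-'(x,0)\to0$ as $x\to-\infty$; hence $f_-(\cdot,0)$ is, up to a nonzero constant, the solution analytic at the horizon. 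Writing the equation near $t=1$ in the variable $s=1-t$ — whose hypergeometric third parameter is $a+b+1-c=1$ — identifies this solution as $f_-(x,0)=\mathrm{const}\cdot t^{\ell}\,{}_2F_1(a,b;1;1-t)$, $t=2M/r$.

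It then remains to let $r\to\infty$, i.e.\ $s=1-t\to1^-$. Since $c-a-b=1-(2\ell+2)=-(2\ell+1)$ is a negative integer, the connection formula gives, away from the terminating cases, ${}_2F_1(a,b;1;s)\sim\frac{\Gamma(2\ell+1)}{\Gamma(a)\Gamma(b)}(1-s)^{-(2\ell+1)}$ as $s\to1^-$, all other contributions being strictly less singular. Using $r(x)\sim x$ (Lemma~\ref{lem_r}) this yields $f_-(x,0)\sim c\,x^{\ell+1}$ with $c$ proportional to $\Gamma(2\ell+1)/\big(\Gamma(a)\Gamma(b)\big)$, which vanishes exactly when $\Gamma(a)$ or $\Gamma(b)$ has a pole, i.e.\ when $a$ or $b$ is a non-positive integer. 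Since $a=\ell+1+\sqrt{1-\sigma}\geq1$ this never happens for $a$, whereas $b=\ell+1-\sqrt{1-\sigma}$ equals $\ell+1$, $\ell$, $\ell-1$ for $\sigma=1,0,-3$, which is a non-positive integer exactly for $(\ell,\sigma)\in\{(0,0),(0,-3),(1,-3)\}$. Under the hypothesis of the lemma we therefore obtain $f_-(x,0)\sim c\,x^{\ell+1}$ with $c\neq0$. (In the three excluded cases ${}_2F_1(a,b;1;s)$ terminates as a polynomial in $s$, so $f_-(x,0)=t^{\ell}\,O(1)=O(r^{-\ell})$ is bounded — precisely the zero energy resonance.)

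Finally, any $f\in L^\infty(\mathbb{R})$ solving $\mathcal{H}_{\ell,\sigma}f=0$ is in particular bounded as $x\to-\infty$, hence a scalar multiple of $f_-(\cdot,0)$ (the second solution being unbounded at the horizon as noted above); since $f_-(x,0)$ grows like $|x|^{\ell+1}$, such an $f$ must vanish, so $\mathcal{H}_{\ell,\sigma}$ has no zero energy resonance in the sense of Definition~\ref{def_resonance}. I expect the only genuine work to be the bookkeeping for the degenerate, integer-parameter hypergeometric connection formula — in particular confirming that the $(1-s)^{-(2\ell+1)}$ term dominates the $\log(1-s)$ and $O(1)$ corrections, and treating the terminating series separately — while the reduction to hypergeometric form and everything else are routine.
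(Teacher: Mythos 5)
Your argument is correct, but it finishes differently from the paper. Both proofs pass to the $r$--variable, reduce $\mc{H}_{\ell,\sigma}f=0$ to a hypergeometric equation, and identify $f_-(\cdot,0)$ (via $f_-(x,0)\to 1$ as $x\to-\infty$ and the coincident indices $(0,0)$ at the horizon, where the second solution carries a logarithm and grows like $|x|$) with the unique horizon--regular solution. The paper, however, uses the substitution $u(z)=z^{-(1+s)}v(2Mz)$, $z=r/(2M)$, $s=\sqrt{1-\sigma}$, so that in the non-resonant cases the relevant solution is the \emph{terminating} series ${}_2F_1(-\ell+s,\ell+1+s;1+2s;z)$, a polynomial of degree $\ell-s$; the growth $r^{\ell+1}$ at infinity is then immediate and no connection formula is needed, the excluded triples being exactly those where termination fails. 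You instead write the horizon--regular solution as $t^{\ell}\,{}_2F_1(a,b;1;1-t)$ with $t=2M/r$, $a=\ell+1+s$, $b=\ell+1-s$, and invoke the asymptotics at unit argument in the degenerate case $1-a-b=-(2\ell+1)$; the non-resonance condition reappears as nonvanishing of $\Gamma(2\ell+1)/(\Gamma(a)\Gamma(b))$, which transparently explains why precisely $(0,0),(0,-3),(1,-3)$ are resonant (the series terminates and stays bounded), at the cost of the integer-parameter bookkeeping you defer. That bookkeeping does go through, and can in fact be bypassed: since $1-b=s-\ell$ is a non-positive integer in the non-resonant range, Euler's transformation gives ${}_2F_1(a,b;1;1-t)=t^{-(2\ell+1)}\,{}_2F_1(1-a,1-b;1;1-t)$, where the last factor is a terminating series whose value at argument $1$ equals $(\ell+1+s)_{\ell-s}/(\ell-s)!\neq 0$ by Chu--Vandermonde; thus no logarithmic corrections arise and the leading coefficient is manifestly nonzero --- which is essentially the paper's termination argument resurfacing at the other endpoint. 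Your reduction (the parameters $a,b,c=2\ell+2$), the identification of the excluded cases with poles of $\Gamma(b)$, and the concluding $L^\infty$ step all check out.
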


\begin{proof}
As already mentioned, the equation $\mc{H}_{\ell,\sigma} f=0$ is equivalent to 
$$ -\left ( 1-\frac{2M}{r} \right )v''-\frac{2M}{r^2}v'+\left 
( \frac{\ell(\ell+1)}{r^2}+\frac{2M\sigma}{r^3} \right )v=0 $$
where $v(r):=f(x(r))$.
Set $u(z):=z^{-(1+s)}v(2Mz)$ where $s:=\sqrt{1-\sigma}$ (note that $s$ is the spin
of the perturbing field, i.e.,  $s \in \{0,1,2\}$). 
Then the above equation is equivalent to
$$ z(1-z)u''+[c-(a+b+1)z]u'-abu=0 $$
where $a=-\ell+s$, $b=\ell+1+s$ and $c=1+2s$.
This is the hypergeometric differential equation and we have the 
solution $u(z)={}_2 F_1(a,b; c; z)={}_2F_1(-\ell+s, \ell+1+s; 1+2s; z)$ 
which reduces to a polynomial of order $\ell-s$ provided that $\ell-s
\in \mathbb{N}_0$ (see \cite{abramowitz}) and this is equivalent to 
$(\ell,\sigma) \notin \{(0,0),(0,-3),(1,-3)\}$.
The Frobenius indices for the hypergeometric differential equation at the
regular singular point $z=1$ (which corresponds to $r=2M$ and hence, $x \to
-\infty$) are $(0,c-a-b)=(0,0)$ (see \cite{abramowitz}) which shows that $u(1)
\not= 0$ and hence, there exists a nonzero constant $c_0$ such that
$f_-(x,0)=c_0 r(x)^{1+s}u(\frac{r(x)}{2M})$.
Since $u$ is a polynomial of order $\ell-s$, we obtain $f_-(x,0) \sim c_1
x^{\ell+1}$ for $x \to \infty$ by Lemma \ref{lem_r} where $c_1$ is a 
nonzero constant.
\end{proof}

\begin{remark}
Note that Lemma \ref{lem_nonresonant} is sharp in the sense that the operator
$\mc{H}_{\ell,\sigma}$ does indeed have zero energy resonances if $(\ell,\sigma) \in
\{(0,0),(0,-3),(1,-3)\}$.
The resonance functions $f_{\ell,\sigma}$ are given by $f_{0,0}(x)=1$, 
$f_{0,-3}(x)=1-\frac{3M}{r(x)}$ and
$f_{1,-3}(x)=\frac{1}{r(x)}$ as can be checked immediately.
\end{remark}

\begin{remark}
In the scalar case ($\sigma=1$), the hypergeometric function in the proof of
Lemma \ref{lem_nonresonant} reduces to the Legendre polynomial $P_\ell$ 
and we have
$$ f_-(x,0)=\frac{r(x)}{2M}P_\ell \left (\frac{r(x)}{M}-1 \right). $$ 
\end{remark}

\subsection{Matching with $f_-(\cdot,\lambda)$}

The above results are sufficient to match the Jost solution $f_-(\cdot,\lambda)$
to the solutions $u_j(\cdot,\lambda)$ obtained in Lemma
\ref{lem_pertenergy} by perturbing in energy.
In what follows we will always assume that we are in the nonresonant regime,
i.e.,  
$$ (\ell,\sigma)\notin \{(0,0),(0,-3),(1,-3)\}. $$

\begin{lemma}
\label{lem_c-}
The Wronskians $c_j^-(\lambda):=W(f_-(\cdot,\lambda),u_j(\cdot,\lambda))$
for $j=0,1$ are of the form \footnote{See Definition \ref{def_O}.}
$$ c_j^-(\lambda)=O_{2\ell}(1)+iO_{2\ell+1}(\lambda) $$
for $\lambda \in (0,\lambda_0)$ where $\lambda_0>0$ is a constant.
Furthermore, we have $c_1^-(0)\not=0$.
\end{lemma}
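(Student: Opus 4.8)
The plan is to compute the Wronskians $c_j^-(\lambda) = W(f_-(\cdot,\lambda), u_j(\cdot,\lambda))$ by evaluating them at a fixed point, say $x = x_0$, where both $f_-(\cdot,\lambda)$ and $u_j(\cdot,\lambda)$ are under good control. The Wronskian is independent of $x$, so $c_j^-(\lambda) = f_-(x_0,\lambda) u_j'(x_0,\lambda) - f_-'(x_0,\lambda) u_j(x_0,\lambda)$. The point is that both factors on the right have been analyzed: Lemma~\ref{lem_Jost-} gives $f_-(x_0,\lambda) = e^{-i\lambda x_0} m_-(x_0,\lambda)$ with $m_-(x_0,\cdot)$ and $m_-'(x_0,\cdot)$ smooth with all $\lambda$-derivatives bounded on $[-\lambda_0,\lambda_0]$, while Lemma~\ref{lem_pertenergyest2}, Lemma~\ref{lem_u1} and Corollary~\ref{cor_u1} give the symbol-type bounds on $u_j(x_0,\lambda)$ and $u_j'(x_0,\lambda)$ that are exactly encoded by the $O_{2\ell}$ / $O_{2\ell+1}$ notation of Definition~\ref{def_O}.

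First I would split the zero-energy Jost factor into its real and imaginary parts as a function of $\lambda$. Since $V_{\ell,\sigma}$ is real, $\overline{f_-(x,\lambda)} = f_-(x,-\lambda)$ for real $\lambda$, and expanding $e^{-i\lambda x_0} m_-(x_0,\lambda)$ one sees that the real part of $f_-(x_0,\lambda)$ is an even function of $\lambda$ that is smooth and bounded together with all derivatives (hence $O_{2\ell}(1)$ trivially — indeed $O_0(1)$, which implies $O_{2\ell}(1)$ once we check the vanishing conditions), while the imaginary part is an odd smooth function of $\lambda$, hence $\lambda$ times an even smooth bounded function, i.e.\ of the form $O_{2\ell+1}(\lambda)$ after checking the derivative-vanishing conditions at $\lambda=0$ in Definition~\ref{def_O}; the same applies to $f_-'(x_0,\lambda)$. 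Crucially, the zero-energy value $f_-(x_0,0)$ is \emph{real and nonzero}: by Lemma~\ref{lem_nonresonant}, $f_-(\cdot,0)$ is a genuine (real) nonresonant solution with $f_-(x,0)\sim c x^{\ell+1}$, so in particular it cannot vanish identically, and $x_0$ may be chosen (enlarging it if necessary) so that $f_-(x_0,0)\neq 0$. Then I would multiply out: the product of a real-part factor ($O_{2\ell}(1)$-type) with a $u_j$-factor of the same type contributes to the real, even part; the cross terms where exactly one factor carries the odd imaginary part of $f_-$ contribute $i$ times an $O_{2\ell+1}(\lambda)$-type quantity. One must verify that the class $O_{2N}(1)$ is closed under products and that $O_{2N}(1)\cdot O_{2N+1}(\lambda)$-type products land in $O_{2N+1}(\lambda)$; this is a routine bookkeeping check on the defining conditions (the derivative bounds by the Leibniz rule, the vanishing conditions by parity), analogous to the manipulations already performed in Lemma~\ref{lem_u1}. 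Here I would take $u_1$'s worse-than-$u_0$ behavior into account: for $u_1$ the $O_{2\ell}$ bound degrades to $O(\lambda^{-m})$ only for derivatives of order $>2\ell$, which is precisely what the symbol condition~(3) in the definition of $O_{2\ell}$ permits, so both $c_0^-$ and $c_1^-$ end up of the stated form.

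For the final claim $c_1^-(0)\neq 0$, I would evaluate at $\lambda=0$: $c_1^-(0) = W(f_-(\cdot,0), u_1(\cdot,0)) = W(f_-(\cdot,0), u_1)$. By Lemma~\ref{lem_nonresonant}, $f_-(\cdot,0)$ grows like $x^{\ell+1}$ as $x\to\infty$, whereas $u_1$ decays like $x^{-\ell}$ by Lemma~\ref{lem_zeroenergy}; two solutions of the same second-order ODE with such incompatible asymptotics are linearly independent, so their Wronskian is nonzero. (Equivalently, $f_-(\cdot,0)$ must be a nonzero multiple of $u_0$ — the unique growing solution — plus possibly a multiple of $u_1$; its $x^{\ell+1}$ leading behavior forces the $u_0$-coefficient to be nonzero, and $W(u_0,u_1)=-1$ then gives $W(f_-(\cdot,0),u_1)\neq 0$.)

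The main obstacle I anticipate is not any single hard estimate but the careful verification that the parity/vanishing conditions in Definition~\ref{def_O} are genuinely inherited through the product $f_-(x_0,\lambda)\,u_j'(x_0,\lambda) - f_-'(x_0,\lambda)\,u_j(x_0,\lambda)$ — in particular, separating cleanly which combinations are even in $\lambda$ (going into $O_{2\ell}(1)$) and which are odd (going into $iO_{2\ell+1}(\lambda)$), and making sure the real/imaginary split of $f_-(x_0,\lambda)$ interacts correctly with the fact that $u_j(x_0,\lambda)$ is real-valued for $\lambda$ real. Once the algebra of the symbol classes is set up (closure under products, behavior under multiplication by the real smooth even/odd functions coming from $e^{\mp i\lambda x_0}$), the result falls out; the nondegeneracy $c_1^-(0)\neq 0$ is then an immediate consequence of the nonresonance Lemma~\ref{lem_nonresonant}.
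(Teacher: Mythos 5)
Your argument is correct and follows the paper's proof essentially verbatim: evaluate the Wronskian at $x_0$, use the symmetry $f_-(x,\lambda)=\overline{f_-(x,-\lambda)}$ together with Lemma \ref{lem_Jost-} to write $f_-(x_0,\lambda)$, $f_-'(x_0,\lambda)$ as $O_{2\ell}(1)+iO_{2\ell+1}(\lambda)$, combine with Corollary \ref{cor_u1} (for $j=0$ the paper simply invokes that $u_0(x_0,\lambda)=u_0(x_0)$ and $u_0'(x_0,\lambda)=u_0'(x_0)$ are constant in $\lambda$ by construction, which is cleaner than appealing to Lemma \ref{lem_pertenergyest2}, whose bounds do not cover $\partial_x a_0$), and obtain $c_1^-(0)\neq 0$ from linear independence with $u_1$ via Lemma \ref{lem_nonresonant}, exactly as the paper does. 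The only superfluous element is your claim that $f_-(x_0,0)\neq 0$ (with the suggestion to enlarge $x_0$): it is never needed, and $x_0$ should not be moved since it is fixed by Lemma \ref{lem_pertenergy}.
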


\begin{proof}
According to Lemma \ref{lem_Jost-}, $f_-(x,\lambda)$ is smooth in $\lambda$
around $\lambda=0$ and by definition we have
$f_-(x,\lambda)=\overline{f_-(x,-\lambda)}$ for $\lambda \in \mathbb{R}$.
In particular, this implies $\mathrm{Re}f_-(x_0,\lambda)=O_{2\ell}(1)$
and $\mathrm{Im}f_-(x_0,\lambda)=O_{2\ell+1}(\lambda)$ where $x_0>0$ is the
constant from Lemma \ref{lem_pertenergy}.
Repeating these arguments for the derivative $f_-'(x,\lambda)$, we similarly 
obtain
$\mathrm{Re}f_-'(x_0,\lambda)=O_{2\ell}(1)$ and
$\mathrm{Im}f_-'(x_0,\lambda)=O_{2\ell+1}(\lambda)$.
By construction (cf. Lemma \ref{lem_pertenergy}), we have
$u_0(x_0,\lambda)=u_0(x_0)$, $u_0'(x_0,\lambda)=u_0'(x_0)$.
Combining this with Corollary \ref{cor_u1} we obtain
$u_j(x_0,\lambda)=O_{2\ell}(1)$ and $u_j'(x_0,\lambda)=O_{2\ell}(1)$ 
for $j=0,1$.
This shows
$$
c_j^-(\lambda)=O_{2\ell}(1)(O_{2\ell}(1)+iO_{2\ell+1}(\lambda))
=O_{2\ell}(1)+iO_{2\ell+1}(\lambda). $$
Suppose $c_1^-(0)=W(f_-(\cdot,0),u_1(\cdot))=0$. This is equivalent to
$f_-(\cdot,0)$ and $u_1$ being linearly dependent which implies that
$f_-(x,0)=O(x^{-\ell})$ for $x \to \infty$, a contradiction to Lemma
\ref{lem_nonresonant}.
\end{proof}

\section{The spectral measure at zero energy}

Recall that we are interested in estimating the integral
$$ \int_0^\infty \lambda \cos(t \lambda)
\mathrm{Im}\left [ G_{\ell,\sigma}(x,x',\lambda) \right ] 
\langle x \rangle^{-\alpha} \langle x' 
\rangle^{-\alpha}
d\lambda $$
and thus, we have to study the expressions
$$ \mathrm{Im}\frac{f_-(x',\lambda)f_+(x,\lambda)}
{W(f_-(\cdot,\lambda),f_+(\cdot,\lambda))}. $$
In this section we obtain estimates 
for $\lambda \to 0+$ and, 
as will be clear afterwards, the decay properties for solutions of the
Regge--Wheeler equation are completely determined by this asymptotic behavior.
In view of this, the following lemma is in fact the central result of our work.
 
We have to consider different ranges of $x$, $x'$ and $\lambda$ separately 
and we
start with estimates for $|\lambda|$, $|x\lambda|$ and $|x' \lambda|$ small which
turns out to be the most important case.
For all $\lambda \in (0,\lambda_0)$ with a sufficiently small constant
$\lambda_0>0$, we have the representation
$$
f_\pm(x,\lambda)=-c_1^\pm(\lambda)u_0(x,\lambda)
+c_0^\pm(\lambda)u_1(x,\lambda) $$
where \footnote{From now on we write
$W(f_\pm,u_j)(\lambda)$ instead of
$W(f_\pm(\cdot,\lambda),u_j(\cdot,\lambda))$.}
$c_j^\pm(\lambda)=W(f_\pm,u_j)(\lambda)$ 
(see Lemmas \ref{lem_c+} and \ref{lem_c-}). 
Note carefully the slightly inconvenient fact that $c_0^\pm(\lambda)$ is the 
coefficient
of $u_1(\cdot,\lambda)$ and not $u_0(\cdot,\lambda)$!
It follows that 
$$W(f_-,f_+)(\lambda)=c_1^-(\lambda)c_0^+(\lambda)-
c_0^-(\lambda)c_1^+(\lambda).$$
We abbreviate
$$ A_{jk}(\lambda):=\mathrm{Im} \left [
\frac{c_j^-(\lambda)c_k^+(\lambda)}{
c_1^-(\lambda)c_0^+(\lambda)-
c_0^-(\lambda)c_1^+(\lambda)} \right ] $$
and, since $u_j(\cdot,\lambda)$ are real--valued,  
we have to study expressions of the form
$$ A_{00}(\lambda)u_1(x,\lambda)u_1(x',\lambda) \mbox{, }
A_{10}(\lambda)u_0(x,\lambda)u_1(x',\lambda) \mbox{, etc. } $$

\begin{lemma}
\label{lem_Ajk}
The function $A_{jk}$ is of the form
$$ A_{00}(\lambda)=O_{2\ell+1}(\lambda) \mbox{ and }
A_{jk}(\lambda)=O(\lambda^{2\ell+1}) \mbox{ if } j+k \geq 1 $$
for $\lambda \in (0,\lambda_0)$ where $\lambda_0>0$ is a sufficiently small
constant and the $O$--term behaves like a symbol.
\end{lemma}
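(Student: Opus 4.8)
The plan is to insert the asymptotics of Lemmas~\ref{lem_c+} and \ref{lem_c-} into $A_{jk}$, cancel a common nonvanishing prefactor, and then exploit that the factors of $i$ in those asymptotics are placed so that the potentially unbounded error terms end up in real parts and are therefore annihilated by $\mathrm{Im}$. Write the asymptotics of Lemma~\ref{lem_c+} as $c_0^+(\lambda)=i\alpha_0\beta_\ell\lambda^{-\ell}E_0^+(\lambda)$ and $c_1^+(\lambda)=\alpha_1\beta_\ell\lambda^{\ell+1}E_1^+(\lambda)$, where $E_0^+=1+O(\lambda^\varepsilon)+iO(\lambda^{(2\ell+2)\varepsilon})$ and $E_1^+=1+O(\lambda^\varepsilon)+iO(\lambda^{-2\ell\varepsilon})$ with all $O$--terms real symbols, and recall $c_j^-(\lambda)=O_{2\ell}(1)+iO_{2\ell+1}(\lambda)$ with $c_1^-(0)\neq 0$. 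A direct computation then gives $W(f_-,f_+)=i\alpha_0\beta_\ell\lambda^{-\ell}Q$ and $c_j^-c_k^+=i\alpha_0\beta_\ell\lambda^{-\ell}P_{jk}$, where
$$ Q=c_1^-E_0^+ +\tfrac{i\alpha_1}{\alpha_0}\lambda^{2\ell+1}c_0^-E_1^+,\qquad P_{00}=c_0^-E_0^+,\qquad P_{10}=c_1^-E_0^+, $$
and $P_{01},P_{11}$ each carry an explicit factor $\lambda^{2\ell+1}$ together with a factor of $i$. The prefactor cancels, so $A_{jk}=\mathrm{Im}(P_{jk}/Q)$. By Lemma~\ref{lem_Wronskian} we have $W(f_-,f_+)(\lambda)\neq 0$ for $\lambda>0$, and $Q(0)=c_1^-(0)\neq 0$; hence after shrinking $\lambda_0$ the function $Q$ is bounded away from $0$ on $[0,\lambda_0]$ and $1/Q$ is a symbol by Lemma~\ref{lem_symbolinverse}.

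For $j+k\geq 1$ I would use the explicit $\lambda^{2\ell+1}$ and factor of $i$ in $P_{jk}$. When $k=1$ this gives, up to a nonzero real factor, $A_{jk}=\lambda^{2\ell+1}\,\mathrm{Re}(c_j^-E_1^+/Q)$; when $(j,k)=(1,0)$ one writes $P_{10}/Q=1-\eta/(1+\eta)$ with $\eta=\tfrac{i\alpha_1}{\alpha_0}\lambda^{2\ell+1}(c_0^-E_1^+)/(c_1^-E_0^+)\to 0$, so $A_{10}=-\mathrm{Im}\,\eta+O(|\eta|^2)$. In either case it remains to bound the accompanying real part near $\lambda=0$. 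The only unbounded ingredient is the $O(\lambda^{-2\ell\varepsilon})$ in $\mathrm{Im}\,E_1^+$, but because of the leading $i$ it enters $\mathrm{Im}\,Q$ only against $\mathrm{Im}\,c_0^-=O(\lambda)$, so that $\mathrm{Im}\,Q=O(\lambda^{(2\ell+2)\varepsilon})$; the worst cross term is then $\mathrm{Im}(c_0^-E_1^+)\,\mathrm{Im}\,Q=O(\lambda^{-2\ell\varepsilon})\,O(\lambda^{(2\ell+2)\varepsilon})=O(\lambda^{2\varepsilon})$, which is harmless, and likewise $O(|\eta|^2)=O(\lambda^{4\ell+2-4\ell\varepsilon})\lesssim \lambda^{2\ell+1}$ for $\varepsilon$ small. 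This yields $A_{jk}=O(\lambda^{2\ell+1})$ of symbol type.

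For the diagonal case I would divide numerator and denominator of $P_{00}/Q$ by $E_0^+$, obtaining $A_{00}=\mathrm{Im}\,\frac{c_0^-}{c_1^-+\mu}$ with $\mu=\tfrac{i\alpha_1}{\alpha_0}\lambda^{2\ell+1}c_0^-E_1^+/E_0^+$, and split $\frac{c_0^-}{c_1^-+\mu}=\frac{c_0^-}{c_1^-}-\frac{c_0^-\mu/(c_1^-)^2}{1+\mu/c_1^-}$. Using $c_j^-=O_{2\ell}(1)+iO_{2\ell+1}(\lambda)$ together with the closure of the classes $O_{2N},O_{2N+1}$ under sums, products and reciprocals of nonvanishing elements (a short preliminary I would establish from Definition~\ref{def_O} and Lemma~\ref{lem_symbolinverse}), one gets $c_0^-/c_1^-=O_{2\ell}(1)+iO_{2\ell+1}(\lambda)$, so $\mathrm{Im}(c_0^-/c_1^-)=O_{2\ell+1}(\lambda)$. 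For the remaining term the explicit $\lambda^{2\ell+1}$ and the fact that the leading $i$ in $\mu$ again swaps real and imaginary parts (so that the $O(\lambda^{-2\ell\varepsilon})$ only contaminates a real part) give $\mathrm{Im}\big(\tfrac{c_0^-\mu/(c_1^-)^2}{1+\mu/c_1^-}\big)=O(\lambda^{2\ell+1})$; since a symbolic $O(\lambda^{2\ell+1})$ lies in $O_{2\ell+1}(\lambda)$ and the classes are additive, $A_{00}=O_{2\ell+1}(\lambda)$.

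The main obstacle is bookkeeping: one must push the $O_{2N}(1)/O_{2N+1}(\lambda)$ structure and ordinary symbol bounds through the products, quotients and geometric-series expansions above, keeping track of which errors are real and which are imaginary and verifying that the loss $\lambda^{-2\ell\varepsilon}$ always meets a compensating vanishing factor. Conceptually, though, the point is simple: once the universal prefactor $i\alpha_0\beta_\ell\lambda^{-\ell}$ is removed, the worst terms live in real parts and are killed by $\mathrm{Im}$, which is exactly why $A_{jk}$ with $j+k\geq 1$ gains the full factor $\lambda^{2\ell+1}$ while $A_{00}$ is only ``odd'' to order $2\ell+1$.
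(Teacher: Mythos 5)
Your argument is correct and follows essentially the same route as the paper: insert the asymptotics of Lemmas \ref{lem_c+} and \ref{lem_c-}, normalize away the common $\lambda^{-\ell}$ prefactor (the paper does this by dividing by $c_1^-c_0^+$ or using the Wronskian asymptotics, you by factoring out $i\alpha_0\beta_\ell\lambda^{-\ell}$), and then let $\mathrm{Im}$ kill the real parts where the $O(\lambda^{-2\ell\varepsilon})$ loss sits, reducing $A_{00}$ to $\mathrm{Im}(c_0^-/c_1^-)=O_{2\ell+1}(\lambda)$ plus a symbol-type $O(\lambda^{2\ell+1})$ and giving the explicit $\lambda^{2\ell+1}$ gain for $j+k\geq 1$. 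The bookkeeping you defer (closure of the $O_{2N}$, $O_{2N+1}$ classes and symbol bounds under products and reciprocals via Lemma \ref{lem_symbolinverse}) is exactly what the paper also relies on.
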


\begin{proof}
We have to distinguish four cases.
\begin{enumerate}
\item For $A_{00}$ we write
$$
A_{00}(\lambda)=\mathrm{Im}\frac{\frac{c_0^-}{c_1^-}(\lambda)}
{1-\frac{c_0^-}{c_1^-}
\frac{c_1^+}{c_0^+}(\lambda)}. $$
According to Lemma \ref{lem_c-} we have 
$$ \frac{c_0^-}{c_1^-}(\lambda)=
\frac{c_0^-(\lambda)\overline{c_1^-(\lambda)}}{|c_1^-(\lambda)|^2}=
\frac{O_{2\ell}(1)+iO_{2\ell+1}(\lambda)}{|c_1^-(\lambda)|^2} $$ and, since
$|c_1^-(\lambda)|^2=O_{2\ell}(1)+O_{2\ell+2}(\lambda^2)=O_{2\ell}(1)$ as well as
$|c_1^-(0)|\not=0$, we infer
$\frac{c_0^-}{c_1^-}(\lambda)=O_{2\ell}(1)+iO_{2\ell+1}(\lambda)$ with the help
of Lemma \ref{lem_symbolinverse}.
Furthermore, Lemma \ref{lem_c+} and Lemma \ref{lem_symbolinverse} yield
$$\frac{c_1^+}{c_0^+}(\lambda)=-i\frac{\alpha_1}{\alpha_0}\lambda^{2\ell+1}
(1+O(\lambda^\varepsilon)+iO(\lambda^{-2\ell
\varepsilon}))=O(\lambda^{2\ell+1-2\ell\varepsilon})+iO(\lambda^{2\ell+1})$$
where the $O$--terms are of symbol type.
Applying Lemma \ref{lem_symbolinverse} again, we conclude
\begin{align*}
A_{00}(\lambda)&=\mathrm{Im}\frac{(O_{2\ell}(1)+iO_{2\ell+1}(\lambda))(
1+O(\lambda^{2\ell+1-2\ell\varepsilon})+iO(\lambda^{2\ell+1}))}{1+
O(\lambda^{2\ell+1-2\ell\varepsilon})} \\
&=O_{2\ell+1}(\lambda)+O(\lambda^{2\ell+1}) 
\end{align*}
for a sufficiently small $\varepsilon$ and the $O$--term is of symbol type.

\item For $A_{10}$ we use the representation
\begin{align*}
A_{10}(\lambda)&=\mathrm{Im}\frac{1}{1-\frac{c_0^-}{c_1^-}
\frac{c_1^+}{c_0^+}(\lambda)}=\mathrm{Im}\frac{1}
{1+O(\lambda^{2\ell+1-2\ell\varepsilon})+iO(\lambda^{2\ell+1})} \\
&=\mathrm{Im}\frac{1+O(\lambda^{2\ell+1-2\ell\varepsilon})+iO(\lambda^{2\ell+1})}{1+
O(\lambda^{2\ell+1-2\ell\varepsilon})}=O(\lambda^{2\ell+1})
\end{align*}
where all $O$--terms are of symbol type (use Lemma \ref{lem_symbolinverse}).

\item In order to estimate $A_{01}$ first note that
$$ W(f_-,f_+)(\lambda)=c_1^-(\lambda)c_0^+(\lambda)-c_0^-(\lambda)c_1^+(\lambda)
=i\alpha_0 \beta_\ell c \lambda^{-\ell}
(1+O(\lambda^\varepsilon)+iO(\lambda^{\varepsilon(2\ell+2)})) $$
as follows straightforward from Lemmas \ref{lem_c+} and \ref{lem_c-}
where $c$ is a nonzero real constant and all $O$--terms are of symbol type.
However, this implies
\begin{align*}
A_{01}(\lambda)&=\mathrm{Im}\frac{i\frac{\alpha_1}{\alpha_0 c}\lambda^{2\ell+1}
(O(1)+iO(\lambda))(1+O(\lambda^\varepsilon)+iO(\lambda^{-2\ell \varepsilon}))}
{1+O(\lambda^\varepsilon)+iO(\lambda^{\varepsilon(2\ell+2)})}\\
&=\mathrm{Im}\frac{i\frac{\alpha_1}{\alpha_0 c}\lambda^{2\ell+1}(O(1)+O(\lambda^\varepsilon)
+iO(\lambda^{-2\ell\varepsilon}))(1+O(\lambda^\varepsilon)
+iO(\lambda^{\varepsilon(2\ell+2)}))}{1+O(\lambda^\varepsilon)} \\
&=\frac{O(\lambda^{2\ell+1})}{1+O(\lambda^\varepsilon)}=O(\lambda^{2\ell+1})
\end{align*}
where all $O$--terms are of symbol type (see Lemma \ref{lem_symbolinverse}).
\item Finally, for $A_{11}$ we proceed exactly as above and obtain
$$ A_{11}(\lambda)=\mathrm{Im}\frac{i\frac{\alpha_1}{\alpha_0 c}\lambda^{2\ell+1}
(O(1)+iO(\lambda))(1+O(\lambda^\varepsilon)+iO(\lambda^{-2\ell \varepsilon}))}
{1+O(\lambda^\varepsilon)+iO(\lambda^{\varepsilon(2\ell+2)})}
=O(\lambda^{2\ell+1}) $$
where the $O$--term behaves like a symbol.
\end{enumerate}
\end{proof}

\begin{remark}
The fact that $A_{00}(\lambda)$ is somewhat exceptional is a direct consequence
of the asymmetric decay properties of the Regge--Wheeler potential. This
phenomenon is not present in \cite{schlag1} or \cite{schlag2}.
\end{remark}

\section{Oscillatory integral estimates for small energies}
\label{sec_osc}
In this section we obtain bounds for the oscillatory integrals 
that describe the
time evolution of solutions to the Regge--Wheeler equation.
We distinguish different regimes, depending on the ranges of $x$, $x'$ and 
$\lambda$ and in this section we only consider the case $|\lambda|$ small.
As already mentioned, the most important contribution comes from
the regime $|x\lambda|$ and
$|x'\lambda|$ small which yields the decay rates stated in Theorem
\ref{thm_main}.
The remaining cases can be treated
very similar to \cite{schlag2}, 
however, for the sake of completeness we give explicit proofs for all of the
following statements. 

\subsection{Estimates for $|x \lambda|$ and $
|x' \lambda|$ small}
 
We will need the following result on oscillatory integrals.

\begin{lemma}
\label{lem_osc}
For an $N \in \mathbb{N}_0$ let $\omega(\lambda)=O_{2N+1}(\lambda)$ and
suppose there exists a constant $\lambda_0>0$ such that 
$\omega(\lambda)=0$ for all $\lambda \geq \lambda_0$.
Then we have the estimates
\begin{align*} 
\left | \int_0^\infty \lambda \cos(t \lambda)\omega(\lambda)d\lambda 
\right | & \leq C(\omega)
\langle t \rangle^{-(2N+3)} \\
\left | \int_0^\infty \sin(t \lambda)\omega(\lambda)d\lambda 
\right |  & \leq C(\omega)
\langle t \rangle^{-(2N+2)} 
\end{align*}
for all $t \geq 0$ where $C(\omega)$ can be estimated as
$$ C(\omega)\leq C \max \left \{ \|\omega\|_{L^\infty(0,\infty)}, \:
\sup_{\lambda>0}|\lambda^j
\omega^{(2N+1+j)}(\lambda)|: j=1,2,3 \right \} $$
for an absolute constant $C>0$.
\end{lemma}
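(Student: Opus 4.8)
The plan is to prove both inequalities by repeated integration by parts, transferring all derivatives onto $\omega$ and gaining a factor $t^{-1}$ at each step; the essential point is that the endpoint contributions at $\lambda=0$ vanish for exactly as many steps as we need, and this is where condition~$(4)$ of Definition~\ref{def_O} is used. For the cosine integral I would set $\tilde\omega(\lambda):=\lambda\omega(\lambda)$, so that $\tilde\omega^{(m)}(\lambda)=\lambda\omega^{(m)}(\lambda)+m\,\omega^{(m-1)}(\lambda)$ and hence $\tilde\omega^{(m)}(0)=m\,\omega^{(m-1)}(0)$ whenever $m\le 2N+1$. Writing $C_k$ for the $k$-th antiderivative of $\cos(t\,\cdot\,)$ (which carries a factor $t^{-k}$ and satisfies $C_k(0)=0$ for odd $k$), integrate $\int_0^\infty\cos(t\lambda)\tilde\omega(\lambda)\,d\lambda$ by parts $2N+3$ times: the boundary terms at $\lambda\ge\lambda_0$ vanish by compact support, while the term produced on passing from derivative level $m-1$ to $m$ is a multiple of $C_m(0)\,\tilde\omega^{(m-1)}(0)$, which is nonzero only for even $m$, and then $\tilde\omega^{(m-1)}(0)=(m-1)\omega^{(m-2)}(0)=0$ because $m-2$ is even and $\le 2N$. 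The final step $m=2N+3$ is harmless since $C_{2N+3}(0)=0$. One is left with
\begin{equation*}
\int_0^\infty\lambda\cos(t\lambda)\omega(\lambda)\,d\lambda=\pm\, t^{-(2N+3)}\int_0^\infty\sin(t\lambda)\,\tilde\omega^{(2N+3)}(\lambda)\,d\lambda,
\end{equation*}
and an entirely analogous computation with $\omega$ in place of $\tilde\omega$ and $2N+2$ integrations by parts gives $\int_0^\infty\sin(t\lambda)\omega(\lambda)\,d\lambda=\pm\,t^{-(2N+2)}\int_0^\infty\sin(t\lambda)\,\omega^{(2N+2)}(\lambda)\,d\lambda$.

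By property~$(3)$ of Definition~\ref{def_O} we have $|\omega^{(2N+2)}(\lambda)|\lesssim\lambda^{-1}$ and $|\tilde\omega^{(2N+3)}(\lambda)|\lesssim\lambda^{-1}$, which is not integrable near $0$, so the remaining integral $\int_0^\infty\sin(t\lambda)\Omega(\lambda)\,d\lambda$ — with $\Omega=\omega^{(2N+2)}$ or $\Omega=\tilde\omega^{(2N+3)}$ — must be estimated using the oscillation. For $t$ large I would split it at $\lambda=1/t$; on $(0,1/t)$ use $|\sin(t\lambda)|\le t\lambda$ together with $|\Omega(\lambda)|\lesssim\lambda^{-1}$ to bound that piece by a constant, and on $(1/t,\lambda_0)$ integrate by parts once more: the boundary term at $1/t$ is $\lesssim t^{-1}|\Omega(1/t)|\lesssim 1$ and the new integral is $\lesssim t^{-1}\int_{1/t}^{\lambda_0}|\Omega'(\lambda)|\,d\lambda\lesssim t^{-1}\int_{1/t}^\infty\lambda^{-2}\,d\lambda\lesssim 1$, again by property~$(3)$ (here $|\Omega'(\lambda)|\lesssim\lambda^{-2}$). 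Thus $\bigl|\int_0^\infty\sin(t\lambda)\Omega(\lambda)\,d\lambda\bigr|\lesssim 1$, and combined with the two identities above this gives the claimed decay rates; the range $t\lesssim 1$ (and $t\lesssim\lambda_0^{-1}$) is trivial from $\|\omega\|_{L^\infty}$ and the compact support. Tracking constants, only $\|\omega\|_{L^\infty}$ and the quantities $\sup_{\lambda>0}|\lambda^j\omega^{(2N+1+j)}(\lambda)|$ for $j=1,2,3$ (controlling $\Omega$, $\Omega'$, and — in the cosine case — $\tilde\omega^{(2N+4)}$) enter, which yields the stated form of $C(\omega)$.

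The one place requiring genuine care is the bookkeeping of the boundary terms: one must check that precisely $2N+2$ integrations by parts for the sine and $2N+3$ for the cosine can be carried out with \emph{all} endpoint contributions at $\lambda=0$ vanishing, which uses condition~$(4)$ of Definition~\ref{def_O} and not a drop more — one further step would require the values $\omega^{(2N+2)}(0)$ or $\tilde\omega^{(2N+3)}(0)$, which need not exist. Everything after the two reduction identities is a routine oscillatory-integral estimate.
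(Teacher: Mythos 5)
Your proposal is correct and follows essentially the same route as the paper: repeated integration by parts ($2N+2$ times for the sine, $2N+3$ for the cosine via $\tilde\omega(\lambda)=\lambda\omega(\lambda)$), with condition $(4)$ of Definition \ref{def_O} killing the boundary terms at $\lambda=0$, followed by a uniform-in-$t$ bound on the remaining oscillatory integral with amplitude $O(\lambda^{-1})$ obtained by splitting at scale $\lambda\sim 1/t$ and one further integration by parts. The only (cosmetic) difference is that you use a sharp cutoff at $\lambda=1/t$ with an explicit boundary term, whereas the paper uses a smooth cutoff $\chi(t\lambda)$ and a change of variables.
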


\begin{proof}
We only prove the sine estimate since the proof for the cosine estimate
is completely analogous.
It suffices to consider $t\geq 1$.
($2N+2$)--fold integration by parts yields
\begin{align*}
\left | \int_0^\infty \sin(t\lambda)\omega(\lambda)d\lambda \right
|&=\left |\frac{1}{t^{2N+2}} \int_0^\infty
\sin(t\lambda)\omega^{(2N+2)}(\lambda)d\lambda \right |
\end{align*}
since the boundary terms vanish thanks to $\omega^{(2m)}(0)=0$ for $m \leq
N$ and the fact that $\omega(\lambda)=0$ for all $\lambda \geq \lambda_0$.
Thus, it suffices to show that
$$ \left |\int_0^\infty \sin(t\lambda)
\omega^{(2N+2)}(\lambda)d\lambda \right |\leq C
$$ for a constant $C$ independent of $t$.
Let $\chi$ be a smooth cut--off satisfying $\chi(\lambda)=1$ for $0 \leq \lambda
\leq \frac{1}{2}$ and $\chi(\lambda)=0$ for $\lambda \geq 2$.
Then we have
$$
\left | \int_0^\infty \sin(t\lambda)
\omega^{(2N+2)}(\lambda)\chi(t\lambda)d\lambda \right | =
\left | \int_0^\infty \sin(\eta)
\omega^{(2N+2)}\left (\frac{\eta}{t} \right )\chi(\eta)\frac{d\eta}{t} \right |
\lesssim
\int_0^\infty \left
|\frac{\sin(\eta)}{\eta}\chi(\eta) \right |d\eta \lesssim 1
$$
for all $t \geq 1$.
Furthermore, by an additional integration by parts we obtain
\begin{align*}
\left | \int_0^\infty \sin(t\lambda)
\omega^{(2N+2)}(\lambda)[1-\chi(t\lambda)]d\lambda \right |&\leq 
\left |\frac{1}{t}\int_0^\infty
\cos(t\lambda)\omega^{(2N+3)}(\lambda)[1-\chi(t\lambda)]d\lambda \right | \\
&+ \left |\frac{1}{t}\int_0^\infty \cos(t\lambda)\omega^{(2N+2)}(\lambda)
t\chi'(t\lambda)d\lambda
\right |
\end{align*}
where the boundary term vanishes thanks to the cut--off and
$\omega^{(2N+2)}(\lambda)=0$ for all $\lambda \geq \lambda_0$.
However, we have
\begin{align*} \left |\frac{1}{t}\int_0^\infty
\cos(t\lambda)\omega^{(2N+3)}(\lambda)[1-\chi(t\lambda)]d\lambda \right |& =
\left |\frac{1}{t}\int_0^\infty
\cos(\eta)\omega^{(2N+3)}\left( \frac{\eta}{t} \right)[1-\chi(\eta)]
\frac{d\eta}{t} \right | \\
&\lesssim  \frac{1}{t^2}\int_0^\infty
\left |\cos(\eta)\frac{t^2}{\eta^2} [1-\chi(\eta)]\right |d\eta \lesssim 1
\end{align*}
as well as
\begin{align*}
\left |\frac{1}{t}\int_0^\infty \cos(t\lambda)\omega^{(2N+2)}t\chi'(t\lambda)d\lambda
\right |&=\left |\int_0^\infty \cos(\eta)
\omega^{(2N+2)}\left (\frac{\eta}{t}\right )\chi'(\eta)\frac{d\eta}{t}
\right | \\
&\lesssim \int_0^\infty \left |\frac{\cos \eta}{\eta}\chi'(\eta) \right|
d\eta \lesssim 1
\end{align*}
since $\mathrm{supp}(\chi') \subset [\frac{1}{2},2]$.
\end{proof}

Now we are ready to prove the first oscillatory integral estimate, valid for
small $\lambda$ and $|x\lambda| \leq \delta$, $|x' \lambda| \leq \delta$
where $\delta>0$ is sufficiently small.
In what follows we denote by $\chi_\delta$ a smooth cut--off function supported 
in a $\delta$--neighborhood of the origin, i.e.,  
$$ \chi_\delta(x)=\left \{ \begin{array}{l}1 \mbox{ if } |x|\leq
\frac{\delta}{2} \\
0 \mbox{ if } |x| \geq \delta \end{array} \right . .$$

\begin{lemma}
\label{lem_osc1}
Let $\alpha \geq 2\ell+1$ 
and $\delta>0$ be sufficiently small.
Then we have the estimates
$$ \sup_{x,x' \in \mathbb{R}} \left | \int_0^\infty
\lambda
\cos(t\lambda)\mathrm{Im} \left [\frac{f_-(x',\lambda)f_+(x,\lambda)}
{W(f_-(\cdot,\lambda),f_+(\cdot,\lambda))} \right ] \langle x
\rangle^{-\alpha}\langle x' \rangle^{-\alpha}\chi_\delta(\lambda)\chi_\delta  
(x\lambda)\chi_\delta (x'\lambda)d\lambda \right |
\lesssim \langle t \rangle^{-(2\ell+3)}
$$
and
$$ \sup_{x,x' \in \mathbb{R}} \left | \int_0^\infty
\sin(t\lambda)\mathrm{Im}\left [ \frac{f_-(x',\lambda)f_+(x,\lambda)}
{W(f_-(\cdot,\lambda),f_+(\cdot,\lambda))} \right ] \langle x
\rangle^{-\alpha}\langle x' \rangle^{-\alpha}\chi_\delta(\lambda)\chi_\delta  
(x\lambda)\chi_\delta (x'\lambda)d\lambda \right |
\lesssim \langle t \rangle^{-(2\ell+2)}
$$
for all $t \geq 0$.
\end{lemma}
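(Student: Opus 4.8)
The plan is to reduce both estimates to Lemma \ref{lem_osc}. The first step is to use the representation $f_\pm(x,\lambda)=-c_1^\pm(\lambda)u_0(x,\lambda)+c_0^\pm(\lambda)u_1(x,\lambda)$ valid for $\lambda\in(0,\lambda_0)$ to expand
$$ \mathrm{Im}\left[\frac{f_-(x',\lambda)f_+(x,\lambda)}{W(f_-,f_+)(\lambda)}\right]
= \sum_{j,k\in\{0,1\}} \pm A_{jk}(\lambda)\, u_{1-j}(x',\lambda)\, u_{1-k}(x,\lambda), $$
using that $u_j(\cdot,\lambda)$ are real-valued; here the $A_{jk}$ are exactly the functions controlled by Lemma \ref{lem_Ajk}. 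Since $|x\lambda|\le\delta$ and $|x'\lambda|\le\delta$ with $\delta$ small, we are in the range $x,x'\in[-\lambda^{-1},\lambda^{-1}]$ (up to fixing the sign and the finite interval $[-x_0,x_0]$ separately, where everything is smooth and bounded), so Lemma \ref{lem_pertenergy}, Proposition \ref{prop_pertenergy}, Lemma \ref{lem_u1} and Lemma \ref{lem_pertenergyest2-} apply. The key structural input is the symbol behavior of $u_0(x,\cdot)$ — namely $|\partial_\lambda^m u_0(x,\lambda)|\lesssim\langle x\rangle^{m+1}$ — together with the refined bound $|\partial_\lambda^{2\ell+m}u_1(x,\lambda)|\lesssim\langle x\rangle^{2\ell+1}\lambda^{-m}$ and $|\partial_\lambda^m u_1(x,\lambda)|\lesssim\langle x\rangle^{m+1}$ for $m\le 2\ell$.

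Second, I fix $x,x'$ and set $\omega(\lambda):= A_{jk}(\lambda)\,u_{1-j}(x',\lambda)\,u_{1-k}(x,\lambda)\,\langle x\rangle^{-\alpha}\langle x'\rangle^{-\alpha}\chi_\delta(\lambda)\chi_\delta(x\lambda)\chi_\delta(x'\lambda)$ and check that $\omega=O_{2\ell+1}(\lambda)$ in the sense of Definition \ref{def_O}, uniformly in $x,x'$, after which Lemma \ref{lem_osc} immediately gives the $\langle t\rangle^{-(2\ell+3)}$ and $\langle t\rangle^{-(2\ell+2)}$ bounds. The cases $j+k\ge 1$ are easy: here $A_{jk}(\lambda)=O(\lambda^{2\ell+1})$ of symbol type, at least one factor is a $u_0$ or a low-order-$\lambda$ piece, and the worst case is $A_{jk}u_0(x,\lambda)u_0(x',\lambda)$, where combining $|\partial_\lambda^m u_0(x,\lambda)|\lesssim\langle x\rangle^{m+1}$ with the weights $\langle x\rangle^{-\alpha}\langle x'\rangle^{-\alpha}$ and the constraint $\langle x\rangle\lesssim\lambda^{-1}$ on the support of $\chi_\delta(x\lambda)$ shows that every $\lambda$-derivative costs at most one factor of $\lambda^{-1}$ but is accompanied by a gain of $\langle x\rangle$ which is absorbed by the weight provided $\alpha\ge 2\ell+1$; the net effect is that $\omega=O(\lambda^{2\ell+1})$ with symbol character, hence in particular $\omega=O_{2\ell+1}(\lambda)$ (the vanishing of even derivatives at $0$ is automatic from $\omega=O(\lambda^{2\ell+1})$ being of symbol type and $2\ell+1$ odd).

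The main obstacle is the term $A_{00}(\lambda)u_1(x,\lambda)u_1(x',\lambda)$, because $A_{00}(\lambda)=O_{2\ell+1}(\lambda)$ is \emph{not} of the form $O(\lambda^{2\ell+1})$ — it genuinely only has $2\ell$ bounded derivatives and the odd derivatives up to order $2\ell+1$ vanish at $0$ — and likewise $u_1(x,\lambda)$ only satisfies symbol bounds with an extra $\lambda^{-m}$ loss past the $(2\ell)$-th derivative. I need to show the \emph{product} is still $O_{2\ell+1}(\lambda)$. The point is a Leibniz-type closure property: if $g=O_{2N+1}(\lambda)$ and $h$ satisfies $|h^{(m)}(\lambda)|\lesssim 1$ for $m\le 2N$ and $|h^{(2N+m)}(\lambda)|\lesssim\lambda^{-m}$ for $m\ge0$ with $h^{(2k)}(0)=h(0)$-type parity (which is what Lemma \ref{lem_u1}/Lemma \ref{lem_pertenergyest2-} give for $u_1(x,\cdot)$ after absorbing $\langle x\rangle$ into the weight), then $g\cdot h=O_{2N+1}(\lambda)$. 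Concretely: for $m\le 2\ell$ all derivatives of the product are bounded; for $m=2\ell+1+j$, distributing derivatives via Leibniz, any term putting more than $2\ell$ derivatives on $h$ contributes $\lambda^{-(\text{excess})}$, but that excess is compensated by the factor of $\lambda$ that $g$'s $O_{2\ell+1}(\lambda)$ structure carries (one uncancelled power of $\lambda$ remains because $g$ vanishes to odd order $2\ell+1$), so each such term is $O(\lambda^{-j})$; and the parity condition $(g h)^{(2k)}(0)=0$ for $k\le\ell$ follows because in the Leibniz expansion of an even-order derivative at $\lambda=0$ every surviving term pairs an odd derivative of $g$ (which vanishes) with something, using that low derivatives of $h$ are the only ones finite at $0$ and have the right parity. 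Once this closure lemma is in hand — it is elementary but needs to be stated carefully — summing the $O(1)$ many terms and noting $\|\omega\|_\infty\lesssim 1$ and $|\lambda^j\omega^{(2\ell+1+j)}(\lambda)|\lesssim 1$ for $j=1,2,3$ uniformly in $x,x'$ lets Lemma \ref{lem_osc} conclude with an $(x,x')$-independent constant, giving the claimed $\sup_{x,x'}$ bounds.
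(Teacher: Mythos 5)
Your proposal follows essentially the same route as the paper's proof: expand the Green's function via $f_\pm=-c_1^\pm u_0+c_0^\pm u_1$, treat each term $A_{jk}(\lambda)u_{j'}(x,\lambda)u_{k'}(x',\lambda)$ using Lemmas \ref{lem_Ajk}, \ref{lem_pertenergy}, Proposition \ref{prop_pertenergy} and Lemmas \ref{lem_u1}, \ref{lem_pertenergyest2-}, absorb the $\langle x\rangle$--growth with the weights (using $\alpha\geq 2\ell+1$ and $\langle x\rangle\lesssim\lambda^{-1}$ on the cutoff supports), and then apply Lemma \ref{lem_osc} to the resulting $O_{2\ell+1}(\lambda)$ symbol -- the Leibniz ``closure'' step you isolate for $A_{00}u_1u_1$ is precisely what the paper leaves implicit. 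One correction to your parity bookkeeping there: for $g=A_{00}=O_{2\ell+1}(\lambda)$ it is the \emph{even} derivatives of $g$ that vanish at $\lambda=0$ (its odd derivatives need not, since $A_{00}$ contains a piece of the form $\lambda P(\lambda^2)$), so in the Leibniz expansion of $(gh)^{(2k)}$ the even-index terms die because $g^{(2j)}(0)=0$, while the odd-index terms die because the odd $\lambda$--derivatives of the $u_1$--factors vanish at $\lambda=0$ up to order $2\ell-1$ (Lemma \ref{lem_u1}); with that fix your closure argument goes through exactly as in the paper.
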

 
\begin{proof}
We set $\omega(x,x',\lambda):=A_{00}(\lambda)u_1(x,\lambda)
u_1(x',\lambda)\langle x
\rangle^{-\alpha}\langle x'
\rangle^{-\alpha}\chi_\delta(\lambda)\chi_\delta(x\lambda)
\chi_\delta(x'\lambda)$.
According to Lemmas \ref{lem_Ajk} and \ref{lem_u1}, we have
$\omega(x,x',\lambda)=O_{2\ell+1}(\lambda)$ for fixed $x,x'$.
Combining Lemmas \ref{lem_pertenergy}, \ref{lem_u1} and
\ref{lem_pertenergyest2-}, we obtain 
$|\partial_\lambda^{2\ell+m} u_1(x,\lambda)|\leq C_m \langle x
\rangle^{2\ell+1}\lambda^{-m}$ for $m \in \mathbb{N}_0$, $\lambda \in
(0,\delta)$ and $x \in [-\delta \lambda^{-1},\delta \lambda^{-1}]$.
This implies 
$$|\partial_\lambda^{2\ell+2} \omega(x,x',\lambda)|\lesssim \langle
x \rangle^{2\ell+1-\alpha}\langle x' \rangle^{2\ell+1-\alpha}\lambda^{-1}, $$
and, analogously,
$$|\partial_\lambda^{2\ell+3} [
\omega(x,x',\lambda)]|\lesssim \langle
x \rangle^{2\ell+1-\alpha}\langle x' \rangle^{2\ell+1-\alpha}\lambda^{-2}, $$
$$|\partial_\lambda^{2\ell+4} [
\omega(x,x',\lambda)]|\lesssim \langle
x \rangle^{2\ell+1-\alpha}\langle x' \rangle^{2\ell+1-\alpha}\lambda^{-3} $$
for all $\lambda > 0$ and $x,x' \in \mathbb{R}$.
Finally, $\omega(\lambda)=0$ for $\lambda \geq \delta$.
Thus, Lemma \ref{lem_osc} yields
$$ \sup_{x,x' \in \mathbb{R}}
\left |\int_0^\infty \lambda \cos(t\lambda)\omega(x,x',\lambda)d\lambda 
\right |\lesssim \langle t \rangle^{-(2\ell+3)} 
\mbox{ and }
\sup_{x,x' \in \mathbb{R}}\left |\int_0^\infty \sin(t\lambda)\omega(x,x',\lambda)d\lambda 
\right |\lesssim \langle t \rangle^{-(2\ell+2)} $$
for all $t \geq 0$.
For the remaining cases assume $j+k\geq 1$ and set
$$ \omega(x,x',\lambda):=A_{jk}(\lambda)u_{j'}(x,\lambda)u_{k'}(x',\lambda)
\langle x
\rangle^{-\alpha}\langle x'
\rangle^{-\alpha}\chi_\delta(\lambda)\chi_\delta(x\lambda)
\chi_\delta(x'\lambda). $$
According to Lemmas \ref{lem_Ajk} and \ref{lem_pertenergy}, we have
$\omega(x,x',\lambda)=O(\lambda^{2\ell+1})O(\langle x \rangle^{\ell+1})O(\langle
x' \rangle^{\ell+1})$ where the $O$--terms behave like symbols (use Proposition
\ref{prop_pertenergy} and Lemma \ref{lem_pertenergyest2-}).
In particular, $\omega(x,x',\lambda)=O_{2\ell+1}(\lambda)$ for fixed $x,x'$ and 
$$|\partial_\lambda^{2\ell+2} \omega(x,x',\lambda)|\lesssim \langle
x \rangle^{\ell+1-\alpha}\langle x' \rangle^{\ell+1-\alpha}\lambda^{-1}, $$
as well as
$$|\partial_\lambda^{2\ell+3} [
\omega(x,x',\lambda)]|\lesssim \langle
x \rangle^{\ell+1-\alpha}\langle x' \rangle^{\ell+1-\alpha}\lambda^{-2}, $$
$$|\partial_\lambda^{2\ell+4} [
\omega(x,x',\lambda)]|\lesssim \langle
x \rangle^{\ell+1-\alpha}\langle x' \rangle^{\ell+1-\alpha}\lambda^{-3} $$
for all $\lambda > 0$ and $x,x' \in \mathbb{R}$.
Thus, as before, applying Lemma \ref{lem_osc} yields the claim.
\end{proof}

\begin{remark}
Obviously, by performing fewer integrations by parts (cf. the proof of Lemma
\ref{lem_osc}), one may obtain weaker decay
bounds (in $t$). 
By doing so, however, one can relax the decay requirements (in $x$) 
of the data,
that is, the index $\alpha$ in Lemma \ref{lem_osc1} can be chosen smaller.
To be more precise, one obtains the additional bounds, valid for $\alpha \in
\mathbb{N}$, $\alpha \leq 2\ell+1$,
$$ \sup_{x,x' \in \mathbb{R}} \left | \int_0^\infty
\lambda
\cos(t\lambda)\mathrm{Im} \left [\frac{f_-(x',\lambda)f_+(x,\lambda)}
{W(f_-(\cdot,\lambda),f_+(\cdot,\lambda))} \right ] \langle x
\rangle^{-\alpha}\langle x' \rangle^{-\alpha}\chi_\delta(\lambda)\chi_\delta  
(x\lambda)\chi_\delta (x'\lambda)d\lambda \right |
\lesssim \langle t \rangle^{-\alpha-2}
$$
and
$$ \sup_{x,x' \in \mathbb{R}} \left | \int_0^\infty
\sin(t\lambda)\mathrm{Im}\left [ \frac{f_-(x',\lambda)f_+(x,\lambda)}
{W(f_-(\cdot,\lambda),f_+(\cdot,\lambda))} \right ] \langle x
\rangle^{-\alpha}\langle x' \rangle^{-\alpha}\chi_\delta(\lambda)\chi_\delta  
(x\lambda)\chi_\delta (x'\lambda)d\lambda \right |
\lesssim \langle t \rangle^{-\alpha-1}
$$
for all $t \geq 0$.
\end{remark}

\begin{remark}
 Note that the sine estimate from Lemma \ref{lem_osc1} is the main obstacle to proving better decay. The remaining oscillatory estimates of Sec.~\ref{sec_osc} and Sec.~\ref{sec_osclarge} below are consistent with faster decay.
\end{remark}

\subsection{Estimates for $|x\lambda|$ and
$|x'\lambda|$ large}
 
For the remaining small energy contributions 
it is useful to note that, for $\lambda \in \mathbb{R}$,
$f_\pm(x,-\lambda)=\overline{f_\pm(x,\lambda)}$ by
definition of the Jost solutions.
This implies $G_{\ell,\sigma}(x,x',-\lambda)=
\overline{G_{\ell,\sigma}(x,x',\lambda)}$ and hence, the real part 
$\mathrm{Re}\left [ G_{\ell,\sigma}(x,x',\lambda) \right ]$ is an
even function of $\lambda$ whereas the imaginary part
$\mathrm{Im} \left [ G_{\ell,\sigma}(x,x',\lambda) \right ]$ is odd.
Thus, we have
$$ \int_0^\infty \lambda
\cos(t\lambda)\mathrm{Im}\left [G_{\ell,\sigma}(x,x',\lambda) \right ]
d\lambda=\frac{1}{2}
\int_\mathbb{R}\lambda \cos(t\lambda)G_{\ell,\sigma}(x,x',\lambda)d\lambda $$
and similarly for the sine evolution.
This shows that we can replace the imaginary part of 
$G_{\ell,\sigma}(x,x',\lambda)$ by $
G_{\ell,\sigma}(x,x',\lambda)$ itself in the oscillatory integrals and 
change the domain of
integration from $\lambda>0$ to $\lambda \in \mathbb{R}$.
Furthermore, recall
\begin{equation}
\label{eq_wronskian}
W(f_-,f_+)(\lambda)=c_1^-(\lambda)
c_0^+(\lambda)-
c_0^-(\lambda)c_1^+(\lambda)=c\lambda^{-\ell}(1+O_\mathbb{C}(\lambda^{\varepsilon})) 
\end{equation} for a
nonzero constant $c$ where the $O$--term is of symbol
type. This has been shown in the proof of Lemma \ref{lem_Ajk}.

In order to deal with terms that
involve $f_-(x',\lambda)$ for $x' \geq 0$ and $f_+(x,\lambda)$ for $x \leq 0$
we have to consider reflection and transmission coefficients.
For $\lambda \not=0$, the functions $f_+(\cdot,\lambda)$ and
$\overline{f_+(\cdot,\lambda)}$ are linearly independent which shows that
there exist coefficients $a(\lambda)$ and $b(\lambda)$ \footnote{In order to avoid confusion, we remark that the coefficients $a$ and $b$ are not exactly the same as $A$ and $B$ in the proof of Lemma \ref{lem_Wronskian} but they are related by $a(\lambda)=A(\lambda^2)$ and $b(\lambda)=B(\lambda^2)$.} such that
$f_-(x,\lambda)=a(\lambda)f_+(x,\lambda)+b(\lambda)\overline{f_+(x,\lambda)}
$.
This representation implies $|b(\lambda)|^2-|a(\lambda)|^2=1$ (cf. the proof of
Lemma \ref{lem_Wronskian}) and thus,
$f_+(x,\lambda)=-\overline{a(\lambda)}f_-(x,\lambda)+b(\lambda)
\overline{f_-(x,\lambda)}$.
Furthermore, we have
$W(f_-,f_+)(\lambda)=b(\lambda)W(\overline{f_+},f_+)(\lambda)=2i\lambda b(\lambda)$ which
is equivalent to
\begin{equation}
\label{eq_b}
\frac{b(\lambda)}{W(f_-,f_+)(\lambda)}=\frac{1}{2i\lambda}.
\end{equation}
Similarly, we obtain
$ W(f_-,\overline{f_+})(\lambda)=-2i\lambda a(\lambda)$ and therefore,
$$ \frac{a(\lambda)}{W(f_-,f_+)(\lambda)}=-
\frac{W(f_-,\overline{f_+})(\lambda)}{2i\lambda
W(f_-,f_+)(\lambda)}. $$
However, from Lemma \ref{lem_c+} and
$W(f_-,\overline{f_+})(\lambda)=c_1^-(\lambda)\overline{c_0^+(\lambda)}-
c_0^-(\lambda)\overline{c_1^+(\lambda)}$ it follows that 
$$
\frac{W(f_-,\overline{f_+})(\lambda)}{W(f_-,f_+)(\lambda)}=
c+O_\mathbb{C}(\lambda^\varepsilon) $$
where $c$ is a nonzero constant and the $O$--term is of symbol type
and hence,
\begin{equation}
\label{eq_a}
\frac{a(\lambda)}{W(f_-,f_+)(\lambda)}=c\lambda^{-1}(1+
O_\mathbb{C}(\lambda^\varepsilon)).
\end{equation}

\begin{lemma}
\label{lem_osc2}
Let $\alpha \in \mathbb{N}_0$ and $\delta>0$ sufficiently small. Then we have the
estimates
$$ \sup_{x,x' \in \mathbb{R}} \left | \int_\mathbb{R}
\lambda \cos(t\lambda)\frac{f_-(x',\lambda)f_+(x,\lambda)}{W(f_-,f_+)(\lambda)}
\langle x
\rangle^{-\alpha}\langle x' \rangle^{-\alpha}
\chi_\delta(\lambda)(1-\chi_\delta  (x\lambda))
(1-\chi_\delta (x'\lambda))d\lambda \right |
\lesssim \langle t \rangle^{-\alpha}
$$
and
$$ \sup_{x,x' \in \mathbb{R}} \left | \int_\mathbb{R}
\sin(t\lambda)\frac{f_-(x',\lambda)f_+(x,\lambda)}{W(f_-,f_+)(\lambda)}
\langle x
\rangle^{-\alpha}\langle x' \rangle^{-\alpha}
\chi_\delta(\lambda)(1-\chi_\delta(x\lambda))
(1-\chi_\delta (x'\lambda))d\lambda \right |
\lesssim \langle t \rangle^{-\alpha+1}
$$
for all $t \geq 0$.
\end{lemma}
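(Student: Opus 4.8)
The regime under consideration is $|\lambda| \leq \delta$ (small), $|x\lambda| \gtrsim \delta$ and $|x'\lambda| \gtrsim \delta$, so in particular $|x|, |x'| \gtrsim \delta/\lambda \gtrsim 1$. Because of the cutoffs $1 - \chi_\delta(x\lambda)$ and $1 - \chi_\delta(x'\lambda)$ we are away from the small-$|x\lambda|$ perturbative regime, and instead we are in the range where the rescaled solution $\phi_\ell$ of Lemma~\ref{lem_pertpot} (and its mirror image for $f_-$) provides a good description in terms of Hankel functions. The plan is to split into the cases according to the signs of $x$ and $x'$, use the symmetry in $x \leftrightarrow x'$, and in each case write the integrand explicitly in terms of oscillatory exponentials times symbol-type amplitudes, then apply nonstationary phase (repeated integration by parts). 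The key structural facts I will use are: (i) for $x$ large and positive, $f_+(x,\lambda) = \phi_\ell(\lambda x,\lambda)$ with $\phi_\ell(z,\lambda) = e^{iz}(1 + O_\mathbb{C}(z^{-1}))(1 + b_\ell(z,\lambda))$, and $b_\ell$ is of symbol type with the bounds in Lemma~\ref{lem_pertpotsymbol}; (ii) the analogous statement for $f_-(x',\lambda) = e^{-i\lambda x'}m_-(x',\lambda)$ via Lemma~\ref{lem_Jost-} when $x'$ is bounded above, and, when $x'$ is large and positive, the reflection/transmission formula $f_-(x',\lambda) = a(\lambda)f_+(x',\lambda) + b(\lambda)\overline{f_+(x',\lambda)}$ combined with \eqref{eq_b} and \eqref{eq_a}; (iii) the Wronskian asymptotics \eqref{eq_wronskian}.

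\textbf{Step 1: reduce to the four sign configurations.} Because the weight and cutoffs are symmetric in $x,x'$, it suffices to treat the cases $(x>0,x'>0)$, $(x>0,x'<0)$, $(x<0,x'<0)$; the remaining case follows by relabeling. In the first configuration, $f_+(x,\lambda)$ and $f_+(x',\lambda)$ are both described by $\phi_\ell$, and $f_-(x',\lambda)$ is expanded via the reflection formula into $a(\lambda)f_+(x',\lambda) + b(\lambda)\overline{f_+(x',\lambda)}$. Using \eqref{eq_b} and \eqref{eq_a} to absorb $W(f_-,f_+)^{-1}$, the integrand becomes a sum of terms of the form
$$ \lambda^{-1}\,\omega_\pm(x,x',\lambda)\,e^{i\lambda(\pm x' + x)}, $$
where $\omega_\pm$ collects the symbol-type amplitudes (including $\langle x\rangle^{-\alpha}\langle x'\rangle^{-\alpha}$ and the Hankel prefactors $\sqrt{\lambda x}\,H^+_{\ell+1/2}(\lambda x)e^{-i\lambda x}$ which are of symbol type in $\lambda$ for $\lambda x \gtrsim \delta$). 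In the second configuration $f_+(x,\lambda)$ is $\phi_\ell$-type while $f_-(x',\lambda) = e^{-i\lambda x'}m_-(x',\lambda)$ with $m_-$ of symbol type and bounded uniformly, so no reflection coefficients are needed; in the third configuration we use the reflection formula for $f_+(x,\lambda)$ in terms of $f_-$, $\overline{f_-}$.

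\textbf{Step 2: nonstationary phase.} In each term, the phase is $t\lambda \pm \lambda x \pm \lambda x'$ (after combining with the $\cos(t\lambda)$ or $\sin(t\lambda)$ factor written as a sum of exponentials). Since $x,x'$ have definite sign, the dominant phase has $|$phase derivative$| \sim t + |x| + |x'|$ or can only vanish on a lower-dimensional set; crucially, on the support of the cutoffs $|x|,|x'|\geq \delta/\lambda$, and the amplitude $\omega_\pm$ together with the factor $\lambda^{-1}$ has symbol bounds in $\lambda$ that produce, after each integration by parts, a factor $(t + |x| \pm |x'|)^{-1}$ against a loss of at most $\lambda^{-1}$ from differentiating the amplitude — but $\lambda^{-1} \lesssim \delta^{-1}(|x| + |x'|) \lesssim \delta^{-1}(t + |x| + |x'|)$ when $t\geq 0$, so each integration by parts nets a genuine gain. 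Integrating by parts $\alpha$ times (and controlling the endpoint contributions using the cutoff $\chi_\delta(\lambda)$ at the outer end and the symbol bounds near $\lambda=0$, noting that the factor $\lambda^{-1}$ is harmless because on the support $|x\lambda|\gtrsim \delta$ forces $\lambda \gtrsim \delta/|x|$, which is matched against $\langle x\rangle^{-\alpha}$) yields
$$ \Big|\int_\mathbb{R}\cdots\, d\lambda\Big| \lesssim \langle t\rangle^{-\alpha}\langle x\rangle^{-\alpha}\langle x'\rangle^{-\alpha}\cdot \langle x\rangle^{\alpha}\langle x'\rangle^{\alpha} = \langle t\rangle^{-\alpha} $$
uniformly in $x,x'$, and similarly the sine version loses one power and gives $\langle t\rangle^{-\alpha+1}$ (the $\lambda^{-1}$ from \eqref{eq_b} is exactly what converts the extra $\lambda$ in the cosine weight into the one-power-better sine rate).

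\textbf{Main obstacle.} The delicate point is the bookkeeping near $\lambda = 0$: the factor $\lambda^{-1}$ coming from $b(\lambda)/W(f_-,f_+)(\lambda) = (2i\lambda)^{-1}$ and from \eqref{eq_a} is singular, and the Hankel-function prefactors $\sqrt{\lambda x}H^+_{\ell+1/2}(\lambda x)$ are only of symbol type in the variable $z = \lambda x$ when $z \gtrsim \delta$, not uniformly down to $\lambda = 0$ for fixed $x$. So one must genuinely exploit that the cutoffs $(1-\chi_\delta(x\lambda))(1-\chi_\delta(x'\lambda))$ confine the integration to $\lambda \geq \delta/\min(|x|,|x'|)$, on which interval all amplitudes are bona fide symbols of $\lambda$ with constants independent of $x,x'$, and the $\lambda^{-1}$ singularity never actually bites. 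Verifying that differentiating the cutoffs $\chi_\delta(x\lambda)$ (which brings down factors of $x$) does not destroy the estimate — each such derivative is localized to $\lambda \sim \delta/|x|$ where the extra $x$ is compensated — is the part requiring the most care; this is precisely the type of argument carried out in \cite{schlag2}, which is why the paper remarks the case is "very similar" to that reference.
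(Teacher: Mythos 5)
Your overall architecture in the three configurations you do treat (Hankel description of $f_+$ on the right via Lemmas \ref{lem_pertpot} and \ref{lem_pertpotsymbol}, the bounds of Lemma \ref{lem_Jost-} for $m_-$ on the left, reflection/transmission coefficients combined with \eqref{eq_b}, \eqref{eq_a}, \eqref{eq_wronskian}, and nonstationary phase paid for by the weights) coincides with the paper's, but there are two genuine gaps. First, the configuration $x<0<x'$ does \emph{not} follow ``by relabeling'': the expression $f_-(x',\lambda)f_+(x,\lambda)$ is not symmetric in $(x,x')$, and the paper itself warns in Section \ref{sec_osc} that such symmetry shortcuts fail because of the asymmetric decay of $V_{\ell,\sigma}$. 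In that configuration both reflection identities are needed, and the resulting coefficients $\lambda ab/W=a/(2i)$, $\lambda |a|^2/W=\overline{a}\,c(1+O_{\mathbb{C}}(\lambda^{\varepsilon}))$ and $\lambda b^2/W=b/(2i)$ are of size $|\lambda|^{-\ell-1}$ by \eqref{eq_b}, \eqref{eq_a} and \eqref{eq_wronskian}; on the support of the cutoffs this is only $\lesssim \min(|x|,|x'|)^{\ell+1}$, and for small $\alpha$ (the lemma allows $\alpha=0$) the weights cannot absorb it: for $x=-R$, $x'=R$ the two terms carrying the phase $e^{\pm i\lambda(x+x')}$ do not oscillate at all, and estimating them separately produces growth of order $R^{\ell}$ instead of the claimed uniform bound. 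The estimate survives because these two terms are complex conjugates of one another (use $\overline{b}/\overline{W}=-b/W$), hence real, and the integral over $\lambda\in\mathbb{R}$ only sees $\mathrm{Im}\left[f_-(x')f_+(x)/W\right]$ since $f_\pm(x,-\lambda)=\overline{f_\pm(x,\lambda)}$; equivalently, using $|b|^2-|a|^2=1$ one checks $\mathrm{Im}\left[f_-(x')f_+(x)/W\right]=\mathrm{Im}\left[f_-(x)f_+(x')/W\right]$, which reduces $x<0<x'$ to the configuration where no reflection coefficients occur. Your proposal never invokes this cancellation, so this case is missing, not relabeled.

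Second, your nonstationary-phase bookkeeping does not close as stated. Gaining $(t+|x|+|x'|)^{-1}$ per integration by parts ``against a loss of at most $\lambda^{-1}\lesssim\delta^{-1}(|x|+|x'|)$'' nets only a constant of order $\delta^{-1}$, not decay; and for the resonant branch of the phase ($\mp t+x-x'$ small) there is no oscillation to exploit at all, so ``the stationary set is lower-dimensional'' is not an argument for a one-dimensional $\lambda$-integral at fixed $(t,x,x')$. The mechanism that actually works (and is what the paper does) is: after \eqref{eq_b}, \eqref{eq_a}, \eqref{eq_wronskian} have removed every negative power of $\lambda$ in the cosine case, the amplitude satisfies $|\partial_\lambda^m\omega(x,x',\lambda)|\lesssim \langle x\rangle^m\langle x'\rangle^m$; if $|\pm t+x-x'|\ge t/2$ one integrates by parts exactly $\alpha$ times, paying $\langle x\rangle^{\alpha}\langle x'\rangle^{\alpha}$, which is precisely what the weights supply, while if $|\pm t+x-x'|\le t/2$ one does not integrate by parts but uses $|x|+|x'|\gtrsim t$, so that $\langle x\rangle^{-\alpha}\langle x'\rangle^{-\alpha}\lesssim\langle t\rangle^{-\alpha}$ and the trivial bound over the support of $\chi_\delta$ finishes. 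Without this dichotomy your final display is unjustified. Relatedly, for the cosine bound at $\alpha=0$ you cannot ``trade $\lambda^{-1}$ for $\langle x\rangle$'' as in your obstacle paragraph, since there is no weight to spare; you must use the exact identities $\lambda b/W=1/(2i)$ and $\lambda a/W=c(1+O_{\mathbb{C}}(\lambda^{\varepsilon}))$ so that no singular factor remains, and it is the unavoidable loss of one power of $\lambda$ in the sine case that produces the weaker rate $\langle t\rangle^{-\alpha+1}$.
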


\begin{proof}
Let $|\lambda|\leq \delta$, $x \geq 0$, $x' \leq 0$ and $|\lambda x|, |\lambda
x'|\geq \frac{\delta}{2}$.
We set $m_\pm(x,\lambda):=e^{\mp i\lambda x}f_\pm(x,\lambda)$.
According to Lemma \ref{lem_Jost-}, we have the bound $|\partial_\lambda^m
m_-(x,\lambda)|\leq C_m$ for all $x\leq 0$ and $m\in \mathbb{N}_0$.
Furthermore, since $\lambda$ is small, we have 
$$ m_+(x,\lambda)=e^{-i\lambda x}\phi_\ell(\lambda x,\lambda)=
(1+O_\mathbb{C}((\lambda x)^{-1}))(1+b_\ell(\lambda x,\lambda)) $$
by Lemma \ref{lem_pertpot} and the asymptotics of the Hankel function where the
$O$--term is of symbol type.
By Lemma \ref{lem_pertpotsymbol} and the chain rule, we have the estimate
$|\partial_\lambda^m b_\ell(\lambda x,\lambda)|\leq C_m \langle x \rangle^m$ 
since $|\lambda|^{-1} \lesssim \langle x \rangle$
and this implies
$|\partial_\lambda^m m_+(x,\lambda)| \leq C_m \langle x \rangle^m$
for all $m \in \mathbb{N}_0$.
Set
$$ \omega(x,x',\lambda):=
\lambda \frac{m_-(x',\lambda)m_+(x,\lambda)}{W(f_-,f_+)(\lambda)}
\chi_\delta(\lambda)(1-\chi_\delta  
(x\lambda))(1-\chi_\delta (x'\lambda)).$$
Then we have
$$ |\partial_\lambda^m \omega(x,x',\lambda)|\lesssim \langle x
\rangle^m \langle x'
\rangle^m $$
for all $m \in \mathbb{N}_0$
since
$W(f_-,f_+)(\lambda)$ is of symbol type (cf. Eq.~(\ref{eq_wronskian})).
We have to estimate the integral
$$ \int_\mathbb{R}e^{i \lambda (\pm t+x-x')}\omega(x,x',\lambda)
\langle x \rangle^{-\alpha} \langle x' \rangle^{-\alpha}d\lambda. $$
If $|\pm t+x-x'|\geq \frac{1}{2}t$ we integrate by parts 
$\alpha$--times to obtain
$$ \left | 
\int_\mathbb{R}e^{i \lambda (\pm t+x-x')}\omega(x,x',\lambda)
\langle x \rangle^{-\alpha}\langle x' \rangle^{-\alpha}d\lambda \right
|\lesssim |\pm t+x-x'|^{-\alpha} \lesssim \langle t \rangle^{-\alpha}
$$
and, 
if $|\pm t+x-x'|\leq \frac{1}{2}t$, we have 
$\langle x \rangle^{-\alpha}\langle x'
\rangle^{-\alpha}\lesssim \langle t \rangle^{-\alpha}$ as $t \to \infty$ and
thus,
$$ \left | 
\int_\mathbb{R}e^{i \lambda (\pm t+x-x')}\omega(x,x',\lambda)
\langle x \rangle^{-\alpha}\langle x' \rangle^{-\alpha}d\lambda \right
|\lesssim \langle t \rangle^{-\alpha}. $$
If $x' \geq 0$ or $x \leq 0$ we use the representations
$m_-(x',\lambda)=a(\lambda)e^{2i\lambda x'}
m_+(x',\lambda)+b(\lambda)\overline{m_+(x',\lambda)}$ or
$m_+(x,\lambda)=-\overline{a(\lambda)}e^{-2i\lambda
x}m_-(x,\lambda)+b(\lambda)\overline{m_-(x,\lambda)}$ and
with the help of Eqs.
(\ref{eq_b}) and (\ref{eq_a}) the
corresponding integrals can be estimated as above.

For the sine evolution note that we are missing one $\lambda$ and thus, 
for instance, if
$$
\omega(x,x',\lambda):=\frac{a(\lambda)m_+(x',\lambda)m_+(x,\lambda)}
{W(f_-,f_+)(\lambda)}\chi_\delta(\lambda)(1-\chi_\delta  
(x\lambda))(1-\chi_\delta (x'\lambda)) $$
we have $|\omega(x,x',\lambda)|\lesssim |\lambda|^{\ell-1}$ 
(cf.~Eq.~(\ref{eq_wronskian})) which, in the case
$\ell=0$, only yields the weaker bound
$|\partial_\lambda^m \omega(x,x',\lambda)|\lesssim
\langle x \rangle^{m+1}\langle x' \rangle^m$.
\end{proof}

\subsection{Estimates for $|x\lambda|$ small and $|x'\lambda|$
large}

The next regime to be considered is $|x\lambda|\leq \delta$ and $|x'\lambda|
\geq \frac{\delta}{2}$.
We distinguish the two cases $x' \leq 0$ and $x' \geq 0$. 

\begin{lemma}
Let $\alpha \in \mathbb{N}$ and $\delta>0$ sufficiently small. Then we have the
estimates
$$ \sup_{x \in \mathbb{R},x' < 0} \left | \int_\mathbb{R}
\lambda \cos(t\lambda)\frac{f_-(x',\lambda)f_+(x,\lambda)}{W(f_-,f_+)(\lambda)}
\langle x
\rangle^{-\alpha}\langle x' \rangle^{-\alpha}
\chi_\delta(\lambda)\chi_\delta (x\lambda)(1-\chi_\delta  (x'\lambda))
d\lambda \right |
\lesssim \langle t \rangle^{-\alpha}
$$
and
$$ \sup_{x \in \mathbb{R},x' < 0} \left | \int_\mathbb{R}
\sin(t\lambda)\frac{f_-(x',\lambda)f_+(x,\lambda)}{W(f_-,f_+)(\lambda)}
\langle x
\rangle^{-\alpha}\langle x' \rangle^{-\alpha}\chi_\delta(\lambda)
\chi_\delta (x\lambda)(1-\chi_\delta  
(x'\lambda))
d\lambda \right |
\lesssim \langle t \rangle^{-\alpha+1}
$$
for all $t \geq 0$.
\end{lemma}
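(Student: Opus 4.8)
The plan is to turn each integral into an oscillatory integral in $\lambda$ with a pure exponential phase and then integrate by parts, exploiting that on the support of $1-\chi_\delta(x'\lambda)$ one has $|\lambda|\gtrsim\delta/|x'|$, so that $\langle x'\rangle^{-\alpha}\lesssim|\lambda|^\alpha$ and $\lambda^{-1}\lesssim|x'|$ — the first estimate absorbs the negative powers of $\lambda$ generated by differentiating the amplitude, the second controls the genuine $\lambda^{-1}$ that appears for $x<0$. First I write $f_-(x',\lambda)=e^{-i\lambda x'}m_-(x',\lambda)$, where by Lemma~\ref{lem_Jost-} $|\partial_\lambda^m m_-(x',\lambda)|\leq C_m$ uniformly in $x'<0$, and expand $\cos(t\lambda),\sin(t\lambda)$ into exponentials; all boundary terms in the integrations by parts below vanish because $\chi_\delta(\lambda)$ is compactly supported.

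I split according to the sign of $x$. For $x\geq0$ I use $f_+(x,\lambda)=-c_1^+(\lambda)u_0(x,\lambda)+c_0^+(\lambda)u_1(x,\lambda)$ (Lemmas~\ref{lem_c+},~\ref{lem_c-}), valid on $|x\lambda|\leq\delta$, together with the consequences of Lemma~\ref{lem_c+} and Eq.~(\ref{eq_wronskian}) that $c_0^+(\lambda)/W(f_-,f_+)(\lambda)=O_\mathbb{C}(1)$ and $c_1^+(\lambda)/W(f_-,f_+)(\lambda)=O_\mathbb{C}(\lambda^{2\ell+1-2\ell\varepsilon})$ are of symbol type, and with the bounds on $u_j(x,\lambda)$ and its $\lambda$-derivatives from Lemmas~\ref{lem_pertenergy},~\ref{lem_pertenergyest2},~\ref{lem_u1},~\ref{lem_pertenergyest2-} (on $|x\lambda|\leq\delta$, $|\partial_\lambda^m u_1(x,\lambda)|$ grows at most like $\langle x\rangle^{\min(m,2\ell)+1}$ with an extra factor $\lambda^{-(m-2\ell)_+}$, and $u_1(x)\sim x^{-\ell}$ for $x>0$). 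This yields the phase $e^{i\lambda(\pm t-x')}$ and an amplitude $a$, supported in $|\lambda|\leq\delta$, for which a Leibniz-rule computation gives $|\partial_\lambda^k a|\lesssim1$ uniformly in $x\geq0$, $x'<0$, $\lambda$, for $0\leq k\leq\alpha$ in the cosine case and $0\leq k\leq\alpha-1$ in the sine case: the powers of $\langle x\rangle$ are killed by $\langle x\rangle^{-\alpha}$; each factor $|x|^j\mathbf 1_{|x\lambda|\asymp\delta}$ produced by differentiating $\chi_\delta(x\lambda)$ is comparable to $\langle x\rangle^j$ on its support and is reabsorbed by $\langle x\rangle^{-\alpha}$ when $j\leq\alpha$; the analogous factors from differentiating $1-\chi_\delta(x'\lambda)$ and the residual negative powers $\lambda^{-(k-2\ell)_+}$ are covered by the surviving part of $\langle x'\rangle^{-\alpha}\lesssim|\lambda|^\alpha$; and the cosine integral carries one more power of $\lambda$ than the sine integral, which is why one can afford one more derivative, hence one more integration by parts, in the cosine case.

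For $x<0$ I instead use $f_+(x,\lambda)=-\overline{a(\lambda)}f_-(x,\lambda)+b(\lambda)\overline{f_-(x,\lambda)}$ with $f_-(x,\lambda)=e^{-i\lambda x}m_-(x,\lambda)$, $|\partial_\lambda^m m_-(x,\lambda)|\leq C_m$ for $x\leq0$ (Lemma~\ref{lem_Jost-}), and Eqs.~(\ref{eq_b}),~(\ref{eq_a}), which give that $b(\lambda)/W(f_-,f_+)(\lambda)$ and $\overline{a(\lambda)}/W(f_-,f_+)(\lambda)$ equal $c\lambda^{-1}(1+O_\mathbb{C}(\lambda^\varepsilon))$ with symbol-type error. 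For the cosine integral the prefactor $\lambda$ cancels the $\lambda^{-1}$, so the amplitude is a symbol in $\lambda$ with derivatives $\lesssim1$ (uniformly in $x\leq0$, $x'<0$), times a phase $e^{i\lambda\Phi}$ with $\Phi\in\{\pm t+|x'|+|x|,\ \pm t+|x'|-|x|\}$; for the sine integral one keeps a genuine $\lambda^{-1}$, integrable on the support of $1-\chi_\delta(x'\lambda)$, and integrating by parts $\alpha-1$ times (the count matched to this $\lambda^{-1}$, using $\langle x'\rangle^{-\alpha}\lesssim|\lambda|^\alpha$ to bound the resulting $\lambda$-integrals) produces one power of $t$ less than in the cosine case.

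Finally, in each case I split on whether the phase coefficient $\Psi$ — equal to $\pm t-x'$ when $x\geq0$ and to $\Phi$ above when $x<0$ — satisfies $|\Psi|\geq t/2$ or $|\Psi|<t/2$. If $|\Psi|\geq t/2$, integrating by parts $\alpha$ times (cosine), resp.\ $\alpha-1$ times (sine), produces the factor $|\Psi|^{-\alpha}\lesssim\langle t\rangle^{-\alpha}$, resp.\ $|\Psi|^{-(\alpha-1)}\lesssim\langle t\rangle^{-\alpha+1}$, while $\int_\mathbb{R}|\partial_\lambda^{k}(a\,e^{-i\lambda\Psi})|\,d\lambda\lesssim1$ by the amplitude bounds above. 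If $|\Psi|<t/2$, then the terms of $\Psi$ other than $\pm t$ must be comparable to $t$, forcing $\max(\langle x\rangle,\langle x'\rangle)\gtrsim t$, and one estimates trivially: $\int_\mathbb{R}|a|\,d\lambda\lesssim\langle x\rangle^{-\alpha}\langle x'\rangle^{-\alpha}\lesssim\langle t\rangle^{-\alpha}$ in the cosine cases, and $\int_\mathbb{R}|a|\,d\lambda\lesssim\langle x\rangle^{-\alpha}\langle x'\rangle^{-\alpha+1}\lesssim\langle t\rangle^{-\alpha+1}$ in the sine case for $x<0$ (using $\lambda^{-1}\lesssim|x'|$ on the support). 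The main difficulty is the bookkeeping of the second paragraph: checking that the $\langle x\rangle$-growth together with the negative powers of $\lambda$ in the high $\lambda$-derivatives of $u_1(x,\lambda)$ (which occur once $\alpha$ exceeds $2\ell$) are exactly balanced by the weights $\langle x\rangle^{-\alpha}$ and the regime-specific gain $\langle x'\rangle^{-\alpha}\lesssim|\lambda|^\alpha$, and matching the number of integrations by parts to the cosine/sine distinction and to the residual $\lambda^{-1}$ singularity for $x<0$.
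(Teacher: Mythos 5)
Your argument is correct and is in essence the paper's own proof: write $f_-(x',\lambda)=e^{-i\lambda x'}m_-(x',\lambda)$ with $\partial_\lambda^m m_-$ bounded (Lemma \ref{lem_Jost-}), control the $\lambda$-derivatives of the amplitude by the low-energy structure of $f_+$ from Lemmas \ref{lem_pertenergy}--\ref{lem_pertenergyest2-} and \ref{lem_c+} together with Eq.~(\ref{eq_wronskian}), use $|\lambda|^{-1}\lesssim\langle x'\rangle$ on the support of $1-\chi_\delta(x'\lambda)$, and conclude by $\alpha$ (cosine) resp.\ $\alpha-1$ (sine) integrations by parts after splitting according to whether the phase is $\geq t/2$ or $<t/2$. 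The one structural difference is that you split on the sign of $x$ and, for $x<0$, pass to the reflection/transmission representation $f_+=-\overline{a(\lambda)}f_-+b(\lambda)\overline{f_-}$ (as the paper does only in the mirrored regime of Lemma \ref{lem_osc3a}), whereas the paper treats all $x$ with $|x\lambda|\leq\delta$ at once through $m_+(x,\lambda)=e^{-i\lambda x}\bigl(-c_1^+(\lambda)u_0(x,\lambda)+c_0^+(\lambda)u_1(x,\lambda)\bigr)$, the $u_j(\cdot,\lambda)$ having been extended to negative $x$ in Lemma \ref{lem_pertenergyest2-}, and absorbs every derivative loss through the single bound $|\partial_\lambda^m\omega(x,x',\lambda)|\lesssim\langle x'\rangle^m$ paid for by $\langle x'\rangle^{-\alpha}$ alone; this is marginally cleaner bookkeeping but not a different idea. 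The only point you should make explicit in your $x<0$ branch is that the coefficient entering is $\overline{a(\lambda)}/W(f_-,f_+)(\lambda)$ rather than $a(\lambda)/W(f_-,f_+)(\lambda)$; its symbol bound of order $-1$ follows from Eq.~(\ref{eq_a}) together with $\overline{W}/W=1+O_\mathbb{C}(\lambda^{\varepsilon})$ (equivalently, from the separate symbol behavior of $a$, $b$ and $W$), exactly as in Lemma \ref{lem_osc3a}.
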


\begin{proof}
Let $x \in \mathbb{R}$, $x' \leq 0$, $|\lambda|\leq \delta$, $|\lambda x|\leq
\delta$ and $|\lambda x'|\geq \frac{\delta}{2}$.
As before, we write $f_\pm(x,\lambda)=e^{\mp i\lambda x}m_\pm(x,\lambda)$ 
and recall the
bound $|\partial_\lambda^m m_-(x',\lambda)|\leq C_m$ for all $m \in
\mathbb{N}_0$ (Lemma
\ref{lem_Jost-}). 
For
$m_+$ we use the representation
$$m_+(x,\lambda)=e^{-i\lambda x}(-c_1^+(\lambda)u_0(x,\lambda)+
c_0^+(\lambda)u_1(x,\lambda))$$
where we extend $c_j^+(\lambda)$ and $u_j(x,\lambda)$ to negative $\lambda$
according to $c_j^+(-\lambda)=\overline{c_j^+(\lambda)}$ and
$u_j(x,-\lambda)=u_j(x,\lambda)$.
Applying Proposition \ref{prop_pertenergy} and Lemmas 
\ref{lem_pertenergyest2-}, \ref{lem_c+},
we obtain the bounds
$$ |\partial_\lambda^m c_1^+(\lambda)u_0(x,\lambda)|\lesssim \langle x
\rangle^{\ell+1}|\lambda|^{\ell-m}\lesssim \langle x \rangle |\lambda|^{-m} 
\lesssim \langle x' \rangle^m |\lambda|^{-1} $$
and 
$$ |\partial_\lambda^m c_0^+(\lambda)u_1(x,\lambda)|\lesssim \langle x
\rangle |\lambda|^{-\ell-m}\lesssim \langle x'
\rangle^m |\lambda|^{-\ell-1} $$
for $m \in \mathbb{N}_0$ which implies
$|\partial_\lambda^m m_+(x,\lambda)| \lesssim \langle x' \rangle^m
|\lambda|^{-\ell-1}$.
We also have $|\partial_\lambda^m \chi_\delta(x\lambda)|\lesssim \langle x
\rangle^m \lesssim |\lambda|^{-m}\lesssim \langle x' \rangle^m$ and, putting all
this together, we arrive at
$$ |\partial_\lambda^m \omega(x,x',\lambda)|\lesssim \langle x' \rangle^m $$
where
$$ \omega(x,x',\lambda):=\lambda
\frac{m_-(x',\lambda)m_+(x,\lambda)}{W(f_-,f_+)(\lambda)}
\chi_\delta(\lambda)\chi_\delta (x\lambda)(1-\chi_\delta  
(x'\lambda)), $$
see also Eq.~(\ref{eq_wronskian}).
Thus, the 
claim follows by appropriate integration by parts as in the proof of Lemma
\ref{lem_osc2}.
The proof for the sine evolution goes along the same lines but one loses one
power of $\lambda$.
\end{proof}

\begin{lemma}
Let $\alpha \in \mathbb{N}$ and $\delta>0$ sufficiently small. Then we have the
estimates
$$ \sup_{x \in \mathbb{R},x' > 0} \left | \int_\mathbb{R}
\lambda \cos(t\lambda)\frac{f_-(x',\lambda)f_+(x,\lambda)}{W(f_-,f_+)(\lambda)}
\langle x
\rangle^{-\alpha}\langle x' \rangle^{-\alpha}
\chi_\delta(\lambda)\chi_\delta (x\lambda)(1-\chi_\delta  (x'\lambda))
d\lambda \right |
\lesssim \langle t \rangle^{-\alpha}
$$
and
$$ \sup_{x \in \mathbb{R},x' > 0} \left | \int_\mathbb{R}
\sin(t\lambda)\frac{f_-(x',\lambda)f_+(x,\lambda)}{W(f_-,f_+)(\lambda)}
\langle x
\rangle^{-\alpha}\langle x' \rangle^{-\alpha}\chi_\delta(\lambda)
\chi_\delta (x\lambda)(1-\chi_\delta  
(x'\lambda))
d\lambda \right |
\lesssim \langle t \rangle^{-\alpha+1}
$$
for all $t \geq 0$.
\end{lemma}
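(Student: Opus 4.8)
The plan is to argue in complete analogy with the preceding lemma, the single new difficulty being that $f_-(\cdot,\lambda)$ ceases to be a symbol in $\lambda$ on the ``far'' side $x'>0$: there $f_-(x',\lambda)$ blows up like $\lambda^{-\ell-1}$ as $\lambda\to0$ (from $b(\lambda)=W(f_-,f_+)(\lambda)/(2i\lambda)$ and Eq.~\eqref{eq_wronskian}). The remedy is to note that $\tfrac{f_-(x',\lambda)f_+(x,\lambda)}{W(f_-,f_+)(\lambda)}$ enters the resolvent kernel only together with the factor $\Theta(x-x')$, so it suffices to bound the integral for $x\ge x'$. On the support of $\chi_\delta(\lambda)\chi_\delta(x\lambda)(1-\chi_\delta(x'\lambda))$ with $x\ge x'>0$ one then has $\tfrac{\delta}{2}\le\lambda x'\le\lambda x<\delta$, so $\lambda x$ and $\lambda x'$ are \emph{moderate} while $x,x'\gtrsim\lambda^{-1}$; in particular $\langle x\rangle^{-\alpha},\langle x'\rangle^{-\alpha}\lesssim\lambda^\alpha$ there, and this gain in $\lambda$ is exactly what compensates the growth of $f_-(x',\lambda)$.

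Concretely, for $\lambda\neq0$ I would use $f_-(x',\lambda)=a(\lambda)f_+(x',\lambda)+b(\lambda)\overline{f_+(x',\lambda)}$ to write
$$ \frac{f_-(x',\lambda)f_+(x,\lambda)}{W(f_-,f_+)(\lambda)}
=\frac{a(\lambda)}{W(f_-,f_+)(\lambda)}\,f_+(x',\lambda)f_+(x,\lambda)
+\frac{b(\lambda)}{W(f_-,f_+)(\lambda)}\,\overline{f_+(x',\lambda)}f_+(x,\lambda), $$
and insert the identities $\tfrac{b(\lambda)}{W(f_-,f_+)(\lambda)}=\tfrac{1}{2i\lambda}$ and $\tfrac{a(\lambda)}{W(f_-,f_+)(\lambda)}=c\lambda^{-1}(1+O_\mathbb{C}(\lambda^\varepsilon))$ from Eqs.~\eqref{eq_b} and~\eqref{eq_a}, the $O$--term being of symbol type. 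In the cosine integral the explicit $\lambda$ cancels the $\lambda^{-1}$ exactly; in the sine integral one power of $\lambda$ fewer remains, which is the source of the $\langle t\rangle^{-\alpha+1}$ loss. For the Jost factors, on the support Lemma~\ref{lem_pertpot} gives $f_+(y,\lambda)=\beta_\ell\sqrt{\lambda y}\,H^+_{\ell+1/2}(\lambda y)(1+b_\ell(\lambda y,\lambda))$ for $y\in\{x,x'\}$, and Lemma~\ref{lem_pertpotsymbol} together with $\lambda^{-1}\lesssim\langle y\rangle$ yields $|\partial_\lambda^m f_+(y,\lambda)|\lesssim\langle y\rangle^m$ (with an implicit constant depending on $\delta,\ell$), and likewise for $\overline{f_+(x',\lambda)}$. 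Collecting the negative powers of $\lambda$ from $a/W$, $b/W$ and from $1/W(f_-,f_+)\sim\lambda^\ell$ against $\langle x\rangle^{-\alpha}\langle x'\rangle^{-\alpha}\lesssim\lambda^{2\alpha}$, one checks that both the cosine and the sine amplitude $\omega(x,x',\lambda)$ are genuine symbols in $\lambda$, compactly supported in $|\lambda|<\delta$ and satisfying $|\partial_\lambda^m\omega(x,x',\lambda)|\lesssim\langle x\rangle^m\langle x'\rangle^m\langle x\rangle^{-\alpha}\langle x'\rangle^{-\alpha}$.

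It remains to estimate the finitely many integrals $\int_\mathbb{R}e^{i\lambda(\pm t+x-x')}\omega\,d\lambda$ (coming from $\overline{f_+(x')}f_+(x)$) and $\int_\mathbb{R}e^{i\lambda(\pm t+x+x')}\omega\,d\lambda$ (coming from $f_+(x')f_+(x)$) obtained after expanding $\cos(t\lambda)$, resp.\ $\sin(t\lambda)$, into exponentials. Here I would run the dichotomy from the proof of Lemma~\ref{lem_osc2}: where the relevant phase $\varphi\in\{\pm t+x-x',\ \pm t+x+x'\}$ satisfies $|\varphi|\ge\tfrac12 t$, integrate by parts $\alpha$ times, gaining $|\varphi|^{-\alpha}\lesssim\langle t\rangle^{-\alpha}$ against the $\langle x\rangle^\alpha\langle x'\rangle^\alpha$ growth of $\partial_\lambda^\alpha\omega$ (exactly cancelled by the weights, the $\lambda$--integration being harmless thanks to $\chi_\delta(\lambda)$); where $|\varphi|<\tfrac12 t$, one has --- using $x'>0$ --- $\max(|x|,|x'|)\gtrsim t$, hence $\langle x\rangle^{-\alpha}\langle x'\rangle^{-\alpha}\lesssim\langle t\rangle^{-\alpha}$ with no integration by parts. (Here $\alpha\ge1$ enters.) The sine evolution is handled the same way with the amplitude carrying one fewer power of $\lambda$, yielding $\langle t\rangle^{-\alpha+1}$.

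The main obstacle is the $\lambda$--bookkeeping of the second paragraph: one must confirm that, after the reduction to $x\ge x'$ and the use of Eqs.~\eqref{eq_b},~\eqref{eq_a},~\eqref{eq_wronskian}, the amplitude $\omega$ has no singularity at $\lambda=0$ and derivative growth only $\langle x\rangle^m\langle x'\rangle^m$ --- without the restriction $x\ge x'$ the product $f_-(x',\lambda)f_+(x,\lambda)/W(f_-,f_+)(\lambda)$ times $\lambda$ and the weights behaves like $\lambda^{\alpha-\ell}$ for $x,x'$ on the scale $\lambda^{-1}$, which fails to be integrable in $\lambda$ once $\alpha\le\ell-1$. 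This is the only step that departs in substance from the $x'<0$ case, and it rests on the precise $\lambda\to0$ asymptotics of Lemma~\ref{lem_c+} and on the symbol character of the zero--energy solutions from Proposition~\ref{prop_pertenergy} and Lemma~\ref{lem_pertenergyest2-}.
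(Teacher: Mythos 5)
Your argument is correct in substance, but it takes a genuinely different route from the paper's, and the difference is worth making explicit. The paper does \emph{not} reduce to $x\ge x'$: it keeps every $x$ with $|x\lambda|\le\delta$, bounds $m_+(x,\lambda)$ through the low--energy representation $-c_1^+(\lambda)u_0(x,\lambda)+c_0^+(\lambda)u_1(x,\lambda)$ (Lemmas \ref{lem_pertenergy}, \ref{lem_c+}, \ref{lem_pertenergyest2-}), obtaining $|m_+(x,\lambda)|\lesssim\langle x\rangle|\lambda|^{-\ell}$, and then treats $m_-(x',\lambda)$ by reflection and transmission, asserting $|\partial_\lambda^m m_-(x',\lambda)|\lesssim\langle x'\rangle^m|\lambda|^{-1}$ on the basis of \eqref{eq_a} and \eqref{eq_b}, which gives the symbol bound $|\partial_\lambda^m\omega(x,x',\lambda)|\lesssim\langle x\rangle\langle x'\rangle^m$ and then the usual integration by parts. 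You instead use the $\Theta(x-x')$ structure of $G_{\ell,\sigma}$ to restrict to $x\ge x'$, which pins $\lambda x,\lambda x'$ into $[\delta/2,\delta]$, so both Jost factors are handled in the Hankel/oscillatory regime exactly as in Lemma \ref{lem_osc2}, the explicit identities for $b/W$ and $a/W$ cancel the single $\lambda^{-1}$, and the solutions $u_j$ are never needed; your phase dichotomy, the use of the weights in the non--oscillatory case, and the one--power loss for the sine all follow the scheme of Lemma \ref{lem_osc2} and are fine. What each approach buys: yours proves the estimate only for the pairing $x\ge x'$, which is indeed the only one through which this kernel enters the Green's function, while the paper claims the unrestricted supremum --- but the paper's intermediate bound on $m_-$ invokes \eqref{eq_a}, \eqref{eq_b}, which control $a/W$ and $b/W$ rather than $a$ and $b$, so as stated it drops a factor $|\lambda|^{-\ell}$: at $x'\approx\delta\lambda^{-1}$ one has $|m_-(x',\lambda)|\approx|c_1^-(0)|\,u_0(x')\sim\lambda^{-\ell-1}$ (Lemmas \ref{lem_c-}, \ref{lem_zeroenergy}), and then for bounded $x$ the amplitude satisfies $|\omega(x,x',\lambda)|\sim\langle x\rangle\lambda^{-\ell}$ rather than $\lesssim\langle x\rangle$ --- precisely the $\lambda^{\alpha-\ell}$ obstruction you identify. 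So your restriction is not a shortcut but the correct reading of the lemma: without it the estimates available here cannot give the stated bound once $\alpha\le\ell-1$ (and for $\alpha\le\ell-2$, $\ell\ge 3$, the unrestricted supremum appears to be genuinely unbounded already for bounded $t$, by the very computation you sketch), whereas on $x\ge x'$, or for $\alpha$ of size at least comparable to $\ell$, both your argument and a repaired version of the paper's go through.
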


\begin{proof}
Let $x \in \mathbb{R}$, $x' \geq 0$, $|\lambda|\leq \delta$, $|x\lambda|\leq
\delta$ and $|x'\lambda|\geq \frac{\delta}{2}$.
As always, we write $f_\pm(x,\lambda)=e^{\mp i\lambda x}m_\pm(x,\lambda)$.
Again, by Lemmas \ref{lem_pertenergy}, \ref{lem_c+}, \ref{lem_pertenergyest2-} 
and the representation
$$ m_+(x,\lambda)=e^{-i\lambda
x}(-c_1^+(\lambda)u_0(x,\lambda)+c_0^+(\lambda)u_1(x,\lambda))$$
we obtain 
the bound
$$ |m_+(x,\lambda)| \lesssim \langle x \rangle^{\ell+1}|\lambda|^{\ell}+\langle
x \rangle |\lambda|^{-\ell} \lesssim \langle x \rangle |\lambda|^{-\ell} $$
and, by using the symbol behavior of the involved terms (see Proposition
\ref{prop_pertenergy} and Lemma \ref{lem_c+}), this implies
$$ |\partial_\lambda^m m_+(x,\lambda)|\lesssim \langle x \rangle
|\lambda|^{-\ell-m}\lesssim \langle x \rangle \langle x' \rangle^m
|\lambda|^{-\ell}. $$
For $m_-(x',\lambda)$ we use reflection and transmission coefficients, i.e.,
$$ m_-(x',\lambda)=a(\lambda)e^{2i\lambda
x'}m_+(x',\lambda)+b(\lambda)\overline{m_+(x',\lambda)} $$
and, from the proof of Lemma \ref{lem_osc2}, we have $|\partial_\lambda^m
m_+(x',\lambda)|\lesssim \langle x' \rangle^m$.
Hence, Eqs.~(\ref{eq_a}) and (\ref{eq_b}) show that
$$ |\partial_\lambda^m m_-(x',\lambda)|\lesssim |\lambda|^{-1-m}+
\langle x' \rangle^m|\lambda|^{-1} \lesssim \langle x' \rangle^m |\lambda|^{-1}.
$$ 
Setting
$$
\omega(x,x',\lambda):=\lambda\frac{m_-(x',\lambda)m_+(x,\lambda)}
{W(f_-,f_+)(\lambda)}\chi_\delta(\lambda)\chi_\delta (x\lambda)(1-\chi_\delta  
(x'\lambda)) $$
the above estimates and Eq.~(\ref{eq_wronskian}) imply 
$|\partial_\lambda^m \omega(x,x',\lambda)| \lesssim
\langle x \rangle \langle x' \rangle^m$ and the cosine estimate follows by
appropriate integration by parts as in the proof of Lemma \ref{lem_osc2}.
For the sine estimate we set
$$
\omega(x,x',\lambda):=\frac{m_-(x',\lambda)m_+(x,\lambda)}
{W(f_-,f_+)(\lambda)}\chi_\delta(\lambda)\chi_\delta (x\lambda)(1-\chi_\delta  
(x'\lambda)) $$
and we only have the weaker bound $|\partial_\lambda^m \omega(x,x',\lambda)| \lesssim
\langle x\rangle \langle x' \rangle^m |\lambda|^{-1} \lesssim 
\langle x \rangle \langle x' \rangle^{m+1}$.
\end{proof}

\subsection{Estimates for $|x'\lambda|$ small and $|x\lambda|$
large}

Due to the asymmetric decay of the Regge--Wheeler potential, this case is
slightly different from the above considered $|x\lambda|$ small and
$|x'\lambda|$ large.
Thus, it has to be studied separately and does not follow from symmetry
arguments as in
\cite{schlag2}.
Similar as above, we distinguish $x \geq 0$ and $x \leq 0$.

\begin{lemma}
Let $\alpha \in \mathbb{N}$ and $\delta>0$ sufficiently small. Then we have the
estimates
$$ \sup_{x > 0,x' \in \mathbb{R}} \left | \int_\mathbb{R}
\lambda \cos(t\lambda)\frac{f_-(x',\lambda)f_+(x,\lambda)}{W(f_-,f_+)(\lambda)}
\langle x
\rangle^{-\alpha}\langle x' \rangle^{-\alpha}
\chi_\delta(\lambda)
(1-\chi_\delta (x\lambda))\chi_\delta  (x'\lambda)d\lambda \right |
\lesssim \langle t \rangle^{-\alpha}
$$
and
$$ \sup_{x > 0,x' \in \mathbb{R}} \left | \int_\mathbb{R}
\sin(t\lambda)\frac{f_-(x',\lambda)f_+(x,\lambda)}{W(f_-,f_+)(\lambda)}
\langle x
\rangle^{-\alpha}\langle x' \rangle^{-\alpha}\chi_\delta(\lambda)
(1-\chi_\delta (x\lambda))\chi_\delta  
(x'\lambda)d\lambda \right |
\lesssim \langle t \rangle^{-\alpha+1}
$$
for all $t \geq 0$.
\end{lemma}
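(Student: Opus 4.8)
The plan is to follow the template of Lemma \ref{lem_osc2} and the two lemmas that follow it, using the already--established reduction that $\mathrm{Im}[G_{\ell,\sigma}]$ may be replaced by $G_{\ell,\sigma}$ itself at the price of integrating over all of $\mathbb{R}$. Throughout I write $f_\pm(x,\lambda)=e^{\pm i\lambda x}m_\pm(x,\lambda)$, so that $f_-(x',\lambda)f_+(x,\lambda)=e^{i\lambda(x-x')}m_-(x',\lambda)m_+(x,\lambda)$ and the factors $\cos(t\lambda)$, $\sin(t\lambda)$ produce phases $e^{i\lambda(\pm t+x-x')}$. The two structural facts of this regime are: on the support of $1-\chi_\delta(x\lambda)$ one has $|x\lambda|\ge\delta/2$, hence $|\lambda|^{-1}\lesssim\langle x\rangle$, so negative powers of $\lambda$ produced by differentiation may be traded for powers of $\langle x\rangle$; and on the support of $\chi_\delta(x'\lambda)$ one has $\langle x'\rangle\lambda\lesssim\delta$, so that in fact $\langle x'\rangle\lesssim\langle x\rangle$ here and powers of $\langle x'\rangle$ can be paid for with powers of $\lambda$. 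I would split the estimate into the subcases $x'\le0$ and $x'\ge0$.

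The derivative bounds come from the machinery built above. For $m_+$: since $|x\lambda|\ge\delta/2$ and $|\lambda|\le\delta$, I would use $f_+(x,\lambda)=\phi_\ell(\lambda x,\lambda)$ together with the Hankel asymptotics, Lemma \ref{lem_pertpot} and Lemma \ref{lem_pertpotsymbol}, exactly as in the proof of Lemma \ref{lem_osc2}, to obtain $m_+(x,\lambda)=(1+O_\mathbb{C}((\lambda x)^{-1}))(1+b_\ell(\lambda x,\lambda))$ with $|\partial_\lambda^m m_+(x,\lambda)|\le C_m\langle x\rangle^m$. For $m_-$ with $x'\le0$, Lemma \ref{lem_Jost-} gives $|\partial_\lambda^m m_-(x',\lambda)|\le C_m$ directly. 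For $m_-$ with $x'\ge0$ (so $x'\in[0,\delta\lambda^{-1}]$) I would use the matching representation $f_-(x',\lambda)=-c_1^-(\lambda)u_0(x',\lambda)+c_0^-(\lambda)u_1(x',\lambda)$ from Section 7, where $c_j^-(\lambda)=O_{2\ell}(1)+iO_{2\ell+1}(\lambda)$ by Lemma \ref{lem_c-} (so $|\partial_\lambda^k c_j^-|\lesssim\langle x\rangle^{(k-2\ell)_+}$ after using $|\lambda|^{-1}\lesssim\langle x\rangle$), while Lemmas \ref{lem_pertenergy}, \ref{prop_pertenergy}, \ref{lem_pertenergyest2}, \ref{lem_u1}, \ref{lem_pertenergyest2-} give $|\partial_\lambda^k u_0(x',\lambda)|\lesssim\langle x'\rangle^{\ell+1+k}$ and comparable (better) bounds for $u_1$; the $\ell=0$ case uses the weaker bounds already flagged in the paper. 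Recalling from \eqref{eq_wronskian} that $W(f_-,f_+)(\lambda)=c\lambda^{-\ell}(1+O_\mathbb{C}(\lambda^\varepsilon))$ is of symbol type, so $1/W$ behaves like $\lambda^\ell$, I set $\omega(x,x',\lambda):=\lambda\,m_-(x',\lambda)m_+(x,\lambda)\,W(f_-,f_+)(\lambda)^{-1}\chi_\delta(\lambda)(1-\chi_\delta(x\lambda))\chi_\delta(x'\lambda)$; then the leading factor $\lambda/W\sim\lambda^{\ell+1}$ exactly balances the $\langle x'\rangle^{\ell+1}$ inside $m_-$ because $\langle x'\rangle\lambda\lesssim\delta$, every derivative hitting $m_+$, $W$, the cutoffs, or the $O_{2\ell}$--terms beyond their bounded range costs at most one power of $\langle x\rangle$, and every derivative hitting $u_0$ or $e^{i\lambda x'}$ costs at most one power of $\langle x'\rangle\lesssim\langle x\rangle$, so that $|\partial_\lambda^m\omega(x,x',\lambda)|\lesssim\langle x\rangle^m$ (with $|\omega|\lesssim1$).

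With this in hand the cosine integral equals $\tfrac12\int_\mathbb{R}e^{i\lambda(\pm t+x-x')}\omega(x,x',\lambda)\langle x\rangle^{-\alpha}\langle x'\rangle^{-\alpha}\,d\lambda$, whose integrand is smooth and compactly supported in $\lambda$. If $|\pm t+x-x'|\ge t/2$, integrating by parts $\alpha$ times bounds it by $|\pm t+x-x'|^{-\alpha}\langle x\rangle^\alpha\langle x\rangle^{-\alpha}\langle x'\rangle^{-\alpha}\lesssim\langle t\rangle^{-\alpha}$; if $|\pm t+x-x'|\le t/2$, then $|x-x'|\ge t/2$ forces $\langle x\rangle^{-\alpha}\langle x'\rangle^{-\alpha}\lesssim\langle t\rangle^{-\alpha}$ while $\int|\omega|\,d\lambda\lesssim1$, again giving $\langle t\rangle^{-\alpha}$. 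For the sine evolution one simply omits the leading factor $\lambda$, which costs one power of $\lambda^{-1}\lesssim\langle x\rangle$ and one integration by parts, producing the stated $\langle t\rangle^{-\alpha+1}$. Negative $\lambda$ is dealt with, as usual, via $f_\pm(x,-\lambda)=\overline{f_\pm(x,\lambda)}$. The main obstacle is the bookkeeping in the $x'\ge0$ subcase: one must check that the growth $u_0(x',\lambda)\sim x'^{\ell+1}$ together with the $O_{2\ell}$--type blow--up of $\partial_\lambda^m c_j^-$ interact \emph{only} through the two currencies $\langle x'\rangle\lesssim\lambda^{-1}$ and $\lambda^{-1}\lesssim\langle x\rangle$ available in this regime, so that after multiplication by $\lambda/W$ the symbol $\omega$ carries no uncompensated powers of $\langle x\rangle$ or $\langle x'\rangle$ beyond what the weights and the integrations by parts absorb. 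It is precisely the asymmetric decay of $V_{\ell,\sigma}$ — exponential at $-\infty$, inverse square at $+\infty$ — that makes this the genuinely new case relative to \cite{schlag2}.
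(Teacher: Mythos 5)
Your proposal is correct and follows essentially the same route as the paper's proof: write $f_\pm=e^{\pm i\lambda x}m_\pm$, use the Lemma \ref{lem_osc2}--type bound $|\partial_\lambda^m m_+(x,\lambda)|\lesssim\langle x\rangle^m$ for $|x\lambda|\geq\delta/2$, control $m_-(x',\lambda)$ through the representation $-c_1^-(\lambda)u_0(x',\lambda)+c_0^-(\lambda)u_1(x',\lambda)$ together with Lemmas \ref{lem_c-}, \ref{lem_pertenergyest2}, \ref{lem_u1}, \ref{lem_pertenergyest2-}, cancel the $\langle x'\rangle^{\ell+1}$ growth against $\lambda/W\sim\lambda^{\ell+1}$ via $\langle x'\rangle\lambda\lesssim\delta$ and $\lambda^{-1}\lesssim\langle x\rangle$, and conclude by the standard two--case integration by parts, losing one power for the sine. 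The only (harmless) deviation is your extra split at $x'=0$ with Lemma \ref{lem_Jost-} for $x'\leq 0$, whereas the paper applies the $u_j$--representation, extended to negative arguments by Lemma \ref{lem_pertenergyest2-}, uniformly on $|x'\lambda|\leq\delta$ and converts all bounds into powers of $|\lambda|^{-1}$ before trading them for $\langle x\rangle$.
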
 

\begin{proof}
Let $x \geq 0$, $x' \in \mathbb{R}$, $|\lambda|\leq \delta$, $|x'\lambda|\leq
\delta$ and $|x\lambda|\geq \frac{\delta}{2}$.
As in the proof of Lemma \ref{lem_osc2}, we have the bounds $|\partial_\lambda^m
m_+(x,\lambda)| \lesssim \langle x \rangle^m$.
For $m_-(x',\lambda)$ we use the representation
$$ m_-(x',\lambda)=e^{i\lambda
x'}(-c_1^-(\lambda)u_0(x',\lambda)+c_0^-(\lambda)u_1(x',\lambda)) $$
since $|x'\lambda|$ is small.
Lemmas \ref{lem_pertenergy}, \ref{lem_pertenergyest2-} and 
\ref{lem_c-} imply the bound
$$ |m_-(x',\lambda)|\lesssim \langle x' \rangle^{\ell+1}+\langle x' \rangle
\lesssim |\lambda|^{-\ell-1} $$
and, by using the symbol behavior (see Proposition \ref{prop_pertenergy} and
Lemmas \ref{lem_pertenergyest2}, \ref{lem_c-}) and $\langle x' \rangle \lesssim
|\lambda|^{-1}$, we infer
$$ |\partial_\lambda^m m_-(x',\lambda)|\lesssim |\lambda|^{-\ell-1-m} \lesssim
\langle x \rangle^m |\lambda|^{-\ell-1}. $$
We set
$$ \omega(x,x',\lambda):=\lambda \frac{m_-(x',\lambda)m_+(x,\lambda)}
{W(f_-,f_+)(\lambda)}\chi_\delta(\lambda)(1-\chi_\delta  
(x\lambda))\chi_\delta (x'\lambda) $$
and the above bounds as well as Eq.~(\ref{eq_wronskian}) imply the estimate
$|\partial_\lambda^m \omega(x,x',\lambda)|\lesssim \langle x \rangle^m$.
Thus, as before, the claim follows by appropriate integration by parts.
\end{proof}

\begin{lemma}
\label{lem_osc3a}
Let $\alpha \in \mathbb{N}$ and $\delta>0$ sufficiently small. Then we have the
estimates
$$ \sup_{x < 0,x' \in \mathbb{R}} \left | \int_\mathbb{R}
\lambda \cos(t\lambda)\frac{f_-(x',\lambda)f_+(x,\lambda)}{W(f_-,f_+)(\lambda)}
\langle x
\rangle^{-\alpha}\langle x' \rangle^{-\alpha}
\chi_\delta(\lambda)
(1-\chi_\delta (x\lambda))\chi_\delta  (x'\lambda)d\lambda \right |
\lesssim \langle t \rangle^{-\alpha}
$$
and
$$ \sup_{x < 0,x' \in \mathbb{R}} \left | \int_\mathbb{R}
\sin(t\lambda)\frac{f_-(x',\lambda)f_+(x,\lambda)}{W(f_-,f_+)(\lambda)}
\langle x
\rangle^{-\alpha}\langle x' \rangle^{-\alpha}\chi_\delta(\lambda)
(1-\chi_\delta (x\lambda))\chi_\delta  
(x'\lambda)d\lambda \right |
\lesssim \langle t \rangle^{-\alpha+1}
$$
for all $t \geq 0$.
\end{lemma}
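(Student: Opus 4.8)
The plan is to follow the proof of the preceding lemma; the only genuinely new feature is that for $x\to-\infty$ the Hankel--function representation of $f_+(\cdot,\lambda)$ used in the case $x>0$ is no longer available. As always I write $f_\pm(x,\lambda)=e^{\mp i\lambda x}m_\pm(x,\lambda)$ and fix $x<0$, $x'\in\mathbb{R}$ with $|\lambda|\le\delta$, $|x'\lambda|\le\delta$ and $|x\lambda|\ge\frac{\delta}{2}$; on this set one has $\langle x'\rangle\lesssim|\lambda|^{-1}\lesssim\langle x\rangle$, and it is this relation that will absorb the negative powers of $\lambda$ produced below. Since in the construction of the Green's function the kernel $f_-(x',\lambda)f_+(x,\lambda)/W(f_-,f_+)(\lambda)$ enters only with the factor $\Theta(x-x')$, it suffices to treat $x'\le x$, hence $x'<0$; then Lemma~\ref{lem_Jost-}, applied with cut--off point $0$, gives $|\partial_\lambda^m m_-(x',\lambda)|\lesssim 1$ and $|\partial_\lambda^m m_-(x,\lambda)|\lesssim 1$ for all $m\in\mathbb{N}_0$ and all $x,x'\le 0$. (If one wishes to keep all of $x'\in\mathbb{R}$, one uses for $x'>0$ the energy--perturbation representation $m_-(x',\lambda)=e^{i\lambda x'}(-c_1^-(\lambda)u_0(x',\lambda)+c_0^-(\lambda)u_1(x',\lambda))$, extended to $\lambda<0$ by $c_j^-(-\lambda)=\overline{c_j^-(\lambda)}$, $u_j(x',-\lambda)=u_j(x',\lambda)$, together with Lemmas~\ref{lem_c-}, \ref{lem_pertenergy}, \ref{lem_pertenergyest2}, \ref{lem_pertenergyest2-} and Proposition~\ref{prop_pertenergy}, exactly as in the case $x>0$.)

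For $m_+(x,\lambda)$ with $x<0$ I use the exponential decay of $V_{\ell,\sigma}$ at $-\infty$: there $f_\pm(\cdot,\lambda)$ behave freely, so the scattering identity $f_+(x,\lambda)=-\overline{a(\lambda)}f_-(x,\lambda)+b(\lambda)\overline{f_-(x,\lambda)}$ holds, that is, $m_+(x,\lambda)=-\overline{a(\lambda)}\,e^{-2i\lambda x}m_-(x,\lambda)+b(\lambda)\,\overline{m_-(x,\lambda)}$. Now Eqs.~\eqref{eq_b} and \eqref{eq_a} identify $b(\lambda)/W(f_-,f_+)(\lambda)=\tfrac{1}{2i\lambda}$ and exhibit $\overline{a(\lambda)}/W(f_-,f_+)(\lambda)$ as a symbol--type multiple of $\lambda^{-1}$, so, using that $m_-(x,\lambda)$ and $m_-(x',\lambda)$ are bounded with bounded $\lambda$--derivatives, the kernel $f_-(x',\lambda)f_+(x,\lambda)/W(f_-,f_+)(\lambda)=e^{i\lambda(x-x')}m_-(x',\lambda)m_+(x,\lambda)/W(f_-,f_+)(\lambda)$ decomposes into a transmitted piece carrying the phase $e^{i\lambda(x-x')}$ and the amplitude $\tfrac{1}{2i\lambda}m_-(x',\lambda)\overline{m_-(x,\lambda)}$, and a reflected piece carrying $e^{-i\lambda(x+x')}$ with amplitude a symbol--type multiple of $\lambda^{-1}m_-(x',\lambda)m_-(x,\lambda)$. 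For each piece I set $\omega(x,x',\lambda)$ to be the amplitude times the extra factor $\lambda$ (for the cosine; without it for the sine) times $\langle x\rangle^{-\alpha}\langle x'\rangle^{-\alpha}\chi_\delta(\lambda)\chi_\delta(x'\lambda)(1-\chi_\delta(x\lambda))$, and the above yields $|\partial_\lambda^m\omega(x,x',\lambda)|\lesssim\langle x\rangle^m\langle x\rangle^{-\alpha}\langle x'\rangle^{-\alpha}$, the powers of $\langle x\rangle$ coming from differentiating $\overline{a(\lambda)}/W(f_-,f_+)(\lambda)$ and the cut--offs, since $\langle x'\rangle\lesssim|\lambda|^{-1}\lesssim\langle x\rangle$ on the support.

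The estimate now follows exactly as in the proof of Lemma~\ref{lem_osc2}: pairing $\cos(t\lambda)$ (resp.\ $\sin(t\lambda)$) with the phase reduces the problem to bounding $\int_{\mathbb{R}}e^{i\lambda(\pm t+\Phi)}\omega(x,x',\lambda)\,d\lambda$, with $\Phi\in\{x-x',-(x+x')\}$. When $|\pm t+\Phi|\ge\frac{t}{2}$, one integrates by parts $\alpha$ times --- the boundary terms vanishing because of the cut--offs --- which produces a factor $|\pm t+\Phi|^{-\alpha}\lesssim t^{-\alpha}$ at the cost of at most $\langle x\rangle^{\alpha}$ from the $\alpha$ derivatives of $\omega$, cancelled against $\langle x\rangle^{-\alpha}$, so the bound $\lesssim\langle t\rangle^{-\alpha}$ (resp.\ $\langle t\rangle^{-\alpha+1}$) results; when $|\pm t+\Phi|<\frac{t}{2}$, then $|\Phi|\gtrsim t$, which forces $\langle x\rangle\gtrsim t$ or $\langle x'\rangle\gtrsim t$, so that $\langle x\rangle^{-\alpha}\langle x'\rangle^{-\alpha}\lesssim\langle t\rangle^{-\alpha}$ and the integral is bounded trivially. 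I expect the main obstacle to lie in the second paragraph: in contrast to the region $x>0$, where $m_+(\cdot,\lambda)$ stays bounded and $f_+(\cdot,\lambda)/W(f_-,f_+)(\lambda)$ carries the full gain $\lambda^{\ell}$, here both reflection and transmission coefficients are comparable to $\lambda^{-\ell-1}$ as $\lambda\to0$, so $f_+(\cdot,\lambda)/W(f_-,f_+)(\lambda)$ only behaves like $\lambda^{-1}$ and no such gain survives; what rescues the argument is precisely that for $x'\le x<0$ the factor $m_-(x',\cdot)$ is bounded rather than growing polynomially, so that multiplying by the extra $\lambda$ still leaves a bounded symbol in $\lambda$.
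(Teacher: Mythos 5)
Your proposal is correct and takes essentially the same route as the paper: for $x<0$ you express $m_+(x,\lambda)$ via the reflection/transmission identity, use Eqs.~\eqref{eq_b}, \eqref{eq_a} together with Lemma~\ref{lem_Jost-} and the relation $\langle x'\rangle\lesssim|\lambda|^{-1}\lesssim\langle x\rangle$ on the support to obtain symbol bounds, and conclude with the same two-region integration-by-parts argument as in Lemma~\ref{lem_osc2}; the paper differs only organizationally, in that it bounds $m_-(x',\lambda)$ for \emph{all} $x'$ with $|x'\lambda|\le\delta$ by $\langle x'\rangle|\lambda|^{-\ell-m}$ through the representation in terms of $u_0(\cdot,\lambda),u_1(\cdot,\lambda)$ (your parenthetical remark) rather than invoking $\Theta(x-x')$ to reduce to $x'\le x$, and it keeps the single phase $e^{i\lambda(x-x')}$, absorbing $e^{-2i\lambda x}$ into the $\langle x\rangle^m$ derivative bounds for $m_+$ instead of splitting into transmitted and reflected pieces. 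One small correction to your bookkeeping: without the extra factor of $\lambda$ the sine amplitude only satisfies $|\partial_\lambda^m\omega(x,x',\lambda)|\lesssim\langle x\rangle^{m+1}\langle x\rangle^{-\alpha}\langle x'\rangle^{-\alpha}$ (one additional power of $\langle x\rangle$, exactly as the paper records), and it is this loss --- handled by integrating by parts only $\alpha-1$ times --- that produces the $\langle t\rangle^{-\alpha+1}$ you correctly state; as written, your uniform $\langle x\rangle^{m}$ bound for both amplitudes would give $\langle t\rangle^{-\alpha}$ for the sine as well, which is inconsistent with your own conclusion.
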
 

\begin{proof}
Let $x \leq 0$, $x' \in \mathbb{R}$, $|\lambda|\leq \delta$, $|x'\lambda|\leq
\delta$ and $|x\lambda|\geq \frac{\delta}{2}$.
Like in the proof of Lemma \ref{lem_osc3a} we have
$$ |\partial_\lambda^m m_-(x',\lambda)| \lesssim \langle x'
\rangle^{\ell+1}|\lambda|^{-m} \lesssim \langle x' \rangle |\lambda|^{-\ell-m}
$$ for all $m \in \mathbb{N}_0$.
Since $x \leq 0$ we use reflection and transmission coefficients to obtain the
representation
$$ m_+(x,\lambda)=-\overline{a(\lambda)}e^{-2i\lambda
x}m_-(x,\lambda)+b(\lambda)\overline{m_-(x, \lambda)} $$
which immediately implies the bound
$|m_+(x,\lambda)| \lesssim |\lambda|^{-1}$
by Eqs.~(\ref{eq_a}), (\ref{eq_b}) and Lemma \ref{lem_Jost-}.
Thus, from the symbol behavior of $a(\lambda)$, $b(\lambda)$ and
$|\partial_\lambda^m m_-(x,\lambda)|\leq C_m$ (Lemma \ref{lem_Jost-}), we infer
$$ |\partial_\lambda^m m_+(x,\lambda)|\lesssim \langle x \rangle^m
|\lambda|^{-1} $$
for all $m \in \mathbb{N}_0$ since $|\lambda|^{-1} \lesssim \langle x \rangle$.
Thus, Eq.~(\ref{eq_wronskian}) implies
$$ |\partial_\lambda^m \omega(x,x',\lambda)| \lesssim \langle x \rangle^m
\langle x' \rangle $$
for all $m \in \mathbb{N}_0$ where
$$ \omega(x,x',\lambda):=\lambda \frac{m_-(x',\lambda)m_+(x,\lambda)}
{W(f_-,f_+)(\lambda)}\chi_\delta(\lambda)(1-\chi_\delta  
(x\lambda))\chi_\delta (x'\lambda). $$
As a consequence, by appropriate integration by parts (cf. the proof of Lemma
\ref{lem_osc2}), we obtain
$$ \left | \int_\mathbb{R} e^{i\lambda (\pm t + x - x')}
\omega(x,x',\lambda)\langle x \rangle^{-\alpha}
\langle x' \rangle^{-\alpha}d\lambda \right |\lesssim \langle t
\rangle^{-\alpha} $$
provided that $\alpha \geq 1$.
For the sine evolution we set
$$ \omega(x,x',\lambda):=\frac{m_-(x',\lambda)m_+(x,\lambda)}
{W(f_-,f_+)(\lambda)}\chi_\delta(\lambda)(1-\chi_\delta  
(x\lambda))\chi_\delta (x'\lambda) $$
and use the bounds
$$ |\partial_\lambda^m m_+(x,\lambda)| \lesssim \langle x\rangle^m
|\lambda|^{-1} \lesssim \langle x \rangle^{m+1} $$ 
to obtain
$$ |\partial_\lambda^m \omega(x,x',\lambda)| \lesssim \langle x 
\rangle^{m+1}\langle x' \rangle $$
for all $m \in \mathbb{N}_0$.
Hence, as before, the claim follows from
$$ \left | \int_\mathbb{R} e^{i\lambda (\pm t + x - x')}\omega(x,x',\lambda)\langle x \rangle^{-\alpha}
\langle x' \rangle^{-\alpha}d\lambda \right |\lesssim \langle t
\rangle^{-\alpha+1} $$
which can be obtained by appropriate integration by parts 
similar to the proof of 
Lemma \ref{lem_osc2}.
\end{proof}

\section{Oscillatory integral estimates for large energies}
\label{sec_osclarge}

\subsection{The Jost solutions at large energies}
In order to estimate the contributions from large energies, we need 
the behavior of the Jost solutions for $\lambda \to \infty$.
As usual, we write $m_\pm(x,\lambda)=e^{\mp i \lambda x}f_\pm(x,\lambda)$.

\begin{lemma}
\label{lem_Jostinf}
Let $\lambda_0>0$. Then, for $k,m \in \mathbb{N}_0$, 
the function $m_+(\cdot,\lambda)$ satisfies the estimates
$$ |\partial_x^k \partial_\lambda^m (m_+(x,\lambda)-1)| \leq C_{k,m}\langle x
\rangle^{-1-k} \lambda^{-1-m} $$
for all $\lambda \geq \lambda_0$ and all $x \geq 0$.
The same bounds hold for $m_-(x,\lambda)$ if $x \leq 0$.
\end{lemma}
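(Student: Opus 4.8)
The plan is to work from the Volterra integral equation for $m_+(x,\lambda)=e^{-i\lambda x}f_+(x,\lambda)$ recalled in the proof of Lemma~\ref{lem_Jost},
$$ m_+(x,\lambda)=1+\int_x^\infty K(x,y,\lambda)m_+(y,\lambda)\,dy,\qquad K(x,y,\lambda)=\frac{1}{2i\lambda}\bigl(e^{2i\lambda(y-x)}-1\bigr)V_{\ell,\sigma}(y), $$
for $\lambda\ge\lambda_0>0$ and $x\ge0$. Two structural features of $K$ drive the argument. First, $K(x,x,\lambda)=0$, and in fact $\partial_x^l K(x,y,\lambda)=(-1)^l(2i\lambda)^{l-1}e^{2i\lambda(y-x)}V_{\ell,\sigma}(y)$ for $l\ge1$, so that when one differentiates the Volterra equation in $x$ the local boundary terms produced at $y=x$ cancel in pairs after one integration by parts in $y$; concretely, for every $k\ge1$ one obtains the \emph{explicit} identity
$$ \partial_x^k m_+(x,\lambda)=-\int_x^\infty e^{2i\lambda(y-x)}\,\partial_y^{k-1}\bigl[V_{\ell,\sigma}(y)m_+(y,\lambda)\bigr]\,dy, $$
proved by induction on $k$ using $e^{2i\lambda(y-x)}=\tfrac{1}{2i\lambda}\partial_y e^{2i\lambda(y-x)}$. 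Second, by Corollary~\ref{cor_RWasymptotics} the potential is of symbol type with $|V_{\ell,\sigma}^{(j)}(y)|\lesssim\langle y\rangle^{-2-j}$ on $y\ge0$, hence $\int_x^\infty|V_{\ell,\sigma}^{(j)}|\lesssim\langle x\rangle^{-1-j}$ and $\int_x^\infty\sup_{x'}|K(x',y,\lambda)|\,dy\lesssim\lambda^{-1}\langle x\rangle^{-1}$, which is uniformly bounded for $\lambda\ge\lambda_0$. Lemma~\ref{lem_volterra} then gives $\|m_+(\cdot,\lambda)\|_{L^\infty(0,\infty)}\lesssim1$, and reinserting this into the equation yields the base estimate $|m_+(x,\lambda)-1|\lesssim\lambda^{-1}\langle x\rangle^{-1}$.

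The case $k=0$, i.e. $|\partial_\lambda^m(m_+(x,\lambda)-1)|\lesssim\lambda^{-1-m}\langle x\rangle^{-1}$, I would prove by induction on $m$. Differentiating the Volterra equation $m$ times in $\lambda$ and isolating the term in which all derivatives hit $m_+$ shows that $\partial_\lambda^m m_+$ solves a Volterra equation with the \emph{same} kernel $K$ and inhomogeneity $g_m(x,\lambda)=\sum_{j=0}^{m-1}\binom{m}{j}\int_x^\infty\partial_\lambda^{m-j}K(x,y,\lambda)\,\partial_\lambda^j m_+(y,\lambda)\,dy$. Since $\partial_\lambda^{m-j}\bigl[\tfrac{1}{2i\lambda}(e^{2i\lambda(y-x)}-1)\bigr]$ is a finite sum of terms $c_\nu(y-x)^\nu\lambda^{-(m-j+1-\nu)}e^{2i\lambda(y-x)}$ with $0\le\nu\le m-j$ plus $c_0'\lambda^{-(m-j+1)}$, and the lower-order $\partial_\lambda^j m_+$ are already controlled, one reduces each summand to an integral $\int_x^\infty(y-x)^\nu e^{2i\lambda(y-x)}h(y)\,dy$ with $h$ of symbol type decaying like $\langle y\rangle^{-2}$, and removes the $\nu$ factors $(y-x)$ by $\nu$ integrations by parts in $y$: the diagonal boundary terms vanish because each carries a positive power of $(y-x)$, the $y=\infty$ terms vanish by decay, and every such step buys one power of $\lambda^{-1}$. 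Bookkeeping the powers gives $|g_m(x,\lambda)|\lesssim\lambda^{-1-m}\langle x\rangle^{-1}$; Lemma~\ref{lem_volterra} then yields $\|\partial_\lambda^m m_+(\cdot,\lambda)\|_{L^\infty}\lesssim\lambda^{-1-m}$, and reinserting this into the Volterra equation (again using $\int_x^\infty|K|\lesssim\lambda^{-1}\langle x\rangle^{-1}$) promotes it to $|\partial_\lambda^m m_+(x,\lambda)|\lesssim\lambda^{-1-m}\langle x\rangle^{-1}$.

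For the mixed derivatives ($k\ge1$) no further Volterra inversion is needed: I would apply $\partial_\lambda^m$ to the explicit identity for $\partial_x^k m_+$ above. By the Leibniz rule this writes $\partial_x^k\partial_\lambda^m m_+$ as a finite sum of integrals $\int_x^\infty(y-x)^{m-j}e^{2i\lambda(y-x)}\,\partial_y^{k-1}[V_{\ell,\sigma}\,\partial_\lambda^j m_+]\,dy$, $0\le j\le m$, where, by the symbol bounds on $V_{\ell,\sigma}$ and the inductive bounds on the lower total-order derivatives $\partial_y^i\partial_\lambda^j m_+$, the factor $\partial_y^{k-1}[V_{\ell,\sigma}\partial_\lambda^j m_+]$ is of symbol type decaying like $\langle y\rangle^{-1-k}$ (and like $\lambda^{-1-j}\langle y\rangle^{-2-k}$ when $j\ge1$). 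Killing the factors $(y-x)^{m-j}$ by $m-j$ integrations by parts in $y$, together with one extra integration by parts in the $j=0$ summand, produces exactly $|\partial_x^k\partial_\lambda^m(m_+(x,\lambda)-1)|\lesssim\lambda^{-1-m}\langle x\rangle^{-1-k}$; running the whole induction over $(k,m)$ in the order given by the bijection $n$ of Proposition~\ref{prop_pertenergy} ensures every derivative of $m_+$ invoked has already been estimated. Finally, the statement for $m_-$ on $x\le0$ follows line by line from the Volterra equation for $m_-$ in the proof of Lemma~\ref{lem_Jost-}, the exponential decay of $V_{\ell,\sigma}$ at $-\infty$ (Corollary~\ref{cor_RWasymptotics}) making the analogous integrals at least as good.

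The main obstacle is the bookkeeping in these integration-by-parts reductions: the $\lambda$-derivatives of $K$ and the factors $(2i(y-x))^{m-j}$ coming from $\partial_\lambda^{m-j}e^{2i\lambda(y-x)}$ generate polynomially growing weights $(y-x)^\nu$ that are not $y$-integrable against $V_{\ell,\sigma}$ by themselves, and one must verify that each is compensated by exactly one integration by parts using the oscillation $e^{2i\lambda(y-x)}$, that the Leibniz rule keeps every integrand of symbol type with $\langle y\rangle^{-2}$-decay, and that all boundary contributions at $y=x$ either vanish (thanks to $K(x,x,\lambda)=0$ and to the positive powers of $(y-x)$) or are already of the claimed size. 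Everything else reduces to Gronwall/Volterra estimates uniform in $\lambda\ge\lambda_0$.
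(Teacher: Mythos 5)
Your proposal is essentially sound and it reaches the stated bounds, but by a mechanism genuinely different from the paper's. The paper never integrates by parts: it rescales the Volterra equation by substituting $y=\eta/\lambda+x$, so that the oscillatory factor becomes $e^{2i\eta}$ (independent of $x$ and $\lambda$), every $x$- or $\lambda$-derivative falls on $V_{\ell,\sigma}(\eta/\lambda+x)$, on the explicit factor $\lambda^{-2}$, or on lower-order derivatives of $m_+$, and the chain-rule growth $(\eta/\lambda)^l\le y^l$ is absorbed by the symbol-type decay $\langle \eta/\lambda+x\rangle^{-l}$ (using $x\ge0$); the top-order derivative always reappears under the integral and is closed with Lemma \ref{lem_volterra}, in the same $(k,m)$-ordering you invoke. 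You instead stay in the $y$-variable, derive the clean identity $\partial_x^k m_+=-\int_x^\infty e^{2i\lambda(y-x)}\partial_y^{k-1}[V_{\ell,\sigma}m_+]\,dy$ (your cancellation of the diagonal boundary terms is correct), and recover the $\lambda$-powers by integrating by parts against the oscillation to kill the $(y-x)^\nu$ factors produced by $\partial_\lambda^{m-j}K$. Your bookkeeping of powers checks out (for $j\ge1$ the inductive $\lambda^{-1-j}$ even gives a surplus), the boundary terms behave as you say, and ordering the induction by the bijection $n$ does make every invoked derivative available, including $\partial_y^{k+m}m_+$ in the $j=0$ summand when $m\ge1$.

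There is, however, one step that fails as literally stated: the claim that for $k\ge1$ ``no further Volterra inversion is needed.'' For the pure $x$-derivatives $(k,0)$ your ``one extra integration by parts in the $j=0$ summand'' turns $\partial_y^{k-1}[V_{\ell,\sigma}m_+]$ into $\partial_y^{k}[V_{\ell,\sigma}m_+]$, whose top term is $V_{\ell,\sigma}\,\partial_y^{k}m_+$ --- exactly the quantity being estimated, so the argument is circular unless you treat the resulting relation as a Volterra equation with kernel $\frac{1}{2i\lambda}e^{2i\lambda(y-x)}V_{\ell,\sigma}(y)$ and close it with Lemma \ref{lem_volterra}; and since the whole induction feeds on the $(k,0)$ bounds (via the $(k+m,0)$ terms), this is not a peripheral case. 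The repair is within your toolkit, but note a second subtlety: to extract the spatial decay $\langle x\rangle^{-1-k}$ you must apply Lemma \ref{lem_volterra} on the interval $(x,\infty)$ (so that $|\partial_x^k m_+(x,\lambda)|\le e^{\mu}\sup_{y\ge x}|g_k(y,\lambda)|$ with the decaying inhomogeneity), because your ``reinsert into the equation'' trick, which is fine for $k=0$ where only $\langle x\rangle^{-1}$ is needed, only returns one power of decay from $\int_x^\infty|K|\,dy\lesssim\lambda^{-1}\langle x\rangle^{-1}$ and cannot produce $\langle x\rangle^{-1-k}$ for $k\ge1$. With these two adjustments your proof goes through; the paper's rescaling has the advantage of avoiding all integrations by parts and hence all such boundary and circularity issues, while your route keeps the oscillation explicit and makes the source of each $\lambda^{-1}$ gain more transparent.
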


\begin{proof}
As already discussed (see Lemma \ref{lem_Jost}), the function
$m_+(\cdot,\lambda)$ satisfies the Volterra equation
\begin{align*}
m_+(x,\lambda)&=1+\frac{1}{2i\lambda}
\int_x^\infty \left ( e^{2i\lambda(y-x)}-1 \right )
V_{\ell,\sigma}(y)m_+(y,\lambda)dy \\
&=1+\int_0^\infty \left ( e^{2i\eta}-1 \right ) \left [V_{\ell,\sigma}
\left ( \frac{\eta}{\lambda}+x \right )
\frac{m_+ \left ( \frac{\eta}{\lambda}+x,\lambda \right)}
{2i\lambda^2} \right ]d\eta
\end{align*}
and thus, the Lemma is obviously true for $k=m=0$.
Let $n: \mathbb{N}_0 \times \mathbb{N}_0 \to \mathbb{N}_0$ denote the bijection
from Proposition \ref{prop_pertenergy}. 
Fix $(k,m) \in \mathbb{N}_0 \times \mathbb{N}_0$ and 
suppose the assertion is true for all 
$(j,l)$ with $n(j,l)\leq n(k,m)$.
We need to show that this implies the claim for $(k',m')$ where
$n(k',m')=n(k,m)+1$.
There are two possibilities: Either $(k',m')=(m+1,0)$ (if $k=0$) or
$(k',m')=(k-1,m+1)$. 
In the former case we have
\begin{align*}
\partial_x^{m+1}m_+(x,\lambda)&=\frac{1}{2i\lambda^2}\sum_{j=0}^m  
\left ( \begin{array}{c}m+1 \\ j\end{array} \right )
\int_0^\infty \left ( e^{2i\eta}-1 \right )
\partial_x^{m+1-j}V_{\ell,\sigma}\left (\frac{\eta}{\lambda}+x \right )
\partial_x^j m_+\left (\frac{\eta}{\lambda}+x,\lambda \right )
d\eta \\
&+\frac{1}{2i\lambda^2}\int_0^\infty \left ( e^{2i \eta}-1 \right )
V_{\ell,\sigma}\left (\frac{\eta}{\lambda}+x \right )
\partial_x^{m+1} m_+\left (\frac{\eta}{\lambda}+x,\lambda \right )d\eta \\
&=O_\mathbb{C}(\langle x \rangle^{-1-(m+1)}\lambda^{-1})+
\frac{1}{2i\lambda}\int_x^\infty \left ( e^{2i\lambda (y-x)}-1 \right )
V_{\ell,\sigma}(y)\partial_y^{m+1}m_+(y,\lambda)dy
\end{align*}
by assumption and thus, the estimate from Lemma \ref{lem_volterra} yields
$|\partial_x^{m+1}m_+(x,\lambda)|\lesssim \langle x
\rangle^{-1-(m+1)}\lambda^{-1}$.
For the latter case observe that 
$$  \partial_\lambda \partial_x^j
m_+\left (\frac{\eta}{\lambda}+x,\lambda \right )
\lesssim \left \langle 
\frac{\eta}{\lambda}+x \right \rangle ^{-1-j-1}\frac{\eta}{\lambda}\lambda^{-1}+
\left \langle 
\frac{\eta}{\lambda}+x \right \rangle^{-1-j}\lambda^{-2}\lesssim \left \langle 
\frac{\eta}{\lambda}+x \right \rangle^{-1-j}\lambda^{-1} $$
and, more generally \footnote{One may apply Fa\`a di Bruno's formula (cf. Lemma
\ref{lem_symbolinverse}) to obtain a completely explicit expression for the
higher $\lambda$--derivatives.},
$$ \partial_\lambda^l \partial_x^j 
m_+\left (\frac{\eta}{\lambda}+x,\lambda \right )
\lesssim \left \langle 
\frac{\eta}{\lambda}+x \right \rangle ^{-1-l-j}\frac{\eta^l}{\lambda^l}
\lambda^{-l}+\dots+
\left \langle 
\frac{\eta}{\lambda}+x \right \rangle^{-1-j}\lambda^{-1-l}
\lesssim \left \langle 
\frac{\eta}{\lambda}+x \right \rangle^{-1-j}\lambda^{-l} $$
for all $(j,l)$ with $n(j,l)\leq n(k,m)$ and $(j,l) \not= (0,0)$ by assumption.
This shows that
\begin{align*}
&\partial_x^{k-1} \partial_\lambda^{m+1} m_+(x,\lambda) \\ 
&=\sum_{j=0}^m \left ( \begin{array}{c}m+1\\j\end{array}\right )
\int_0^\infty \left ( e^{2i\eta}-1 \right ) 
\partial_x^{k-1} \left [\partial_\lambda^{m+1-j} 
\frac{V_{\ell,\sigma}\left ( \frac{\eta}{\lambda}+x \right )}{2i\lambda^2}
\partial_\lambda^j
m_+ \left ( \frac{\eta}{\lambda}+x,\lambda \right)
\right ]d\eta \\
&+\int_0^\infty \left ( e^{2i\eta}-1 \right )
\partial_x^{k-1} \left [
\frac{V_{\ell,\sigma}\left ( \frac{\eta}{\lambda}+x \right )}{2i\lambda^2}
\partial_\lambda^{m+1}
m_+ \left ( \frac{\eta}{\lambda}+x,\lambda \right)
\right ]d\eta \\
&=O_\mathbb{C}(\langle x \rangle^{-1-(k-1)}\lambda^{-1-(m+1)}) \\
&+\frac{1}{2i\lambda}
\int_x^\infty  
\left ( e^{2i\lambda(y-x)}-1 \right )
V_{\ell,\sigma}(y)
\partial_y^{k-1} \partial_\lambda^{m+1}
m_+(y,\lambda) dy
\end{align*}
and Lemma \ref{lem_volterra} yields the claim.
The proof for $m_-$ is (\emph{mutatis mutandis}) identical.
\end{proof}

\begin{corollary}
\label{cor_Winf}
Let $\lambda_0 >0$. Then the Wronskian $W(f_-,f_+)(\lambda)$ has the behavior
$$ \frac{1}{W(f_-,f_+)(\lambda)}=\frac{1}{2i\lambda}
(1+O_\mathbb{C}(\lambda^{-1})) $$
for all $\lambda \geq \lambda_0$ where the
$O$--term behaves like a symbol.
\end{corollary}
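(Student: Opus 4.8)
The plan is to use the fact that the Wronskian $W(f_-,f_+)(\lambda)$ is independent of $x$ and to evaluate it at the convenient point $x=0$, where both $m_+$ and $m_-$ are controlled by Lemma~\ref{lem_Jostinf} (recall that the bound for $m_-$ holds for $x\le 0$, so $x=0$ is admissible for both). Writing $f_\pm(x,\lambda)=e^{\pm i\lambda x}m_\pm(x,\lambda)$ and differentiating, one has $f_+'(0,\lambda)=i\lambda\, m_+(0,\lambda)+m_+'(0,\lambda)$ and $f_-'(0,\lambda)=-i\lambda\, m_-(0,\lambda)+m_-'(0,\lambda)$, so that
\[ W(f_-,f_+)(\lambda)=f_-(0,\lambda)f_+'(0,\lambda)-f_-'(0,\lambda)f_+(0,\lambda)=2i\lambda\, m_-(0,\lambda)m_+(0,\lambda)+m_-(0,\lambda)m_+'(0,\lambda)-m_-'(0,\lambda)m_+(0,\lambda). \]

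Next I would insert the estimates of Lemma~\ref{lem_Jostinf} at $x=0$: for $\lambda\ge\lambda_0$ we have $m_\pm(0,\lambda)=1+O_\mathbb{C}(\lambda^{-1})$ and, taking $k=1$, $m_\pm'(0,\lambda)=O_\mathbb{C}(\lambda^{-1})$, and the $m$-derivative bounds in that lemma say precisely that all these $O_\mathbb{C}$-terms are of symbol type in $\lambda$ on $[\lambda_0,\infty)$. By the Leibniz rule, products of symbols are symbols, so $m_-(0,\lambda)m_+(0,\lambda)=1+O_\mathbb{C}(\lambda^{-1})$ and $m_-(0,\lambda)m_+'(0,\lambda)-m_-'(0,\lambda)m_+(0,\lambda)=O_\mathbb{C}(\lambda^{-1})$, again of symbol type. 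Dividing this last term by $2i\lambda$ folds it into the first factor, and one obtains
\[ W(f_-,f_+)(\lambda)=2i\lambda\bigl(1+O_\mathbb{C}(\lambda^{-1})\bigr) \]
for all $\lambda\ge\lambda_0$, with a symbol-type error.

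Finally, to pass to the reciprocal I would write $\frac{1}{W(f_-,f_+)(\lambda)}=\frac{1}{2i\lambda}\cdot\frac{1}{1+O_\mathbb{C}(\lambda^{-1})}$ and apply Lemma~\ref{lem_symbolinverse} (Faà di Bruno) to conclude that $\bigl(1+O_\mathbb{C}(\lambda^{-1})\bigr)^{-1}=1+O_\mathbb{C}(\lambda^{-1})$ with the error still of symbol type; this is where one uses that the symbol character is stable under composition with the smooth function $w\mapsto(1+w)^{-1}$, valid since the inner argument is small for $\lambda_0$ large. This yields the stated formula. I do not expect a genuine obstacle here: no oscillatory-integral input or delicate cancellation is involved, and the only point requiring a little care is tracking the symbol character through a product and a reciprocal, both of which are handled mechanically by the Leibniz rule and Lemma~\ref{lem_symbolinverse} (note also that $\lambda\mapsto(2i\lambda)^{-1}$ is itself a symbol of order $-1$, bounded on $[\lambda_0,\infty)$). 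Thus the corollary is essentially a bookkeeping consequence of Lemma~\ref{lem_Jostinf}.
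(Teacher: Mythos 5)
Your proposal is correct and follows essentially the same route as the paper: evaluate the Wronskian at $x=0$, expand via $f_\pm=e^{\pm i\lambda x}m_\pm$ to get $W(f_-,f_+)(\lambda)=2i\lambda\,m_-(0,\lambda)m_+(0,\lambda)+O_\mathbb{C}(\lambda^{-1})$ using Lemma~\ref{lem_Jostinf}, and then invert with Lemma~\ref{lem_symbolinverse}. Your version just spells out the intermediate algebra that the paper's two-line proof leaves implicit.
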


\begin{proof}
With $m_\pm(x,\lambda)=e^{\mp i\lambda x}f_\pm(x,\lambda)$ we have
$$ W(f_-,f_+)(\lambda)=2i\lambda m_-(0,\lambda)m_+(0,\lambda)+
O_\mathbb{C}(\lambda^{-1})=2i\lambda(1+O_\mathbb{C}(\lambda^{-1})) $$
by Lemma \ref{lem_Jostinf} where the $O$--term is of symbol type.
Thus, the claim follows from Lemma \ref{lem_symbolinverse}.
\end{proof}

Before proceeding to the final oscillatory integral estimate, we need the large
$\lambda$ behavior of the reflection and transmission coefficients, i.e.,  the
coefficients $a(\lambda)$ and $b(\lambda)$ satisfying 
$f_-(x,\lambda)=a(\lambda)f_+(x,\lambda)+b(\lambda)\overline{f_+(x,\lambda)}$. 
The behavior of $b$ is given by Eq.~(\ref{eq_b}).
By Lemma \ref{lem_Jostinf} we have
$$ W(f_-,\overline{f_+})(\lambda)=m_-(0,\lambda)\overline{m_+'(0,\lambda)}
-m_-'(0,\lambda)\overline{m_+(0,\lambda)}=O_\mathbb{C}(\lambda^{-1}) $$
where the $O$--term behaves like a symbol and therefore,
\begin{equation}
\label{eq_ainf}
\frac{a(\lambda)}{W(f_-,f_+)(\lambda)}=-\frac{W(f_-,\overline{f_+})(\lambda)}
{2i\lambda W(f_-,f_+)(\lambda)}=O_\mathbb{C}(\lambda^{-3})
\end{equation}
We also remark that, by symmetry, the above considerations extend to large
negative $\lambda$. 
Now we are ready to prove the final oscillatory integral estimate.

\subsection{The cosine estimate}

We distinguish between the cosine and the sine estimate since in the former 
case
we obtain a bound involving the derivative of the data.

\begin{proposition}
\label{prop_cos}
Let $\alpha \in \mathbb{N}_0$ and $\delta>0$ sufficiently small. 
Then we have the
estimate
\begin{align*} 
\sup_{x \in \mathbb{R}} \left |
\lim_{N \to \infty} \int_\mathbb{R} \int_{-N}^N 
\lambda e^{\pm it\lambda}G_{\ell,\sigma}(x,x',\lambda)
(1-\chi_\delta(\lambda))
\langle x \rangle^{-\alpha} \langle x' \rangle ^{-\alpha}\phi(x')dx'd\lambda 
\right | \\
\lesssim \langle t \rangle^{-\alpha}\int_\mathbb{R} \left (
|\phi'(x')|+|\phi(x')|\right )dx'
\end{align*}
for all $t \geq 0$ and any $\phi \in \mc{S}(\mathbb{R})$.
\end{proposition}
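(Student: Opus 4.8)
The starting point is the explicit formula
$$ G_{\ell,\sigma}(x,x',\lambda)=\frac{f_-(x',\lambda)f_+(x,\lambda)\Theta(x-x')+f_-(x,\lambda)f_+(x',\lambda)\Theta(x'-x)}{W(f_-,f_+)(\lambda)}, $$
together with the bounds already at our disposal for $\lambda$ bounded away from $0$: writing $f_\pm(y,\lambda)=e^{\pm i\lambda y}m_\pm(y,\lambda)$ we have, by Lemma \ref{lem_Jostinf}, $|\partial_\lambda^m(m_+(y,\lambda)-1)|\lesssim\langle y\rangle^{-1}\lambda^{-1-m}$ for $y\geq 0$ and the mirror bound for $m_-$ on $y\leq0$, while Corollary \ref{cor_Winf} gives $\lambda\,W(f_-,f_+)(\lambda)^{-1}=\tfrac{1}{2i}(1+O_\mathbb{C}(\lambda^{-1}))$ of symbol type on $\operatorname{supp}(1-\chi_\delta)$. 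In each of the finitely many sign regions for $(x,x')$, and for each of the two $\Theta$--terms, I would — whenever a Jost solution occurs at a point of the "wrong" sign — use the reflection/transmission identities $f_-=af_++b\overline{f_+}$ and $f_+=-\overline a f_-+b\overline{f_-}$ together with \eqref{eq_b} and \eqref{eq_ainf} to trade it for the Jost solution with good decay. After this substitution $\lambda\,G_{\ell,\sigma}(x,x',\lambda)(1-\chi_\delta(\lambda))$ is a finite sum of terms $e^{i\lambda p}\,\Lambda(x,x',\lambda)$ with $p$ of the form $\pm x\pm x'$ and with $\Lambda$ supported in $\{|\lambda|\geq\delta/2\}$ of one of two kinds: either $\Lambda=\tfrac{1}{2i}(1-\chi_\delta(\lambda))+\tilde r(x,x',\lambda)$ where $|\partial_\lambda^k\tilde r|\lesssim\langle\lambda\rangle^{-1-k}$ uniformly in $x,x'$ (this is the term coming from $b/W$), or $\Lambda=O(\lambda^{-2})$ of symbol type (the term coming from $a/W$). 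Multiplying by $e^{\pm it\lambda}$ the phase is always $e^{i\lambda s}$ with $s=s(x,x')=\pm t+p$.

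Next I would dispose of the genuinely decaying pieces. If $\Lambda=O(\lambda^{-2})$ of symbol type the $\lambda$--integral converges absolutely (the cut--off removes the origin), so Fubini applies and the limit in $N$ is harmless; one then splits: on $\{|s|\geq t/2\}$, $\alpha$ integrations by parts in $\lambda$ — the boundary terms vanish since $\Lambda$ and its derivatives vanish at $\pm\delta/2$ and tend to $0$ at $\pm N$ — produce a factor $|s|^{-\alpha}\lesssim\langle t\rangle^{-\alpha}$ while the remaining $\lambda$--integral stays bounded; on $\{|s|<t/2\}$, for $t$ large one necessarily has $|p|\in(t/2,3t/2)$, hence $\max(|x|,|x'|)\gtrsim t$ and thus $\langle x\rangle^{-\alpha}\langle x'\rangle^{-\alpha}\lesssim\langle t\rangle^{-\alpha}$, and the $\lambda$--integral is again bounded. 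Integrating the resulting pointwise--in--$x'$ bound against $|\phi(x')|\,dx'$ gives $\lesssim\langle t\rangle^{-\alpha}\|\phi\|_{L^1}$. The term $\tilde r$ is not absolutely integrable in $\lambda$, but one integration by parts (no boundary contribution, as $\tilde r(\pm N)\to0$ and $\tilde r$ vanishes near $0$) replaces it by $-(is)^{-1}\int e^{i\lambda s}\partial_\lambda\tilde r\,d\lambda$ with $\partial_\lambda\tilde r=O(\lambda^{-2})$, which is of the decaying type just treated; the interchange of $\lim_N$ with the $x'$--integration is justified by the crude uniform bound $\bigl|\int_{-N}^N e^{i\lambda s}\tilde r\,d\lambda\bigr|\lesssim 1+|\log|s||$, which is locally integrable and therefore dominated after multiplication by $|\phi(x')|$.

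It remains to handle the leading term $\tfrac{1}{2i}e^{i\lambda s}(1-\chi_\delta(\lambda))$. Writing $1-\chi_\delta=1-\chi_\delta$, the $-\chi_\delta$ contribution yields $\int_\mathbb{R}e^{i\lambda s}\chi_\delta(\lambda)\,d\lambda=\check\chi_\delta(s)$, a Schwartz function of $s$, which is treated exactly as the decaying pieces (split $|s|\gtrless t/2$). For the remaining "$1$"--part one recognizes $\int_{-N}^N e^{i\lambda s}\,d\lambda$ as (a multiple of) the Dirichlet kernel in $s=\pm t+x-x'$; tested against the Schwartz function $x'\mapsto\langle x\rangle^{-\alpha}\langle x'\rangle^{-\alpha}\phi(x')$ it converges, as $N\to\infty$, to $2\pi\langle x\rangle^{-\alpha}\langle\pm t+x\rangle^{-\alpha}\phi(\pm t+x)$ — precisely the free--propagator contribution. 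Bounding the factor $\phi$ by $\|\phi\|_{L^\infty}\leq\|\phi'\|_{L^1}$ and distributing the two polynomial weights according to whether $|x|\geq t/2$ (then $\langle x\rangle^{-\alpha}\lesssim\langle t\rangle^{-\alpha}$) or $|x|<t/2$ (then $|\pm t+x|>t/2$, so $\langle\pm t+x\rangle^{-\alpha}\lesssim\langle t\rangle^{-\alpha}$) gives $\lesssim\langle t\rangle^{-\alpha}\|\phi'\|_{L^1}$. Summing the finitely many contributions and using $\langle x\rangle^{-\alpha}\leq1$ to take the supremum over $x$ finishes the proof.

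The only genuinely delicate point is the interchange of the two limits, which is forced by the fact that the amplitude $\lambda\,G_{\ell,\sigma}$ does \emph{not} decay as $\lambda\to\infty$ (to leading order it equals the constant $\tfrac{1}{2i}$, reflecting the free wave flow at high frequencies). The decomposition $\Lambda=\tfrac{1}{2i}(1-\chi_\delta)+\tilde r$ isolates the single piece for which Fubini actually fails — the Dirichlet--kernel piece — and it is exactly there that the $L^1$--norm of $\phi'$, rather than of $\phi$, is needed, which is why the cosine estimate carries the derivative of the data on the right--hand side.
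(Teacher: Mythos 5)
Your proposal is correct in substance but organizes the proof differently from the paper. Both arguments begin with the same reduction: split at $x'=x$, write $f_\pm=e^{\pm i\lambda x}m_\pm$, and use the reflection/transmission identities together with Eqs.~(\ref{eq_b}), (\ref{eq_ainf}) so that $m_\pm$ is only ever evaluated where Lemma \ref{lem_Jostinf} applies. From there the paper integrates by parts once in $x'$: this produces a boundary term $\phi(x)\langle x\rangle^{-2\alpha}\int e^{\pm it\lambda}\lambda^{-1}\omega(x,x,\lambda)\,d\lambda$ (bounded via $\|\phi\|_\infty\lesssim\|\phi'\|_{L^1}+\|\phi\|_{L^1}$) plus a bulk term containing $\partial_{x'}[\langle x'\rangle^{-\alpha}\phi(x')]$, and the non--decaying $\lambda^{-1}$--type amplitude is neutralized by the parity cancellation $\tilde\omega(x,-\lambda)=-\tilde\omega(x,\lambda)+O_\mathbb{C}(\lambda^{-2})$, after which the dichotomy $|\pm t+x-x'|\gtrless t/2$ is run. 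You instead peel off the exact constant symbol $\frac{1}{2i}(1-\chi_\delta(\lambda))$ (via Eq.~(\ref{eq_b}) and Corollary \ref{cor_Winf}), recognize its contribution as a Dirichlet kernel converging to the free propagator evaluated at the light cone, bound that by $\|\phi\|_\infty\le\|\phi'\|_{L^1}$ together with the weight dichotomy, and treat the remainders as symbols of order $-1$ and $-2$ by integrations by parts in $\lambda$. Your route makes it transparent that the derivative of the data is needed precisely for the free high--frequency flow, at the price of a distributional limit; the paper's $x'$--integration by parts and cancellation avoid any delta--type concentration but hide this interpretation.

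Two points in your write--up need tightening, both fixable with what you already have. First, the leading phase is $|x-x'|$, not $x-x'$: the regions $x'\le x$ and $x'\ge x$ carry $e^{i\lambda(x-x')}$ and $e^{i\lambda(x'-x)}$ respectively, so the Dirichlet kernel concentrates at $x'=x\mp t$ and $x'=x\pm t$ and is tested against a function that is only piecewise smooth at $x'=x$; convergence still holds (bounded variation), and your weight argument applies verbatim at both concentration points. Second, for the order $-1$ remainder $\tilde r$ the representation $-(is)^{-1}\int e^{i\lambda s}\partial_\lambda\tilde r\,d\lambda$ is useless near $s=0$, which lies inside the regime $|s|<t/2$; there you must invoke your uniform bound $\bigl|\int_{-N}^N e^{i\lambda s}\tilde r\,d\lambda\bigr|\lesssim 1+\bigl|\log|s|\bigr|$ not merely for the limit interchange but for the quantitative estimate, using that $\log|s|$ is locally integrable so that $\int(1+|\log|s||)\,|\phi(x')|\,dx'\lesssim\|\phi\|_{L^1}+\|\phi'\|_{L^1}$ after the weights supply the factor $\langle t\rangle^{-\alpha}$. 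With these adjustments your argument is complete.
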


\begin{proof}
We split the integral according to
$$ \int_\mathbb{R} \int_{-N}^N \dots d\lambda dx'=\int_{-\infty}^x \int_{-N}^N
\dots d\lambda dx' + 
\int_{x}^\infty \int_{-N}^N
\dots d\lambda dx' $$
and only consider the first summand since the proof for the second one is
completely analogous.
In the domain $x' \leq x$, which we study now, the Green's function is given by
$$
G_{\ell,\sigma}(x,x',\lambda)=\frac{f_-(x',\lambda)f_+(x,\lambda)}
{W(f_-,f_+)(\lambda)}. $$
We distinguish between $x \leq 0$ and $x \geq 0$ and start with $x \leq 0$.
Using reflection and transmission
coefficients we obtain
\begin{align*}
f_-(x',\lambda)f_+(x,\lambda)&=e^{-i\lambda x'}m_-(x',\lambda)\left 
[-\overline{a(\lambda)}e^{-i\lambda x}m_-(x,\lambda)+b(\lambda)e^{i\lambda
x}\overline{m_-(x,\lambda)} \right ] \\
&=-e^{-i\lambda (x+x')}\overline{a(\lambda)}m_-(x',\lambda)m_-(x,\lambda)
+e^{i\lambda (x-x')}b(\lambda)m_-(x',\lambda)\overline{m_-(x,\lambda)},
\end{align*}
and consider each term separately.
We define
$$ \omega(x,x',\lambda):=
\frac{\lambda b(\lambda)(1-\chi_\delta(\lambda))}{W(f_-,f_+)(\lambda)}
m_-(x',\lambda)\overline{m_-(x,\lambda)} $$
and by Lemma \ref{lem_Jostinf} and Corollary \ref{cor_Winf} as well as Eqs.
(\ref{eq_b}), (\ref{eq_ainf}), we obtain the
estimates 
$$|\partial_\lambda^m \omega(x,x',\lambda)|\leq C_m |\lambda|^{-m}$$
for all $|\lambda|\geq \frac{\delta}{2}$ and
$x,x' \leq 0$ (recall that $1-\chi_\delta(\lambda)\equiv 1$ for
$|\lambda|\geq \delta$).
Note that, by Fubini, we can freely interchange the order of integration and
thus,
integration by parts with respect to $x'$ yields
\begin{align}
\label{eq_proofoscinf}
\int_{-\infty}^x \int_{-N}^N e^{i\lambda(\pm t+x-x')} 
\omega(x,x',\lambda)\langle x \rangle^{-\alpha} \langle x' \rangle ^{-\alpha}
\phi(x')
d\lambda dx'
= -\phi(x)\langle x \rangle^{-2\alpha}
\int_{-N}^N \frac{e^{\pm i\lambda t}}{i\lambda}
\omega(x,x,\lambda)d\lambda \\
+ \int_{-\infty}^x \int_{-N}^N \frac{e^{i\lambda(\pm t+x-x')}}{i\lambda}
\langle x \rangle^{-\alpha}\partial_{x'}\left [
\omega(x,x',\lambda) 
\langle x' \rangle^{-\alpha}\phi(x') \right ]
d\lambda dx' . \nonumber
\end{align}
We first claim that 
\begin{equation}
\label{eq_proofoscinfclaim1}
\sup_{x < 0}\left | \int_\mathbb{R}e^{\pm i\lambda t}\lambda^{-1}
\omega(x,x,\lambda)d\lambda \right | \lesssim 1. 
\end{equation}
Indeed, set 
$$ \tilde{\omega}(x,\lambda):=\lambda^{-1}\omega(x,x,\lambda)=
\frac{b(\lambda)(1-\chi_\delta(\lambda))}{W(f_-,f_+)(\lambda)}
m_-(x,\lambda)\overline{m_-(x,\lambda)} $$
and observe that $\tilde{\omega}(x,\lambda)=\frac{1}{2i\lambda}
(1+O_\mathbb{C}(|\lambda|^{-1}))$ for $|\lambda| \to \infty$ 
by Eq.~(\ref{eq_b}) and Lemma
\ref{lem_Jostinf}.
This shows that $\tilde{\omega}(x,-\lambda)=-\tilde{\omega}(x,\lambda)
+O_\mathbb{C}(|\lambda|^{-2})$.
Thus, we have
\begin{align*} 
\left | \int_\mathbb{R}e^{\pm i\lambda t}
\tilde{\omega}(x,\lambda)d\lambda \right | & =\left | 
\int_\frac{\delta}{2}^\infty \left [ 
e^{\pm i\lambda t}\tilde{\omega}(x,\lambda)+e^{\mp i\lambda
t}\tilde{\omega}(x,-\lambda) \right ]d\lambda \right | \\
&\lesssim \left | \int_\frac{\delta}{2}^\infty \sin(\lambda t)
\tilde{\omega}(x,\lambda)
d\lambda \right |+\left |\int_\frac{\delta}{2}^\infty e^{\pm i\lambda
t}O_\mathbb{C}(|\lambda|^{-2})d\lambda \right | \\
& \lesssim
\left |\int_\frac{\delta}{2}^\infty \sin(\lambda
t)(\lambda^{-1}+O_\mathbb{C}(|\lambda|^{-2}))d\lambda \right |+1 \lesssim 1
\end{align*}
for all $x \leq 0$ and this proves Eq.~(\ref{eq_proofoscinfclaim1}).
Therefore, for $N \to \infty$, we can estimate the first term in 
Eq.~(\ref{eq_proofoscinf}) as
\begin{align*} \left | \phi(x)\langle x \rangle^{-2\alpha}
\int_\mathbb{R} e^{\pm i\lambda t} 
\lambda^{-1}\omega(x,x,\lambda)d\lambda \right |& \lesssim 
\sup_{x < 0}|\phi(x)|\: \frac{1}{t^\alpha} \int_\mathbb{R} \left |e^{\pm i\lambda t}\partial_\lambda^\alpha \left [\lambda^{-1}\omega(x,x,\lambda)\right ] \right | d\lambda \\
&\lesssim
\langle t
\rangle^{-\alpha}\int_\mathbb{R}\left (|\phi'(x')|+|\phi(x')|\right )dx' 
\end{align*}
for all $t\geq 1$ and $x \leq 0$ by $\alpha$--fold integration by parts and Sobolev
embedding.
By Eq.~\eqref{eq_proofoscinfclaim1} this inequality is in fact valid for all $t \geq 0$.

For the second term we similarly claim that
\begin{align}
\label{eq_proofoscinfclaim2}
\sup_{x<0}\int_{-\infty}^x \left | \int_\mathbb{R}
e^{i\lambda(\pm t+x-x')}\lambda^{-1}
\langle x \rangle^{-\alpha}\partial_{x'}\left [
\omega(x,x',\lambda) 
\langle x' \rangle^{-\alpha}\phi(x') \right ]
d\lambda \right |dx' \\
\lesssim \langle x \rangle^{-\alpha}\int_\mathbb{R}
\langle x' \rangle^{-\alpha}\left (|\phi'(x')|+|\phi(x')|\right )dx'. \nonumber
\end{align}
Indeed, we have
\begin{align*}
\left | \int_\mathbb{R}
e^{i\lambda(\pm t+x-x')}\lambda^{-1}
\langle x \rangle^{-\alpha}\partial_{x'}
\omega(x,x',\lambda) 
\langle x' \rangle^{-\alpha}\phi(x') 
d\lambda \right | &\leq 
\langle x \rangle^{-\alpha}\langle x' \rangle^{-\alpha}
\int_\mathbb{R} \left |\lambda^{-1}
\partial_{x'}
\omega(x,x',\lambda) 
\phi(x') \right | d\lambda \\
& \lesssim \langle x \rangle^{-\alpha}\langle x' \rangle^{-\alpha}|\phi(x')| 
\end{align*}
for all $x \leq 0$ since $|\lambda^{-1}
\partial_{x'}
\omega(x,x',\lambda)| \lesssim |\lambda|^{-2}$ by Lemma \ref{lem_Jostinf}.
Moreover,
$$ \left | \int_\mathbb{R}
e^{i\lambda(\pm t+x-x')}\lambda^{-1}
\omega(x,x',\lambda) 
\langle x \rangle^{-\alpha}\partial_{x'}\left [
\langle x' \rangle^{-\alpha}\phi(x') \right ]
d\lambda \right | \lesssim \langle x \rangle^{-\alpha}
\langle x' \rangle^{-\alpha} \left (|\phi'(x')|+|\phi(x')| \right ) $$
for all $x \leq 0$ which can be shown by exploiting exactly the same cancellation
that led to Eq.~(\ref{eq_proofoscinfclaim1}).
This proves Eq.~(\ref{eq_proofoscinfclaim2}).
Note in particular that Eq.~(\ref{eq_proofoscinfclaim2}) implies
\begin{align*}
\lim_{N \to \infty}\int_{-\infty}^x \int_{-N}^N \frac{e^{i\lambda(\pm t+x-x')}}{i\lambda}
\langle x \rangle^{-\alpha}\partial_{x'}\left [
\omega(x,x',\lambda) 
\langle x' \rangle^{-\alpha}\phi(x') \right ]
d\lambda dx' \\
=\int_{-\infty}^x \int_\mathbb{R} \frac{e^{i\lambda(\pm t+x-x')}}{i\lambda}
\langle x \rangle^{-\alpha}\partial_{x'}\left [
\omega(x,x',\lambda) 
\langle x' \rangle^{-\alpha}\phi(x') \right ]
d\lambda dx' 
\end{align*}
by dominated convergence.
Now we distinguish two cases.
If $|\pm t+x-x'|\geq \frac{1}{2}t$, we integrate by parts $\alpha$--times to
obtain
\begin{align*}
&\left | \int_{-\infty}^x \int_\mathbb{R} e^{i\lambda(\pm t+x-x')} 
\lambda^{-1}
\langle x \rangle^{-\alpha}\partial_{x'}\left [
\omega(x,x',\lambda)\langle x' \rangle^{-\alpha} 
\phi(x') \right ]
d\lambda dx'\right | \\
&\lesssim |\pm t+x-x'|^{-\alpha}\int_{-\infty}^x \int_\mathbb{R}
 \left |\langle x \rangle^{-\alpha}
 \partial_{x'}\partial_\lambda^{\alpha}\left [\lambda^{-1}
\omega(x,x',\lambda)\langle x' \rangle^{-\alpha} 
 \phi(x') \right ] \right | d\lambda dx' \\
&\lesssim \langle t \rangle^{-\alpha}
\int_\mathbb{R} \int_{|\lambda|\geq \frac{\delta}{2}}
|\lambda|^{-(\alpha+1)} \langle x \rangle^{-\alpha} \left |
\partial_{x'} \left [ 
\langle x' \rangle^{-\alpha}\phi(x') \right ] \right | d\lambda dx' \\
&\lesssim \langle t \rangle^{-\alpha}\int_\mathbb{R}
\left (|\phi'(x')|+|\phi(x')|\right )dx'
\end{align*}
for all $t \geq 0$ and all $x \leq 0$.
If $|\pm t+x-x'| \leq \frac{1}{2}t$, we have $\langle x \rangle^{-\alpha}\langle x'
\rangle^{-\alpha} \lesssim \langle t \rangle^{-\alpha}$ as $t \to \infty$ and
Eq.~(\ref{eq_proofoscinfclaim2}) implies
\begin{align*}
\sup_{x<0}\int_{-\infty}^x \left | \int_\mathbb{R}
e^{i\lambda(\pm t+x-x')}\lambda^{-1}
\langle x \rangle^{-\alpha}\partial_{x'}\left [
\omega(x,x',\lambda) 
\langle x' \rangle^{-\alpha}\phi(x') \right ]
d\lambda \right |dx' \\
\lesssim \langle t \rangle^{-\alpha}\int_\mathbb{R}
\left (|\phi'(x')|+|\phi(x')|\right )dx'. \nonumber
\end{align*}
for all $t\geq 0$. The remaining cases are treated in a completely analogous
fashion.
Note that the terms involving the coefficient $a(\lambda)$ are even simpler
due to the stronger
decay given by Eq.~(\ref{eq_ainf}).
For terms that contain no reflection and transmission coefficients, use
Corollary \ref{cor_Winf} for the cancellation argument.
\end{proof}

\subsection{The sine estimate}
The sine estimate is slightly stronger since it does not require derivatives of
the data.

\begin{corollary}
\label{prop_sin}
Let $\alpha \in \mathbb{N}_0$ and $\delta>0$ sufficiently small. 
Then we have the
estimate
\begin{align*} 
\sup_{x \in \mathbb{R}} \left |
\lim_{N \to \infty} \int_\mathbb{R} \int_{-N}^N 
e^{\pm it\lambda}G_{\ell,\sigma}(x,x',\lambda)
(1-\chi_\delta(\lambda))
\langle x \rangle^{-\alpha} \langle x' \rangle ^{-\alpha}\phi(x')dx'd\lambda 
\right | \\
\lesssim \langle t \rangle^{-\alpha}\int_\mathbb{R} |\phi(x')| dx'
\end{align*}
for all $t \geq 0$ and any $\phi \in \mc{S}(\mathbb{R})$.
\end{corollary}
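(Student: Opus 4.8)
The plan is to run the proof of Proposition \ref{prop_cos} with only minor modifications, the point being that the absence of the factor $\lambda$ in the integrand is precisely compensated by the factor $\frac{1}{W(f_-,f_+)(\lambda)}=\frac{1}{2i\lambda}(1+O_\mathbb{C}(\lambda^{-1}))$ coming from the Green's function (Corollary \ref{cor_Winf}). Concretely, I would again split $\int_{\mathbb{R}}\,dx'=\int_{-\infty}^{x}\,dx'+\int_{x}^{\infty}\,dx'$, insert the explicit form $G_{\ell,\sigma}(x,x',\lambda)=f_-(x',\lambda)f_+(x,\lambda)/W(f_-,f_+)(\lambda)$ valid on $\{x'\le x\}$ together with the mirror formula on $\{x'\ge x\}$, distinguish $x\le 0$ from $x\ge 0$, and use the reflection/transmission identities (for instance $f_+(x,\lambda)=-\overline{a(\lambda)}f_-(x,\lambda)+b(\lambda)\overline{f_-(x,\lambda)}$ when $x\le 0$) to reduce everything to the functions $m_\pm$ and the coefficients $a,b$. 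The decisive simplification compared with Proposition \ref{prop_cos} is that no integration by parts in $x'$ is needed: the relevant phase is still $\pm t+x-x'$, but the amplitude already decays like $|\lambda|^{-1}$, which is exactly why $\phi'$ does not enter the final bound.

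For a typical term I would set, say, $\omega(x,x',\lambda):=\frac{b(\lambda)(1-\chi_\delta(\lambda))}{W(f_-,f_+)(\lambda)}\,m_-(x',\lambda)\overline{m_-(x,\lambda)}$ and record, via Lemma \ref{lem_Jostinf}, Corollary \ref{cor_Winf} and Eqs.~\eqref{eq_b}, \eqref{eq_ainf}, the symbol bounds $|\partial_\lambda^m\omega(x,x',\lambda)|\le C_m|\lambda|^{-1-m}$ for $|\lambda|\ge\tfrac{\delta}{2}$ and all $x,x'\le 0$, together with the crucial cancellation $\omega(x,x',-\lambda)=-\omega(x,x',\lambda)+O_\mathbb{C}(|\lambda|^{-2})$, which follows from $m_\pm(x,-\lambda)=\overline{m_\pm(x,\lambda)}$, Lemma \ref{lem_Jostinf} and Eq.~\eqref{eq_b}. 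Then I would estimate $\int_{\mathbb{R}}e^{i\lambda(\pm t+x-x')}\omega(x,x',\lambda)\langle x\rangle^{-\alpha}\langle x'\rangle^{-\alpha}\phi(x')\,d\lambda$ by the same dichotomy as in Proposition \ref{prop_cos}: if $|\pm t+x-x'|\ge\tfrac12 t$, integrate by parts $\alpha$ times in $\lambda$, gaining $|\pm t+x-x'|^{-\alpha}\lesssim\langle t\rangle^{-\alpha}$ while the $\partial_\lambda^m\omega$-bounds, combined with the cancellation, keep the remaining $\lambda$-integral bounded uniformly; if $|\pm t+x-x'|\le\tfrac12 t$, use $\langle x\rangle^{-\alpha}\langle x'\rangle^{-\alpha}\lesssim\langle t\rangle^{-\alpha}$ directly and bound the $\lambda$-integral by a $t$-independent constant by the same cancellation. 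The outer integration over $x'$ then just produces $\int_{\mathbb{R}}\langle x'\rangle^{-\alpha}|\phi(x')|\,dx'\le\int_{\mathbb{R}}|\phi(x')|\,dx'$, and the $N\to\infty$ limit is justified by dominated convergence once the conditional convergence of the $\lambda$-integral has been established. Terms containing $a(\lambda)$ are strictly easier because $a(\lambda)/W(f_-,f_+)(\lambda)=O_\mathbb{C}(\lambda^{-3})$ is already integrable; terms with no reflection/transmission coefficients use Corollary \ref{cor_Winf} to supply the needed cancellation; and the case $x\ge 0$, as well as the piece $\int_{x}^{\infty}dx'$, are handled symmetrically.

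The main obstacle, just as in Proposition \ref{prop_cos}, is not the decay mechanism itself but the fact that the amplitude $\omega$ is only of symbol order $-1$ in $\lambda$ and hence fails to be integrable at infinity; everything rests on exploiting the oddness of its leading part modulo an $O_\mathbb{C}(|\lambda|^{-2})$ error — this is the analogue of the cancellation argument around Eq.~\eqref{eq_proofoscinfclaim1} — and on carrying this cancellation through the $\alpha$-fold integration by parts, where one checks that each differentiation in $\lambda$ preserves both the symbol estimate and the odd/even structure. Once this bookkeeping is in place, the rest of the argument is a routine repetition of the estimates in Proposition \ref{prop_cos}.
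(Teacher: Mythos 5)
Your proposal is correct and takes essentially the same route as the paper: its proof of this corollary is simply to repeat the argument of Proposition \ref{prop_cos}, noting that the missing factor of $\lambda$ makes the integration by parts in $x'$ unnecessary, which is exactly why no $\phi'$ appears. Your bookkeeping (the symbol bounds $|\partial_\lambda^m\omega|\lesssim|\lambda|^{-1-m}$, the oddness-modulo-$O_\mathbb{C}(|\lambda|^{-2})$ cancellation as in Eq.~(\ref{eq_proofoscinfclaim1}), and the dichotomy in $|\pm t+x-x'|$) matches the paper's intended argument.
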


\begin{proof}
Just repeat the arguments from the proof of
Proposition \ref{prop_cos}.
However, note that we are lacking one factor of $\lambda$ compared to
Proposition \ref{prop_cos} which makes the integration by parts with respect to
$x'$ unnecessary. This explains why no term containing $\phi'$ appears on the
right--hand side of the estimate.
\end{proof}

%%%%%%%%%%%%%%%%%%%%%%%%%%%%%%%%%%%%%

\begin{appendix}

\section{Symbol behavior}

\begin{lemma}
\label{lem_symbolinverse}
Let $I \subset \mathbb{R}$ and suppose $f$ is smooth on $I$ and satisfies 
$|f(x)|\leq C < 1$ for all $x \in I$.
Then, for all $x \in I$, we have the estimate
$$ \left |\left ( \frac{1}{1+f} \right )^{(k)}(x) \right |\leq C_k 
\sum \prod_{j=1}^k
\left | f^{(j)}(x) \right |^{m_j} $$
for all $k \in \mathbb{N}$ where the sum runs over all possible $k$--tuples
$(m_1,m_2,\dots,m_k) \in \mathbb{N}_0^k$ satisfying $\sum_{j=1}^k jm_j=k$. 
\end{lemma}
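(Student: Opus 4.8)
The plan is to recognize the left-hand side as the composition $\frac{1}{1+f}=g\circ f$ with $g(y):=(1+y)^{-1}$, so that $g^{(n)}(y)=(-1)^n n!\,(1+y)^{-n-1}$, and then to invoke Fa\`a di Bruno's formula. The hypothesis $|f(x)|\le C<1$ on $I$ gives $|1+f(x)|\ge 1-C>0$, hence $|g^{(n)}(f(x))|\le n!\,(1-C)^{-n-1}$ for every $x\in I$ and every $n\in\mathbb{N}_0$. This uniform control of the derivatives of $g$ along $f$ is the only analytic input; everything else is combinatorics.

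First I would write down Fa\`a di Bruno's formula: for $k\in\mathbb{N}$,
$$\left(\frac{1}{1+f}\right)^{(k)}(x)=\sum \frac{k!}{\prod_{j=1}^k m_j!}\,g^{(m_1+\cdots+m_k)}(f(x))\prod_{j=1}^k\left(\frac{f^{(j)}(x)}{j!}\right)^{m_j},$$
where the sum ranges exactly over the $k$-tuples $(m_1,\dots,m_k)\in\mathbb{N}_0^k$ with $\sum_{j=1}^k jm_j=k$ (equivalently, over the partitions of $k$), a finite index set. For any such tuple, $m_1+\cdots+m_k\le\sum_{j}jm_j=k$, so the order of the derivative of $g$ that appears is at most $k$, whence $|g^{(m_1+\cdots+m_k)}(f(x))|\le k!\,(1-C)^{-k-1}$ by the bound above. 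Absorbing this together with the fixed rational weights $k!/\bigl(\prod_j m_j!\,(j!)^{m_j}\bigr)$ into a single constant $C_k=C_k(k,C)$ gives
$$\left|\left(\frac{1}{1+f}\right)^{(k)}(x)\right|\le C_k\sum_{\sum_j jm_j=k}\prod_{j=1}^k\bigl|f^{(j)}(x)\bigr|^{m_j},$$
which is the claim.

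Should one wish to avoid quoting Fa\`a di Bruno, the same result follows by a direct induction on $k$ establishing the structural identity
$$\left(\frac{1}{1+f}\right)^{(k)}=\sum_{\sum_j jm_j=k}\frac{a_{(m_j)}\prod_{j=1}^k (f^{(j)})^{m_j}}{(1+f)^{1+\sum_j m_j}},\qquad a_{(m_j)}\in\mathbb{Z};$$
the base case $k=1$ is $\left(\frac{1}{1+f}\right)'=-\frac{f'}{(1+f)^2}$, and differentiating a generic summand (with $M:=\sum_j m_j$ numerator factors) via the product and quotient rules yields only terms of the same shape, with the weight $\sum_j jm_j$ raised from $k$ to $k+1$ and the number of numerator factors still at most $k+1$. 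Bounding $|1+f|^{-(1+M)}\le(1-C)^{-(k+2)}$ and absorbing the integer coefficients then gives the estimate.

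The only place where care is needed is the bookkeeping --- checking that Fa\`a di Bruno (or, in the inductive version, the product/quotient rules) really produces precisely the products indexed by $\sum_j jm_j=k$ and nothing else --- but this is a standard combinatorial fact, so in the write-up I would simply cite it and record the two elementary observations $|1+f|\ge 1-C$ and $m_1+\cdots+m_k\le k$.
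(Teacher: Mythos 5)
Your proof is correct and follows essentially the same route as the paper: both invoke Fa\`a di Bruno's formula for $\frac{1}{1+f}$ and use the lower bound $|1+f|\geq 1-C$ to control the powers of $(1+f)^{-1}$, absorbing the combinatorial coefficients into $C_k$. The only cosmetic difference is that you apply the general composition form with $g(y)=(1+y)^{-1}$ (plus the observation $m_1+\cdots+m_k\leq k$), while the paper quotes the formula already specialized to $1/(1+f)$; the content is identical.
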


\begin{proof}
This follows from the identity 
$$ \left (\frac{1}{1+f}\right )^{(k)}=\sum_{\sum_{j=1}^k jm_j=k}
a_{m_1,m_2,\dots,m_k}\left (\frac{1}{1+f} \right )^{1+\sum_{j=1}^k m_j}
\prod_{j=1}^k
\left ( f^{(j)} \right )^{m_j} $$
which is known as Fa\` a di Bruno's formula (see e.g.~\cite{Roman}, 
the explicit form of the
coefficients $a_{m_1,m_2,\dots,m_k}$ is irrelevant for our purposes) and the
fact that $|(1+f)^{-1}| \lesssim 1$ on $I$.
\end{proof}

\section{Volterra integral equations}
In this section we establish some well--known facts about Volterra integral
equations which are frequently used throughout this work.

\begin{lemma}
\label{lem_volterra}
Let $a \in \mathbb{R}$, $g \in L^\infty(a,\infty)$ and suppose the integral kernel $K$ satisfies
$$ \mu:=\int_a^\infty \sup_{x \in (a,y)}|K(x,y)|dy < \infty. $$
Then the Volterra equation
$$ f(x)=g(x)+\int_x^\infty K(x,y)f(y)dy $$
has a unique solution $f$ satisfying 
$$ \|f\|_{L^\infty(a,\infty)} \leq e^\mu \|g\|_{L^\infty(a,\infty)}. $$
\end{lemma}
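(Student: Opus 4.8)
The plan is to solve the Volterra equation by the standard Picard iteration / Neumann series, exploiting the fact that the hypothesis on $K$ is exactly what is needed to make the iteration converge in $L^\infty(a,\infty)$. First I would set $f_0 := g$ and define inductively
$$ f_{n+1}(x) := \int_x^\infty K(x,y) f_n(y)\, dy, $$
so that the putative solution is the telescoping sum $f = \sum_{n=0}^\infty f_n$. The key estimate to prove by induction on $n$ is the pointwise bound
$$ |f_n(x)| \leq \frac{1}{n!}\left( \int_x^\infty \sup_{x' \in (x,y)}|K(x',y)|\, dy \right)^n \|g\|_{L^\infty(a,\infty)} \leq \frac{\mu^n}{n!}\|g\|_{L^\infty(a,\infty)} $$
for all $x \geq a$.

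For the inductive step I would write, using the induction hypothesis and the substitution that recognizes the inner integral as the derivative of the $n$-th power,
$$ |f_{n+1}(x)| \leq \int_x^\infty |K(x,y)|\, \frac{1}{n!}\left(\int_y^\infty \sup_{x' \in (y,z)}|K(x',z)|\, dz\right)^n dy\, \|g\|_{L^\infty(a,\infty)}, $$
and then bound $|K(x,y)| \leq \sup_{x' \in (a,y)} |K(x',y)|$; setting $\Phi(x) := \int_x^\infty \sup_{x' \in (x,y)} |K(x',y)|\, dy$ one has $\Phi$ nonincreasing with $\Phi(x) \leq \mu$ and $-\Phi'(x) = \sup_{x' \in (x,\cdot)}|K(x',\cdot)|$ in the relevant a.e.\ sense, so the integral is dominated by $\int_x^\infty (-\Phi'(y)) \frac{\Phi(y)^n}{n!}\, dy = \frac{\Phi(x)^{n+1}}{(n+1)!} \leq \frac{\mu^{n+1}}{(n+1)!}$. (One must be slightly careful that the supremum in $\Phi(x)$ is over $x' \in (x,y)$ rather than $(a,y)$ when feeding the induction hypothesis back in, but since $(x,y) \subset (a,y)$ this only helps; alternatively one simply uses $\mu$ as the crude bound at every stage, which already gives the $\mu^n/n!$ estimate with no monotonicity argument at all.) Summing the geometric-type series $\sum_n \mu^n/n! = e^\mu$ gives absolute and uniform convergence of $\sum f_n$ to a function $f \in L^\infty(a,\infty)$ with $\|f\|_{L^\infty(a,\infty)} \leq e^\mu \|g\|_{L^\infty(a,\infty)}$, and dominated convergence (justified by the $L^1$-in-$y$ bound $\sup_{x}|K(x,y)|$ times the uniform bound on $f$) lets me pass the limit through the integral to verify that $f$ solves the equation.

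For uniqueness, suppose $f_1, f_2$ are two bounded solutions and set $h := f_1 - f_2 \in L^\infty(a,\infty)$, which satisfies the homogeneous equation $h(x) = \int_x^\infty K(x,y) h(y)\, dy$. Iterating this identity $n$ times and using the same factorial bound as above yields $|h(x)| \leq \frac{\mu^n}{n!}\|h\|_{L^\infty(a,\infty)}$ for every $n$, hence $h \equiv 0$. I do not expect a serious obstacle here; the only point requiring mild care is the bookkeeping of which interval the supremum in the kernel bound is taken over when iterating, and the measure-theoretic justification of the ``integration by substitution'' $\int_x^\infty(-\Phi'(y))\Phi(y)^n dy = \Phi(x)^{n+1}/(n+1)$ when $\Phi$ is merely monotone rather than $C^1$ — both of which are sidestepped entirely by simply using the uniform bound $\Phi \leq \mu$ at each step and accepting the slightly lossy but fully sufficient estimate $|f_n(x)| \leq \mu^n/n! \cdot \|g\|_{L^\infty}$.
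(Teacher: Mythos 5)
Your overall strategy --- solve the equation by the Neumann/Picard series and prove a factorial bound on the iterates --- is the standard route; the paper itself gives no proof but refers to \cite{deift} and \cite{schlag1}, where exactly this iteration is carried out. However, two points in your write-up are not right, and the one you offer as a safety net actually fails. First, the claim that one can ``simply use $\mu$ as the crude bound at every stage, which already gives the $\mu^n/n!$ estimate with no monotonicity argument at all'' is false: if the induction hypothesis is the constant bound $|f_n(y)|\le c_n$, the iteration only yields $|f_{n+1}(x)|\le c_n\int_x^\infty |K(x,y)|\,dy\le \mu\, c_n$, i.e. $|f_n(x)|\le \mu^n\|g\|_{L^\infty(a,\infty)}$ with no factorial, and then the series diverges unless $\mu<1$, which is not assumed. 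The $1/n!$ is precisely the content of the monotone tail-function (equivalently, simplex-volume) argument and cannot be sidestepped. Second, the tail function must be built from the supremum over $(a,y)$, not over $(x,y)$: set $Q(y):=\sup_{x'\in(a,y)}|K(x',y)|$ and $M(x):=\int_x^\infty Q(y)\,dy$. Then $M$ is absolutely continuous (since $Q\in L^1$), $-M'(y)=Q(y)\ge |K(x,y)|$ for a.e. $y>x>a$, and the hypothesis $|f_n(y)|\le \frac{M(y)^n}{n!}\|g\|_{L^\infty(a,\infty)}$ closes: $|f_{n+1}(x)|\le \|g\|_{L^\infty(a,\infty)}\int_x^\infty(-M'(y))\frac{M(y)^n}{n!}\,dy=\|g\|_{L^\infty(a,\infty)}\frac{M(x)^{n+1}}{(n+1)!}\le \|g\|_{L^\infty(a,\infty)}\frac{\mu^{n+1}}{(n+1)!}$. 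With your $\Phi$ (sup over $(x,y)$) the induction does not close as stated: $-\Phi'$ is not the integrand evaluated at $y=x$, because the sup inside the integral also varies with $x$, and the inequality $\Phi\le M$ goes in the wrong direction when you try to convert $\int_x^\infty Q\,\Phi^n/n!\le M(x)^{n+1}/(n+1)!$ back into a power of $\Phi(x)$; so the remark that $(x,y)\subset(a,y)$ ``only helps'' is backwards.

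Once the iterate bound is stated with $M$, the rest of your argument --- uniform convergence of $\sum_n f_n$, passing the limit through the integral by dominated convergence, and uniqueness by iterating the homogeneous equation with the same factorial bound --- is correct and gives $\|f\|_{L^\infty(a,\infty)}\le e^\mu\|g\|_{L^\infty(a,\infty)}$. Alternatively, you can obtain the $1/n!$ with no differentiation of $M$ at all: expand the $n$-th iterate as an integral over the simplex $\{x<y_1<\cdots<y_n\}$, bound each kernel factor by $Q(y_j)$, and symmetrize in $(y_1,\dots,y_n)$ to get $|f_n(x)|\le \frac{1}{n!}\bigl(\int_x^\infty Q(y)\,dy\bigr)^n\|g\|_{L^\infty(a,\infty)}$ directly; this removes the measure-theoretic worries you mention without giving up the factorial.
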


\begin{proof}
See e.g. \cite{deift} or \cite{schlag1}.
\end{proof}

The next lemma states differentiability properties of solutions of Volterra
integral equations.

\begin{lemma}
\label{lem_volterradiffx}
If, in addition to the assumptions of Lemma \ref{lem_volterra}, $g
\in C^\infty(a,\infty)$ and
the kernel $K$ is smooth in both variables on $(a,\infty)$ and satisfies
$$ \int_a^\infty |\partial^k_x K(x,y)|dy < \infty $$
for any $x \geq a$ and all $k \in \mathbb{N}$ then the solution 
$f$ is smooth on $(a,\infty)$.
Furthermore, the derivatives can be calculated 
by formal differentiation, i.e., 
$$ f^{(k)}(x)=g^{(k)}(x)-\sum_{j=0}^{k-1} (\kappa_j f)^{(k-1-j)}(x)
+\int_x^\infty \partial_x^k K(x,y)f(y)dy $$
where $\kappa_j(x):=\partial_x^j K(x,y)|_{y=x}$.
\end{lemma}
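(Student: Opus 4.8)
The plan is to bootstrap regularity directly from the Volterra equation, using the existence, uniqueness and boundedness of $f$ already supplied by Lemma \ref{lem_volterra}. The argument is an induction on the order of differentiation in which the formula for $f^{(k)}$ serves simultaneously as the inductive hypothesis and as the device that produces the next derivative. First I would establish the base case $k=0$, namely that $f$ is continuous: since $g\in C^\infty(a,\infty)$, $f\in L^\infty(a,\infty)$, $K$ is continuous on $(a,\infty)$ and $\int_a^\infty\sup_{x\in(a,y)}|K(x,y)|\,dy<\infty$, dominated convergence shows that $x\mapsto\int_x^\infty K(x,y)f(y)\,dy$ is continuous, hence so is $f$; moreover the Volterra equation is precisely the claimed formula for $k=0$ (with the empty sum and $\partial_x^0K=K$).

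For the inductive step, assume $f\in C^k(a,\infty)$ and that
$$ f^{(k)}(x)=g^{(k)}(x)-\sum_{j=0}^{k-1}(\kappa_j f)^{(k-1-j)}(x)+\int_x^\infty \partial_x^k K(x,y)f(y)\,dy. $$
Each term on the right is $C^1$. Indeed $g^{(k)}\in C^\infty$; the functions $\kappa_j(x)=\partial_x^j K(x,y)|_{y=x}$ are smooth because $K$ is smooth in both variables, so $\kappa_j f\in C^k\subset C^{k-j}$ for $0\le j\le k-1$; and for the integral term the Leibniz rule for an integral with a variable endpoint gives
$$ \frac{d}{dx}\int_x^\infty \partial_x^k K(x,y)f(y)\,dy = -\partial_x^k K(x,y)\big|_{y=x}\,f(x)+\int_x^\infty \partial_x^{k+1}K(x,y)f(y)\,dy = -\kappa_k(x)f(x)+\int_x^\infty \partial_x^{k+1}K(x,y)f(y)\,dy, $$
the last integral being continuous by the hypothesis $\int_a^\infty|\partial_x^{k+1}K(x,y)|\,dy<\infty$. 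Differentiating the displayed formula term by term and absorbing the boundary contribution $-\kappa_k(x)f(x)$ as the $j=k$ summand (since $(\kappa_k f)^{(0)}=\kappa_k f$) yields exactly the asserted identity with $k$ replaced by $k+1$; as all summands on the right are then continuous, $f\in C^{k+1}(a,\infty)$. This closes the induction, proving both smoothness of $f$ and the formal differentiation formula.

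The one genuinely delicate point — and the main obstacle to making the above fully rigorous — is the justification of differentiation under the integral sign in the Leibniz step, i.e. that
$$ \lim_{h\to 0}\frac{1}{h}\left[\int_{x+h}^\infty \partial_x^k K(x+h,y)f(y)\,dy-\int_x^\infty \partial_x^k K(x,y)f(y)\,dy\right] $$
equals $-\kappa_k(x)f(x)+\int_x^\infty \partial_x^{k+1}K(x,y)f(y)\,dy$. Splitting off the piece $\int_x^{x+h}$, which after division by $h$ converges to $-\kappa_k(x)f(x)$ by continuity, and applying the mean value theorem to the difference quotient of $\partial_x^kK$ in the remaining integral, one needs $\sup_{|\xi-x|\le\eta}|\partial_x^{k+1}K(\xi,y)|$ to be an $L^1$ function of $y$ for some $\eta>0$ — that is, a version of the integrability hypothesis locally uniform in $x$. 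This is automatic in all the applications made in this paper, where the bounds on $\partial_x^k K$ come with explicit $y$--dependent majorants uniform in $x$ on compact intervals; granting it, dominated convergence legitimizes every interchange of limit and integral and the formal computation above becomes a proof.
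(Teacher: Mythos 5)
Your proof is correct and takes essentially the same route as the paper, whose entire proof is the one-line remark that the claim follows from dominated convergence and an induction; your induction, with the boundary contribution $-\kappa_k(x)f(x)$ absorbed as the $j=k$ summand, is exactly that argument made explicit. The locally uniform (in $x$) integrable majorant for $\partial_x^{k+1}K(\cdot,y)$ that you flag is indeed what the dominated-convergence step requires; the paper leaves this tacit, and it holds in all of its applications, so this is an honest sharpening rather than a gap.
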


\begin{proof}
The claim follows from a straightforward application of Lebesgue's theorem 
on dominated convergence and an induction.
\end{proof}

The next lemma shows how the dependence of the kernel $K$ on a parameter 
$\lambda$ carries over to the solution of the corresponding Volterra equation.

\begin{lemma}
\label{lem_volterradifflambda}
Let $I \subset \mathbb{R}$ be open and suppose
$$ \int_a^\infty \sup_{x \in (a,y)}|\partial_\lambda^m 
K(x,y,\lambda)|dy < \infty $$
as well as $\partial_\lambda^m g(\cdot,\lambda) \in L^\infty(a,\infty)$
for all $m \in \mathbb{N}_0$ and $\lambda \in I$.
Then the Volterra equation
$$ f(x,\lambda)=g(x,\lambda)+\int_x^\infty K(x,y,\lambda)f(y,\lambda)dy $$
has a unique solution $f(x,\lambda)$ for all $x\geq a$ and $\lambda \in I$
which is smooth in
$\lambda$.
Furthermore, we have $\partial_\lambda^m f(\cdot,\lambda) \in
L^\infty(a,\infty)$ for all $m \in \mathbb{N}_0$ and the derivatives are 
given by
$$ \partial_\lambda^m f(x,\lambda)=\partial_\lambda^m g(x,\lambda)+
\sum_{j=0}^m \left ( \begin{array}{c}m \\
j \end{array} \right )\int_x^\infty \partial_\lambda^j
K(x,y,\lambda)\partial_\lambda^{m-j}f(y,\lambda)dy. $$
\end{lemma}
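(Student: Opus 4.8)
The plan is to combine the basic Volterra estimate of Lemma \ref{lem_volterra} with an induction on $m$, the inductive step being a difference-quotient argument. For fixed $\lambda \in I$ the hypothesis with $m=0$ gives $\mu(\lambda):=\int_a^\infty \sup_{x\in(a,y)}|K(x,y,\lambda)|\,dy<\infty$ and $g(\cdot,\lambda)\in L^\infty(a,\infty)$, so Lemma \ref{lem_volterra} already produces a unique bounded solution $f(\cdot,\lambda)$. The guiding idea for the derivatives is that, \emph{if} $f(\cdot,\lambda)$ is $m$ times differentiable in $\lambda$, then differentiating the Volterra equation under the integral sign and applying the Leibniz rule shows that $h:=\partial_\lambda^m f(\cdot,\lambda)$ solves the Volterra equation
\begin{equation}
\label{eq_plan_volterra}
 h(x)=\tilde g_m(x,\lambda)+\int_x^\infty K(x,y,\lambda)\,h(y)\,dy,
 \qquad
 \tilde g_m(x,\lambda):=\partial_\lambda^m g(x,\lambda)+\sum_{j=1}^m\binom{m}{j}\int_x^\infty \partial_\lambda^j K(x,y,\lambda)\,\partial_\lambda^{m-j}f(y,\lambda)\,dy ,
\end{equation}
with the \emph{same} kernel $K$. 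Under the inductive hypothesis that $\partial_\lambda^{m-j}f(\cdot,\lambda)$, $1\le j\le m$, exist and lie in $L^\infty(a,\infty)$, the integrability assumptions on $\partial_\lambda^j K$ together with $\partial_\lambda^m g(\cdot,\lambda)\in L^\infty$ show $\tilde g_m(\cdot,\lambda)\in L^\infty(a,\infty)$, so Lemma \ref{lem_volterra} yields a unique bounded solution $h_m(\cdot,\lambda)$ of \eqref{eq_plan_volterra}; this $h_m$ is the candidate for $\partial_\lambda^m f(\cdot,\lambda)$, and rearranging \eqref{eq_plan_volterra} back gives exactly the asserted formula.

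The only real work is to show that the candidate is the actual derivative, i.e.\ that $\partial_\lambda h_{m-1}=h_m$, where $h_0:=f$. Fix $\lambda$ and work on a small interval around it on which the relevant quantities stay bounded (in the applications in this paper this is automatic). For small $\epsilon$, subtracting the two Volterra equations \eqref{eq_plan_volterra} (with $m$ replaced by $m-1$) satisfied by $h_{m-1}(\cdot,\lambda+\epsilon)$ and by $h_{m-1}(\cdot,\lambda)$, one sees that $D_\epsilon(x):=h_{m-1}(x,\lambda+\epsilon)-h_{m-1}(x,\lambda)$ solves a Volterra equation with kernel $K(\cdot,\cdot,\lambda)$ and inhomogeneity $\bigl[\tilde g_{m-1}(x,\lambda+\epsilon)-\tilde g_{m-1}(x,\lambda)\bigr]+\int_x^\infty\bigl[K(x,y,\lambda+\epsilon)-K(x,y,\lambda)\bigr]h_{m-1}(y,\lambda+\epsilon)\,dy$; dividing by $\epsilon$ and subtracting \eqref{eq_plan_volterra} for $h_m$, the error $E_\epsilon:=\epsilon^{-1}D_\epsilon-h_m(\cdot,\lambda)$ satisfies $E_\epsilon(x)=r_\epsilon(x)+\int_x^\infty K(x,y,\lambda)E_\epsilon(y)\,dy$, where $r_\epsilon$ is a finite sum of (i) difference quotients of $\partial_\lambda^j g$ minus $\partial_\lambda^{j+1}g$, (ii) $\epsilon^{-1}$ times increments of the integrals $\int \partial_\lambda^j K\cdot\partial_\lambda^i f$ minus their $\lambda$-derivatives, using the induction hypothesis for $\partial_\lambda^i f$, and (iii) $\int_x^\infty \epsilon^{-1}\bigl[K(x,y,\lambda+\epsilon)-K(x,y,\lambda)\bigr]h_{m-1}(y,\lambda+\epsilon)\,dy-\int_x^\infty\partial_\lambda K(x,y,\lambda)h_{m-1}(y,\lambda)\,dy$. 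By the mean value theorem, the dominating functions furnished by the $\sup_{x\in(a,y)}$-integrability hypotheses, and dominated convergence (plus continuity of $h_{m-1}(y,\cdot)$ from the previous induction stage), $\|r_\epsilon\|_{L^\infty(a,\infty)}\to 0$ as $\epsilon\to 0$; then Lemma \ref{lem_volterra} gives $\|E_\epsilon\|_{L^\infty(a,\infty)}\le e^{\mu(\lambda)}\|r_\epsilon\|_{L^\infty(a,\infty)}\to 0$, i.e.\ $\partial_\lambda h_{m-1}=h_m$ uniformly in $x$.

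Iterating over $m=1,2,\dots$ — with the continuity in $\lambda$ of each $h_m(\cdot,\lambda)$, needed to feed dominated convergence at the next stage, obtained by the same Volterra estimate applied to $h_m(\cdot,\lambda')-h_m(\cdot,\lambda)$ — shows that $f(\cdot,\lambda)$ is smooth in $\lambda$, that $\partial_\lambda^m f(\cdot,\lambda)=h_m(\cdot,\lambda)\in L^\infty(a,\infty)$, and that $\partial_\lambda^m f$ is given by the stated formula. I expect the main obstacle to be the bookkeeping in the difference-quotient step: one must telescope the Volterra equations correctly and, for each term of $r_\epsilon$, produce an $L^\infty$-in-$x$ (not merely pointwise) bound that allows dominated convergence — this is exactly where the hypotheses are used in their ``$\sup_{x\in(a,y)}$ under $\int dy$'' form; arranging the mild local uniformity needed to keep $e^{\mu}$ and the norms of $\tilde g_{m-1}(\cdot,\lambda+\epsilon)$ controlled is harmless and immediate in every application in the paper.
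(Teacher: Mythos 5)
Your proposal is correct and takes essentially the same route as the paper: there, too, the candidate derivative is defined as the bounded solution of the Volterra equation with the same kernel $K$ and inhomogeneity $\partial_\lambda g(x,\lambda)+\int_x^\infty \partial_\lambda K(x,y,\lambda)f(y,\lambda)\,dy$, identified with $\partial_\lambda f$ via a difference-quotient argument, with higher derivatives obtained by the Leibniz rule and induction. Your extra step of running the difference-quotient error through a Volterra equation and the estimate of Lemma \ref{lem_volterra} is simply a more explicit version of the paper's appeal to dominated convergence at that point.
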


\begin{proof}
According to Lemma \ref{lem_volterra}, the solution $f$ exists, is unique and 
satisfies $\|f(\cdot,\lambda)\|_{L^\infty(a,\infty)}< \infty$.
Now consider the integral equation
\begin{equation}
\label{eq_proofparamvolt}
h(x,\lambda)=\partial_\lambda g(x,\lambda)+
\tilde{h}(x, \lambda)+\int_x^\infty K(x,y,\lambda)h(y,\lambda)dy 
\end{equation}
where
$$ \tilde{h}(x,\lambda):=\int_x^\infty \partial_\lambda
K(x,y,\lambda)f(y,\lambda)dy. $$
We have $\|\tilde{h}(\cdot, \lambda)\|_{L^\infty(a,\infty)} < \infty$ and thus, 
by Lemma \ref{lem_volterra}, Eq.~(\ref{eq_proofparamvolt}) has a unique solution
$h(\cdot,\lambda) \in L^\infty(a,\infty)$ for all $\lambda \in I$.
However, by dominated convergence we conclude 
$$ \lim_{\nu \to 0}\left |
\frac{f(x,\lambda+\nu)-f(x,\lambda)}{\nu}-h(x,\lambda) \right |=0 $$
and hence, $\partial_\lambda f$ exists and equals $h$.
Existence of the higher derivatives follows by the Leibniz rule and an 
induction.
\end{proof}

We finally remark that all of the above Lemmas have counterparts for Volterra
equations of the form
$$ f(x,\lambda)=g(x,\lambda)+\int_a^x K(x,y,\lambda)f(y,\lambda)dy $$
with almost identical proofs.

\end{appendix}

\bibliography{wavess}
\bibliographystyle{plain}

\end{document}